\def\eqref#1{equation~\ref{#1}}
\def\1{\bm{1}}
\def\eps{{\varepsilon}}
\DeclareMathAlphabet{\mathsfit}{\encodingdefault}{\sfdefault}{m}{sl}
\SetMathAlphabet{\mathsfit}{bold}{\encodingdefault}{\sfdefault}{bx}{n}
\newcommand{\E}{\mathbb{E}}
\definecolor{darkgreen}{rgb}{0,0.5,0}
\theoremstyle{plain}
\newtheorem{theorem}{Theorem}
\newtheorem{corollary}[theorem]{Corollary}
\newtheorem{lemma}[theorem]{Lemma}
\newtheorem{example}[theorem]{Example}
\newtheorem{definition}[theorem]{Definition}
\newtheorem{remark}[theorem]{Remark}
\newtheorem*{remark*}{Remark}
\newcommand{\bp}{\textbf{p}}
\newcommand{\unik}{\mathrm{Unif}(k)}%
\newcommand{\bq}{\textbf{q}}
\newcommand{\br}{\textbf{r}}
\newcommand{\bs}{\textbf{s}}
\newcommand{\dtv}{\mathrm{d_{TV}}}
\newcommand{\avg}{\mathrm{avg}}
\newcommand{\prop}{\mathcal{P}}
\newcommand{\dom}[1]{\Delta_{#1}}
\newcommand{\ba}{\textbf{a}}
\newcommand{\bb}{\textbf{b}}
\newcommand{\disc}{\mathrm{Disc}}
\newcommand{\diag}{\mathrm{diag}}
\newcommand{\Var}{\operatorname{{Var}}}
\newcommand{\defeq}{\stackrel{\mathrm{def}}{=}}
\declaretheorem[name=Theorem,numberwithin=section]{thm}
\declaretheorem[name=Claim,numberlike=thm]{claim}
\declaretheorem[name=Fact,numberlike=thm]{fact}
\@nx\else[{#1}]\fi}%
\@nx\else[{#1}]\fi\else\csname #2\@xa\endcsname\fi}%
\title{Testing with Non-identically Distributed Samples}
\author{\name Shivam Garg \email shivam.garg55@gmail.com \\
      \addr Microsoft Research
      \AND
      \name Chirag Pabbaraju \email cpabbara@stanford.edu \\
      \addr Stanford University
      \AND
      \name Kirankumar Shiragur \email kshiragur@microsoft.com \\
      \addr Microsoft Research
      \AND
      \name Gregory Valiant \email gvaliant@stanford.edu\\
      \addr Stanford University}
\begin{document}

\maketitle

\begin{abstract}
    We examine the extent to which sublinear-sample property testing and estimation apply to settings where samples are independently but not identically distributed.  Specifically, we consider the following distributional property testing framework: Suppose there is a set of distributions over a discrete support of size $k$, $\textbf{p}_1, \textbf{p}_2,\ldots,\textbf{p}_T$, and we obtain $c$ independent draws from each distribution.  Suppose the goal is to learn or test a property of the average distribution, $\textbf{p}_{\mathrm{avg}}$.  This setup models a number of important practical settings where the individual distributions correspond to heterogeneous entities --- either individuals, chronologically distinct time periods, spatially separated data sources, etc.  From a learning standpoint, even with $c=1$ samples from each distribution, $\Theta(k/\varepsilon^2)$ samples are necessary and sufficient to learn $\textbf{p}_{\mathrm{avg}}$ to within error $\varepsilon$ in $\ell_1$ distance.  To test uniformity or identity --- distinguishing the case that $\textbf{p}_{\mathrm{avg}}$ is equal to some reference distribution, versus has $\ell_1$ distance at least $\varepsilon$ from the reference distribution, we show that a linear number of samples in $k$ is necessary given $c=1$ samples from each distribution.  In contrast, for $c \ge 2$, we recover the usual sublinear sample testing guarantees of the i.i.d.\ setting: we show that $O(\sqrt{k}/\varepsilon^2 + 1/\varepsilon^4)$ total samples are sufficient, matching the optimal sample complexity in the i.i.d.\ case in the regime where $\varepsilon \ge k^{-1/4}$. Additionally, we show that in the $c=2$ case, there is a constant $\rho > 0$ such that even in the linear regime with $\rho k$ samples, no tester that  considers the multiset of samples (ignoring which samples were drawn from the same $\textbf{p}_i$) can perform uniformity testing.  We further extend our techniques to the problem of testing ``closeness'' of two distributions: given $c=3$ independent draws from each of $\textbf{p}_1, \textbf{p}_2,\ldots,\textbf{p}_T$ and $\textbf{q}_1, \textbf{q}_2,\ldots,\textbf{q}_T$, one can distinguish the case that $\textbf{p}_{\mathrm{avg}}=\textbf{q}_{\mathrm{avg}}$ versus having $\ell_1$ distance at least $\varepsilon$ using $O(k^{2/3}/\eps^{8/3})$ total samples, where $k$ is an upper bound on the support size, matching the optimal sample complexity of the i.i.d.\ setting up to the $\varepsilon$-dependence.
\end{abstract}

\section{Introduction}
\label{sec:intro}

The problem of estimating statistics of an unknown distribution, or determining whether the distribution in question possesses a property of interest has been studied by the Statistics community for well over a century.  Interest in these problems from the TCS community was sparked by the seminal paper of Goldreich \citep{goldreich2011testing}, demonstrating the surprising result that such tasks might be accomplished given a sample size that scales \emph{sublinearly} with the support of the underlying distribution.  Over the subsequent 20+ years, the lay of the land of sublinear sample property testing and estimation has largely been worked out: for a number of natural properties of distributions (or pairs of distributions), testing or estimation can be accomplished with a sample size that is sublinear in the support size of the distributions in question, and in most settings, the optimal sample complexities are now known to constant or subconstant factors. 

More recently, these questions have been revisited with an eye towards relaxing the assumption that samples are drawn i.i.d.\ from the distribution in question.  The majority of such departures focused on relaxing the independence assumption; this includes work from the robust statistics perspective where some portion of samples may be adversarially corrupted, as well as other models of dependent sampling.  

In this work, we consider the implications of relaxing the more innocuous-seeming assumption of identical samples.  Given samples drawn independently from a collection of distributions, $\bp_1,\ldots,\bp_T$, there are a number of different questions one can ask, including estimating properties of how the various distributions are related.  Here, however, we focus on the practically motivated question of testing or estimating properties of the \emph{average} distribution $\bp_{\avg} = \frac{1}{T}\sum \bp_i.$  To what extent is sublinear sample testing or estimation still possible in this setting, and do existing estimators suffice?  Below, we describe several concrete practical settings that motivate these questions.

\begin{example}
	\textbf{The ``Federated'' Setting:} Consider the setting where there are a number of users/ individuals, each with a corresponding distribution (over online purchases, language, biometrics, etc.) and the goal is to test whether the average distribution is close to some reference.  There might be significant heterogeneity between users, and the number of samples (i.e., amount of data) collected from each user might be very modest,  due to privacy, bandwidth, or power considerations.  Analogous settings arise when compiling training datasets for LLMs, as there is significant heterogeneity between documents, websites, or clusters of websites, and lightweight tools for testing properties of the average/aggregate distribution are useful.
\end{example}

\begin{example}
	\textbf{Chronological Heterogeneity:} Consider a setting where samples are collected chronologically, and the underlying distribution varies over time.  Consider a wearable technology that periodically samples biometric statistics, or language spoken.  The goal may be to test whether the average distribution is close to some reference (with the goal of providing an early warning of illness in the case of biometrics, or for early detection of dementia in the case of sampling words spoken). Ideally, this test can be successfully accomplished with far less data than would be required to \emph{learn}.  There is significant heterogeneity across time but samples collected close together can be regarded as being sampled from the same distribution, $\bp_i.$
\end{example}

\begin{example}
	\textbf{Spatially Segregated Sampling:}  Consider a setting where the goal is to sample species (of animals, bacteria, etc.) and test whether the overall distribution is close to a reference distribution.  In many such settings, the distribution of species has significant spatial variation --- the distribution of soil bacteria in a grassy spot is quite different from the distribution in dense forest, similarly, the distribution of wildlife observed by trail-cameras in national parks varies according to the location of the cameras. Again, the goal is to accomplish this statistical test with less data than would be required to learn the underlying average distribution.    
\end{example}

In some of these motivating settings, one may assume some structure in how the various distributions relate to each other--- for example in chronological settings, for most indices $i$, $\bp_i$ is similar to $\bp_{i-1}.$  Nevertheless, it is still worth understanding whether such assumptions \emph{must} be leveraged.  Indeed, for \emph{learning} the average distribution $\bp_{\avg}$ to within total variation distance $\eps$, the sample complexity is the same as in the i.i.d.\ case, $O(k/\eps^2)$ where $k$ is the support size, \textit{without any assumptions} on the $\bp_i$s.  At the highest level, our results demonstrate that as soon as one draws at least $c=2$ samples from each distribution, this non-identically distributed setting recovers much of the power of sublinear-sample property testing.  The caveat is that one must design new algorithms for this setting that are cognizant of which distribution each sample was drawn from; as we show, testers that return a function of the multiset of examples do not suffice. %

\newcommand{\poi}{\mathrm{Poi}}
\newcommand{\uni}{\textbf{u}}
\newcommand{\bv}{\textbf{v}}

\subsection{Summary of Results}
\label{sec:results}

\allowdisplaybreaks

First, we establish that in the case that we draw $c=1$ samples  from each distribution $\bp_1,\ldots,\bp_T$, one \emph{can} learn the average distribution with the same sample complexity as in the i.i.d.\ setting.  Despite this, with $c=1$ sample from each distribution, sublinear-sample uniformity or identity testing is impossible:

\begin{restatable}[Learning the distribution, proof in  \Cref{sec:learning}]{claim}{claimlearning}
	\label{claim:learning}
	Given access to $T$ distributions, $\bp_1,\ldots,\bp_T$, each supported on a common domain of size $\le k$, for any $\eps>0,$ given $c=1$ samples drawn from each $\bp_i$, one can output a distribution $\tilde{\bp}_{\avg}$ such that with probability at least $2/3$, $\dtv(\bp_{\avg}, \tilde{\bp}_{\avg}) \le \eps$, provided $T=O(k/\eps^2)$.
	Furthermore, $\Omega(k/\eps^2)$ samples are necessary, even if $\bp_1=\ldots =\bp_T = \bp_{avg}$ for such a guarantee.
\end{restatable}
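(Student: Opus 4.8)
The plan is to prove the two halves of the claim separately; both are fairly standard once set up.

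\textbf{Upper bound.} I would just use the empirical distribution of the $T$ samples. Let $X_i\sim\bp_i$ be the single draw from $\bp_i$, and define $\tilde{\bp}_{\avg}(j)=\frac1T\,\#\{\,i: X_i=j\,\}$, which is a valid distribution on the support. It is unbiased, $\E[\tilde{\bp}_{\avg}(j)]=\frac1T\sum_i\bp_i(j)=\bp_{\avg}(j)$, and since the $X_i$ are independent,
\[
  \Var\!\big(\tilde{\bp}_{\avg}(j)\big)=\frac1{T^2}\sum_{i=1}^T\bp_i(j)\big(1-\bp_i(j)\big)\;\le\;\frac{\bp_{\avg}(j)}{T}.
\]
Applying Jensen (equivalently Cauchy--Schwarz) within each coordinate and then Cauchy--Schwarz across the at most $k$ coordinates,
\[
  \E\big[\dtv(\bp_{\avg},\tilde{\bp}_{\avg})\big]=\tfrac12\sum_j\E\big|\tilde{\bp}_{\avg}(j)-\bp_{\avg}(j)\big|\;\le\;\tfrac12\sum_j\sqrt{\tfrac{\bp_{\avg}(j)}{T}}\;\le\;\tfrac12\sqrt{\tfrac{k}{T}}.
\]
Choosing $T=\lceil 9k/(4\eps^2)\rceil$ makes the right-hand side at most $\eps/3$, so Markov's inequality gives $\dtv(\bp_{\avg},\tilde{\bp}_{\avg})\le\eps$ with probability at least $2/3$. (As a sanity check, the ``one draw per distribution'' structure costs nothing: by concavity of $x\mapsto x(1-x)$ the per-coordinate variance above is in fact no larger than the i.i.d.\ value $\frac1T\bp_{\avg}(j)(1-\bp_{\avg}(j))$.)

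\textbf{Lower bound.} Since the statement allows $\bp_1=\cdots=\bp_T=\bp_{\avg}$, it suffices to reprove the classical fact that learning a distribution on a support of size $k$ to within total variation distance $\eps$ from i.i.d.\ samples needs $\Omega(k/\eps^2)$ of them, which I would do by a packing/Fano argument. Partition the support into $m=\lfloor k/2\rfloor$ pairs, fix a binary code $C\subseteq\{0,1\}^m$ of rate $\Omega(1)$ and relative minimum distance at least $1/4$ (Gilbert--Varshamov), and for $z\in C$ let $\bp_z$ put mass $\frac1{2m}(1\pm\beta)$ on the two elements of pair $\ell$ according to $z_\ell$, with $\beta=\Theta(\eps)$ chosen (valid once $\eps$ is below a small absolute constant) so that distinct codewords satisfy $\dtv(\bp_z,\bp_{z'})=\frac{\beta}{m}d_H(z,z')\ge 4\eps$ while a single draw has $\KL(\bp_z\|\bp_{z'})=O\!\big(\beta^2 d_H(z,z')/m\big)=O(\eps^2)$. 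An algorithm producing an $\eps$-accurate estimate with probability $2/3$ then recovers the underlying codeword with probability $2/3$ (round the estimate to the nearest $\bp_z$ and use the triangle inequality with the $4\eps$ separation), so with $I(Z;Y)\le T\cdot\max_{z,z'}\KL(\bp_z\|\bp_{z'})=O(T\eps^2)$, Fano's inequality forces $O(T\eps^2)\ge\Omega(\log|C|)=\Omega(k)$, i.e.\ $T=\Omega(k/\eps^2)$.

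\textbf{Where the work is.} Neither direction is technically deep --- the upper bound is a one-line second-moment computation, and the lower bound is a textbook Fano argument. The only points needing care are (i) confirming that restricting to a single sample per distribution does not inflate the estimator's variance relative to the i.i.d.\ case (it does not, by concavity), and (ii) in the lower bound, tuning the perturbation magnitude $\beta=\Theta(\eps)$ so that the codeword separation is large enough that ``$\eps$-accurate recovery'' implies exact codeword identification, while the pairwise KL divergences simultaneously remain $O(\eps^2)$; a single constant multiple of $\eps$ achieves both.
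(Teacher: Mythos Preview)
Your proposal is correct and, for the upper bound, takes essentially the same approach as the paper: the empirical estimator, a per-coordinate variance bound of $\bp_{\avg}(j)/T$, Cauchy--Schwarz, and Markov. The only cosmetic difference is ordering --- the paper bounds $\E\|\tilde{\bp}_{\avg}-\bp_{\avg}\|_2^2\le 1/T$, applies Markov to the squared $\ell_2$ distance, and then converts to $\ell_1$ via Cauchy--Schwarz, whereas you bound $\E[\dtv]$ directly before Markov; both yield the same $O(k/\eps^2)$.

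For the lower bound the paper does strictly less than you do: it simply notes that setting $\bp_1=\cdots=\bp_T$ reduces to the i.i.d.\ case and invokes the classical $\Omega(k/\eps^2)$ bound without reproving it. Your Fano/Gilbert--Varshamov argument is a correct instantiation of that classical fact, so you are supplying more detail than the paper, not taking a different route.
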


\begin{restatable}[Impossibility result with $c=1$, proof in \Cref{sec:impossibility_cone_proof}]{claim}{thmcone}
	\label{thm:cone}
	Given access to $T\le k/2$ distributions, $\bp_1,\ldots,\bp_T$, and $c = 1$ sample drawn from each distribution, no tester can distinguish the case that the average distribution, $\bp_{\avg}$, is the uniform distribution of support $k$, versus has total variation distance at least $1/4$ from $\unik$, with probability of success greater than $2/3.$ 
\end{restatable}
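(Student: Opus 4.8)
I would establish this via the standard two-point (Le Cam) argument: exhibit a prior $\mathcal{Q}$ supported entirely on ``far'' instances $(\bp_1,\ldots,\bp_T)$ --- those with $\dtv(\bp_{\avg},\unik)\ge 1/4$ --- such that the law of the observed data under a draw from $\mathcal{Q}$ coincides \emph{exactly} with the law of the observed data under the single null instance $\bp_1=\ldots=\bp_T=\unik$. Given such a $\mathcal{Q}$, any correct tester reaches a contradiction: it must accept the null with probability at least $2/3$, yet it must also reject every instance in $\mathrm{supp}(\mathcal{Q})$ with probability at least $2/3$; since the rejection probability is linear in the data distribution, averaging the second guarantee over $\mathcal{Q}$ forces it to reject the \emph{common} data distribution with probability at least $2/3$, which is inconsistent with the first guarantee because $2/3+2/3>1$.

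\textbf{The construction.} First I would define $\mathcal{Q}$ by drawing $j_1,\ldots,j_T$ i.i.d.\ uniformly from $[k]$ and setting $\bp_i=\delta_{j_i}$, the point mass at $j_i$. The crucial observation --- and the place where $c=1$ is essential --- is that the single sample drawn from $\bp_i=\delta_{j_i}$ is deterministically $X_i=j_i$, so the observed data $(X_1,\ldots,X_T)$ is simply a tuple of $T$ i.i.d.\ uniform draws from $[k]$, which is precisely its law under the null; hence the two data distributions are identical (total variation distance $0$), and this remains so even if the tester is told which sample came from which source. Then I would check that every instance in $\mathrm{supp}(\mathcal{Q})$ is far: writing $S=\{j_1,\ldots,j_T\}$, the average $\bp_{\avg}=\frac1T\sum_{i}\delta_{j_i}$ is supported on $|S|\le T\le k/2$ points, so $\dtv(\bp_{\avg},\unik)\ge \frac12\sum_{x\notin S}\frac1k=\frac{k-|S|}{2k}\ge \frac14$. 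Combining the two parts gives the claim.

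\textbf{Main obstacle.} The argument is short, so the ``obstacle'' is really a matter of getting two details right. First, the ``far'' computation above is the only place the hypothesis $T\le k/2$ enters, and it is what pins the threshold in the statement to exactly $1/4$. Second, one must sample the locations $j_i$ \emph{with} replacement: sampling them as a uniform random injection $[T]\hookrightarrow[k]$ would guarantee $|S|=T$ but would destroy the exact coupling, since null data exhibits repeated values with constant probability once $T=\Theta(k)$ (birthday paradox) while injection-based data never does. Point masses at i.i.d.\ locations make the coupling exact, so no perturbation argument or approximate-indistinguishability lemma is needed.
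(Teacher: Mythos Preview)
Your argument is correct. The coupling is exact, the far-from-uniform check is right (and you correctly isolate $T\le k/2$ as the place it enters), and your remark about sampling the point-mass locations \emph{with} replacement is exactly the point that makes the data laws match on the nose.

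Your route is genuinely different from the paper's, and in fact cleaner. The paper builds priors on \emph{both} sides: for the null it partitions $[k]$ into $T'$ disjoint blocks via a random permutation and lets $\bp_i$ be uniform on block $i$ (padding the remaining $T-T'$ slots with $\unik$); for the alternative it keeps the same block structure but replaces each block-uniform by a point mass on one designated element of that block. Their coupling then matches because in both cases the first $T'$ samples are a uniformly random \emph{ordered $T'$-tuple of distinct elements} and the rest are i.i.d.\ $\unik$. This forces them to round $k/T$ to a power of two, introduce $T'\le T$, and verify $\dtv\ge 1/4$ via $T'/T\ge 1/2$. Your construction sidesteps all of that: by taking the null to be the canonical single instance $\bp_i\equiv\unik$ and the alternative to be point masses at i.i.d.\ uniform locations, both data laws are simply $\unik^{\otimes T}$, and the far check is a one-line support bound. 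What the paper's construction buys in return is a closer match to the intuition it advertises in the introduction (point-mass vs.\ uniform-over-a-small-block, both with disjoint supports), which dovetails with the $c\ge 2$ story about within- versus across-distribution collisions; your construction trades that narrative alignment for brevity.
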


The above impossibility of sublinear-sample testing in the case of $c=1$ sample from each distribution follows from the following simple construction, which also provides intuition behind the positive results for $c\ge 2:$  Consider the instance where $T=k/2$, and each $\bp_i$ is a point mass on a distinct element, versus the instance where each $\bp_i$ is a uniform distribution over two distinct domain elements that are disjoint from the supports of $\bp_j$ for $j \neq i$.  In both instances, given $c=1$ sample from each distribution, one will observe $T=k/2$ distinct elements each observed exactly once.  Despite this, in the second instance, $\bp_{\avg}$ is uniform over support $k$, whereas in the former the average distribution has total variation distance $1/4$ from $\unik.$

Our first main result is that  with just $c=2$ samples from each distribution, we recover the full strength of sublinear sample uniformity (and identity) testing.  The key intuition is that with two samples from each distribution, we can count both the collision statistics within distributions, as well as the collisions across distributions, and build an unbiased estimator for $\|\bp_{\avg}\|_2^2$. We then leverage techniques from \cite{diakonikolas2016collision} and adapt them to this setting of non-identical samples to bound the variance of our estimator to get the following sublinear sample complexity for uniformity testing. 

\begin{restatable}[Uniformity testing, proof in  \Cref{sec:uni}]{thm}{thmuniformityup}
	\label{thm:uniformityup}
	There is an absolute constant, $\alpha$, such that given access to $T$ distributions, $\bp_1,\ldots,\bp_T$, each supported on a common domain of size $\le k$, for any $\eps>0,$ provided $T \ge \alpha(\sqrt{k}/\eps^2+1/\eps^4)$ and given $c=2$ samples drawn from each $\bp_i$,  there exists a testing algorithm that succeeds with probability $2/3$ and distinguishes whether,
	$$\bp_{\avg} = \unik \quad \text{ versus } \quad \dtv(\bp_{\avg},\unik) \geq \eps~.$$
\end{restatable}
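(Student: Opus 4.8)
The plan is to build an unbiased estimator for $\|\bp_{\avg}\|_2^2$ from the $c=2$ samples per distribution, then run the standard collision-based uniformity test of \cite{diakonikolas2016collision} on top of this estimator, controlling its variance carefully to obtain the stated sample complexity. Write $\bp_{\avg} = \frac{1}{T}\sum_i \bp_i$ and note that $\|\bp_{\avg}\|_2^2 = \frac{1}{T^2}\sum_{i,j}\langle \bp_i,\bp_j\rangle$. The diagonal terms $\langle \bp_i,\bp_i\rangle = \|\bp_i\|_2^2$ are estimated unbiasedly by the \emph{within-distribution} collision indicator: with two samples $X_i^{(1)}, X_i^{(2)}$ from $\bp_i$, we have $\Pr[X_i^{(1)} = X_i^{(2)}] = \|\bp_i\|_2^2$. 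The off-diagonal terms $\langle \bp_i,\bp_j\rangle$ for $i \neq j$ are estimated unbiasedly by \emph{cross-distribution} collisions: $\Pr[X_i^{(a)} = X_j^{(b)}] = \langle \bp_i, \bp_j\rangle$ for any fixed choice of sample indices $a, b \in \{1,2\}$. Averaging appropriately, $Z \defeq \frac{1}{T^2}\big(\sum_i \1[X_i^{(1)} = X_i^{(2)}] + \sum_{i \neq j} \1[X_i^{(1)} = X_j^{(1)}]\big)$ (or a symmetrized version using all four cross-pairs) is an unbiased estimator of $\|\bp_{\avg}\|_2^2$. Since $\|\bp_{\avg}\|_2^2 = 1/k$ exactly when $\bp_{\avg} = \unik$ and $\|\bp_{\avg}\|_2^2 \ge \frac{1}{k} + \frac{\eps^2}{k}$ (by the standard $\ell_1$-vs-$\ell_2$ relation $\|\bp_{\avg} - \unik\|_2^2 \ge \eps^2/k$ together with $\|\bp_{\avg}\|_2^2 = \|\bp_{\avg}-\unik\|_2^2 + 1/k$) when $\dtv(\bp_{\avg},\unik)\ge\eps$, thresholding $Z$ at roughly $\frac{1}{k}(1 + \eps^2/2)$ will separate the two cases provided we can show $Z$ concentrates well enough.

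The crux is the variance bound. I would expand $\Var(Z)$ as a sum over pairs of ``collision events,'' grouped by how many of the underlying distribution-indices they share. Within-within, within-cross, and cross-cross pairs of terms that share no index are independent and contribute nothing to the covariance; the non-trivial contributions come from term-pairs sharing one or two indices. This is exactly the combinatorial structure handled in \cite{diakonikolas2016collision}, but here the bookkeeping is more delicate because (a) collisions attached to a single $\bp_i$ (the within-term and all cross-terms involving index $i$) are mutually dependent, and (b) we must express everything in terms of $\|\bp_{\avg}\|_p^p$ for $p = 2, 3$ rather than individual $\|\bp_i\|_p^p$. Key inequalities I expect to invoke: power-mean / convexity to pass from $\frac{1}{T}\sum_i \|\bp_i\|_p^p$-type quantities to $\|\bp_{\avg}\|_p^p$-type quantities (the relevant direction holds for $p\ge 1$), and, in the regime $\bp_{\avg}$ near uniform, the bounds $\|\bp_{\avg}\|_2^2 \lesssim 1/k$ and $\|\bp_{\avg}\|_3^3 \lesssim 1/k^2$ that make the $\sqrt{k}/\eps^2$ term come out. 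Carrying this through should yield $\Var(Z) \lesssim \frac{1}{T^2}\cdot\frac{1}{k} + \frac{1}{T}\cdot\frac{1}{k^2} + \frac{1}{T}\cdot\frac{1}{k}\|\bp_{\avg}\|_2^2 + \ldots$, matching the shape of the i.i.d. analysis with $T$ playing the role of the sample count.

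Given the variance bound, the final step is routine: by Chebyshev, $Z$ distinguishes the two cases with probability $2/3$ as soon as $\Var(Z) = o\big((\eps^2/k)^2\big)$ in the null case and the gap exceeds a constant multiple of the standard deviation in the alternative case; substituting the variance estimate and solving for $T$ gives the threshold $T \ge \alpha(\sqrt{k}/\eps^2 + 1/\eps^4)$, where the $\sqrt{k}/\eps^2$ term dominates when $\eps \ge k^{-1/4}$ and the $1/\eps^4$ term (coming from the lower-order, $k$-free contributions to the variance) dominates otherwise. The main obstacle is the variance computation in the middle paragraph: organizing the $\binom{2T}{2}^2$-scale sum of covariances into a manageable number of cases and bounding each by the right combination of $\|\bp_{\avg}\|_2^2$ and $\|\bp_{\avg}\|_3^3$, while being careful that dependencies among the (up to $2T-1$) collision terms touching a single distribution $\bp_i$ do not blow up the bound. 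I would handle this by first treating the purely cross-distribution part of $Z$ (which is closest to \cite{diakonikolas2016collision} after identifying each $\bp_i$'s single sample with one ``draw'') and then separately bounding the contribution of the $T$ within-distribution terms and their covariances with the cross terms.
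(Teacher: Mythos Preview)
Your proposal is essentially the paper's approach: the same collision-based estimator $Z$, the same variance-then-Chebyshev argument, and the same target bound $\Var(Z) \lesssim \frac{1}{T^2}\|\bp_{\avg}\|_2^2 + \frac{1}{T}\|\bp_{\avg}\|_3^3$.

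One caution, however: your plan to use ``power-mean / convexity to pass from $\frac{1}{T}\sum_i \|\bp_i\|_p^p$-type quantities to $\|\bp_{\avg}\|_p^p$-type quantities'' will not work as stated. Convexity of $x\mapsto\|x\|_p^p$ gives $\|\bp_{\avg}\|_p^p \le \frac{1}{T}\sum_i \|\bp_i\|_p^p$, which is the \emph{wrong} direction for upper-bounding a variance term like $\sum_t \|\bp_t\|_2^2$ by something involving only $\|\bp_{\avg}\|_2^2$. The paper does not need this step: in the variance expansion, the within-distribution piece contributes $\sum_t \|\bp_t\|_2^2$ and the cross-distribution piece contributes (among other things) $\sum_{s<t}\langle \bp_s,\bp_t\rangle$, and these are bounded \emph{together} via the exact identity $\sum_t \|\bp_t\|_2^2 + 2\sum_{s<t}\langle \bp_s,\bp_t\rangle = \big\|\sum_t \bp_t\big\|_2^2 = T^2\|\bp_{\avg}\|_2^2$. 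Likewise, the three-index covariance terms $\sum_{a<b<c}\langle \bp_a,\bp_b,\bp_c\rangle$ are bounded directly by $\big\|\sum_t \bp_t\big\|_3^3 = T^3\|\bp_{\avg}\|_3^3$, using only nonnegativity of the entries. So the passage to $\|\bp_{\avg}\|_p^p$ is algebraic, not a convexity argument; if you try to bound the diagonal and off-diagonal pieces separately you will get stuck.
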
 
Note that when $\eps \geq k^{-1/4}$, our sample complexity for uniformity testing simplifies to $O(\sqrt{k}/\eps^2)$, which matches the optimal sample complexity for testing uniformity in the i.i.d.\ setting. Since the setting of non-identical samples is a generalization of the i.i.d.\ setting, the lower bound of $\Omega(\sqrt{k}/\eps^2)$ for testing uniformity in the i.i.d.\ setting also holds in our setting; hence we get optimal sample complexity in the regime where $\eps \geq k^{-1/4}$.

Our results also hold for identity testing, through standard reduction techniques. In particular, we can adapt the reduction from identity testing to uniformity testing of~\cite{goldreich2016uniform} to our setting with non-identical distributions, yielding the following result on the reduction from an instance of identity testing to an instance of uniformity testing. 
\begin{restatable}[Identity to uniformity testing, proof in \Cref{sec:identouni}]{lem}{thmidentouni}
	\label{thm:identouni}
	Given an arbitrary reference distribution $\bq$ supported on $\le k$ elements, there exists a pair $(\Phi_{\bq},\Psi_{\bq})$ that maps distributions and samples over $[k]$ to distributions and samples over $[4k]$, and satisfies the following,
	\begin{itemize}
		\item For a sequence of distributions $\bq_1,\dots,\bq_T$ such that  $\avg(\bq_1,\dots,\bq_T)=\bq$, the map $\Phi_{\bq}$ satisfies, 
		$$\avg(\Phi_{\bq}(\bq_1),\dots,\Phi_{\bq}(\bq_T))=\mathrm{Unif}(4k).$$
		\item For two sequences of distributions $\bp^1_1,\dots,\bp^1_T$ and $\bp^2_1,\dots,\bp^2_T$, the map $\Phi_{\bq}$ satisfies,
		\begin{align*}
			&\dtv(\avg(\Phi_{\bq}(\bp_1^1),\dots,\Phi_{\bq}(\bp^1_T)), \avg(\Phi_{\bq}(\bp_1^2),\dots,\Phi_{\bq}(\bp^2_T))) 
			\geq \frac{1}{4} \cdot  \dtv(\avg(\bp^1,\dots,\bp^1_T), \avg(\bp^2,\dots,\bp^2_T))~.
		\end{align*}
	\end{itemize}
\end{restatable}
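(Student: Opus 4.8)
The plan is to realize $(\Phi_{\bq},\Psi_{\bq})$ as a single randomized relabeling (a ``channel'') $\Psi_{\bq}\colon[k]\to[4k]$ applied independently to each observed sample, with $\Phi_{\bq}$ the induced pushforward on distributions. The structural point is that $\Phi_{\bq}$ is then \emph{linear} --- left-multiplication by the row-stochastic matrix of the channel --- so it commutes with averaging: $\avg(\Phi_{\bq}(\bp_1),\dots,\Phi_{\bq}(\bp_T))=\Phi_{\bq}(\avg(\bp_1,\dots,\bp_T))$. Hence both bullets reduce to two single-distribution statements: \textbf{(i)}~$\Phi_{\bq}(\bq)=\mathrm{Unif}(4k)$, and \textbf{(ii)}~$\dtv(\Phi_{\bq}(\ba),\Phi_{\bq}(\bb))\ge\tfrac14\,\dtv(\ba,\bb)$ for all distributions $\ba,\bb$ over $[k]$. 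This is essentially the ``uniformity is complete'' reduction of \citep{goldreich2016uniform}; the new observation is just that it passes unchanged through the average. So the content is to exhibit the channel and verify (i), (ii).

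For the construction, call $i$ \emph{heavy} if $q_i>\tfrac1{4k}$ and \emph{light} otherwise; then $\bq(L)\le|L|/(4k)\le\tfrac14$, so $H\ne\emptyset$ and $|H|+|L|=k$. Inside $[4k]$ reserve, disjointly: a private block $A_i$ of size $\alpha_i:=\lceil 2kq_i\rceil\ge1$ for each heavy $i$; a private singleton $e_i$ for each light $i$; and a ``pool'' $P$ of the remaining $r:=4k-\sum_{i\in H}\alpha_i-|L|$ slots, which is nonnegative since $\sum_{i\in H}\alpha_i\le\sum_{i\in H}(2kq_i+1)\le 2k+|H|$ gives $\sum_{i\in H}\alpha_i+|L|\le 3k$. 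Define $\Psi_{\bq}$: a light input $i$ maps to $e_i$; a heavy input $i$ maps, with probability $\rho_i:=\alpha_i/(4kq_i)$, to a uniformly random slot of $A_i$, and otherwise (``overflow'') to $e_j$ with probability $(\tfrac1{4k}-q_j)/D$ for each light $j$ and to a uniformly random pool slot with the remaining probability, where $D:=\sum_{i\in H}q_i(1-\rho_i)$. Routine checks give $\rho_i\in[\tfrac12,1]$ for every heavy $i$ (treating $q_i\in(\tfrac1{4k},\tfrac1{2k})$ and $q_i\ge\tfrac1{2k}$ separately), $D>0$, and that the overflow probabilities sum to $1$ because $D$ equals the total ``uniform deficit'' $\sum_{j\in L}(\tfrac1{4k}-q_j)+r/(4k)$ of the light and pool slots.

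Claim (i) is then immediate: each slot of $A_i$ gets mass $q_i\rho_i/\alpha_i=\tfrac1{4k}$, each $e_i$ gets $q_i+D\cdot\tfrac{1/(4k)-q_i}{D}=\tfrac1{4k}$, and each pool slot gets $D\cdot\tfrac{1/(4k)}{D}=\tfrac1{4k}$. For (ii), put $\bv:=\ba-\bb$ (so $\sum_iv_i=0$), $H_v:=\sum_{i\in H}|v_i|$, $L_v:=\sum_{i\in L}|v_i|$, $O:=\sum_{i\in H}v_i(1-\rho_i)$; reading off $\Phi_{\bq}(\ba)-\Phi_{\bq}(\bb)$ slot by slot,
\[
\bigl\|\Phi_{\bq}(\ba)-\Phi_{\bq}(\bb)\bigr\|_1=\sum_{i\in H}|v_i|\rho_i+\sum_{i\in L}\Bigl|v_i+O\cdot\tfrac{1/(4k)-q_i}{D}\Bigr|+r\,|O|\cdot\tfrac1{4kD}.
\]
Using $\rho_i\ge\tfrac12$, the first sum is $\ge\tfrac12 H_v$; and using $\sum_{i\in L}\tfrac{1/(4k)-q_i}{D}\le1$ together with $|O|\le\sum_{i\in H}|v_i|(1-\rho_i)\le\tfrac12 H_v$, the triangle inequality bounds the whole right side below by $L_v-|O|\ge L_v-\tfrac12 H_v$. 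Hence $\|\Phi_{\bq}(\ba)-\Phi_{\bq}(\bb)\|_1\ge\max\{\tfrac12 H_v,\ L_v-\tfrac12 H_v\}\ge\tfrac14(H_v+L_v)=\tfrac14\|\ba-\bb\|_1$ (use the first term if $H_v\ge L_v$, the second otherwise), and dividing by $2$ gives (ii). The two bullets follow by linearity, and applying $\Psi_{\bq}$ independently to the $cT$ samples plus \Cref{thm:uniformityup} yields the claimed identity tester.

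The delicate step is making (i) and (ii) hold at once. Forcing $\Phi_{\bq}(\bq)$ exactly uniform spreads each heavy coordinate over many slots and tempts one to merge all light coordinates into a common pool --- but merging collapses distinct light coordinates and destroys (ii) (two distinct light point masses would map to the same distribution). The fix is to give \emph{every} original coordinate a private footprint and to route only the \emph{overflow} mass of the heavy coordinates through shared slots, so this overflow cancels at most half of any heavy coordinate's signal; the elementary max inequality then recovers a constant fraction. Squeezing everything into exactly $4k$ slots is what forces the block sizes $\lceil 2kq_i\rceil$ (keeping $\rho_i\ge\tfrac12$) rather than the naive $\lceil 4kq_i\rceil$.
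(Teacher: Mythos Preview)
Your proof is correct but takes a genuinely different route from the paper's. The key conceptual difference is that you explicitly observe $\Phi_{\bq}$ is a Markov kernel (left-multiplication by a row-stochastic matrix), hence \emph{linear}, so it commutes with averaging; this instantly reduces both bullets to the single-distribution claims (i) and (ii), and the non-i.i.d.\ setting disappears. The paper instead verifies the averaging by hand, summing over $t\in[T]$ coordinate by coordinate---it is implicitly using the same linearity but never names it. Your observation makes clear that \emph{any} channel-based identity-to-uniformity reduction automatically extends to the non-identical setting, which is a cleaner takeaway. The explicit channel you build is also different: the paper follows Goldreich's three-stage construction (mix each $\bp$ with $\mathrm{Unif}(k)$ so every mass is $\ge 1/(2k)$, then floor-filter into a ``remainder'' bin, then spread each bin uniformly over a block), and reads off the $1/4$ factor from $\lfloor k'r(i)\rfloor/(k'r(i))\ge 1/2$. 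You instead split coordinates into heavy/light, give each a private footprint, and route only the overflow of heavy coordinates through shared slots; your $\rho_i\ge 1/2$ plays the same role as the paper's floor ratio, and your max-of-two-bounds argument recovers the same constant. Both constructions land in $[4k]$ with factor $1/4$; yours is self-contained and avoids the three-fold composition, while the paper's is the standard citable reduction.
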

The above result combined with the result for uniformity testing, immediately yields the following result for identity testing. Our bounds for identity testing are optimal in the worst case for large values of $\eps\geq k^{-1/4}$. 
\begin{restatable}[Identity testing]{cor}{thmidentityup}
	\label{thm:identityup}
	There is an absolute constant, $\alpha$, such that given a reference distribution $\bq$ supported on $\le k$ elements and $c=2$ samples from each of $T$ distributions, $\bp_1,\ldots,\bp_T$, there exists a testing algorithm that succeeds with probability $2/3$ and distinguishes whether, $$\bp_{\avg} =\bq \quad \text{ versus } \quad \dtv(\bp_{\avg},\bq) \geq \eps~,$$ provided  $T \ge \alpha(\sqrt{k}/\eps^2+1/\eps^4).$
\end{restatable}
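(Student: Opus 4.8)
The plan is to obtain Corollary~\ref{thm:identityup} as a black-box reduction, combining the identity-to-uniformity reduction of \Cref{thm:identouni} with the uniformity tester of \Cref{thm:uniformityup}; essentially all the real work is already contained in those two results, so the proof is short.

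Given the reference distribution $\bq$ supported on $[k]$, I would first instantiate the pair of maps $(\Phi_{\bq},\Psi_{\bq})$ guaranteed by \Cref{thm:identouni}. For each $i$, conceptually replace $\bp_i$ by $\Phi_{\bq}(\bp_i)$, a distribution on $[4k]$, and replace the $c=2$ observed draws from $\bp_i$ by their images under $\Psi_{\bq}$, which are $c=2$ draws from $\Phi_{\bq}(\bp_i)$. Crucially, the algorithm still knows which transformed pair of samples came from which transformed distribution, so the grouping structure that \Cref{thm:uniformityup} relies on is preserved. The tester for identity then simply runs the uniformity tester of \Cref{thm:uniformityup} on this transformed instance, with support parameter $4k$ and proximity parameter $\eps/4$, and returns its output.

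For correctness: in the YES case $\bp_{\avg}=\bq$, the first item of \Cref{thm:identouni} gives $\avg(\Phi_{\bq}(\bp_1),\dots,\Phi_{\bq}(\bp_T)) = \mathrm{Unif}(4k)$, so the uniformity tester accepts with probability at least $2/3$. In the NO case $\dtv(\bp_{\avg},\bq)\ge\eps$, apply the second item of \Cref{thm:identouni} with the sequence $(\bp_1,\dots,\bp_T)$ against any sequence averaging to $\bq$ (for instance the constant sequence $\bq,\dots,\bq$, whose image under $\Phi_{\bq}$ averages to $\mathrm{Unif}(4k)$ by the first item); this yields $\dtv\bigl(\avg(\Phi_{\bq}(\bp_1),\dots,\Phi_{\bq}(\bp_T)),\mathrm{Unif}(4k)\bigr) \ge \tfrac14\,\dtv(\bp_{\avg},\bq) \ge \eps/4$, so the uniformity tester rejects with probability at least $2/3$. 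Since the reduction is exact in both directions, no amplification is needed.

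For the sample complexity: \Cref{thm:uniformityup} invoked with support size $4k$ and distance $\eps/4$ requires $T \ge \alpha'\bigl(\sqrt{4k}/(\eps/4)^2 + 1/(\eps/4)^4\bigr) = \alpha'\bigl(32\sqrt{k}/\eps^2 + 256/\eps^4\bigr)$, which is absorbed into $\alpha(\sqrt{k}/\eps^2 + 1/\eps^4)$ for a suitable absolute constant $\alpha$, and only $c=2$ draws per distribution are ever used. The one thing that requires genuine care — though it is bookkeeping rather than a new idea, and is part of the construction underlying \Cref{thm:identouni} — is verifying that $\Psi_{\bq}$ really does turn the two i.i.d.\ draws from $\bp_i$ into two i.i.d.\ draws from $\Phi_{\bq}(\bp_i)$ (by acting on each sample independently with its own internal randomness) and that this sample-level map is consistent with the distribution-level map $\Phi_{\bq}$ under averaging.
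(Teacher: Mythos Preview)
Your proposal is correct and takes essentially the same approach as the paper, which simply states that the corollary ``immediately'' follows by combining \Cref{thm:identouni} with \Cref{thm:uniformityup}. You have spelled out the details (instantiating the reduction with support $4k$ and distance parameter $\eps/4$, and verifying both the YES/NO cases and the sample-complexity bookkeeping) that the paper leaves implicit.
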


We note that our sublinear sample testing algorithms leverage the knowledge of which distribution each sample was drawn from.  As the following theorem asserts, this is necessary --- there is a constant $\rho > 0$ such that even in the linear regime with $\rho k$ samples, there is no testing algorithm that simply returns a function of the multiset of samples, throwing away the information about which samples were drawn from the same distributions.

\begin{restatable}[Lower bound for ``pooled'' estimators, proof in \Cref{sec:lb_pool}]{thm}{thmpoollb}
	\label{thm:poollb}
	There is an absolute constant, $\rho > 0$, such that given access to distributions $\bp_1,\ldots,\bp_{\rho k}$ supported on a domain of size $k$, given $c=2$ samples from each of $\bp_1,\ldots,\bp_{\rho k}$,  no tester that returns a function of the multiset of samples (i.e., ignoring the information about which samples were drawn from the same distributions) can distinguish the case 
	$$\bp_{\avg} =\unik \quad \text{ versus } \quad \dtv(\bp_{\avg},\unik) \geq 0.01~$$
	with success probability at least $2/3$.\end{restatable}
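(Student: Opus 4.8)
The plan is to run the standard two‑prior (Le Cam) argument. I would exhibit a distribution $\mathcal{P}_{\mathrm{yes}}$ over instances $(\bp_1,\dots,\bp_{\rho k})$ all satisfying $\bp_{\avg}=\unik$ and a distribution $\mathcal{P}_{\mathrm{no}}$ over instances all satisfying $\dtv(\bp_{\avg},\unik)\ge 0.01$, and argue that the induced laws of the pooled sample (the multiset of $2\rho k$ draws) are statistically close. Since a pooled estimator is by definition a function of this multiset and the null $\unik$ is invariant under relabelings of $[k]$, I would first symmetrize each prior by averaging over a uniformly random permutation of the domain; relabeling preserves ``$\bp_{\avg}$ uniform'' and preserves the distance $\dtv(\bp_{\avg},\unik)$, so the constraints survive, and after symmetrization any pooled tester may be taken to depend only on the \emph{fingerprint} of the pooled sample, i.e.\ the counts $(\phi_1,\phi_2,\dots)$ of domain elements that appear exactly $1,2,\dots$ times. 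It then suffices to bound the total variation distance between the fingerprint laws under $\mathcal{P}_{\mathrm{yes}}$ and $\mathcal{P}_{\mathrm{no}}$ by a small constant (say $1/4$), whence no tester can be correct with probability $\ge 2/3$ on both.

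For the constructions, the guiding principle is that a symmetric pooled tester can only access \emph{coincidence} statistics of the pooled sample, whose expectations are symmetric functions of the power sums of the collection; e.g.\ the expected number of colliding pairs equals $\sum_i\|\bp_i\|_2^2+2\sum_{i\neq j}\langle\bp_i,\bp_j\rangle=2T^2\|\bp_{\avg}\|_2^2-\sum_i\|\bp_i\|_2^2$, and the expected numbers of triple, quadruple, \dots\ coincidences are governed by analogous symmetric functionals. The key point is that the split of these coincidences into ``within‑distribution'' and ``across‑distribution'' contributions is invisible to a pooled tester, so there is freedom to shift mass between $\sum_i\|\bp_i\|_2^2$ and $T^2\|\bp_{\avg}\|_2^2$. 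The attack exploits this: in the no‑case I would make $\bp_{\avg}$ concentrated on a set $H$ whose size is a constant fraction of $k$ bounded away from $1$ (so it is $0.01$‑far from $\unik$, which necessarily \emph{raises} $\|\bp_{\avg}\|_2^2$ and hence the cross‑distribution coincidence rate), while simultaneously making the individual $\bp_i$ more concentrated by exactly the amount needed to raise $\sum_i\|\bp_i\|_2^2$ so that every coincidence statistic matches the yes‑case. Concretely, I would take the yes‑instances to be regular designs of (near‑)uniform distributions over overlapping blocks arranged so that $\sum_i\bp_i(x)$ is the same for every $x$ (forcing $\bp_{\avg}=\unik$ exactly), and the no‑instances to be designs of more concentrated distributions over overlapping blocks contained in $H$, with block sizes and overlap multiplicities tuned so that $\sum_i\|\bp_i\|_2^2$, $\sum_i\|\bp_i\|_3^3$, the pairwise‑overlap profile, and the higher‑order incidence counts all line up across the two sides; the unused elements then give $\dtv(\bp_{\avg},\unik)\ge 0.01$, and $\rho$ is chosen small enough that this is simultaneously feasible.

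The crux — and the step I expect to be the main obstacle — is matching the \emph{entire} fingerprint distribution rather than just the expected collision count: one must match all the relevant coincidence moments as well as the fluctuations of these counts, since any $\Theta(k)$‑sized discrepancy would stand out against the $\Theta(\sqrt{k})$‑sized sampling noise. This is precisely why overlapping supports are forced: with disjoint supports one can check that requiring $\bp_{\avg}=\unik$ pins the yes‑instance down to a partition into $(1/\rho)$‑uniform blocks with exactly $\rho^2 k$ expected collisions, while making $\bp_{\avg}$ far keeps the collision count strictly above $\rho^2 k$ by a $\Theta(k)$ amount — so disjoint supports provably cannot work. To carry out the matching I would Poissonize the sample count, write the fingerprint law as a product over domain elements of simple count distributions, and reduce ``the fingerprint laws are close'' to ``a short list of per‑element moment functionals agree (up to $o(1/k)$ per element)'', which the regular designs are engineered to guarantee; bounding the contribution of the higher‑order terms and of the atypical overlaps (e.g.\ blocks sharing more than the intended number of elements) is the remaining technical work. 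A secondary point is verifying that the symmetrized $\mathcal{P}_{\mathrm{yes}}$ is supported on instances with $\bp_{\avg}$ \emph{exactly} uniform (ensured by taking the designs exactly regular) and that $\mathcal{P}_{\mathrm{no}}$ is supported on $0.01$‑far instances with probability $1-o(1)$, which is immediate from the fixed size of $H$.
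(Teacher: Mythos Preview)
Your high-level scaffolding (two priors, symmetrize, reduce to fingerprints, then show the fingerprint laws are close) is the right one, but the proposal contains a concrete error that sends you down the wrong path. You assert that ``disjoint supports provably cannot work'' because requiring $\bp_{\avg}=\unik$ pins the yes-instance to a single design with a fixed collision count. That inference assumes every block hosts exactly one distribution and all blocks look the same; neither is forced. The paper's construction \emph{does} use disjoint blocks (each of constant size $m$, with $k=mn$), but places \emph{two} distributions per block, and, crucially, uses two different block ``types'' on the yes side: in an $\alpha$-fraction of blocks both distributions are $\mathrm{Unif}(m)$, while in the remaining $(1-\alpha)$-fraction the two distributions are uniform on complementary halves of the block. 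Both types average to $\mathrm{Unif}(m)$, but they have different within-block collision profiles, so the mixing parameter $\alpha$ is a free knob that lets one match the no-side's expected $2$-, $3$-, and $4$-way collision counts exactly (the paper picks $\alpha=3/4$ and a specific $\pm\eps$ pattern on the no side so that a single $\alpha$ matches all three simultaneously). This is precisely the degree of freedom your ``disjoint supports'' argument overlooked.

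Once the first moments of the collision vector are matched, the paper does not Poissonize or attempt a per-element product decomposition. Instead it observes that, because the blocks are disjoint, the $3$-dimensional fingerprint vector (number of elements seen exactly $2,3,4$ times; $0$ and $1$ are determined, and $\ge 5$ never occurs) is a sum of $n$ independent bounded vectors, one per block. A multivariate CLT then sends both fingerprint laws to discretized Gaussians; since the means agree exactly and the $3\times 3$ covariances agree up to a $1/m$-factor perturbation, the two Gaussians are $O(1/m)$ apart in TV, and a triangle inequality finishes. Your Poissonization plan is risky here for a separate reason: Poissonizing the number of samples per distribution collapses the problem to i.i.d.\ sampling from $\bp_{\avg}$ (as the paper notes elsewhere), which would make the two cases distinguishable rather than indistinguishable; and Poissonizing only the number of blocks does not yield the across-element independence you would need for a clean product-over-elements argument. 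So the overlapping-designs-plus-Poissonization route, even if it could be made to work, is both more complicated and more fragile than the disjoint-block CLT argument the paper uses.
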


Our positive results, and the intuition behind the uniformity testing algorithm also naturally extend to testing properties of pairs of distributions.  In particular, we consider the ``closeness'' testing problem in this non-i.i.d.\ setting: given two sequences of distributions supported on a set of size $k$, $\bp_1,\ldots,\bp_T$ and $\bq_1,\ldots,\bq_T$ and $c$ i.i.d.\ draws from each of the distributions, how many total samples are required to distinguish between $\bp_{\avg}=\bq_{\avg}$ versus $\dtv(\bp_{\avg},\bq_{\avg}) > \eps$?  In this setting, we show that, given $c=3$ samples from each distribution, we recover the optimal sample complexity of $O(k^{2/3})$ from the i.i.d.\ setting, up to the $\eps$ dependence~\citep{batu2013testing,chan2014optimal}.  While our proof leverages $c=3$ samples to simplify some of the analysis, we strongly believe that an identical result holds for $c=2$ examples from each distribution.

\begin{restatable}[Closeness testing, proof in Appendix~\ref{section:closeness}]{thm}{thmcloseness}
	\label{thm:closeness}
	There is an absolute constant, $\alpha$, such that for two sequences of distributions, supported on a common domain of size $k$, $\bp_1,\ldots,\bp_T$, and $\bq_1,\ldots,\bq_T$, given $c=3$ samples from each of the distributions, there exists a testing algorithm that succeeds with probability $2/3$ and distinguishes whether 
	$$\bp_{\avg} =\bq_{\avg} \text{ versus } \dtv(\bp_{\avg},\bq_{\avg}) \geq \eps,$$ provided  $T \ge \alpha(k^{2/3}/\eps^{8/3}).$
\end{restatable}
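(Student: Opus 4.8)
The plan is to run the collision-based $\ell_2$ machinery behind \Cref{thm:uniformityup}, now estimating $\|\bp_{\avg}-\bq_{\avg}\|_2^2$ rather than $\|\bp_{\avg}\|_2^2$, and then convert an $\ell_2$ gap into the $\ell_1$ gap of the statement via the standard closeness-testing reduction. Write $\bp_{\avg}-\bq_{\avg}=\tfrac1T\sum_i(\bp_i-\bq_i)$, so that $\|\bp_{\avg}-\bq_{\avg}\|_2^2=\tfrac1{T^2}\sum_{i,j}\langle \bp_i-\bq_i,\ \bp_j-\bq_j\rangle$, and expand each inner product into the four pieces $\langle\bp_i,\bp_j\rangle,\langle\bp_i,\bq_j\rangle,\langle\bq_i,\bp_j\rangle,\langle\bq_i,\bq_j\rangle$. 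Every piece is estimated by a single collision: for $i\neq j$ we compare one draw from each of the two distributions involved (so the cross-distribution sums only ever touch the first of the three draws from each $\bp_i$ and each $\bq_i$), while for the diagonal $i=j$ the "same-distribution'' pieces $\|\bp_i\|_2^2,\|\bq_i\|_2^2,\langle\bp_i,\bq_i\rangle$ each need two draws — and this is essentially where the third sample is spent: it buys enough independent copies of each $\bp_i$ and $\bq_i$ to decouple the within-group collision counts $\|\bp_i\|_2^2,\|\bq_i\|_2^2$ from the across-group sums, excising a block of correlation terms from the variance bound (we conjecture $c=2$ already suffices, at the cost of carrying those terms). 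Summing the appropriately signed collision counts gives an unbiased estimator $Z$ with $\E[Z]=T^2\,\|\bp_{\avg}-\bq_{\avg}\|_2^2$, and restricting all counted collisions to a fixed element set $L$ gives $\E[Z^L]=T^2\,\|\bp^L_{\avg}-\bq^L_{\avg}\|_2^2$.

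Second, handle heavy elements, which otherwise dominate the variance. As in i.i.d.\ closeness testing, the variance of a raw $\ell_2$ collision estimator is controlled by quantities like $\|\bp_{\avg}\|_2^2,\|\bq_{\avg}\|_2^2$ and cubic moments, which can be $\Omega(1)$ if the averages concentrate on a few atoms. So I would first estimate the set $H$ of domain elements whose mass under $\bp_{\avg}$ or $\bq_{\avg}$ exceeds a threshold $\tau$ (to be tuned, roughly $k^{-2/3}$ up to $\eps$-factors), using that the empirical distribution of the first draws $a_1^{(1)},\dots,a_T^{(1)}$ is an unbiased estimate of $\bp_{\avg}$ (cf.\ \Cref{claim:learning}); since $|H|\le 2/\tau$, both $H$ and the $\bp_{\avg}$- and $\bq_{\avg}$-masses on $H$ can be learned accurately from $O(|H|/\eps^2)$ draws, which is within budget. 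On the complementary light part $L$ we then have $\|\bp^L_{\avg}\|_2^2,\|\bq^L_{\avg}\|_2^2\le \tau$ and small cubic moments, which is exactly what makes the $\ell_2$ test on $L$ cheap.

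The core step is bounding $\Var[Z^L]$. I would expand $Z^L$ into sums of products of collision indicators and classify terms by how many of the $2T$ distributions they share, following the i.i.d.\ collision analyses of \cite{batu2013testing,diakonikolas2016collision}, but with the extra bookkeeping forced by the group structure: each distribution contributes only $c=3$ draws, so within-group collision counts are $O(1)$ and cannot blow up, while the across-group terms mirror the i.i.d.\ computation with the ``one-sample'' distributions $\bp_i,\bq_i$ replacing i.i.d.\ draws from a single distribution. The target is a bound of shape $\Var[Z^L]\lesssim \E[Z^L]\cdot T+T^2(\|\bp^L_{\avg}\|_2^2+\|\bq^L_{\avg}\|_2^2)+(\text{cubic light terms})\lesssim \E[Z^L]\cdot T+T^2\tau$, which by a standard thresholding argument lets the test separate $\bp^L_{\avg}=\bq^L_{\avg}$ from $\|\bp^L_{\avg}-\bq^L_{\avg}\|_2^2\ge\gamma$ once $T\gtrsim \sqrt{\tau}/\gamma+1/\sqrt{\gamma}$. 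Finally I would glue the pieces: if $\dtv(\bp_{\avg},\bq_{\avg})\ge\eps$, then either the mass discrepancy on $H$ is $\ge\eps/2$ (caught by the learning step) or $\|\bp^L_{\avg}-\bq^L_{\avg}\|_1\ge\eps$, and since $L$ has $\le k$ elements this forces $\|\bp^L_{\avg}-\bq^L_{\avg}\|_2^2\ge \eps^2/k=:\gamma$. Substituting into the $\ell_2$ requirement and optimizing $\tau$ against $\eps$ — the cubic light-term contribution is what forces $\tau$ to shrink with $\eps$ and is precisely where the $\eps$-dependence degrades, exactly as in the original closeness-testing analysis of \cite{batu2013testing} — yields $T=O(k^{2/3}/\eps^{8/3})$.

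I expect the main obstacle to be the variance bound for $Z^L$: the estimator is a signed sum over $\Theta(T^2)$ ordered pairs of distributions whose collision indicators reuse draws in intricate ways (the draw $a_i^{(1)}$ appears in the comparison of $\bp_i$ with \emph{every} $\bp_j$ and \emph{every} $\bq_j$), so the covariance expansion has many more term types than in the i.i.d.\ case, and the delicate point is keeping each type controlled by the light-element moments $\le\tau$ rather than by the possibly-constant full moments. A secondary subtlety is that the heavy/light split is defined through the unknown averages $\bp_{\avg},\bq_{\avg}$, so an element misclassified near the threshold must not be able to inject $\Omega(1)$ variance; this is handled by a $\tau$-versus-$2\tau$ margin and a union bound, but it must be carried out with arbitrary, non-identical per-distribution masses.
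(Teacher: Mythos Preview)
Your plan is essentially the paper's: heavy/light split with threshold $b=(\eps/k)^{2/3}$, learn the heavy part directly from empirical frequencies, and run a collision-based $\ell_2$ estimator for $\|\bp_{\avg}-\bq_{\avg}\|_2^2$ on the light part, with the variance controlled by the light $\ell_2$ and cubic moments; the cubic term (and the heavy-learning cost $1/(b\eps^2)$) is indeed what pins the dependence at $k^{2/3}/\eps^{8/3}$.

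One allocation detail worth flagging: the paper spends the \emph{first} draw from each $\bp_t,\bq_t$ solely on defining the heavy set, and then builds the $\ell_2$ estimator $F=Z+Z'-2Q$ entirely from draws $2$ and $3$ (diagonal terms via $\mathds{1}[X_t(2)=X_t(3)]$, off-diagonal via $\mathds{1}[X_t(2)=X_s(3)]$, cross via $\mathds{1}[X_s(2)=X'_t(2)]$), so that the light set $S$ is independent of the samples feeding the $\ell_2$ test; the variance is then bounded crudely by $\Var[F]\le 8(\Var[Z]+\Var[Z']+\Var[Q])$ using the separate Lemmas for $Z$ and $Q$. Your proposal instead reuses the first draws both to define $L$ and to feed the off-diagonal collision sums, which injects exactly the dependency you flag in your last paragraph; the paper's cleaner allocation sidesteps that issue rather than handling it via a $\tau$-vs-$2\tau$ margin. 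Relatedly, on the light side the paper does not simply restrict collisions to $L$ but constructs new distributions $\bp'_t$ (draw from $\bp_t$; if the outcome lands in $B$, replace it by a uniform element of $[k]$) and runs the $\ell_2$ tester on those---functionally the same idea, but it lets them invoke the $\ell_2$ lemma as a black box on bona fide distributions.
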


Finally, we note that the interesting phase shift between $c=1$ and $c>1$ holds only in the non-Poisson setting. Results in the Poisson setting remain the same as that of the i.i.d.\ setting. In particular, if instead of $c$ samples,  we are given $Poi(c)$ samples from each distribution, then even in the case of $c=1$, we recover all testing results known in the i.i.d.\ setting. This  follows from the following basic fact: receiving $Poi(c)$ samples from every distribution is equivalent in distribution to receiving $Poi(cT)$ samples from the average distribution. A more formal version of this result is stated in \Cref{app:poisson_identity} for identity testing, but such a result also holds for testing other properties. To see how this bypasses the obstruction discussed in Claim~\ref{thm:cone}, note that with $\mathrm{Poi}(1)$ samples per distribution, in the point-mass case, each distribution yields repeated draws of its single element, whereas when the distributions have support on two distinct elements, within-group collisions start to occur.

Proving our upper bounds mostly involves adapting the techniques from the i.i.d.\ case to our setting.  Establishing the lower bound for pooled samples is more complex, involving a new construction that might be of independent interest. Nonetheless, more than the techniques, our main contribution is the problem definition and conceptual takeaways. Commonly, when collecting samples, heterogeneity invariably arises, particularly when these samples are gathered over various time points, geographical locations, or from different users, as highlighted in the examples above. Thus, it is natural to ask whether sublinear sample testers are robust to such departures from the i.i.d.\ setting.  We demonstrate that while existing testers are not directly applicable (as evidenced by our lower bound for pooled samples), they can be effectively adapted, provided one collects multiple samples per distribution and leverages the source information of samples. For example, for identity testing, a standard estimator involves counting collisions, treating all collisions similarly. In contrast, our modified estimator uses a weighted collision count, with weights varying based on whether collisions occur within samples from the same distribution or not.

\subsection{Future Directions}

There are many possible directions for future work. From a technical standpoint, a concrete goal is to match the i.i.d.\ $\varepsilon$‑dependence for uniformity and closeness. For uniformity, the gap comes from variance‑reducing terms in the i.i.d.\ analysis that are not available when distributions differ (see Remark~\ref{remark:optimal-sample-complexity}). For {closeness}, the current $\varepsilon^{-8/3}$ dependence comes from the classical collision analysis \citep{batu2013testing}; matching the i.i.d.\ $\varepsilon^{-4/3}$ will likely require a new analysis beyond collisions (e.g., adapting the counts‑based statistic of \citet{chan2014optimal} to the non‑i.i.d.\ setting). It is not clear to us whether that optimal $\varepsilon$‑dependence is achievable here.

Another direction is improving the number of samples per distribution. For {closeness}, we currently use $c=3$ to keep the collision‑based $\ell_2$ estimate independent of the heavy/light estimation; with a more careful reuse of the same samples and explicit dependency tracking, $c=2$ may suffice. For {uniformity}, we show that $c=1$ is impossible in the worst case; it is natural to ask for structural conditions under which $c=1$ becomes feasible. If all $p_i$ equal $p_{\mathrm{avg}}$, the i.i.d.\ guarantees apply; more generally, if each $p_i$ is sufficiently close to $p_{\mathrm{avg}}$, we expect i.i.d.\ rates to persist with a small additional failure probability. As a simple check, suppose each $p_i$ is within $\delta/T$ TV distance of $p_{\mathrm{avg}}$. Then the joint distribution of the $T$ samples is within TV distance $\le \delta$ of $T$ i.i.d.\ samples from $p_{\mathrm{avg}}$. Hence any optimal i.i.d.\ tester applies with an extra failure probability $\delta$. It would be  interesting to study whether larger deviations beyond this naive bound can be tolerated.

Finally, beyond the specific problems addressed in this work, there are many obvious directions, including mapping the sample complexities for other natural property testing and estimation problems within the framework of non-identically distributed samples.  Studying properties of the average distribution is a natural starting point, though other summary statistics of a collection of distributions are also worth considering. For example, in chronological settings, one might be interested in how certain properties vary  over time, or predicting properties of future distributions.  More broadly, probing the robustness of sublinear sample testing from other perspectives may yield further surprises with practical implications.

\section{Related Work}
\label{sec:related}

Interest from the theoretical computer science community on distribution testing, given i.i.d.\ sample access, was sparked by the problem of distinguishing whether an unknown probability distribution was the uniform distribution over support $k$, versus has  total variation distance at least $\eps$ from this uniform distribution. The early work of \cite{goldreich2011testing} proposed a  collision-based estimator, demonstrating that testing this property could be accomplished given significantly fewer samples than would be required to \textit{learn} the distribution in question. %
Since then, a line of work eventually pinned down the exact sample complexity of $\Theta(\sqrt{k}/\eps^2)$ --- optimal to constant factors including the dependence on the probability of the tester failing~\citep{paninski2008coincidence, valiant2017automatic, diakonikolas2014testing, diakonikolas2016collision,diakonikolas2017optimal}.     For a comprehensive exposition of this line of work, we refer the reader to Chapters 2 and 3 in \cite{clement-tb}.

A closely related problem to uniformity testing is testing \textit{identity}, where the task is to determine if an unknown probability distribution, $\bp$, is equal to versus far from a fixed reference distribution, $\bq$, given i.i.d sample access \citep{batu2001testing}.  While ``instance-optimal'' estimators --- estimators that are optimal for \emph{every} $\bq$ --- are known~\citep{valiant2017automatic}, among reference distributions supported on $\le k$ elements, the uniform distribution is the most difficult to test.  Beyond this, building off \cite{diakonikolas2016new}, \cite{goldreich2016uniform} gave a reduction from identity testing to uniformity testing, showing that any uniformity testing algorithm can be leveraged in a black-box fashion for identity testing.  We leverage both the high-level design of the optimal uniformity testing algorithms, and this reduction.  

For the problem of ``closeness'' testing, the goal is still to distinguish $\bp = \bq$ from the case that these two distributions have distance at least $\eps$, though now $\bq$ is not a fixed reference distribution, and instead we are given samples from both distributions.  In the i.i.d.\ sample setting, for distributions of support size $k$, a sample complexity of $O(k^{2/3}\log k/\epsilon^{8/3})$ was established in~\cite{batu2001testing,batu2013testing}, with the optimal sample complexity of $\Theta(\max(k^{2/3}/\epsilon^{4/3}, k^{1/2}/\epsilon^2))$ later pinned down in~\cite{chan2014optimal}.

Beyond uniformity, identity, and closeness testing, there is an enormous body of work on testing or estimating other distribution properties or properties of pairs of distributions.  Broadly speaking, the optimal sample complexities for these tasks are now pinned down to constant factors, at least in the setting where one obtains i.i.d.\ samples

\paragraph{Beyond i.i.d.\ samples:}
The questions of learning, property testing and estimation have also been considered in various settings that deviate from the idealized setting of samples drawn i.i.d.\ from a fixed distribution.  These include works from the perspective of robust statistics, where some portion of the samples may be adversarially corrupted (see e.g.,~\cite{diakonikolas2019robust, diakonikolas2019recent, charikar2017learning}, as well as the broad area of time-series analysis where observations are drawn from time-varying distributions (e.g.,~\cite{chernoff1964estimating,kolar2010estimating, lampert2015predicting}).  

Most similar to the present paper are works that explore learning and testing questions in various settings where there are a number of data sources, each providing a batch of samples.  In~\cite{QiaoV18} and the followup papers~\cite{jain2020optimal,chen2020learning}, it is assumed that an unknown $1-\eps$ fraction of data sources correspond to identical distributions, each providing a batch of $c$ i.i.d.\ samples, and no assumptions are made about the samples in the remaining batches.  The main results here are that as $c$ increases, the batches can be leveraged for learning: specifically, the distribution of the $1-\eps$ fraction can be learned to accuracy $O(\eps/\sqrt{c})$.   The papers~\cite{tian2017learning,vinayak2019maximum} explore the setting where batches of size $c$ are drawn i.i.d.\ from a collection of heterogeneous distributions, and the goal is to learn the multiset of distributions.  The punchline here is that the multiset of distributions can be learned given asymptotically fewer draws from each distribution than would be necessary to learn them in isolation.  While the focus in these works is on the setting where the distributions have support size 2 (i.e. the setting corresponds to flipping coins with heterogeneous probabilities), the results apply more broadly.   The paper of \cite{v009a008} considered the setting of a collection of $T$ distributions over large support size, and explored the question of testing whether all distributions in the collection are the same, versus have significant variation.  They considered two sampling models --- a ``query'' model where one can adaptively choose which distribution to draw a sample from, and a weaker model where each sample is drawn from a distribution selected uniformly at random from the $T$ distributions. 
We note that in this latter model, testing properties of the average distribution trivially reduces to the i.i.d.\ setting. Later works by \cite{diakonikolas2016new, aliakbarpour2016learning} improve the results obtained by \cite{v009a008}. Finally, the paper by \cite{aliakbarpour2019testing}  considers the setting with a collection of distributions, and given one sample (in expectation) from each distribution, the task is to distinguish whether all the distributions are equal to a reference distribution, or all distributions are far from it. In contrast, our focus is on distinguishing whether the average distribution is equal to  or far from a reference distribution. They also consider the case where we only have sample access to the reference distribution. General non-vacuous results are not possible in their setting given a single  sample from each distribution. To bypass this, they impose a structural condition on the collection of distributions. We also face a similar issue and show that it can be bypassed in our setting by drawing multiple samples from each distribution.

\section{Uniformity testing from non-identical samples}
\label{sec:uni}
\allowdisplaybreaks

In this section, we prove our sample complexity results for uniformity testing in the setting of non-identical samples. Our tester for uniformity in the setting of non-identical samples is based on collisions and constructs an unbiased estimator for $\|\bp_{\avg}\|_2^2$. As $\|\bp_{\avg}\|_2^2$ equals $\frac{1}{k}$ when $\bp_{\avg}$ is uniform and is larger than $(1+\eps^2)\frac{1}{k}$ when $\bp_{\avg}$ is $\eps$-far from uniform, we use this separation to solve the testing problem. The main part of the proof constitutes bounding the variance of the estimator that we construct in order to show that it concentrates around its mean. We adapt the proof from \cite{diakonikolas2016collision} to the setting of non-identical samples to bound the variance of our estimator.

To test uniformity, we build an unbiased estimator for $\|\bp_{\avg}\|^2_2$. In the following, we provide a description of this estimator. For each $t \in [T]$, let $(X_t(1), X_t(2))$ denote the 2 i.i.d.\ samples drawn from the distribution $\bp_t$. For each $t\in [T]$, we define,
\begin{align*}
	Y_{t} = \mathds{1}[X_t(1) = X_t(2)] \text{ and }
	Y_{st} = \mathds{1}[X_t(1) = X_s(1)]~.
\end{align*}
Note that $\E[Y_{t}]=\|\bp_t\|^2_2$ and $\E[Y_{st}]=\langle \bp_t, \bp_{s} \rangle$ for all $s,t\in [T]$. Using these random variables, we define an estimator $Z$ as follows,
\begin{align}\label{eq:estimator}
	Z = \frac{1}{T^2}\Big(\sum_{t \in [T]}Y_{t} + 2\sum_{s < t}Y_{st}\Big).
\end{align}
$Z$ is an unbiased estimator of $\|\bp_{\avg}\|^2_2$. In the following we formally state this result and bound the variance of the estimator.

\begin{lemma}\label{lem:uniformityup}
	The estimator $Z$ defined in \Cref{eq:estimator} satisfies the following:
	$$\E[Z] = \|\bp_{\avg}\|^2_2 \text{ and }\Var[Z] \leq \frac{4}{T^2}\Big\| \bp_{\avg} \Big\|^2_2 + \frac{48}{T}\Big\| \bp_{\avg}\Big\|^3_3.$$
\end{lemma}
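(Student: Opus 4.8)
The plan is to first verify unbiasedness, then expand the variance into a sum of covariance terms indexed by pairs of "collision events," and finally bound each group of terms using moments of $\bp_{\avg}$ together with the fact that each $\bp_t$ has $\ell_1$-mass one.

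For unbiasedness, I would just take expectations termwise: $\E[Y_t]=\|\bp_t\|_2^2$ and $\E[Y_{st}]=\langle\bp_s,\bp_t\rangle$, so
\[
\E[Z]=\frac{1}{T^2}\Big(\sum_t \|\bp_t\|_2^2 + 2\sum_{s<t}\langle\bp_s,\bp_t\rangle\Big)=\frac{1}{T^2}\Big\|\sum_t \bp_t\Big\|_2^2 = \Big\|\tfrac{1}{T}\sum_t\bp_t\Big\|_2^2=\|\bp_{\avg}\|_2^2.
\]

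For the variance, write $Z=\frac{1}{T^2}W$ where $W=\sum_t Y_t + 2\sum_{s<t}Y_{st}$, so $\Var[Z]=\frac{1}{T^4}\Var[W]$. Expanding, $\Var[W]$ is a sum of covariances $\Cov(A,B)$ where each of $A,B$ is either some $Y_t$ or some $Y_{st}$. The key structural observation is that $Y_t$ depends only on the samples $X_t(1),X_t(2)$ from distribution $\bp_t$, and $Y_{st}$ depends only on $X_s(1),X_t(1)$; hence two such indicator variables are \emph{independent} (covariance zero) whenever the index sets of distributions they touch are disjoint. So the nonzero covariance terms are exactly those whose index sets share at least one distribution. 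I would organize the surviving terms by the number and configuration of shared indices — roughly: (i) $\Cov(Y_t,Y_t)=\Var[Y_t]\le\|\bp_t\|_2^2$; (ii) $\Cov(Y_t,Y_{st})$ type terms where the shared index is $t$ (or $s$); (iii) $\Cov(Y_{st},Y_{st'})$ with one shared index; (iv) $\Cov(Y_{st},Y_{st})=\Var[Y_{st}]\le\langle\bp_s,\bp_t\rangle$. For each group I would bound the individual covariance by a product of inner products / $\ell_2$-norms of the relevant $\bp_i$'s (using $0\le Y\le1$ so $\Cov(A,B)\le\E[AB]\le\min(\E A,\E B)$, and more refined bounds like $\E[Y_{st}Y_{rt}]\le\sum_x \bp_t(x)^{?}\cdots$ when $t$ is shared), then sum over the free indices. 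The summations are then controlled by identities such as $\sum_t \bp_t = T\bp_{\avg}$, $\sum_{s,t}\langle\bp_s,\bp_t\rangle = T^2\|\bp_{\avg}\|_2^2$, $\sum_t\|\bp_t\|_2^2 \le \sum_t\|\bp_t\|_1\max_x\bp_t(x)$, and crucially $\sum_{s,t}\sum_x \bp_s(x)\bp_t(x)^2 = T\sum_x(\sum_s\bp_s(x))\bp_t(x)^2$-style manipulations that produce $\|\bp_{\avg}\|_3^3$ and $\|\bp_{\avg}\|_2^2$ after normalizing by $T^4$. The factors $T^2$ versus $T^3$ versus $T^4$ free indices in each group are what produce the $\frac{1}{T^2}$ versus $\frac{1}{T}$ prefactors in the final bound, and the numerical constants $4$ and $48$ come from counting how many covariance configurations land in each bucket (e.g.\ the factor $2$ on the cross sum, symmetry $s<t$ vs.\ $t<s$, choice of which of the two samples collides).

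The main obstacle will be the bookkeeping in step (ii)–(iii): carefully enumerating which triples/quadruples of distribution indices give nonzero covariance, bounding the mixed fourth-moment expectations like $\E[Y_{st}Y_{rt}]$ and $\E[Y_t Y_{st}]$ by the right combination of $\langle\bp_i,\bp_j\rangle$ and pointwise products $\sum_x\bp_i(x)\bp_j(x)\bp_\ell(x)$, and then showing every resulting multi-index sum collapses — after dividing by $T^4$ — into a constant multiple of either $\|\bp_{\avg}\|_2^2/T^2$ or $\|\bp_{\avg}\|_3^3/T$. This is exactly the place where the analysis of \cite{diakonikolas2016collision} must be adapted: in the i.i.d.\ case all $\bp_t$ coincide and many of these sums merge, whereas here one must keep them separate and repeatedly invoke $\|\bp_t\|_1=1$ and Cauchy–Schwarz/power-mean inequalities (e.g.\ $\|\bp_t\|_2^2\le\|\bp_t\|_3^{3/2}\cdot\|\bp_t\|_1^{1/2}$-type bounds, and $\sum_x \bp_s(x)\bp_t(x)^2\le \big(\sum_x\bp_s(x)^3\big)^{1/3}\big(\sum_x\bp_t(x)^3\big)^{2/3}$ by Hölder) to trade everything for $\ell_3$ norms of the average. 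Once the term-by-term bounds are in hand, summing them and collecting constants gives $\Var[Z]\le \frac{4}{T^2}\|\bp_{\avg}\|_2^2 + \frac{48}{T}\|\bp_{\avg}\|_3^3$ as claimed.
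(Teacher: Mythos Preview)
Your overall strategy is correct and close to the paper's, but the paper takes a cleaner route in two places that saves most of the bookkeeping you anticipate.

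First, rather than expanding $\Var[W]$ into all pairwise covariances (including the mixed $\Cov(Y_t,Y_{st})$ terms you list as group~(ii)), the paper immediately applies $(a+b)^2\le 2(a^2+b^2)$ to split
\[
\Var[Z]\le \frac{2}{T^4}\Big(\E\Big[\big(\textstyle\sum_t \hat Y_t\big)^2\Big] + 4\,\E\Big[\big(\textstyle\sum_{s<t}\hat Y_{st}\big)^2\Big]\Big),
\]
where $\hat Y=Y-\E Y$. This completely eliminates the diagonal--cross interaction terms; the first piece is just $\sum_t\Var[Y_t]\le\sum_t\|\bp_t\|_2^2$ by independence across $t$, and only the second piece requires the three-index analysis.

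Second, for that second piece the paper does not use H\"older or Cauchy--Schwarz at all. After expanding $\E[(\sum_{s<t}\hat Y_{st})^2]$ by the number of distinct indices in $\{s,t,a,b\}$, the four-index terms cancel against the squared mean, and the three-index terms contribute exactly $6\sum_{a<b<c}\langle\bp_a,\bp_b,\bp_c\rangle$ (minus some nonnegative terms, which are dropped). This triple sum is bounded trivially by $\big\|\sum_t\bp_t\big\|_3^3=T^3\|\bp_{\avg}\|_3^3$, since it is a subset of the nonnegative summands in the expansion of the cube. No power-mean or H\"older step is needed; your proposed inequality $\sum_x\bp_s(x)\bp_t(x)^2\le\|\bp_s\|_3\|\bp_t\|_3^2$ would work but is a detour, and summing it over $s,t$ does not directly produce $\|\bp_{\avg}\|_3^3$.

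The constants $4$ and $48$ come specifically from this decomposition (the factor $2$ from $(a+b)^2\le 2(a^2+b^2)$, the $4$ from the coefficient on the cross sum, and the $6$ from the three-index count). Your full-covariance expansion would also succeed, but would yield different constants and require handling the extra $\Cov(Y_t,Y_{st})$ terms you identified.
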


\begin{proof}[Proof of \Cref{lem:uniformityup}]
	
	The expectation of the random variable $Z$ is given by
	$$\E[Z]=\frac{1}{T^2}\Big(\sum_{t \in [T]} \|\bp_t\|^2_2 + 2\sum_{s < t}\langle \bp_t, \bp_{s}\rangle\Big)=\|\bp_{\avg}\|^2_2~.$$
	Therefore $Z$ is an unbiased estimator for $\|\bp_{\avg}\|^2_2$. In the remainder of the proof, we bound the variance of this estimator $Z$.
	\begin{align*}
		\Var[Z] &= \E\Big[\Big(Z-\E[Z]\Big)^2\Big]
		= \frac{1}{T^4}\E\Big[\Big(\sum_{t \in [T]}(Y_{t} - \E[Y_{t}]) + 2\sum_{s < t}(Y_{st}-\E[Y_{st}])\Big)^2\Big].
	\end{align*}
	Define
	\begin{align*}
		\hat{Y}_{t} = Y_{t} - \E[Y_{t}] \text{ and }
		\hat{Y}_{st} = Y_{st}-\E[Y_{st}],
	\end{align*}
	and note that $\E[\hat{Y}_{t}]=0$ and $\E[\hat{Y}_{st}]=0$. Rewriting the variance of $Z$ in terms of these new random variables we get the following,
	\begin{align}\label{eq:thm11}
		\Var[Z] &= \frac{1}{T^4}\E\Big[\Big(\sum_{t \in [T]}\hat{Y}_{t} + 2\sum_{s < t}\hat{Y}_{st}\Big)^2\Big]
		\le \frac{2}{T^4}\E\Big[ \Big(\sum_{t \in [T]}\hat{Y}_{t}\Big)^2 + 4\Big(\sum_{s < t}\hat{Y}_{st}\Big)^2 \Big] \nonumber \\ 
		&=\frac{2}{T^4}\Big(\E\Big[\Big(\sum_{t \in [T]}\hat{Y}_{t}\Big)^2\Big] + 4\cdot\E\Big[\Big(\sum_{s < t}\hat{Y}_{st}\Big)^2\Big]\Big)
	\end{align}
	In the second inequality, we used $(a+b)^2 \leq 2(a^2+b^2)$. In the final inequality, we used the linearity of expectation. To bound the variance of $Z$, we bound the terms in the final expression separately. We start with the first term.
	\begin{align}\label{eq:thm12}
		\E\Big[\Big(\sum_{t \in [T]}\hat{Y}_{t}\Big)^2\Big] &= \E\Big[ \sum_{t \in [T]} \hat{Y}_{t}^2 + 2\sum_{s<t}\hat{Y}_{t}\hat{Y}_{s}\Big] 
		= \sum_{t \in [T]} \E[\hat{Y}_{t}^2] + 2\sum_{s<t}\E[\hat{Y}_{t}]\E[\hat{Y}_{s}] = \sum_{t \in [T]} \E[\hat{Y}_{t}^2] \nonumber \\
		& =\sum_{t \in [T]} (\E[{Y}_{t}^2]-\E[{Y}_{t}]^2) \le \sum_{t \in [T]} \E[Y^2_t]
		= \sum_{t \in [T]} \E[Y_{t}] = \sum_{t \in [T]} \|\bp_t\|^2_2.
	\end{align}
	In the first equality, we expanded the expression. In the second equality, we used the fact that the random variables $Y_{s}$, $Y_{t}$ are independent for all $s,t\in [T]$ and $s \neq t$. In the third equality, we used $\E[\hat{Y}_{t}]=0$ for all $t \in [T]$. The fourth and fifth inequalities are immediate. The sixth equality follows because $Y_{t}$ is an indicator random variable. In the seventh inequality, we substituted the value of $\E[Y_{t}]=\|\bp_t\|^2_2$. Similar to the above, we now bound the second term of \Cref{eq:thm11}.
	
	\begin{align}
		\E\Big[\Big(\sum_{s < t}\hat{Y}_{st}\Big)^2\Big] %
		&= \E\Big[\Big(\sum_{s < t}Y_{st}\Big)^2\Big] - \left(\E\Big[\sum_{s < t}Y_{st}\Big]\right)^2 \nonumber \\
		&= \underbrace{\sum_{s < t}\E\left[Y_{st}^2\right]}_{\text{$\binom{T}{2}$ terms}} + \underbrace{\sum_{\substack{s<t \\ a<b \\ |\{s,t,a,b\}|=3}} \E[{Y}_{st}{Y}_{ab}]}_{\text{$6\binom{T}{3}$ terms}} + \underbrace{\sum_{\substack{s<t \\ a<b \\ |\{s,t,a,b\}|=4}} \E[{Y}_{st}{Y}_{ab}]}_{\text{$6\binom{T}{4}$ terms}} - \underbrace{\left(\E\Big[\sum_{s < t}Y_{st}\Big]\right)^2}_{\text{$\binom{T}{2}^2$ terms}} 
	\end{align}
	The first term above is equal to $\sum_{s < t}\langle \bp_t, \bp_{s} \rangle$ --- we can club this together with the summands in the summation in the last term that are of the form $\E[Y_{st}]^2$. For the third term, observe that if the indices $s,t,a,b$ are all distinct, the random variables ${Y}_{st}$ and ${Y}_{ab}$ are independent, and hence $\E[{Y}_{st}{Y}_{ab}]=\E[{Y}_{st}]\E[{Y}_{ab}]$. But all these summands are also included in the summation in the last term, and hence will get canceled out. Finally, we are left with the second term. Observe that for any 3 fixed distinct values, there are 6 possible orderings of the indices $s<t$ and $a<b$ such that the multiset $\{s,t,a,b\}$ has exactly these distinct values. Further, for each of these orderings, the random variable ${Y}_{st}{Y}_{ab}$ is non-zero if and only if the first sample at the three distinct indices is the same. Thus, we have that,
	\begin{align}\label{eq:thm15}
		\sum_{\substack{s<t \\ a<b \\ |\{s,t,a,b\}|=3}} \E[{Y}_{st}{Y}_{ab}] &= 
		6\sum_{a < b < c}\sum_{\ell=1}^k p_a(\ell) p_b(\ell) p_c(\ell) := 6\sum_{a < b < c} \langle \bp_a, \bp_b, \bp_c \rangle.
	\end{align}
	Putting the above together with the remaining terms corresponding to $|\{s,t,a,b\}|=3$ in the last summation, we get
	\begin{align}
		\E\Big[\Big(\sum_{s < t}\hat{Y}_{st}\Big)^2\Big] &= \sum_{s < t}\langle \bp_t, \bp_{s}\rangle(1-\langle \bp_t, \bp_{s}\rangle) + 6\sum_{a < b < c} \langle \bp_a, \bp_b, \bp_c \rangle \nonumber \\
		&\;- 2\sum_{a < b < c}[\langle \bp_a, \bp_{b}\rangle\langle \bp_a, \bp_{c}\rangle+\langle \bp_a, \bp_{b}\rangle\langle \bp_b, \bp_{c}\rangle+\langle \bp_a, \bp_{c}\rangle\langle \bp_b, \bp_{c}\rangle] \label{eqn:hurdle1}\\
		&\le \sum_{s < t}\langle \bp_t, \bp_{s}\rangle + 6\sum_{a < b < c} \langle \bp_a, \bp_b, \bp_c \rangle \label{eqn:hurdle2}\\
		&\le \sum_{s < t}\langle \bp_t, \bp_{s}\rangle + 6 \Big\| \sum_{t \in [T]} \bp_t\Big\|^3_3. \nonumber
	\end{align}

	Substituting the above inequality, along with \Cref{eq:thm12} into \Cref{eq:thm11}, we get the following bound for the variance of $Z$,
	\begin{align}\label{eq:thm1vb}
		\Var[Z] &\le \frac{2}{T^4}\Big(\sum_{t \in [T]} \|\bp_t\|^2_2 + 4\sum_{s < t}\langle \bp_t, \bp_{s}\rangle + 24 \Big\| \sum_{t \in [T]}\bp_t\Big\|^3_3\Big) \nonumber\\
		&\le \frac{4}{T^4}\Big\| \sum_{t \in [T]} \bp_t \Big\|^2_2 + \frac{48}{T^4}\Big\| \sum_{t \in [T]}\bp_t\Big\|^3_3
		= \frac{4}{T^2}\Big\| \bp_{\avg} \Big\|^2_2 + \frac{48}{T}\Big\| \bp_{\avg}\Big\|^3_3.
	\end{align}
\end{proof}

We can now use the variance bound from the above lemma along with Chebyshev's inequality to prove \Cref{thm:uniformityup},

\begin{proof}[Proof of \Cref{thm:uniformityup}]
	We divide the proof into two cases. 
	\paragraph{Uniform case:} In this case, we are promised that the $\bp_{\avg}=\uni_{k}$. Therefore, it is immediate that $\E[Z] = \|\bp_{\avg}\|_2^2 = \frac{1}{k}$ and $\|\bp_{\avg}\|_3^3 = \frac{1}{k^2}$, which further gives us that,
	\begin{align*}
		\Pr\Big[Z \ge \Big(1+\frac{\eps^2}{3}\Big)\frac{1}{k}\Big] &= \Pr\Big[Z \ge \Big(1+\frac{\eps^2}{3}\Big)\E[Z]\Big]
		\le \frac{9}{\eps^4\E[Z]^2} \cdot \Var[Z] = \frac{9k^2}{\eps^4} \cdot \Var[Z] ~,\\ 
		&\le \frac{9k^2}{\eps^4} \Big(\frac{4}{T^2k} + \frac{48}{Tk^2}\Big)
		= \frac{36k}{\eps^4T^2} + \frac{432}{T\eps^4}~.
	\end{align*}
	In the first inequality, we used that $\E[Z]=1/k$. In the second inequality, we used the Chebyshev's inequality.  In the third and fourth inequality, we substituted the values of $\E[Z]$ and $\Var[Z]$ respectively; to upper bound the variance we used bounds from \Cref{eq:thm1vb}. In the final inequality, we simplified the expression. 
	Note that when $T \ge \frac{2600}{\eps^4}$, the second term above is smaller than $1/6$, and for $T \ge \frac{216\sqrt{k}}{\eps^2}$, the first term is smaller than $1/6$. Therefore, we have that, for $T = \max\Big( \frac{2600}{\eps^4}, \frac{216\sqrt{k}}{\eps^2}\Big) = O\Big(\max\Big(\frac{\sqrt{k}}{\eps^2},\frac{1}{\eps^4}\Big)\Big)$,
	\begin{align*}
		\Pr\Big[Z \ge \Big(1+\frac{\eps^2}{3}\Big)\frac{1}{k}\Big] \leq 1/3~.
	\end{align*}
	Therefore, $Z$ is smaller than $(1+\eps^2/3)\frac{1}{k}$ with probability at least $2/3$ in the uniform case, that is, when $\bp_{\avg}=\uni_{k}$.
	
	\paragraph{Far from uniform case:} In this case, we are promised that, $\| \bp_{\avg}-\uni_k\|_1 \geq \eps$, which further implies  $\E[Z] = \|\bp_{\avg}\|_2^2 = \frac{1+\alpha^2}{k} \ge \frac{1}{k}(1+\eps^2)$, where $\alpha^2 \defeq  k\|\bp_{\avg}-\uni_k\|_2^2 \ge \eps^2$. In the following, we upper bound the probability that $Z$ takes a value smaller than  $(1+\eps^2/3)\frac{1}{k}$. 
	\begin{align}\label{eq:thm1f1}
		\Pr\Big[ Z \le \Big(1+\frac{\eps^2}{3}\Big)\frac{1}{k}\Big] &= \Pr\Big[ Z \le \Big(\frac{1+\frac{\eps^2}{3}}{1+\alpha^2}\Big)\Big(\frac{1+\alpha^2}{k}\Big)\Big]
		= \Pr\Big[ Z \le \Big(\frac{1+\frac{\eps^2}{3}}{1+\alpha^2}\Big)\E[Z]\Big] \nonumber\\
		&= \Pr\Big[ Z \le \Big(1-\frac{\alpha^2-\frac{\eps^2}{3}}{1+\alpha^2}\Big)\E[Z]\Big]
		\le \Pr\Big[ Z \le \Big(1-\frac{\alpha^2-\frac{\alpha^2}{3}}{1+\alpha^2}\Big)\E[Z]\Big] \nonumber\\
		&= \Pr\Big[ Z \le \Big(1-\frac{2\alpha^2}{3(1+\alpha^2)}\Big)\E[Z]\Big]
		\le \frac{9(1+\alpha^2)^2}{4\alpha^4\E[Z]^2}\cdot\Var[Z] \nonumber \\
		&= \frac{9(1+\alpha^2)^2}{4\alpha^4\|\bp_{\avg}\|_2^4}\cdot\Var[Z] 
		\le \frac{9(1+\alpha^2)^2}{T^2\alpha^4\|\bp_{\avg}\|_2^2} + \frac{108(1+\alpha^2)^2}{T\alpha^4}\cdot \frac{\Big\| \bp_{\avg}\Big\|^3_3}{\|\bp_{\avg}\|^4_2} \nonumber\\
		&= \frac{9k(1+\alpha^2)}{T^2\alpha^4} + \frac{108k^2}{T\alpha^4}\cdot \Big\| \bp_{\avg}\Big\|^3_3.
	\end{align}
	In the first equality, we multiplied and divided by the term $(1+\alpha^2)$.  In the second equality, we used the definition of $\alpha$. In the third equality, we used $\frac{1+\frac{\eps^2}{3}}{1+\alpha^2}=1-\frac{\alpha^2-\frac{\eps^2}{3}}{1+\alpha^2}$. In the fourth inequality, we used $\alpha^2 \geq \eps^2$ and in the fifth equality, we simplified the expression. In the sixth inequality, we used Chebyshev's inequality. In the seventh and eighth inequality, we used $\E[Z]=\|\bp_{\avg}\|_2^2$ and substituted the bound on the variance of $Z$ from \Cref{eq:thm1vb}. In the final expression, we used the definition of $\alpha$. To upper bound the above probability term, we next bound each of the terms in the final expression separately. The first term above is bounded by,
	\begin{align}\label{eq:thm1f2}
		\frac{9k(1+\alpha^2)}{T^2\alpha^4} &\le \frac{9k(1+\eps^2)}{T^2\eps^4} \le \frac{18k}{T^2\eps^4}~.
	\end{align}
	In the first inequality, we used the fact that for $x>0$, the function $\frac{1+x}{x^2}$ is decreasing in $x$ and also $\alpha^2 \ge \eps^2$.
	The final inequality follows because $\eps\leq 1$. To upper bound the second term in \Cref{eq:thm1f1}, we first note that,
	\begin{align*}
		\|\bp_{\avg}\|_3^3 &= \|\bp_{\avg}-\uni_k+\uni_k\|_3^3 = \sum_{i=1}^k((\bp_{\avg}(i) - \uni_k(i))+\uni_k(i))^3 \\
		&= \sum_{i=1}^k\Big[(\bp_{\avg}(i)-\uni_k(i))^3 + \uni_k(i)^3 + 3(\bp_{\avg}(i)-\uni_k(i))\uni_k(i)\bp_{\avg}(i)\Big] \\
		&= \|\bp_{\avg}-\uni_k\|_3^3 + \frac{1}{k^2} + \frac{3}{k}\Big(\sum_{i=1}^k\bp_{\avg}(i)^2 - \frac{1}{k}\Big)
		= \|\bp_{\avg}-\uni_k\|_3^3 + \frac{1}{k^2} + \frac{3}{k}\|\bp_{\avg}-\uni_k\|_2^2 \\
		&\le \|\bp_{\avg}-\uni_k\|_2^3 + \frac{3}{k}\|\bp_{\avg}-\uni_k\|_2^2 + \frac{1}{k^2}
		= \frac{\alpha^3}{k^{3/2}} + \frac{3\alpha^2}{k^2} + \frac{1}{k^2}.
	\end{align*}
	The first five equalities follow from simple algebraic manipulation. The sixth inequality holds because $\|\cdot\|_3 \le \|\cdot\|_2$. In the final equality, we used the definition of $\alpha$.
	Thus, the second term is bounded as,
	\begin{align}\label{eq:thm1f3}
		\frac{108k^2}{T\alpha^4}\cdot \Big\| \bp_{\avg}\Big\|^3_3 &\le \frac{108k^2}{T\alpha^4} \cdot \Big( \frac{\alpha^3}{k^{3/2}} + \frac{3\alpha^2}{k^2} + \frac{1}{k^2} \Big)
		= \frac{108\sqrt{k}}{T\alpha} + \frac{324}{T\alpha^2} + \frac{108}{T\alpha^4} \nonumber\\
		&\le \frac{108\sqrt{k}}{T\eps} + \frac{324}{T\eps^2} + \frac{108}{T\eps^4},
	\end{align}
	In the final inequality, we used $\alpha \geq \eps$. Substituting \Cref{eq:thm1f2} and \Cref{eq:thm1f3} in \Cref{eq:thm1f1}, we get the following bound on the probability,
	\begin{align*}
		\Pr\Big[ Z \le \Big(1+\frac{\eps^2}{3}\Big)\frac{1}{k}\Big] &\le \frac{18k}{T^2\eps^4} + \frac{108\sqrt{k}}{T\eps} + \frac{324}{T\eps^2} + \frac{108}{T\eps^4}
		\le \frac{18k}{T^2\eps^4} + \frac{432\sqrt{k}}{T\eps^2} + \frac{108}{T\eps^4}.
	\end{align*}
	Note that for $T \ge \frac{2600\sqrt{k}}{\eps^2}$, the sum of the first two terms is smaller than $1/6$, and for $T \ge \frac{648}{\eps^4}$, the last term is smaller than $1/6$. Thus, we have that for $T = 
	O\Big(\max\Big(\frac{\sqrt{k}}{\eps^2},\frac{1}{\eps^4}\Big)\Big)$,
	\begin{align*}
		\Pr\Big[ Z \le \Big(1+\frac{\eps^2}{3}\Big)\frac{1}{k}\Big] \leq 1/3.
	\end{align*}
	Therefore, $Z$ is larger than $(1+\eps^2/3)\frac{1}{k}$ with probability at least $2/3$ in the far from uniform case, that is when $\|\bp_{\avg}-\uni_k\|_1 \geq \eps$. In conclusion, for $T = O\Big(\frac{\sqrt{k}}{\eps^2}+ \frac{1}{\eps^4}\Big)$, we can test uniformity, that is we can separate $\bp_{\avg}=\uni_k$ from $\|\bp_{\avg}-\uni_k\|_1 \ge \eps$, for any $\eps > 0$. We conclude the proof.
	
\end{proof}

\begin{remark}[Optimal sample complexity] 
	\label{remark:optimal-sample-complexity}
	We can compare the bound on the variance in \Cref{eq:thm1vb} to the similar bound in the tight analysis (e.g., Equation (2.9), Theorem 2.1 in \cite{clement-tb}) of the standard i.i.d setting. It would appear that we are missing a term of the form $-\frac{1}{T}\|\bp_{\avg}\|_2^4$ which helps reduce the variance. In fact, if we had a term of this form, we would be able to shed the $O(1/\eps^4)$ factor to obtain the optimal sample complexity of $O(\sqrt{k}/\eps^2)$ even in the non-i.i.d setting. The main hurdle in the analysis above for getting this crucial variance-reducing term can be seen in \Cref{eqn:hurdle1}. In the i.i.d setting with one distribution $\bp$, the terms of the form $\langle \bp_a, \bp_{b}\rangle\langle \bp_a, \bp_{c}\rangle$ are all the same and equal $\|\bp\|_2^4$ (which is at least $\frac{1}{k^2}$). But in our setting, these terms can even be 0 (if $\bp_a, \bp_b, \bp_c$ have disjoint supports),  and hence, we cannot conveniently lower bound the contribution of these terms in reducing the variance. That is why, for lack of a better bound, we had to drop all these $\left(6\binom{T}{3} \text{ many!}\right)$ terms in going from \Cref{eqn:hurdle1} to \Cref{eqn:hurdle2}, which ultimately results in the stated sample complexity.
\end{remark}

\section{Identity testing from non-identical samples}
\label{sec:identouni}

We now prove our result which reduces the problem of identity testing to uniformity testing. Such a reduction for the i.i.d.\ setting is already known \citep{goldreich2016uniform} (see also \citep[Section 2.2.3]{clement-tb}), and we adapt the same reduction for the setting of non-identical samples. In particular, we use the same reduction functions from that of the i.i.d case and show that they work even in the case of a sequence of changing distributions. A more formal statement of this reduction, along with its proof for a sequence of distributions is provided below.

\thmidentouni*
\begin{proof}
	We divide the definition of our maps $(\Phi_{\bq},\Psi_{\bp})$ into three parts:
	$$\Phi_{\bq} \defeq \Phi_{\bq}^1 \circ \Phi_{\bq}^2 \circ \Phi_{\bq}^3~.$$
	$$\Psi_{\bq} \defeq \Psi_{\bq}^1 \circ \Psi_{\bq}^2 \circ \Psi_{\bq}^3~.$$
	We first define the map $\Phi_{\bq}$, which we divide into three parts $\Phi_{\bq}^3,\Phi_{\bq}^2$ and $\Phi_{\bq}^1$. The map $\Phi_{\bq}^3: \Delta_k\rightarrow \Delta_k$ is defined as follows,
	$$\Phi_{\bq}^3(\bp)=\frac{1}{2}\bp + \frac{1}{2}\unik~.$$
	Let $k'=4k$ and $r=\Phi_{\bq}^3(\bq)=\frac{1}{2}\bq+\frac{1}{2}\unik$. The map $\Phi_{\bq}^2: \Delta_k\rightarrow \Delta_{k+1}$ is defined as follows,
	\begin{equation}
		\Phi_{\bq}^2(\bp)(i)=
		\begin{cases}
			\frac{\lfloor k'r(i)\rfloor}{k'r(i)} \cdot \bp(i)  & \text{ for all } i \in [k]\\
			1-\sum_{j\in [k]}\frac{\lfloor k'r(j)\rfloor}{k'r(j)} \cdot \bp(j)& \text{ for } i=k+1~.
		\end{cases}
	\end{equation}
	Let $s=\Phi_{\bq}^2(r)=\Phi_{\bq}^2(\Phi_{\bq}^3(\bq))$ and define $k_i=k' \cdot s(i)$ for $i\in [k+1]$. Let $S_1,\dots,S_{k+1} $ be any partition of $[k']$ such that $|S_i|=k_i$ for all $i\in [k+1]$. Then the map $\Phi_{\bq}^1:\Delta_{k+1}\rightarrow \Delta_{4k}$ is defined as follows,
	$$\Phi_{\bq}^1(\bp)(i)=\frac{1}{k'}\sum_{j \in [k+1]}\frac{\bp(j)}{s(j)}\textbf{1}(i \in S_j)~.$$
	
	Using these definitions, we now prove the two claims stated in the theorem.
	\paragraph{Claim 1:}
	For any $t \in [T]$, let,
	$$x_t=\Phi_{\bq}^3(\bq_t),~ y_t=\Phi_{\bq}^2(x_t) \text{ and }z_t=\Phi_{\bq}^1(y_t)~.$$
	In the following, we show that,
	$$\avg(\Phi_{\bq}(\bq^1),\dots,\Phi_{\bq}(\bq_T))(i)=\avg(z_1,\dots ,z_t)=\uni_{4k}~.$$
	We prove the above statement entry-wise, and we consider $i$th entry of the average distribution. We divide the proof into two cases. Let $i$ be such that $i\in S_{j}$ for some $j \in [k]$, then note that,
	\begingroup
	\allowdisplaybreaks
	\begin{align*}
		\avg(\Phi_{\bq}(\bq^1),\dots,\Phi_{\bq}(\bq_T))(i)
		&=\frac{1}{T}\sum_{t \in [T]}z_t(i)
		=\frac{1}{T}\sum_{t \in [T]}\frac{1}{k'}\sum_{j' \in [k+1]}\frac{y_t(j')}{s(j')}\textbf{1}(i \in S_{j'}),\\
		&=\frac{1}{Tk'} \sum_{t \in [T]}\frac{y_t(j)}{s(j)}
		=\frac{1}{Tk'} \sum_{t \in [T]}\frac{\frac{\lfloor k'r(j)\rfloor}{k'r(j)} \cdot x_t(j)}{s(j)},\\
		&=\frac{1}{Tk'} \sum_{t \in [T]}\frac{\frac{\lfloor k'r(j)\rfloor}{k'r(j)} \cdot x_t(j)}{\frac{\lfloor k'r(j)\rfloor}{k'}}
		=\frac{1}{Tk'} \sum_{t \in [T]}\frac{x_t(j)}{r(j)},\\
		&=\frac{1}{Tk'} \sum_{t \in [T]}\frac{\frac{1}{2}\bq_{t}(j)+\frac{1}{2}\unik(j)}{r(j)} \\
		&=\frac{1}{Tk'} T \cdot \frac{\frac{1}{2}\bq(j)+\frac{1}{2}\unik(j)}{r(j)}\\
		&=\frac{1}{k'}
	\end{align*}
	\endgroup
	In the second equality, we substituted the value $z_t=\Phi_{\bq}^1(y_t)$. In the third equality, we used the assumption that $i \in S_{j}$. In the fourth and fifth inequality, we substituted the values of $y_t$ and $s$ respectively. We simplified the expression in the sixth equality and substituted the value of $x_t$ in the seventh equality. In the eighth equality, we used the definition of $\bq=\avg(\bq_1, \dots, \bq_T)$. In the final equality, we used the definition of $r$.
	
	Let $i \in S_{k+1}$, then note that,
	\begingroup
	\allowdisplaybreaks
	\begin{align*}
		\avg(\Phi_{\bq}(\bq^1),\dots,\Phi_{\bq}(\bq_T))(i)
		&=\frac{1}{T}\sum_{t \in [T]}z_t(i)
		=\frac{1}{T}\sum_{t \in [T]}\frac{1}{k'}\sum_{j' \in [k+1]}\frac{y_t(j')}{s(j')}\textbf{1}(i \in S_{j'}),\\
		&=\frac{1}{Tk'} \sum_{t \in [T]}\frac{y_t(k+1)}{s(k+1)}
		=\frac{1}{Tk'} \sum_{t \in [T]}\frac{1-\sum_{j\in [k]}\frac{\lfloor k'r(j)\rfloor}{k'r(j)} \cdot x_t(j)}{s(k+1)},\\
		&=\frac{1}{Tk'}\frac{1}{s(k+1)} \sum_{t \in [T]}\left(1-\sum_{j\in [k]}\frac{\lfloor k'r(j)\rfloor}{k'r(j)} \cdot x_t(j)\right),\\
		&=\frac{1}{Tk'}\frac{1}{s(k+1)}\cdot T \cdot \left(1-\sum_{j\in [k]}\frac{\lfloor k'r(j)\rfloor}{k'r(j)} \cdot \sum_{t \in [T]}\frac{x_t(j)}{T}\right),\\
		&=\frac{1}{k'}\frac{1}{s(k+1)}\left(1-\sum_{j\in [k]}\frac{\lfloor k'r(j)\rfloor}{k'r(j)} \cdot (\frac{1}{2}\bq(j)+\frac{1}{2}\unik(j))\right),\\
		&=\frac{1}{k'}\frac{1}{s(k+1)}\left(1-\sum_{j\in [k]}\frac{\lfloor k'r(j)\rfloor}{k'}\right)=\frac{1}{k'}\frac{1}{s(k+1)}\cdot s(k+1) \\
		&=\frac{1}{k'}~.
	\end{align*}
	\endgroup
	In the second equality, we substituted the value $z_t=\Phi_{\bq}^1(y_t)$. In the third equality, we used the assumption that $i \in S_{k+1}$. In the fourth equality, we substituted the values of $y_t$ and simplified the expression in the fifth and sixth equality. In the seventh equality, substituted the value of $x_t$ and used the definition of $\bq=\avg(\bq_1, \dots, \bq_T)$. In the eighth and ninth equality, we used the definition of $r$ and $s$ respectively.
	
	Combining the above two derivations, we have that, for all $i \in [k+1]$,
	$$\avg(\Phi_{\bq}(\bq^1),\dots,\Phi_{\bq}(\bq_T))(i)=\frac{1}{k'}=\frac{1}{4k}~.$$
	Therefore,
	$$\avg(\Phi_{\bq}(\bq^1),\dots,\Phi_{\bq}(\bq_T))=\uni_{4k}~.$$
	\paragraph{Claim 2:}
	For any $t \in [T]$, let,
	$$x^a_t=\Phi_{\bq}^3(\bp^a_t), y^a_t=\Phi_{\bq}^2(x^a_t) \text{ and }z^a_t=\Phi_{\bq}^1(y^a_t) \text{ for }a\in \{1,2 \}~.$$
	
	In the remainder of the proof, we show that
	\begin{align*}
		\dtv(\avg(z_1^1,\dots,z_T^1),\avg(z_1^2,\dots,z_T^2)&=\dtv(\avg(\Phi(\bp^1_1),\dots,\Phi(\bp^1_T)),\avg(\Phi(\bp^2_1),\dots,\Phi(\bp^1_T)) \\
		&\geq \frac{1}{4}\dtv(\avg(\bp^1_1,\dots,\bp^1_T),\avg(\bp^2_1,\dots,\bp^2_T)~.  
	\end{align*}
	It is immediate that,
	$$\dtv(\avg(z_1^1,\dots,z_T^1),\avg(z_1^2,\dots,z_T^2)=\dtv(\avg(y_1^1,\dots,y_T^1),\avg(y_1^2,\dots,y_T^2))~.$$
	Furthermore, note that for $i \in [k]$,
	\begin{align*}
		\avg(y_1^a,\dots,y_T^a)(i)=\frac{1}{T}\sum_{t}\frac{\lfloor k'r(i)\rfloor}{k'r(i)} \cdot x^a_t(i)=\frac{\lfloor k'r(i)\rfloor}{k'r(i)} \cdot (\frac{1}{2}\bp^a_{\avg}(i)+\frac{1}{2}\unik(i))
	\end{align*}
	and,
	\begin{align*}
		\avg(y_1^a,\dots,y_T^a)(k+1)&=\frac{1}{T}\sum_{t}\left(1-\sum_{j \in [k]}\frac{\lfloor k'r(j)\rfloor}{k'r(j)} \cdot x^a_t(j) \right)\\
		&=
		1-\sum_{j \in [k]}\frac{\lfloor k'r(j)\rfloor}{k'r(j)} \cdot (\frac{1}{2}\bp^a_{\avg}(j)+\frac{1}{2}\unik(j))~.
	\end{align*}
	In the following we bound $\dtv(\avg(y_1^1,\dots,y_T^1),\avg(y_1^2,\dots,y_T^2))$, which in turn bounds the quantity $\dtv(\avg(z_1^1,\dots,z_T^1),\avg(z_1^2,\dots,z_T^2)$.
	\begin{align*}
		&\dtv(\avg(y_1^1,\dots,y_T^1),\avg(y_1^2,\dots,y_T^2))\\&=\sum_{i \in [k]}\left|\frac{\lfloor k'r(i)\rfloor}{k'r(i)} \cdot (\frac{1}{2}\bp^1_{\avg}(i)+\frac{1}{2}\unik(i))-\frac{\lfloor k'r(i)\rfloor}{k'r(i)} \cdot (\frac{1}{2}\bp^2_{\avg}(i)+\frac{1}{2}\unik(i))\right|,\\
		&+\left| 1-\sum_{j \in [k]}\frac{\lfloor k'r(j)\rfloor}{k'r(j)} \cdot (\frac{1}{2}\bp^1_{\avg}(j)+\frac{1}{2}\unik(j))-\right.\\
		&\qquad\qquad\qquad\left.\left(1-\sum_{j \in [k]}\frac{\lfloor k'r(j)\rfloor}{k'r(j)} \cdot (\frac{1}{2}\bp^2_{\avg}(j)+\frac{1}{2}\unik(j))\right)\right|,\\
		&=\sum_{i \in [k]}\frac{\lfloor k'r(i)\rfloor}{k'r(i)}\frac{1}{2}\left| \bp^1_{\avg}(i)-\bp^2_{\avg}(i)\right|+
		\sum_{j \in [k]}\frac{\lfloor k'r(j)\rfloor}{k'r(j)}\frac{1}{2}\left| \bp^1_{\avg}(j)-\bp^2_{\avg}(j)\right|,\\
		&=\sum_{i \in [k]}\frac{\lfloor k'r(i)\rfloor}{k'r(i)}\left| \bp^1_{\avg}(i)-\bp^2_{\avg}(i)\right|\geq \frac{1}{2}\sum_{i \in [k]}\left| \bp^1_{\avg}(i)-\bp^2_{\avg}(i)\right|~.
	\end{align*}
	In the first equality, we substituted the values of $\avg(y_1^a,\dots,y_T^a)$ for $a\in \{1,2\}$ that were computed earlier. In the second and third equality, we simplified the expression. The last inequality follows because $r(i)\geq \frac{1}{2k}$, $k'r(i) \geq 2$ and $\lfloor k'r(i)\rfloor \geq k'r(i)-1$. Putting it all together, we have that,
	\begin{align*}
		\dtv(\avg(\Phi(\bp^1_1),\dots,\Phi(\bp^1_T)),\avg(\Phi(\bp^2_1),\dots,\Phi(\bp^1_T)) \\
		\geq \frac{1}{4}\dtv(\avg(\bp^1_1,\dots,\bp^1_T),\avg(\bp^2_1,\dots,\bp^2_T)~,
	\end{align*}
	and we conclude the proof for this case.
	
	\paragraph{Sampling:} In the remainder of the proof, we state the maps $\Psi^3,\Psi^2,\Psi^1$ which help us generate the samples from the mapped distributions. The definitions of these sampling maps are pretty straightforward and we state them below to conclude the proof.
	
	$\Psi^3_{\bq}:$ Given $i \in [k]$, return $i$ with probability 1/2 and a uniformly random element of $[k]$ otherwise.
	
	$\Psi^2_{\bq}:$ Given $i \in [k]$, return $i$ with probability $\frac{\lfloor k'r(i)\rfloor}{k'r(i)}$ and $k+1$ otherwise.
	
	$\Psi^1_{\bq}:$ Given $i \in [k+1]$, return a uniformly random element from $S_i$.
\end{proof}
\section{Lower bound for pooling-based estimators}
\label{sec:lb_pool}

\allowdisplaybreaks

\newcommand{\fp}{FP(\bp)}
\newcommand{\fq}{FP(\bq)}
\newcommand{\fr}{FP(\br)}
\newcommand{\fs}{FP(\bs)}

In this section, we prove the indistinguishability result for pooling-based estimators given in \Cref{thm:poollb}. First, we observe that uniformity of the average distribution is a symmetric property, and thus, labels of the samples do not matter for a testing algorithm. %
Thus, without loss of generality, we can assume that the testing algorithm takes as input the pooled multiset of samples and operates only on the \textit{fingerprint}  of the pooled multiset (refer to \Cref{sec:fingerprint_sufficiency} for a formal discussion on this). Here, the fingerprint is the count of elements that occur exactly zero times, once, twice etc. In what follows, we will construct two sequences of distributions: one whose average is the uniform distribution and another whose average is far from uniform. However, we show that the total variation distance between the distributions of their corresponding fingerprints is small, implying impossibility of uniformity testing with pooled samples. Below, we describe the key proof steps:

\paragraph{Step 1:} We construct two sequences of distributions $\ba$ and $\bb$ (Definition \ref{def:structure_a_b}) such that $\avg(\ba)$ is uniform and $\avg(\bb)$ is far from uniform (Claim \ref{claim:tv_bound_avg}). Moreover, these sequences are chosen in such a way that when we draw $2$ samples from each distribution in the sequence, no element can be observed  more than $4$ times in the pooled multiset. The building blocks needed for defining these sequences of distributions are introduced in Lemma \ref{lem:lb_builder}.

\paragraph{Step 2:} Next, we show that when we draw 2 samples from each distribution in the sequences, the fingerprints of the pooled multisets have the same mean (Claim \ref{claim:exp_match}), and similar covariances (Claim \ref{claim:cov_match}). A minor detail here is that we show this for the collision counts instead of fingerprints, but this is not an issue, as one can use an invertible linear transform to relate the two (\Cref{app:collisions-fp}). %

\paragraph{Step 3:} We want to eventually show that the distributions of fingerprints are close, and hence cannot be distinguished. For this, we will first show that the distributions of these fingerprints are close, in total variation distance, to the multivariate Gaussian distributions $G_\ba$ and $G_\bb$ of corresponding mean and covariance, that have been discretized by rounding all probability mass to the nearest point in the integer lattice (Lemma \ref{lem:tv_fp_g}). We show this via a CLT (Lemma \ref{lem:clt}). One slight complication in proving this CLT is that in our setting, the fingerprints no longer correspond to a generalized multinomial distribution, and hence we cannot directly leverage multivariate CLTs for such distributions (e.g., \cite{valiant2011estimating}).  Instead, we will show that the distribution of fingerprints is represented as a sum of independent samples, supported on the all-zero vector, the basis vectors, and the vector $[2,0,0,\ldots].$  Our proof essentially ``splits'' each distribution into the component supported on the zero vector and the basis vectors, and a component supported on the zero vector and the $[2,0,0,\ldots]$ vector. The portion supported on the basis vectors \emph{is} a generalized multinomial, and hence is close to the corresponding discretized Gaussian by the CLT in~\cite{valiant2011estimating}, and the remaining portion is simply a binomial, scaled by a factor of $2$.  We then argue that the convolution of these distributions is close to a discretized Gaussian.

\paragraph{Step 4:} Having shown that the distributions of fingerprints are close to corresponding discretized multivariate Gaussian distributions, we are left with the tractable task of showing that the TV distance between these two (discretized) Gaussians is small. This is shown in Lemma \ref{lem:tv_g_g} and relies on certain technical results proved in Lemma \ref{lem:lower_bound} and Lemma \ref{lem:gaussians_tv_small}. Here, we use the fact that the two collision-count distributions have the same mean and similar covariances as discussed in Step 2 above. Finally, a triangle inequality yields the desired result (Lemma \ref{lem:fp_indisting}).

We now proceed towards the proof details of the above steps. We start by defining some distributions which are the building blocks for our hard instance.  We will tile up these distributions over disjoint supports to form the overall hard instance%

\begin{lemma}[Building block]
	\label{lem:lb_builder}
	There exist distributions $\bp_1,\bp_2, \bq_1,\bq_2, \br_1,\br_2, \bs_1,\bs_2$, each supported on $m$ elements, satisfying the following:
	\begin{itemize}
		\item[(1)] $\avg(\bp_1, \bp_2) = \avg(\br_1, \br_2) = \mathrm{Unif}(m)$.
		\item[(2)] $\dtv(\avg(\bp_1,\bp_2),\avg(\bq_1,\bq_2)) \geq \frac{1}{16\sqrt{2}} \text{ and }  \\ \dtv(\avg(\br_1,\br_2),\avg(\bs_1,\bs_2)) \geq \frac{1}{16\sqrt{2}}~.$
		\item[(3)] Let $\fp = [\fp_2, \fp_3, \fp_4]$ be the random variable representing the tuple of counts of elements observed exactly 2 times, 3 times and 4 times respectively, when we draw $c=2$ samples i.i.d each from $\bp_1$ and $\bp_2$ (i.e. we obtain a total 4 samples). Note that $\fp$ is supported on $[0,0,0],[1,0,0],[0,1,0],[0,0,1],[2,0,0]$. Similarly, define $\fq,\fr,\fs$. Furthermore, let $C_{\bp}=[c^{\bp}_2, c^\bp_3, c^\bp_4]$ be the 3-dimensional vector of 2-way, 3-way and 4-way collisions observed when we draw $c=2$ samples i.i.d each from $\bp_1$ and $\bp_2$\footnote{$m$-way collisions also formally defined in \Cref{app:collisions-fp}.}. Similarly define $C_{\bq},C_{\br},C_{\bs}$. There exists an $\alpha \in (0,1)$ such that,
		\begin{align*}
			&\E[\alpha C_{\bp} + (1-\alpha)C_{\br}]=\E[\alpha C_{\bq} + (1-\alpha) C_{\bs}].\\
			&\E[\alpha \fp + (1-\alpha)\fr]\\
			&=\E[\alpha \fq + (1-\alpha) \fs]~.
		\end{align*}
	\end{itemize}
\end{lemma}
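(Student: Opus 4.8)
The plan is to recast the lemma as a moment‑matching problem in $\R^3$, working with blocks of size $m=2$ (which will suffice). For a pair of distributions $(\bv_1,\bv_2)$ on $\{1,2\}$, drawing $c=2$ i.i.d.\ samples from each yields $4$ samples; write $C(\bv_1,\bv_2)=(c_2,c_3,c_4)$ for the vector of $2$-, $3$-, and $4$-way collision counts among them. Each coordinate of $\E[C]$ is an explicit polynomial of degree $\le 4$ in the entries of the pair (for instance $\E[c_2]=\sum_\ell(\bv_1(\ell)^2+\bv_2(\ell)^2+4\bv_1(\ell)\bv_2(\ell))$), so $\E[C]$ is a point of $\R^3$ depending continuously on the pair. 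Because the fingerprint of $4$ samples is always supported on the five vectors listed in item~(3) (these index the partitions of $4$), and is obtained from $C$ by the fixed invertible linear map $c_2=F_2+3F_3+6F_4$, $c_3=F_3+4F_4$, $c_4=F_4$ (\Cref{app:collisions-fp}), the $FP$‑identity in~(3) follows automatically from the $C$‑identity and the support claim requires nothing further. So it suffices to produce two pairs with average $\mathrm{Unif}(2)$, two pairs with average at $\dtv$‑distance $\ge\tfrac{1}{16\sqrt2}$ from $\mathrm{Unif}(2)$, and an $\alpha\in(0,1)$ with $\alpha\,\E[C_{\bp}]+(1-\alpha)\,\E[C_{\br}]=\alpha\,\E[C_{\bq}]+(1-\alpha)\,\E[C_{\bs}]$.

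As $(\bv_1,\bv_2)$ ranges over the one‑parameter family of pairs with average $\mathrm{Unif}(2)$, the point $\E[C]$ sweeps out an arc $\gamma$ in $\R^3$. The construction will be: find a single pair $(\bq_1,\bq_2)$ with $\dtv(\avg(\bq_1,\bq_2),\mathrm{Unif}(2))\ge\tfrac{1}{16\sqrt2}$ whose mean vector lies \emph{strictly between} two points $\gamma(u_1),\gamma(u_2)$ of $\gamma$, write $\E[C_{\bq}]=\alpha\,\gamma(u_1)+(1-\alpha)\,\gamma(u_2)$ with $\alpha\in(0,1)$, and set $(\bp_1,\bp_2)$ to be a uniform‑average pair realizing $\gamma(u_1)$, $(\br_1,\br_2)$ one realizing $\gamma(u_2)$, and $(\bs_1,\bs_2)\defeq(\bq_1,\bq_2)$. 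Items~(1)--(2) then hold by construction, and item~(3) is exactly $\alpha\,\E[C_{\bp}]+(1-\alpha)\,\E[C_{\br}]=\E[C_{\bq}]=\alpha\,\E[C_{\bq}]+(1-\alpha)\,\E[C_{\bs}]$. (Should one want all eight distributions distinct, take $(\bs_1,\bs_2)$ a nearby far pair and perturb $(\bp_1,\bp_2),(\br_1,\br_2)$ along $\gamma$ to compensate; the lemma as stated does not ask for this.)

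The one step with genuine content is exhibiting a far pair whose mean vector lies strictly inside $\mathrm{conv}(\gamma)$, and at $m=2$ this is a short explicit computation. Parametrize a pair by $(x_1,x_2)\in[0,1]^2$ via $\bv_i=(x_i,1-x_i)$; one checks the identity $\E[c_2]-\E[c_3]=2$ for every such pair, so every mean vector lies in the plane $\{z_1=z_2+2\}$, and in the coordinates $(\E[c_3],\E[c_4])$ the pairs with average $\mathrm{Unif}(2)$ (those with $x_1+x_2=1$) trace the convex parabolic arc $\{(\xi,\xi^2/8):\xi\in[0,1]\}$. Taking $\bq_1=(1,0)$, $\bq_2=(\beta,1-\beta)$ for a suitable constant $\beta$ (say $\beta=1/8$) gives average at $\dtv$‑distance $\beta/2\ge\tfrac{1}{16\sqrt2}$ from $\mathrm{Unif}(2)$ and mean vector with $(\E[c_3],\E[c_4])=(2\beta(1+\beta),\beta^2)$; for $\beta<1/3$ this sits strictly above the parabolic arc and strictly below its chord from $(0,0)$ to $(1,1/8)$, hence strictly inside $\mathrm{conv}(\gamma)$, and solving for the convex weight gives the required $\alpha\in(0,1)$ (for $\beta=1/8$, $\alpha=47/128$, with $u_1=0$ and $u_2=1/9$). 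I expect this planar check --- writing out $\gamma$ and placing a far pair strictly inside its hull while keeping the TV distance above $\tfrac{1}{16\sqrt2}$ --- to be the only nonroutine point; the remaining assertions are bookkeeping, and since the building blocks are later tiled over disjoint supports, no element is ever seen more than the $4$ times available within its own block, so the support claim in~(3) is just the partition enumeration already noted.
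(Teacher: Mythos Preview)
Your convex–geometry reading of item~(3) is correct and the $m=2$ computations check out (the identity $\E[c_2]-\E[c_3]=2$, the parabolic arc $\{(\xi,\xi^2/8)\}$, and the placement of the far pair inside its hull are all right; the only slip is that your second arc parameter should be $\xi_2=4/9$, not $1/9$, since $4t(1-t)=4/9$ gives $t=(3\pm\sqrt5)/6$). But the lemma is not asking for \emph{some} $m$: the symbol $m$ is a free parameter, and the building block is later instantiated (Definition~\ref{def:structure_a_b}) with $m$ a large constant so that the fingerprint $[0,0,0]$ occurs with probability at least $1-10^{-16}$ in each block. At $m=2$, four samples on two elements always collide by pigeonhole, so $\Pr[FP=(0,0,0)]=0$ and the fingerprint is supported only on $[2,0,0],[0,1,0],[0,0,1]$. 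This still satisfies the literal inclusion in item~(3), but it defeats the purpose of the lemma: the CLT step that follows requires the ``no collision'' outcome to dominate, and your construction gives it probability zero. Your proposal therefore proves a different (and downstream-useless) special case.

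The paper's approach is to build the block at general $m$ directly: take $\bp_1=\bp_2=\mathrm{Unif}(m)$; take $\br_1,\br_2$ uniform on the two halves of $[m]$; take $\bq_1=\bs_1$ with masses $(1\pm\eps)/m$ on the two halves and $\bq_2=\bs_2$ a $7/8$–$1/8$ rearrangement of the same two values, with $\eps=1/\sqrt2$. Then $\E[C_{\bq}]=(\tfrac{11}{2m},\tfrac{3}{m^2},\tfrac{3}{4m^3})$ exactly, and mixing $\bp$ and $\br$ with $\alpha=3/4$ matches all three coordinates. The point is that the $1/m,1/m^2,1/m^3$ scaling is what makes the later covariance comparison (Claim~\ref{claim:cov_match}) and the Gaussian approximation work; a fixed $m=2$ destroys that scaling. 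If you want to salvage your approach, you would need to redo the convex-hull argument for general $m$ (where the mean map is no longer confined to a plane and the ``uniform-average'' family is higher-dimensional), and then check that the resulting $\alpha$ and pairs do not depend on $m$ in a way that breaks the covariance matching downstream.
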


\begin{proof}
	First, we describe the distributions $\bp_1, \bp_2, \br_1, \br_2$. Let $\bp_1 = \bp_2 \in \dom{m}$ be the uniform distribution on $m$ elements. Let $\br_1 \in \dom{m}$ be such that it assigns mass $\frac{2}{m}$ each on the first $\frac{m}{2}$ elements of $[m]$. Let $\br_2 \in \dom{m}$ be such that it assigns mass $\frac{2}{m}$ each on the last $\frac{m}{2}$ elements of $[m]$. This is illustrated more clearly in \Cref{subfig:avg-uniform}. It is clear that $\avg(\bp_1, \bp_2) = \avg(\br_1, \br_2) = \mathrm{Unif}(m)$, and hence part (1) of the lemma holds.
	
	Next, we describe the distributions $\bq_1, \bq_2, \bs_1,\bs_2$. Set $\eps=\frac{1}{\sqrt{2}}$. Let $\bq_1 \in \dom{m}$ be such that $\bq_1$ has a mass $\frac{1+\eps}{m}$ each on the first $\frac{m}{2}$ elements, and a mass $\frac{1-\eps}{m}$ each on the remaining $\frac{m}{2}$ elements. Next, consider $\bq_2 \in \dom{m}$ defined as follows: Of the first $\frac{m}{2}$ elements, $\bq_2$ has mass $\frac{1-\eps}{m}$ each on the first $\frac78^{\text{th}}$ fraction, and a mass $\frac{1+\eps}{m}$ each on the remaining $\frac18^{\text{th}}$ fraction. Similarly, of the latter $\frac{m}{2}$ elements, $\bq_2$ has a mass $\frac{1+\eps}{m}$ each on the first $\frac78^{\text{th}}$ fraction, and a mass $\frac{1-\eps}{m}$ each on the remaining $\frac18^{\text{th}}$ fraction. Further, we set $\bs_1=\bq_1$ and $\bs_2 = \bq_2$. This is illustrated more clearly in \Cref{subfig:avg-far}.

	\begin{figure*}
		\centering
		\begin{subfigure}[t]{0.49\textwidth}
			\centering
			\includegraphics[scale=0.4]{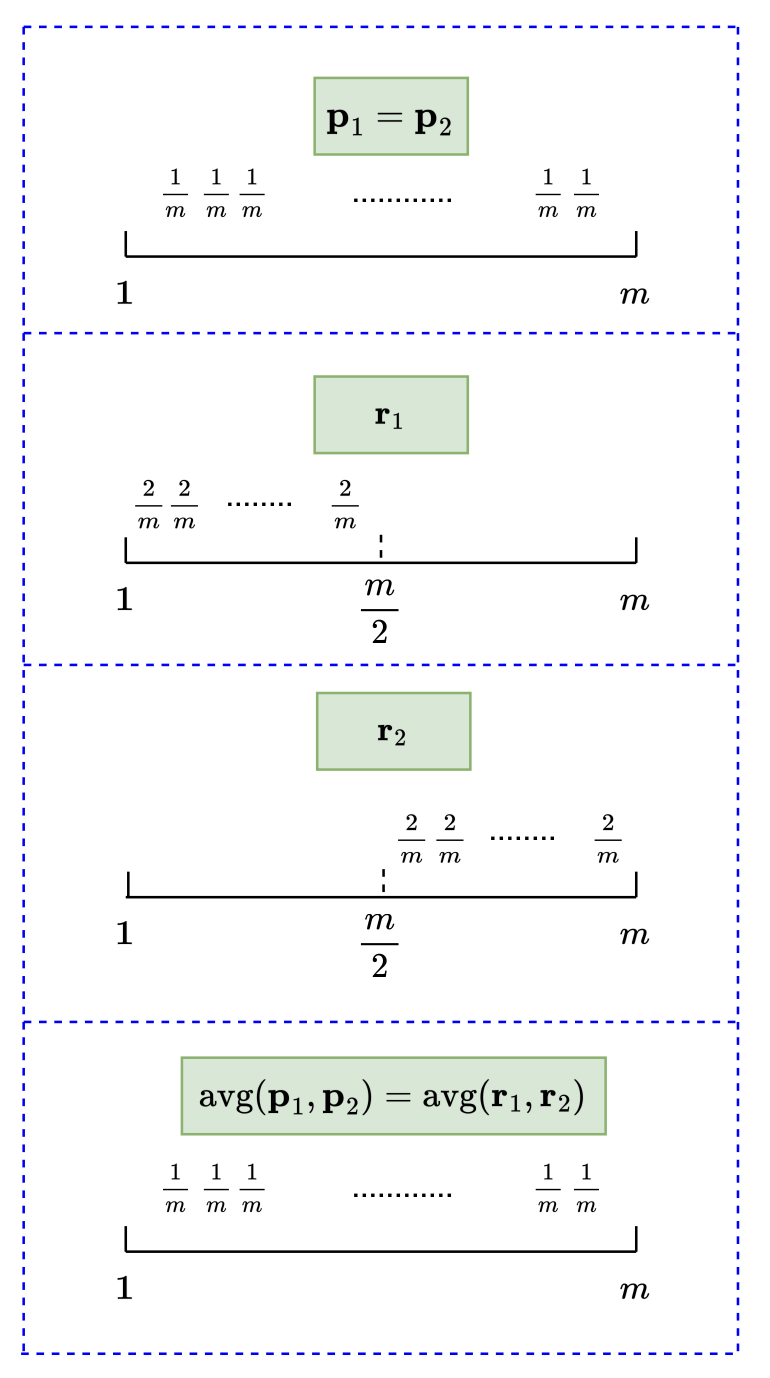}
			\caption{Average is uniform}
			\label{subfig:avg-uniform}
		\end{subfigure}
		\hfill
		\begin{subfigure}[t]{0.49\textwidth}
			\centering
			\includegraphics[scale=0.4]{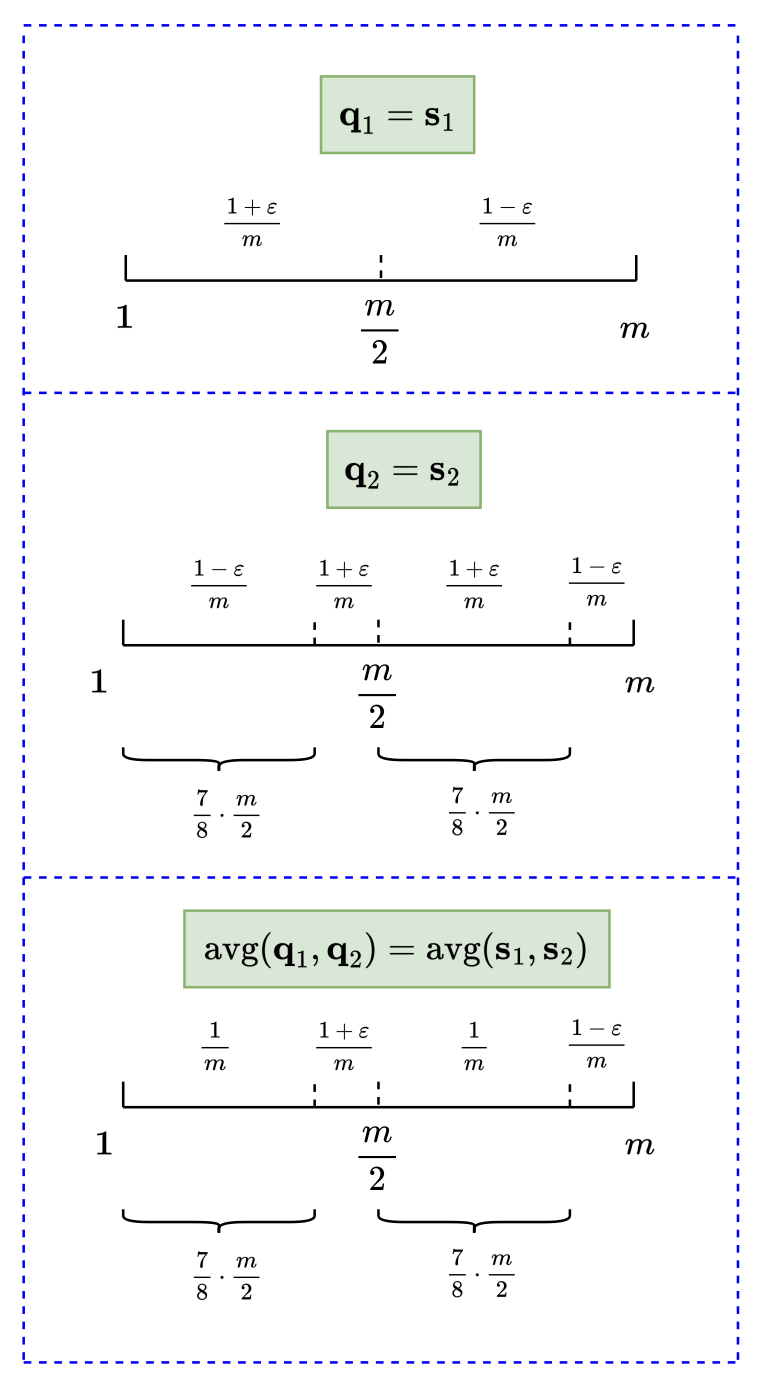}
			\caption{Average far from uniform}
			\label{subfig:avg-far}
		\end{subfigure}
		\caption{Building block for the hard instance defined in Definition \ref{def:structure_a_b}.}
		\label{fig:lb-building-block}
	\end{figure*}

	We can readily see that $\|\avg(\bp_1,\bp_2)- \avg(\bq_1,\bq_2)\|_1 = \|\avg(\br_1,\br_2)- \avg(\bs_1,\bs_2)\|_1 = \frac{\eps}{8}$. Recalling that $\eps=\frac{1}{\sqrt{2}}$, part (2) of the lemma holds.
	
	Now, we prove the last part. Fix $\beta = \frac78$. Let $c_2^{\bq}$, $c_3^{\bq}$, $c_4^{\bq}$ denote the number of $2$-way, 3-way and 4-way collisions respectively when we draw 2 samples from $\bq_1$ and 2 samples from $\bq_2$. %
	\begingroup
	\allowdisplaybreaks
	\begin{align*}
		\E[c_2^{\bq}] &= 2 \left[ \frac{m}{2} \cdot \left(\frac{1+\eps}{m}\right)^2 + \frac{m}{2} \cdot \left(\frac{1-\eps}{m}\right)^2\right] \\
		&+4\left[ \beta m \cdot \left(\frac{1+\eps}{m}\right)\left(\frac{1-\eps}{m}\right) \right. 
		\left.+\frac{(1-\beta)m}{2}\cdot \left(\left(\frac{1+\eps}{m}\right)^2 + \left(\frac{1-\eps}{m}\right)^2 \right) \right] 
		= \frac{11}{2m}. \\
		\E[c_3^{\bq}] &= 4 \left[ \frac{(1-\beta)m}{2}\cdot \left ( \left(\frac{1+\eps}{m}\right)^3 + \left(\frac{1-\eps}{m}\right)^3  \right) \right. \\
		&\left.+ \frac{\beta m}{2}\left ( \left(\frac{1+\eps}{m}\right)^2  \left(\frac{1-\eps}{m}\right)  \right.\right.\left.\left.+ \left(\frac{1+\eps}{m}\right)  \left(\frac{1-\eps}{m}\right)^2\right)\right] 
		= \frac{3}{m^2}. \\
		\E[c_4^{\bq}] &= \frac{(1-\beta)m}{2}\left [ \left(\frac{1+\eps}{m}\right)^4 + \left(\frac{1-\eps}{m}\right)^4 \right] + \beta m \left [ \left(\frac{1+\eps}{m}\right)^2\left(\frac{1-\eps}{m}\right)^2 \right] 
		= \frac{3}{4m^3}.
	\end{align*}
	\endgroup
	Since $\bq_1 = \bs_1$ and $\bq_2 = \bs_2$, for any $\alpha \in (0,1)$,
	\begin{align}
		\label{eqn:means-q-s}
		\E\begin{bmatrix} c_2^{\bq} \\ c_3^{\bq} \\ c_4^{\bq}\end{bmatrix} = \E\begin{bmatrix} c_2^{\bs} \\ c_3^{\bs} \\ c_4^{\bs}\end{bmatrix} = \E\begin{bmatrix} \alpha c_2^{\bq} + (1-\alpha) c_2^{\bs} \\ \alpha c_3^{\bq} + (1-\alpha) c_3^{\bs} \\ \alpha c_4^{\bq} + (1-\alpha) c_4^{\bs}\end{bmatrix} = \begin{bmatrix} 11/2m \\ 3/m^2 \\ 3/4m^3\end{bmatrix}.
	\end{align}
	Now, set $\alpha=\frac34$. Say we toss a coin with bias $\alpha$. If we observe heads, we will draw two samples each from $\bp_1$ and $\bp_2$. If we observe tails, we will draw two samples each from $\br_1$ and $\br_2$. We can calculate the expected number of  2/3/4-way collisions in the 4 samples obtained via this process.
	\begin{align*}
		\E[\alpha c_2^{\bp} + (1-\alpha) c_2^{\br}] &= \alpha \binom{4}{2} \frac{1}{m} + (1-\alpha) \cdot \left(\frac{2}{m} + \frac{2}{m}\right) = \frac{2\alpha + 4}{m} = \frac{11}{2m}. \\
		\E[\alpha c_3^{\bp} + (1-\alpha) c_3^{\br}] &= \frac{4\alpha }{m^2} = \frac{3}{m^2}. \\
		\E[\alpha c_4^{\bp} + (1-\alpha) c_4^{\br}] &= \frac{\alpha }{m^3} = \frac{3}{4m^3}.
	\end{align*}
	These are precisely the expressions for the expectations in \Cref{eqn:means-q-s}. 
	Finally, the collision counts are related linearly to the fingerprint (\Cref{lem:collisions-to-fingerprint}), and this completes the proof of the part (3).
\end{proof}

\begin{remark}
	Note that as we draw only 4 samples from the building block given above, the number of elements that occur exactly zero times and exactly once  are a deterministic function of the number of elements that occur exactly 2,3 and 4 times. Furthermore, since we will tile up the above instance over disjoint supports of $m$ elements, no element will ever occur more than 4 times. Thus, we restrict our attention only to the fingerprint of elements appearing 2,3 and 4 times. %
\end{remark}

Now, we propose two sequences of distributions $\ba = (\ba_1, \cdots, \ba_T)$ and $\bb = (\bb_1, \cdots, \bb_T)$ such that $\avg(\ba)$ is uniform and $\avg(\bb)$ is far from uniform. %

\begin{definition}
	\label{def:structure_a_b}
	Divide the support $[k]$ into $n$ disjoint blocks, each of size $m$, so that $k=mn$. Each block corresponds to the support of $2$ distributions in a sequence, so that the total number of distributions $T=2n$. Assume that $k$ is large enough and $m$ is a large constant independent of $k$. $k$ being large enough and $m$ being a constant ensures that $T = \Theta(k) $ is also large enough. 
	
	First, we specify each block in the first $\alpha n$ many blocks (where $\alpha=\frac34$ and $n$ will be some multiple of 4) --- fix the $i^{\text{th}}$ such block of size $m$, where $1 \le i \le \alpha n$. Let $\bp_1^i,\bp_2^i, \bq_1^i,\bq_2^i$ correspond to distributions $\bp_1,\bp_2, \bq_1,\bq_2$ respectively (as defined in \Cref{lem:lb_builder}),  supported on this block. 
	For the first distribution supported on this block, that is for $t=2i-1$, we will set $\ba_{2i-1} = \bp_1^i, \bb_{2i-1} = \bq_1^i$. For the second distribution supported on this block, that is for $t=2i$, we will set $\ba_{2i}=\bp_2^i, \bb_{2i}=\bq_2^i$.
	
	Now, we specify each block in the remaining $(1-\alpha) n$ many blocks --- fix the $i^{\text{th}}$ such block of size $m$, where $\alpha n + 1 \le i \le n$. Let $\br_1^i,\br_2^i, \bs_1^i,\bs_2^i$ correspond to distributions $\br_1,\br_2, \bs_1,\bs_2$ respectively (as defined in \Cref{lem:lb_builder}), supported on this block. For the first distribution supported on this block, that is for $t=2i-1$, we will set $\ba_{2i-1} = \br_1^i, \bb_{2i-1} = \bs_1^i$. For the second distribution supported on this block, that is for $t=2i$, we will set $\ba_{2i}=\br_2^i, \bb_{2i}=\bs_2^i$.
\end{definition}

These sequences $\ba$ and $\bb$ are the entry-point to Step 1 in the proof steps described above. 

\begin{claim}
	\label{claim:tv_bound_avg}
	Let $\ba$ and $\bb$ be the sequences constructed in \Cref{def:structure_a_b}. $\avg(\ba)$ is uniform on $[k]=[mn]$ and $\avg(\bb)$ is far from uniform with $\dtv(\unik, \avg(\bb)) =  \frac{1}{16\sqrt{2}}$.
\end{claim}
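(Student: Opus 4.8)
The plan is to exploit the disjoint‑block structure of \Cref{def:structure_a_b}: since distributions supported on different blocks have disjoint supports, both $\avg(\ba)$ and $\avg(\bb)$ decompose into a sum of per‑block contributions, and each such contribution is governed by \Cref{lem:lb_builder}. So the whole argument reduces to tracking the normalization correctly and then invoking parts (1) and (2) of that lemma; I do not expect a genuine obstacle beyond this bookkeeping.

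First I would show $\avg(\ba) = \unik$. Fix one of the first $\alpha n$ blocks, indexed by $i$; it hosts exactly the two distributions $\ba_{2i-1} = \bp_1^i$ and $\ba_{2i} = \bp_2^i$ out of the $T = 2n$ total distributions, so on this block $\avg(\ba) = \tfrac{1}{2n}(\bp_1^i + \bp_2^i) = \tfrac{1}{n}\avg(\bp_1^i,\bp_2^i)$, which by part (1) of \Cref{lem:lb_builder} is the copy of $\tfrac{1}{n}\mathrm{Unif}(m)$ on this block, i.e.\ mass $\tfrac{1}{nm} = \tfrac 1k$ on each of its elements. Running the identical computation with $\br_1^i,\br_2^i$ on each of the remaining $(1-\alpha)n$ blocks (again using part (1)) gives the same mass $\tfrac 1k$ per element there. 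Summing over all $n$ blocks yields $\avg(\ba) = \unik$.

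Next I would compute $\dtv(\unik, \avg(\bb)) = \tfrac12\|\unik - \avg(\bb)\|_1$ by splitting the $\ell_1$‑norm blockwise. On each of the first $\alpha n$ blocks, $\avg(\bb)$ restricts to $\tfrac 1n\avg(\bq_1^i,\bq_2^i)$ while $\unik$ restricts to $\tfrac 1n\mathrm{Unif}(m) = \tfrac 1n\avg(\bp_1^i,\bp_2^i)$, so the within‑block contribution to $\|\unik - \avg(\bb)\|_1$ equals $\tfrac 1n\|\avg(\bp_1,\bp_2) - \avg(\bq_1,\bq_2)\|_1 = \tfrac 1n\cdot\tfrac{\eps}{8}$, the value $\tfrac{\eps}{8}$ being exactly the one computed in the proof of \Cref{lem:lb_builder}. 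Each of the remaining $(1-\alpha)n$ blocks contributes $\tfrac 1n\|\avg(\br_1,\br_2) - \avg(\bs_1,\bs_2)\|_1 = \tfrac 1n\cdot\tfrac{\eps}{8}$ by the same lemma. Summing over all $n$ blocks gives $\|\unik - \avg(\bb)\|_1 = \tfrac{\eps}{8}$, hence $\dtv(\unik, \avg(\bb)) = \tfrac{\eps}{16} = \tfrac{1}{16\sqrt 2}$ since $\eps = \tfrac{1}{\sqrt 2}$.

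The one point requiring care is the normalization: each of the $T = 2n$ distributions enters $\avg$ with weight $\tfrac{1}{2n}$, each block hosts exactly two of them whose sum is twice a block‑local average of two probability distributions on $m$ elements, so the effective per‑block weight is $\tfrac 1n$ and the block‑local averages have block‑sum $\tfrac 1n = \tfrac mk$, matching the block‑sum of $\unik$ on that block. Once this is set up, \Cref{lem:lb_builder}(1)--(2) deliver the claim immediately, and the statement holds with equality (rather than inequality) because all of the per‑block contributions above are exact.
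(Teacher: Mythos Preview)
Your proposal is correct and follows essentially the same approach as the paper: decompose $\avg(\ba)$ and $\avg(\bb)$ blockwise using the disjoint‑support structure, invoke part~(1) of \Cref{lem:lb_builder} on each block to get uniformity of $\avg(\ba)$, and use the exact per‑block $\ell_1$ value $\tfrac{\eps}{8}=\tfrac{1}{8\sqrt 2}$ from the proof of \Cref{lem:lb_builder} to obtain $\dtv(\unik,\avg(\bb))=\tfrac{1}{16\sqrt 2}$. Your write‑up is, if anything, more explicit than the paper about the $\tfrac 1n$ normalization per block.
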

\begin{proof}
	Observe that from \Cref{lem:lb_builder}, in every block $1 \le i \le n$, $\avg(\ba_{2i-1}, \ba_{2i}) = \mathrm{Unif}(m)$. Furthermore, since all blocks have disjoint supports of size $m$, we conclude that $\avg(\ba) = \avg(\ba_1,\dots,\ba_{2n})=\textbf{u}_{mn} = \unik$.
	
	Also, from \Cref{lem:lb_builder}, in every block $1 \le i \le n$, $\|\avg(\ba_{2i-1}, \ba_{2i}) - \avg(\bb_{2i-1}, \bb_{2i})\|_1 = \frac{1}{8\sqrt{2}}$. This immediately gives us $\|\avg(\ba_{1}, \ba_{2n}) - \avg(\bb_{1}, \bb_{2n})\|_1 = \frac{1}{8\sqrt{2}}$. Thus, $\dtv(\avg(\ba), \avg(\bb)) = \dtv(\unik, \avg(\bb)) =  \frac{1}{16\sqrt{2}}$.
\end{proof}

Next, we proceed to Step 2 in the outline above, which shows that the first two moments of the collision vectors are close.

\begin{claim}[Means match]
	\label{claim:exp_match}
	Let $C_{\ba}$ and $C_{\bb}$  be the 3-dimensional vectors of 2-way, 3-way and 4-way collisions observed for the sequences $\ba$ and $\bb$ respectively over all the $n$ blocks. Then,
	$$
	\E\left[C_{\ba}\right] = \E\left[C_{\bb}\right].
	$$
\end{claim}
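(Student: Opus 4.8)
The plan is to exploit the block structure of the sequences $\ba$ and $\bb$ together with the pairwise disjointness of the block supports, which reduces the claim to part (3) of \Cref{lem:lb_builder} via linearity of expectation.

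First I would observe that since the $n$ blocks occupy pairwise disjoint subsets of $[k]$, any element observed two or more times in the pooled multiset of samples must lie entirely within a single block; in particular, no $2$-, $3$-, or $4$-way collision can straddle two blocks. Consequently the global collision vector decomposes additively as $C_{\ba} = \sum_{i=1}^n C_{\ba}^{(i)}$, where $C_{\ba}^{(i)}$ records the $2$-, $3$-, and $4$-way collisions among the four samples drawn from the two distributions supported on block $i$, and similarly $C_{\bb} = \sum_{i=1}^n C_{\bb}^{(i)}$.

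Next I would identify the law of each block-level vector. By \Cref{def:structure_a_b}, for the first $\alpha n$ blocks the pair of distributions is an isomorphic copy of $(\bp_1,\bp_2)$, so $C_{\ba}^{(i)}$ has the same distribution as $C_{\bp}$ from \Cref{lem:lb_builder}; for the remaining $(1-\alpha)n$ blocks it is a copy of $(\br_1,\br_2)$, so $C_{\ba}^{(i)}$ is distributed as $C_{\br}$. Likewise, for $\bb$ the first $\alpha n$ blocks contribute copies of $C_{\bq}$ and the last $(1-\alpha)n$ blocks copies of $C_{\bs}$. Applying linearity of expectation across blocks yields
$$\E[C_{\ba}] = \alpha n\,\E[C_{\bp}] + (1-\alpha)n\,\E[C_{\br}] = n\,\E[\alpha C_{\bp} + (1-\alpha)C_{\br}],$$
and similarly $\E[C_{\bb}] = n\,\E[\alpha C_{\bq} + (1-\alpha)C_{\bs}]$; here $\alpha n$ is an integer because $n$ is taken to be a multiple of $4$ and $\alpha = \tfrac34$. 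Invoking the identity $\E[\alpha C_{\bp} + (1-\alpha)C_{\br}] = \E[\alpha C_{\bq} + (1-\alpha)C_{\bs}]$ from part (3) of \Cref{lem:lb_builder} (with exactly this $\alpha = \tfrac34$) and multiplying through by $n$ gives $\E[C_{\ba}] = \E[C_{\bb}]$.

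There is no real obstacle here: all the substantive computation was front-loaded into \Cref{lem:lb_builder}, and the only points requiring care are (i) checking that cross-block collisions genuinely cannot occur, so the additive decomposition is exact, and (ii) bookkeeping that the proportions of the two block types ($\alpha n$ versus $(1-\alpha)n$) match the mixing weight $\alpha$ appearing in the lemma. An equivalent and perhaps slicker phrasing is the coin-toss coupling used in the proof of \Cref{lem:lb_builder}: selecting a uniformly random block and conditioning on its type reproduces the mixture $\alpha C_{\bp} + (1-\alpha)C_{\br}$ in expectation, and summing over the $n$ blocks recovers $C_{\ba}$; the same for $\bb$, and the two mixtures have equal mean by the lemma.
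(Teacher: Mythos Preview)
Your proposal is correct and follows essentially the same approach as the paper: decompose the global collision vector as a sum over blocks, identify the block-level distributions as copies of $C_{\bp}, C_{\br}$ (respectively $C_{\bq}, C_{\bs}$) according to the $\alpha n$ versus $(1-\alpha)n$ split, apply linearity of expectation, and invoke part (3) of \Cref{lem:lb_builder}. You are slightly more explicit than the paper in justifying why cross-block collisions cannot occur, which is a fine addition.
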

\begin{proof}
	Let $C_{\ba_i}$ be the 3-dimensional vector of 2-way, 3-way and 4-way collisions observed for $\ba$ in the $i^{\text{th}}$ block (i.e. 2 samples each from $\ba_{2i-1}$ and $\ba_{2i}$). Similarly, define $C_{\bb_i}$. Then, the total number of 2-way, 3-way and 4-way collision across all $n$ blocks will be $\sum_{i \in [n]}C_{\ba_i}$ and $\sum_{i \in [n]}C_{\bb_i}$. Now, recall that exactly $\alpha n$ many blocks are of one kind, where each $C_{\ba_i}=C_{\bp}$ (respectively, $C_{\bb_i}=C_{\bq}$) independently, and the remaining $(1-\alpha)n$ many blocks are of the second kind, where each $C_{\ba_i}=C_{\br}$ (respectively, $C_{\bb_i}=C_{\bs}$) independently. Thus we have that
	\begin{align*}
		\E\left[\sum_{i \in [n]}C_{\ba_i}\right] &= \E\left[\sum_{i=1}^{\alpha n}C_{\ba_i} + \sum_{i=\alpha n + 1}^{n} C_{\ba_i}\right]
		= n(\E[\alpha C_{\bp} + (1-\alpha)C_{\br}]) \\& = n(\E[\alpha C_{\bq} + (1-\alpha) C_{\bs}]) \qquad \text{(from \Cref{lem:lb_builder})} \\
		&= \E\left[\sum_{i=1}^{\alpha n}C_{\bb_i} + \sum_{i=\alpha n + 1}^{n} C_{\bb_i}\right] = \E\left[\sum_{i \in [n]}C_{\bb_i}\right].
	\end{align*}
\end{proof} 

\begin{claim}[Covariances match approximately]
	\label{claim:cov_match}
	Let $C_{\ba}$ and $C_{\bb}$  be the 3-dimensional vectors of 2-way, 3-way and 4-way collisions observed for the sequences $\ba$ and $\bb$ respectively over all the $n$ blocks. Then the covariance matrices are as follows:
	$$
	\Sigma\left[C_{\ba}\right] = 
	n
	\begin{bmatrix}
		\gamma_{11}/m & \gamma_{12}/m^2 & \gamma_{13}/m^3\\
		\gamma_{12}/m^2 & \gamma_{22}/m^2 & \gamma_{23}/m^3 \\
		\gamma_{13}/m^3 & \gamma_{23}/m^3 & \gamma_{33}/m^3
	\end{bmatrix}
	+
	n
	\begin{bmatrix}
		\alpha_{11}/m^2 & \alpha_{12}/m^3 & \alpha_{13}/m^4\\
		\alpha_{12}/m^3 & \alpha_{22}/m^3 & \alpha_{23}/m^4 \\
		\alpha_{13}/m^4 & \alpha_{23}/m^4 & \alpha_{33}/m^4
	\end{bmatrix},
	$$
	$$
	\Sigma\left[C_{\bb}\right] = n\begin{bmatrix}
		\gamma_{11}/m & \gamma_{12}/m^2 & \gamma_{13}/m^3\\
		\gamma_{12}/m^2 & \gamma_{22}/m^2 & \gamma_{23}/m^3 \\
		\gamma_{13}/m^3 & \gamma_{23}/m^3 & \gamma_{33}/m^3
	\end{bmatrix}
	+
	n
	\begin{bmatrix}
		\alpha'_{11}/m^2 & \alpha'_{12}/m^3 & \alpha'_{13}/m^4\\
		\alpha'_{12}/m^3 & \alpha'_{22}/m^3 & \alpha'_{23}/m^4 \\
		\alpha'_{13}/m^4 & \alpha'_{23}/m^4 & \alpha'_{33}/m^4
	\end{bmatrix},
	$$
	with $|\alpha_{ij}|, |\alpha'_{ij}|, |\gamma_{ij}|$ bounded by constants independent of $m$ (for $m$ large enough).
\end{claim}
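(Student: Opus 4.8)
The plan is to reduce the claim to the building block of \Cref{lem:lb_builder} and then expand each covariance entry as a (Laurent) polynomial in $1/m$. Since the $n$ blocks in \Cref{def:structure_a_b} have pairwise disjoint supports and are sampled independently, no collision ever straddles two blocks, so $C_{\ba}=\sum_{i\in[n]}C_{\ba_i}$ and $C_{\bb}=\sum_{i\in[n]}C_{\bb_i}$ are sums of \emph{independent} within-block collision vectors, of which exactly $\alpha n$ of the $C_{\ba_i}$ are distributed as $C_{\bp}$ and the remaining $(1-\alpha)n$ as $C_{\br}$ (and exactly $\alpha n$ of the $C_{\bb_i}$ as $C_{\bq}$, the rest as $C_{\bs}$). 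Additivity of covariance over independent summands then gives
\[
\Sigma[C_{\ba}]=n\big(\alpha\,\Sigma[C_{\bp}]+(1-\alpha)\,\Sigma[C_{\br}]\big),\qquad \Sigma[C_{\bb}]=n\big(\alpha\,\Sigma[C_{\bq}]+(1-\alpha)\,\Sigma[C_{\bs}]\big),
\]
so it suffices to understand the four $3\times 3$ building-block covariance matrices and to show that the two $\alpha$-weighted combinations share a common leading term in $1/m$ up to a correction that is one power of $m$ smaller and has bounded coefficients.

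For the building block, I would write the $j$-way collision count ($j\in\{2,3,4\}$) as $c_j=\sum_{|A|=j}\mathds{1}[A]$, where $A$ runs over the $j$-element subsets of the four mutually independent samples drawn in the block and $\mathds{1}[A]$ is the event that all samples indexed by $A$ take a common value. Since disjoint subsets of the samples are independent, and $\mathds{1}[A]\mathds{1}[B]=\mathds{1}[A\cup B]$ whenever $A\cap B\neq\emptyset$,
\[
\Cov[c_i,c_j]=\sum_{\substack{|A|=i,\ |B|=j\\ A\cap B\neq\emptyset}}\big(\Pr[A\cup B]-\Pr[A]\Pr[B]\big),\qquad \Pr[U]=\sum_{\ell}\prod_{s\in U}p_{d(s)}(\ell),
\]
where $d(s)\in\{1,2\}$ records which of the two block distributions sample $s$ comes from and $\Pr[U]$ abbreviates the probability that all samples in $U$ coincide. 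For each of $\bp,\bq,\br,\bs$ the block probabilities are constant multiples of $1/m$ on regions whose sizes are constant multiples of $m$ (the relevant constants $\eps=1/\sqrt 2$ and $\beta=7/8$ being fixed), so each $\Pr[U]$ equals a bounded constant times $m^{1-|U|}$. Hence every entry $\Cov[c_i,c_j]$ is a finite Laurent polynomial in $1/m$ with coefficients bounded uniformly in $m$; its dominant power is $m^{1-\max(i,j)}$, since the products $\Pr[A]\Pr[B]=\Theta(m^{2-i-j})$ are strictly smaller (because $i,j\ge 2$) and among the $\Pr[A\cup B]=\Theta(m^{1-|A\cup B|})$ terms the largest are those with $|A\cup B|=\max(i,j)$, which occurs exactly when one of $A,B$ contains the other. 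Counting these ($\binom{\max(i,j)}{\min(i,j)}$ subsets for each choice of the larger set) shows that the $m^{1-\max(i,j)}$-coefficient of $\Cov[c_i,c_j]$ equals $\binom{\max(i,j)}{\min(i,j)}$ times the leading coefficient of $\E[c_{\max(i,j)}]$. In other words, the leading term of the building-block covariance matrix is a fixed linear image of the building-block mean vector $(\E[c_2],\E[c_3],\E[c_4])$, independent of which of the four distributions is used. (Under the correspondence ``claim's matrix entry $(a,b)$ is $\Cov[c_{a+1},c_{b+1}]$'', the power $m^{1-\max(a+1,b+1)}=m^{-\max(a,b)}$ matches the pattern of powers written in the claim.)

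The final step invokes \Cref{lem:lb_builder}(3) — exactly as used in \Cref{claim:exp_match} — namely $\E[\alpha C_{\bp}+(1-\alpha)C_{\br}]=\E[\alpha C_{\bq}+(1-\alpha)C_{\bs}]$. Since the two mixed mean vectors coincide, so do their leading coefficients, and hence, by the previous paragraph, the leading terms of $\alpha\,\Sigma[C_{\bp}]+(1-\alpha)\,\Sigma[C_{\br}]$ and of $\alpha\,\Sigma[C_{\bq}]+(1-\alpha)\,\Sigma[C_{\bs}]$ coincide; multiplying by $n$ identifies this common leading matrix as $n\Gamma$, with $\Gamma$ having the stated entries $\gamma_{ij}/m^{\max(a,b)}$ and each $\gamma_{ij}$ an explicit $m$-independent constant. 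Subtracting $\Gamma$ leaves, in each entry, a Laurent polynomial in $1/m$ that is one order below the leading term: the remaining $\Pr[A\cup B]$ contributions have $|A\cup B|\ge\max(i,j)+1$, and the products are $\Theta(m^{2-i-j})=O(m^{-\max(i,j)})$ since $\min(i,j)\ge 2$. Writing this entry as $\alpha_{ij}/m^{\max(a,b)+1}$ (respectively $\alpha'_{ij}/m^{\max(a,b)+1}$ for $C_{\bb}$), the coefficients $\alpha_{ij},\alpha'_{ij}$ are bounded uniformly in $m$ (for $m$ large) because all coefficients of these explicit finite Laurent polynomials are, and for the same reason the $\gamma_{ij}$ are bounded. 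This yields precisely the claimed decomposition.

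The only real work is the order bookkeeping in the second paragraph: verifying that the $m^{1-\max(i,j)}$-coefficient of each covariance entry is a fixed linear functional of the mean vector — so that the exact matching of means from \Cref{lem:lb_builder}(3) transfers to the leading covariances — and that every other contribution is genuinely one power of $m$ smaller. There is no conceptual obstacle here; one could, if desired, simply compute all six entries of $\Sigma[C_{\bp}],\Sigma[C_{\br}],\Sigma[C_{\bq}],\Sigma[C_{\bs}]$ in closed form and read off $\gamma_{ij}$ and $\alpha_{ij},\alpha'_{ij}$ directly, at the cost of a page of elementary algebra.
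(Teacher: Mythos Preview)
Your proposal is correct and follows the same skeleton as the paper: independence across disjoint-support blocks gives $\Sigma[C_{\ba}]=n(\alpha\,\Sigma[C_{\bp}]+(1-\alpha)\,\Sigma[C_{\br}])$ and likewise for $\bb$, and then one analyzes the four block-level covariance matrices. Where you differ is in how that last step is executed. The paper expands each second moment explicitly in terms of the means---e.g.\ $\E[c_2c_3]=3\E[c_3]+12\E[c_4]$, $\E[c_2^2]=\E[c_2]+6\E[c_3]+6\Pr[12]\Pr[34]$, etc.---then plugs in the numerical values of $\E[c_j]$ from \Cref{lem:lb_builder} for each of $\bp,\bq,\br,\bs$, writes out the four $3\times3$ matrices in full, and combines them with $\alpha=3/4$ to read off the common leading matrix and the two correction matrices.

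You instead abstract the combinatorics: the leading $m^{1-\max(i,j)}$ coefficient of $\Cov[c_i,c_j]$ is $\binom{\max(i,j)}{\min(i,j)}$ times the (single-term) coefficient of $\E[c_{\max(i,j)}]$, so the leading part of the block covariance is a fixed linear image of the block mean vector. Since \Cref{lem:lb_builder}(3) already matches the $\alpha$-mixed means, the $\alpha$-mixed leading covariances match for free, and the remainder is automatically one order lower with bounded coefficients. This is a clean shortcut that avoids the page of arithmetic the paper carries out; conversely, the paper's explicit computation yields the actual numerical $\gamma_{ij},\alpha_{ij},\alpha'_{ij}$ (which are used nowhere else, so nothing is lost by your route).
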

\begin{proof}
	As in \Cref{claim:exp_match}, we have that $C_{\ba} = \sum_{i \in [n]}C_{\ba_i}$ and $C_{\bb} = \sum_{i \in [n]}C_{\bb_i}$, where $C_{\ba_i}$ is either $C_{\bp}$ or $C_{\br}$, and $C_{\bb_i}$ is either $C_{\bq}$ or $C_{\bs}$. Further, since the samples in each block are independent, the covariances add up. Thus, we only need to compute the covariance in one block, which we recall supports two distributions and contributes 4 samples. Suppressing subscripts for now, we have that
	\begin{align*}
		\Sigma[C] &= 
		\begin{bmatrix}
			\E[c_2^2]-\E[c_2]^2 & \E[c_2c_3] -\E[c_2]\E[c_3] & \E[c_2c_4] - \E[c_2]\E[c_4]\\
			\E[c_2c_3] -\E[c_2]\E[c_3] & \E[c_3^2]-\E[c_3]^2 & \E[c_3c_4] -\E[c_3]\E[c_4] \\
			\E[c_2c_4] - \E[c_2]\E[c_4] & \E[c_3c_4] -\E[c_3]\E[c_4] & \E[c_4^2]-\E[c_4]^2
		\end{bmatrix}.
	\end{align*}
	Let $Y_{ab}$ be the indicator that the elements at indices $a$ and $b$ collide. Note that $1 \le a,b \le 4$, since we have 4 samples in each block. Then,
	\begin{align*}
		\E[c^2_2] &= \E\left [\left(\sum_{a < b}Y_{ab}\right)\left(\sum_{i < j}Y_{ij}\right)\right] \\
		&= \E\left[\sum_{a < b}Y_{ab}^2 + 6\sum_{a < b < c}Y_{abc} + 6\sum_{a < b < c < d}Y_{ab}Y_{cd}\right] \\
		&= \E\left[\sum_{a < b}Y_{ab}\right] + 6\cdot\E\left[\sum_{a < b < c}Y_{abc}\right] + 6\cdot\E[Y_{12}]\cdot\E[Y_{34}] \\
		&= \E[c_2] + 6\cdot\E[c_3] + 6\cdot\Pr[\text{first two samples collide}]\cdot\Pr[\text{second two samples collide}].
	\end{align*}
	In a similar manner, we can obtain the following expressions for the remaining terms:
	\begin{align*}
		&\E[c^3_2] = \E[c_3] + 12\cdot\E[c_4] \\
		&\E[c_4^2] = \E[c_4] \\
		&\E[c_2c_3] = 3\cdot\E[c_3] + 12\cdot\E[c_4] \\
		&\E[c_2c_4] = 6\cdot\E[c_4] \\
		&\E[c_3c_4] = 4\cdot\E[c_4].
	\end{align*}
	Thus, we have expressed all the entries in the covariance matrices in terms of just the expected number of 2-way, 3-way and 4-way collisions. We have already computed explicit expressions for these for each of $\bp$, $\bq$, $\br$ and $\bs$ in the proof of \Cref{lem:lb_builder}. Plugging these in, we obtain
	\begin{align*}
		\Sigma[C_{\bp}] &= 
		\begin{bmatrix}
			\frac{6}{m} & \frac{12}{m^2} & \frac{6}{m^3}\\
			\frac{12}{m^2} & \frac{4}{m^2} & \frac{4}{m^3} \\
			\frac{6}{m^3} & \frac{4}{m^3} & \frac{1}{m^3}
		\end{bmatrix}
		+
		\begin{bmatrix}
			-\frac{6}{m^2} & -\frac{12}{m^3} & -\frac{6}{m^4}\\
			-\frac{12}{m^3} & \frac{12}{m^3}-\frac{16}{m^4} & -\frac{4}{m^5} \\
			-\frac{6}{m^4} & -\frac{4}{m^5} & -\frac{1}{m^6}
		\end{bmatrix} \\
		\Sigma[C_{\br}] &= 
		\begin{bmatrix}
			\frac{4}{m} & 0 & 0\\
			0 & 0 & 0 \\
			0 & 0 & 0
		\end{bmatrix}
		+
		\begin{bmatrix}
			\frac{8}{m^2} & 0 & 0\\
			0 & 0 & 0 \\
			0 & 0 & 0
		\end{bmatrix} \\
		\Sigma[C_{\bq}] = \Sigma[C_{\bs}] &= 
		\begin{bmatrix}
			\frac{11}{2m} & \frac{9}{m^2} & \frac{9}{2m^3}\\
			\frac{9}{m^2} & \frac{3}{m^2} & \frac{3}{m^3} \\
			\frac{9}{2m^3} & \frac{3}{m^3} & \frac{3}{4m^3}
		\end{bmatrix}
		+
		\begin{bmatrix}
			\frac{5}{4m^2} & - \frac{15}{2m^3} & -\frac{33}{8m^4}\\
			- \frac{15}{2m^3} & \frac{9}{m^3}-\frac{9}{m^4} & -\frac{9}{4m^5} \\
			-\frac{33}{8m^4} & -\frac{9}{4m^5} & -\frac{9}{16m^6}
		\end{bmatrix}.
	\end{align*}
	Finally, recall that
	\begin{align*}
		\Sigma[C_{\ba}] = \Sigma\left[\sum_{i=1}^{\alpha n}C_{\ba_i}\right] + \Sigma\left[\sum_{i=\alpha n + 1}^n C_{\ba_i}\right] = \alpha n \Sigma[C_{\bp}] + (1-\alpha)n\Sigma[C_{\br}] \\
		\Sigma[C_{\bb}] = \Sigma\left[\sum_{i=1}^{\alpha n}C_{\bb_i}\right] + \Sigma\left[\sum_{i=\alpha n + 1}^n C_{\bb_i}\right] = \alpha n \Sigma[C_{\bq}] + (1-\alpha)n\Sigma[C_{\bs}]\\
	\end{align*}
	for $\alpha = \frac34$, and hence, substituting the values above, we get
	\begin{align*}
		\Sigma[C_\ba] &= 
		n\begin{bmatrix}
			\frac{11}{2m} & \frac{9}{m^2} & \frac{9}{2m^3}\\
			\frac{9}{m^2} & \frac{3}{m^2} & \frac{3}{m^3} \\
			\frac{9}{2m^3} & \frac{3}{m^3} & \frac{3}{4m^3}
		\end{bmatrix}
		+
		n\begin{bmatrix}
			-\frac{5}{2m^2}  & -\frac{9}{m^3} & -\frac{9}{2m^4}\\
			-\frac{9}{m^3} & \frac{9}{m^3}-\frac{12}{m^4} & -\frac{3}{m^5} \\
			-\frac{9}{2m^4} & -\frac{3}{m^5} & -\frac{3}{4m^6}
		\end{bmatrix} \\
		\Sigma[C_\bb] &=
		n\begin{bmatrix}
			\frac{11}{2m} & \frac{9}{m^2} & \frac{9}{2m^3}\\
			\frac{9}{m^2} & \frac{3}{m^2} & \frac{3}{m^3} \\
			\frac{9}{2m^3} & \frac{3}{m^3} & \frac{3}{4m^3}
		\end{bmatrix}
		+
		n\begin{bmatrix}
			\frac{5}{4m^2} & - \frac{15}{2m^3} & -\frac{33}{8m^4}\\
			- \frac{15}{2m^3} & \frac{9}{m^3}-\frac{9}{m^4} & -\frac{9}{4m^5} \\
			-\frac{33}{8m^4} & -\frac{9}{4m^5} & -\frac{9}{16m^6}
		\end{bmatrix}.
	\end{align*}
	For the terms $9/m^3 - 12/m^4$ and $9/m^3 - 9/m^4$,  we can write them as $\frac{(9-12/m)}{m^3}$ and $\frac{9 - 9/m}{m^3}$ respectively.  Thus for $m$ large enough, we can write all the terms in the matrices in the form specified in the claim, completing the proof. 
\end{proof}

We now prove a CLT (Step 3) which helps us approximate the distributions of the fingerprints by multivariate Gaussians having the same mean and covariance.

\begin{lemma}
	\label{lem:clt_new}
	Consider a $c$-dimensional distribution (for $c \le 10$) supported on $[0,0, \ldots, 0]$, the $c$-dimensional basis vectors, and $[2,0, \ldots, 0]$, such that the masses on the non-origin support points are fixed positive constants, and the mass on origin $[0,0, \ldots, 0]$ is at least $1 - 10^{-16}$.  Then for $\ell$ large enough, the distribution of the sum of $\ell$ i.i.d samples from this distribution has total variation distance at most $0.002$ from the Gaussian with corresponding mean and covariance, whose mass has been discretized to the nearest point in the integer lattice.
\end{lemma}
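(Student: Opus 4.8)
The one obstruction to a textbook multivariate CLT is the support point $[2,0,\dots,0]=2e_1$, which keeps the i.i.d.\ summands from being basis-or-zero vectors; my plan is to remove it by lifting to one extra coordinate. Write the per-sample law as mass $p_i$ on the basis vector $e_i$ ($1\le i\le c$), mass $p_*$ on $2e_1$, and mass $p_0$ on the origin, with all the $p_i,p_*$ fixed positive constants. Let $N\in\Z^{c+1}$ record, among the $\ell$ draws, how many land on each $e_i$ and how many land on $2e_1$ (the last coordinate being the $2e_1$-count), so $N$ is a sum of $\ell$ i.i.d.\ basis-or-zero vectors in $\R^{c+1}$ — a generalized multinomial. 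For the integer linear map $\Pi\colon\R^{c+1}\to\R^c$ with $\Pi(x_1,\dots,x_{c+1})=(x_1+2x_{c+1},x_2,\dots,x_c)$, the sample sum satisfies $S=\Pi N$ \emph{exactly}, so $\E[S]=\Pi\,\E[N]$ and $\mathrm{Cov}(S)=\Pi\,\mathrm{Cov}(N)\,\Pi^\top$. Here $\Pi(\Z^{c+1})\subseteq\Z^c$, $\Pi$ has full row rank $c$, and $\mathrm{Cov}(N)$ is positive definite (since $p_0>0$ and every $p_i,p_*>0$), hence $\mathrm{Cov}(S)$ is positive definite and all eigenvalues of both covariances are $\Theta(\ell)$.

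\textbf{Step 1 (CLT for the lifted vector, then push forward).} Since $N$ is a generalized multinomial in $c+1\le 11$ bounded dimensions with minimum coordinate variance $\sigma^2=\Theta(\ell)$, the multivariate CLT for generalized multinomials of \cite{valiant2011estimating} gives $\dtv\bigl(N,\lfloor\mathcal N(\E[N],\mathrm{Cov}(N))\rceil\bigr)\le c_0(c)/\sigma^{1/3}$, where $\lfloor\mathcal N(\mu,\Sigma)\rceil$ denotes the law of a $\mathcal N(\mu,\Sigma)$ draw rounded to the nearest integer-lattice point and $c_0(c)$ depends only on the (bounded) dimension; this tends to $0$ as $\ell\to\infty$. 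Because total variation does not increase under the deterministic map $\Pi$, also $\dtv\bigl(S,\Pi_{*}\lfloor\mathcal N(\E[N],\mathrm{Cov}(N))\rceil\bigr)\to 0$.

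\textbf{Step 2 (the crux: transferring through $\Pi$).} It remains to replace $\Pi_{*}\lfloor\mathcal N(\E[N],\mathrm{Cov}(N))\rceil$ by a genuine discretized Gaussian on $\Z^c$. Let $\xi\sim\mathcal N(\E[N],\mathrm{Cov}(N))$. Since $\Pi\xi\sim\mathcal N(\E[S],\mathrm{Cov}(S))$ exactly, $\mathrm{law}(\lfloor\Pi\xi\rceil)=\lfloor\mathcal N(\E[S],\mathrm{Cov}(S))\rceil$, while $\mathrm{law}(\Pi\lfloor\xi\rceil)=\Pi_{*}\lfloor\mathcal N(\E[N],\mathrm{Cov}(N))\rceil$; so it suffices to show $\dtv\bigl(\mathrm{law}(\Pi\lfloor\xi\rceil),\mathrm{law}(\lfloor\Pi\xi\rceil)\bigr)\to0$. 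A direct computation gives $\lfloor\Pi\xi\rceil=\Pi\bigl(\lfloor\xi\rceil+R\,e_1\bigr)$ with $R=\bigl\lfloor(\xi_1-\lfloor\xi_1\rceil)+2(\xi_{c+1}-\lfloor\xi_{c+1}\rceil)\bigr\rceil\in\{-1,0,1\}$, a function of the fractional parts of $\xi_1,\xi_{c+1}$ only; since $\Pi$ does not increase TV it is enough that $\dtv\bigl(\mathrm{law}(\lfloor\xi\rceil),\mathrm{law}(\lfloor\xi\rceil+R\,e_1)\bigr)\to0$. I would obtain this from two facts, both using that $\mathrm{Cov}(N)$ has eigenvalues $\Theta(\ell)$: (i) (equidistribution) the joint law of $\lfloor\xi\rceil$ and $(\xi_1-\lfloor\xi_1\rceil,\xi_{c+1}-\lfloor\xi_{c+1}\rceil)$ is within $o_\ell(1)$ of the product of the law of $\lfloor\xi\rceil$ with the uniform law on $[-\tfrac12,\tfrac12)^2$, so $R$ becomes, up to $o_\ell(1)$ TV, independent of $\lfloor\xi\rceil$; and (ii) (smoothness) for a fixed integer vector $v$, $\lfloor\mathcal N(\mu,\mathrm{Cov}(N))\rceil+v=\lfloor\mathcal N(\mu+v,\mathrm{Cov}(N))\rceil$, hence $\dtv\bigl(\lfloor\mathcal N(\mu,\mathrm{Cov}(N))\rceil,\lfloor\mathcal N(\mu,\mathrm{Cov}(N))\rceil+v\bigr)\le\dtv\bigl(\mathcal N(\mu,\mathrm{Cov}(N)),\mathcal N(\mu{+}v,\mathrm{Cov}(N))\bigr)=O(\|v\|\,\ell^{-1/2})$. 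Decomposing over $R\in\{-1,0,1\}$ and combining (i) with (ii) closes the step.

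\textbf{Step 3 (conclusion) and the main difficulty.} Combining Steps 1 and 2 by the triangle inequality gives $\dtv\bigl(S,\lfloor\mathcal N(\E[S],\mathrm{Cov}(S))\rceil\bigr)\to0$ as $\ell\to\infty$; the object in the statement is exactly this discretized Gaussian, so for $\ell$ large enough the bound drops below $0.002$ (indeed below any prescribed constant; the $1-10^{-16}$ threshold on the origin mass is not needed here, only positivity of the various masses and boundedness of $c$). The main obstacle is Step 2: because $\Pi$ is not a lattice isomorphism, the pushforward of a discretized Gaussian is not itself a discretized Gaussian, and closing that gap requires the local-limit-type estimates (i)--(ii), which is where the $\Theta(\ell)$ scaling of the covariance is essential — Steps 1 and 3 are otherwise bookkeeping. (An alternative that avoids the lifting altogether is to invoke a classical multivariate local limit theorem directly for $S$: the i.i.d.\ summands are bounded, $\Z^c$-valued, have positive-definite covariance, and their support $\{0,e_1,2e_1,e_2,\dots,e_c\}$ generates $\Z^c$, so the walk is aperiodic; hence $\det(\mathrm{Cov}(S))^{1/2}\Pr[S=x]$ converges uniformly to the Gaussian density, which together with a Gaussian tail bound yields the claimed total-variation convergence.)
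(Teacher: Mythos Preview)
Your approach is correct and takes a genuinely different route from the paper's. Instead of lifting, the paper writes each sample as a fair-coin mixture of (a) a draw from $\{0,e_1,\dots,e_c\}$ with doubled weights $2u_i$ and (b) a draw from $\{0,2e_1\}$ with doubled weight $2v$; it then conditions on the number $t$ of heads, applies the Valiant--Valiant CLT separately to the $t$-fold multinomial and the $(\ell-t)$-fold scaled binomial, shows that the convolution of two discretized Gaussians is close to the discretization of their convolution (Lemma~\ref{lem:lem:conv-disc-close-to-disc-conv_multi_var}), and finally argues that all the $t$-dependent Gaussians for $t\in[\ell/2\pm 4\sqrt{\ell}]$ are close to the target Gaussian (Lemma~\ref{lem:different-t-gaussians-close}). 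This last step is precisely where the $1-10^{-16}$ threshold is used --- both so that the doubled weights $2\sum_i u_i$ and $2v$ are valid probabilities, and so that the $O(\sqrt{\ell})$ fluctuation in $t$ shifts the conditional means by much less than a standard deviation. Your lifting to $c+1$ coordinates sidesteps the conditioning entirely and, as you correctly note, does not require the $1-10^{-16}$ bound at all; the price is the commutation argument in Step~2, but your outline there is sound (claim (i) follows from the Gaussian density varying by only a $1+O(\ell^{-1/2})$ factor across each unit cube in the bulk, so the fractional part is $o_\ell(1)$-close in TV to uniform and independent of the integer part, and (ii) is immediate from data processing). Your parenthetical alternative via a direct multivariate local limit theorem for $S$ is also valid and arguably the shortest route of the three, since the support $\{0,e_1,2e_1,e_2,\dots,e_c\}$ does generate $\Z^c$ and the per-sample covariance is nondegenerate.
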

\begin{proof}
	Let $u_0=\Pr([0,0, \ldots, 0]),$ and $u_1=\Pr([1,0,\ldots,0]),u_2=\Pr([0,1,\ldots,0]), \dots, u_c=\Pr([0,0,\ldots,1])$ and $v=\Pr([2,0,\ldots,0])$. We draw a sample from this distribution as follows: first flip a fair coin, and if the coin lands heads, sample from the distribution over $[0,0,\ldots,0]$ and the basis vectors, given by respective probabilities $1-2\sum_{i=1}^cu_i, 2u_1,2u_2,\ldots,2u_c.$ If the coin lands tails, we sample from the distribution over $[0,0,\ldots,0],[2,0,\ldots,0]$ with respective probability $1-2v, 2v.$  Note that this sampling procedure yields a sample from the original distribution. Here, we are using that $v, \sum_{i=1}^cu_i$ are both at most $10^{-16}$ for the probabilities to be well-defined.
	
	Given $\ell$ independent samples sampled from the distribution in this way, let us consider the distribution of the sum of the samples, conditioned on exactly $t$ of the $\ell$ fair coins having landed heads.  Note that for large $\ell$, with probability at least $0.999$, $t \in [\ell/2 - 4\sqrt{\ell},\ell/2+4\sqrt{\ell}].$
	
	Thus for all $t$ in this range,
	if we can show that the sum of $\ell$ i.i.d.\ samples has TV distance at most $\epsilon$ from the desired discretized Gaussian, this would imply a total variation distance of at most $0.001$ + $\epsilon$ without conditioning on $t$. Thus we would focus on the distribution conditioned on $t$ taking a value in $[\ell/2 - 4\sqrt{\ell},\ell/2+4\sqrt{\ell}]$.
	
	Conditioned on a value of $t$, the distribution corresponds to the convolution of $t$ independent draws from the multinomial distribution given by the distribution corresponding to the coin landing heads, and the distribution corresponding to the $\ell-t$ samples from the distribution supported on $[0,0,\ldots, 0]$ and $[2,0,\ldots, 0]$, which is a binomial random variable in the first coordinate, scaled by a factor of 2.
	
	Let us first reason about the $t$ independent draws from the multinomial distribution. Leveraging a multivariate CLT, the distribution of the sum of the $t$ draws from the multinomial will have total variation distance at most $O\left(\frac{(\log t)^{2/3}}{\sigma^{1/3}}\right)$ from the multivariate Gaussian of corresponding mean and covariance, discretized to the nearest point in the integer lattice (~\citep[Theorem 4]{valiant2011estimating}; see Section \ref{sec:valiant-valiant-clt-restated} for a restatement). Here, $\sigma^2$ denotes the smallest eigenvalue of the covariance of the sum vector of the $t$ draws which will be $\Omega(t)$. To see this, observe that this sum vector has mean $t \cdot [2u_1,\ldots,2u_c]$, and covariance
	\begin{align*}
		t \cdot \begin{bmatrix}
			2u_1(1-2u_1) & -4u_1u_2 & -4u_1u_3 & \dots & -4u_1u_c \\
			-4u_1u_2 & 2u_2(1-2u_2) & -4u_2u_3 & \dots & -4u_2u_c \\
			\vdots & \dots\\
			-4u_1u_c & -4u_2u_c & -4u_3u_c & \dots & 2u_c(1-2u_c)
		\end{bmatrix}
	\end{align*}
	The determinant of the matrix above, without the $t$ scaling, is exactly $$2u_1\cdot2u_2\cdot\dots\cdot2u_c\cdot\left(1-2\sum_{i=1}^cu_i\right).$$
	For constant $c$, and by the assumption that $u_i > 0$ for all $i$ and $\sum_{i=1}^cu_i \le 10^{-16}$, this determinant is a positive constant independent of $\ell$. This implies that the smallest eigenvalue of this (positive semidefinite) matrix is at least (and at most) a positive constant. Together with the scaling of $t$, we get that $\sigma^2 = \Omega(t)$, meaning that $\sigma=\Omega(\sqrt{t})$. Thus, the total variation distance goes to $0$ as $t$ gets large. Consequently, we take $\ell$ to be large enough so that for each $t$ under consideration, this distance is at most $10^{-5}$.
	
	Similarly instantiating the CLT on the sum vector of the $\ell-t$ samples corresponding to the coin landing tails, which, recall is simply a scaled binomial random variable in the first coordinate (and zero elsewhere) yields that for large enough $\ell$, this vector has TV distance at most $10^{-5}$ close to a Gaussian having the corresponding mean and covariance (discretized to the closest integer) in the first coordinate, and zero elsewhere. Also, note that the variance of first coordinate is $O(\ell)$ (for $t$ in the assumed range) which is large for $\ell$ large enough.
	
	Thus, the sum of $t$ vectors corresponding to coin landing heads and $\ell - t$ vectors corresponding to coin landing tails has distance at most $2 \cdot 10^{-5}$ from the convolution of the corresponding discretized Gaussians. Now we can apply Lemma \ref{lem:lem:conv-disc-close-to-disc-conv_multi_var} which gives us that the convolution of the discretized Gaussians here has distance at most $10^{-5}$ to the discretization of the convolution of two Gaussians (here we use the fact that the variance of the first coordinate of the vector corresponding to coin landing tails is large enough and all its other coordinates are zero).  By a triangle inequality, we get that the sum of $t$ vectors corresponding to coin landing heads and $\ell - t$ vectors corresponding to coin landing tails has distance at most $3\cdot 10^{-5}$ from the discretized Gaussian of corresponding mean and covariance.
	
	So far, we have shown that for any $t \in [\ell/2 - 4\sqrt{\ell},\ell/2+4\sqrt{\ell}]$, sum of $t$ vectors corresponding to coin landing heads and $\ell - t$ vectors corresponding to coin landing tails has distance at most $3\cdot 10^{-5}$ from the discretized Gaussian of corresponding mean and covariance. In Lemma \ref{lem:different-t-gaussians-close}, we show that all these Gaussians have TV distance at most $10^{-5}$ from the desired Gaussian (the Gaussian having mean and covariance same as mean and covariance of sum of $\ell$ i.i.d.\ samples from our original distribution). Lemma \ref{lem:different-t-gaussians-close} uses the fact that $u_0$ is close to $1$ and $c$ is a small constant. This Lemma essentially involves showing that the difference between $i^{\text{th}}$ coordinate of the means of the Gaussians is roughly $u_i \sqrt{\ell}$ whereas the standard deviation is roughly $\sqrt{u_i}\sqrt{\ell}$. Thus for $u_0$ large enough, $u_i$ will be small, and the standard deviation will be much larger than the difference between the means.
	
	Combining the above arguments, we get that sum of $\ell$ i.i.d.\ samples from our distribution has TV distance at most $0.001 + 3\cdot 10^{-5} +  10^{-5} \leq 0.002$ from the discretized Gaussian with corresponding mean and covariance.

\end{proof}
\begin{lemma}
	\label{lem:clt}
	Consider two distributions supported on $[0, 0, 0]$, the basis vectors, and $[2, 0, 0]$ such that the proability of $[0, 0, 0]$ is at least $1 - 10^{-16}$. Let for the first distribution, mass on all these vectors is guaranteed to be non-zero, and for the second distribution, mass on $[0, 0, 0]$, $[1, 0, 0]$ and $[2, 0, 0]$ is guaranteed to be non-zero.
	Then for constant $\alpha \in (0, 1)$, and $n$ large enough, the distribution of the sum of $\alpha n$ i.i.d samples from the first distribution and $(1-\alpha)n$ i.i.d samples from the second distribution has total variation distance at most $0.01$ from a Gaussian with corresponding mean and covariance, where the mass has been discretized to the nearest point in the integer lattice.
\end{lemma}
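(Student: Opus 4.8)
The plan is to obtain Lemma~\ref{lem:clt} as a two‑fold application of Lemma~\ref{lem:clt_new}, glued together by the convolution‑of‑discretized‑Gaussians estimate (Lemma~\ref{lem:lem:conv-disc-close-to-disc-conv_multi_var}) already invoked in the proof of Lemma~\ref{lem:clt_new}. Write $S_1$ for the sum of the $\alpha n$ i.i.d.\ draws from the first distribution and $S_2$ for the independent sum of the $(1-\alpha)n$ i.i.d.\ draws from the second; the object of interest is $S = S_1 + S_2$, whose mean and covariance are $\mathbb{E}[S_1]+\mathbb{E}[S_2]$ and $\mathrm{Cov}[S_1]+\mathrm{Cov}[S_2]$ by independence, and whose matching (discretized) Gaussian is therefore the discretization of $\mathcal{N}(\mathbb{E}[S],\mathrm{Cov}[S]) = \mathcal{N}(\mathbb{E}[S_1],\mathrm{Cov}[S_1]) \ast \mathcal{N}(\mathbb{E}[S_2],\mathrm{Cov}[S_2])$. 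Because $\alpha$ is a fixed constant in $(0,1)$, both $\alpha n$ and $(1-\alpha)n$ tend to infinity with $n$, so every ``$\ell$ large enough'' appeal below becomes an ``$n$ large enough'' hypothesis.

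First I would apply Lemma~\ref{lem:clt_new} to the first distribution with $\ell = \alpha n$: here $c = 3 \le 10$, all non‑origin masses are fixed positive constants, and the origin mass is at least $1-10^{-16}$, so for $n$ large enough $S_1$ is within total variation distance $0.002$ of $\widetilde{G}_1$, the Gaussian with the same mean and covariance as $S_1$ discretized to the integer lattice. The second distribution needs a little care, since its support need not include the basis vectors $[0,1,0]$ and $[0,0,1]$; treating the relevant case where it is supported only on $\{[0,0,0],[1,0,0],[2,0,0]\}$ (if it happens to have positive mass on the other basis vectors, it is itself directly covered by Lemma~\ref{lem:clt_new}), I would invoke Lemma~\ref{lem:clt_new} in its one‑dimensional form for the distribution on $\{0,1,2\}$ with fixed positive masses and origin mass at least $1-10^{-16}$, identifying the resulting one‑dimensional discretized Gaussian with its embedding in $\mathbb{R}^3$ (discretization acting as the identity in the two identically‑zero coordinates). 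Either way, for $n$ large enough $S_2$ is within total variation distance $0.002$ of $\widetilde{G}_2$, the discretized Gaussian matching the mean and covariance of $S_2$.

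Since $S_1$ and $S_2$ are independent and total variation distance is non‑increasing under convolution with a fixed distribution, a triangle inequality gives $\dtv(S,\, \widetilde{G}_1 \ast \widetilde{G}_2) \le 0.004$. I would then apply Lemma~\ref{lem:lem:conv-disc-close-to-disc-conv_multi_var} to replace the convolution of the two discretized Gaussians by the discretization of their convolution $\mathcal{N}(\mathbb{E}[S],\mathrm{Cov}[S])$; its hypotheses hold because $\mathrm{Cov}[S_1]$ is nondegenerate with all eigenvalues of order $n$ (the first distribution having all masses bounded away from $0$), while $\mathrm{Cov}[S_2]$ has its only nonzero coordinate of variance of order $n$, exactly the configuration used in the ``coin landing heads/tails'' step of the proof of Lemma~\ref{lem:clt_new}. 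This costs at most $10^{-5}$, and a final triangle inequality yields $\dtv\big(S,\, \widetilde{\mathcal{N}}(\mathbb{E}[S],\mathrm{Cov}[S])\big) \le 0.004 + 10^{-5} \le 0.01$, as claimed.

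The only genuinely delicate point, and the one I would spend the most care on, is the degeneracy of the second distribution: Lemma~\ref{lem:clt_new} as stated assumes every non‑origin support point carries positive constant mass, so I must justify either that its proof survives when some coordinate probabilities vanish — restricting the Valiant--Valiant CLT and the covariance‑determinant computation to the coordinates with positive mass, the others being deterministically zero — or, as above, reduce cleanly to the one‑dimensional instance and verify that the embedding into $\mathbb{R}^3$ is compatible with coordinate‑wise discretization. Everything else is routine: bookkeeping with triangle inequalities, and the observation that a fixed fraction of $n$ samples is still ``large enough''.
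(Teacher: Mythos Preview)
Your proposal is correct and follows essentially the same approach as the paper: apply Lemma~\ref{lem:clt_new} separately to each of the two sums (with the paper simply saying ``use $c\le 3$ depending on whether the second distribution has mass on $[0,1,0]$ and $[0,0,1]$'' where you spell out the one-dimensional reduction), then glue via a discretization--convolution commutation estimate and a triangle inequality, arriving at the same $0.004+10^{-5}\le 0.01$. The only minor discrepancy is that the paper invokes the more general Lemma~\ref{lem:disc-conv-commute-gauss} (which needs only that one of the two covariances has all eigenvalues $\Theta(n)$, true here for $S_1$ by Lemma~\ref{lem:min-eig-linear-helper1}) rather than Lemma~\ref{lem:lem:conv-disc-close-to-disc-conv_multi_var}; this cleanly covers the case where the second distribution happens to be non-degenerate as well, which your cited lemma does not.
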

\begin{proof}
	Using Lemma \ref{lem:clt_new}, we know that the sum of $\alpha n$ i.i.d.\ samples from the first distribution and sum of $(1-\alpha)n$ i.i.d.\ samples from the second distribution has TV distance at most $0.002$ from the corresponding discretized Gaussians. Here, for the first distribution we apply Lemma \ref{lem:clt_new} with $c = 3$ and for second distribution, we use $c \le 3$ depending on whether it has non-zero mass on $[0, 1, 0]$ and $[0, 0, 1]$. All that remains, is to prove that the convolution of two discretized Gaussians is close, in total variation distance, to the
	discretization of the convolution of the two Gaussians. This is true for any Gaussians in constant dimension such that the minimum eigenvalue of covariance of at least one of the Gaussians is super-constant 
    (Lemma \ref{lem:disc-conv-commute-gauss}). In our case, the Gaussian corresponding to the sum of $\alpha n$ samples from the first distribution has minimum eigenvalue $\Theta(n)$.
    This is because the per-sample covariance for the first distribution $\Sigma_1$ has $\det(\Sigma_1)>0$ and $\|\Sigma_1\|=O(1)$ (both independent of $n$), so with $\Sigma_{\alpha n}=\alpha n\,\Sigma_1$ we get $\lambda_{\min}(\Sigma_{\alpha n})\ge \alpha n\,\det(\Sigma_1)/\|\Sigma_1\|^2=\Theta(n)$ (see Lemma~\ref{lem:min-eig-linear-helper1} for detailed proof).
    Taking $n$ to be large enough, we get that convolution of the two discretized Gaussians has distance at most $10^{-5}$, to the
	discretization of the convolution of the two Gaussians. Applying a triangle inequality, we get that the sum of $n$ i.i.d.\ samples produced as above has TV distance at most $0.004 + 10^{-5} \leq 0.01$ from the corresponding discretized Gaussian.
    
\end{proof}

Let $FP(\ba)$ and $FP(\bb)$ be   3 dimensional random variables containing the number of elements observed exactly 2, 3 and 4 times, for samples drawn from $\ba$ and $\bb$ respectively. 
Let $G_\ba$ be a random variable corresponding to picking a sample from gaussian with same mean and covariance as  $FP(\ba)$ and rounding each coordinate to the nearest integer. Similarly, $G_\bb$ corresponds to $FP(\bb)$.

\begin{lemma}[Fingerprints close to discretized Gaussians]
	\label{lem:tv_fp_g}
	$\dtv(FP(\ba), G_\ba) \leq 0.01$ and $\dtv(FP(\bb), G_\bb) \leq 0.01$.
\end{lemma}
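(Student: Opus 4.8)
The plan is to write each of $FP(\ba)$ and $FP(\bb)$ as a sum of independent per-block contributions and then invoke the CLT of \Cref{lem:clt}. First I would note that since the $n$ blocks in \Cref{def:structure_a_b} have pairwise disjoint supports, every element of $[k]$ lies in exactly one block, so the count of elements appearing exactly $2$, $3$, or $4$ times decomposes as $FP(\ba)=\sum_{i=1}^n FP(\ba)_i$, where $FP(\ba)_i$ is the (restricted) fingerprint of the $4$ samples drawn in block $i$ (two from $\ba_{2i-1}$, two from $\ba_{2i}$); the summands are mutually independent because samples in distinct blocks are independent. By the construction, the first $\alpha n$ summands are i.i.d.\ copies of $\fp$ and the last $(1-\alpha)n$ are i.i.d.\ copies of $\fr$; analogously $FP(\bb)=\sum_{i=1}^n FP(\bb)_i$ is $\alpha n$ i.i.d.\ copies of $\fq$ followed by $(1-\alpha)n$ i.i.d.\ copies of $\fs$, with $\alpha=\tfrac34$.

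Next I would check that $\fp,\fr$ (resp.\ $\fq,\fs$) meet the hypotheses of \Cref{lem:clt}. Since each block contributes only $4$ samples, no element can appear more than $4$ times, and the only attainable fingerprint vectors are $[0,0,0]$, $[1,0,0]$, $[0,1,0]$, $[0,0,1]$, $[2,0,0]$ (anything like $[1,1,0]$ would require at least $5$ samples) — exactly the origin, the basis vectors, and $[2,0,0]$ assumed in \Cref{lem:clt}. From the explicit distributions in \Cref{lem:lb_builder}: when $\bp_1=\bp_2=\mathrm{Unif}(m)$ (and likewise for $\bq_1,\bq_2$) each of the five vectors has strictly positive probability, so $\fp$ and $\fq$ serve as the ``first distribution'' (with $c=3$); for $\br_1,\br_2$, whose supports are disjoint, no $3$- or $4$-way collision is possible, so $\fr$ is supported on $[0,0,0],[1,0,0],[2,0,0]$ and serves as the ``second distribution'', while $\fs=\fq$ in distribution and hence also satisfies the (weaker) second-distribution requirement. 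Finally, the mass at the origin is the probability that the $4$ samples contain no repeat; by a union bound over the $\binom{4}{2}$ pairs, using that every element mass in any of these distributions is at most $2/m$, this collision probability is $O(1/m)$, so fixing the constant $m$ large enough (e.g.\ $m\ge 10^{17}$) forces the origin mass to be at least $1-10^{-16}$, as required.

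With the hypotheses in place I would apply \Cref{lem:clt} with $\alpha=\tfrac34$ separately to $FP(\ba)$ and $FP(\bb)$: since $m$ is a fixed constant and $k$ is assumed large, $n=k/m$ is large enough, and the lemma gives that $FP(\ba)$ (resp.\ $FP(\bb)$) is within total variation distance $0.01$ of the Gaussian with matching mean and covariance discretized to the integer lattice, which is exactly $G_\ba$ (resp.\ $G_\bb$). This yields both bounds.

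I do not expect a serious obstacle: essentially all the analytic difficulty has been absorbed into \Cref{lem:clt}. The step needing the most care is the bookkeeping in the first two paragraphs — justifying that the fingerprint really is an independent sum over blocks, matching which block type supplies $\fp/\fq$ versus $\fr/\fs$ with the $\alpha n$ / $(1-\alpha)n$ split required by \Cref{lem:clt}, and pinning down a numerical value of the constant $m$ that makes the origin-mass hypothesis hold for all four building-block distributions simultaneously.
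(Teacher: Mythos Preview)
Your proposal is correct and follows essentially the same approach as the paper: decompose $FP(\ba)$ and $FP(\bb)$ as independent sums of per-block fingerprints (with the first $\alpha n$ blocks contributing copies of $\fp$/$\fq$ and the last $(1-\alpha)n$ contributing copies of $\fr$/$\fs$), verify the support and origin-mass hypotheses of \Cref{lem:clt}, and invoke that lemma. The paper computes the origin mass for $\fp$ explicitly as $(1-1/m)(1-2/m)(1-3/m)$ rather than via your union bound, but this is a cosmetic difference.
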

\begin{proof}
	Recall that $FP(\ba)$ and $FP(\bb)$ are $3-$dimensional random variables containing number of elements appearing exactly $2$, $3$ and $4$ times  in samples drawn according to $\ba$ and $\bb$ respectively. Due to the structure of $\ba$ and $\bb$ (see Definition \ref{def:structure_a_b}), we can decompose $FP(\ba)$ and $FP(\bb)$ as
	\begin{align*}
		FP(\ba) = \sum_{i=1}^{\alpha n} FP(\bp^i) + \sum_{i=\alpha n + 1}^n FP(\br^i),\\
		FP(\bb) = \sum_{i=1}^{\alpha n} FP(\bq^i) + \sum_{i=\alpha n + 1}^n FP(\bs^i).
	\end{align*}
	Here, $FP(\bp^i)$ denotes the fingerprint vector when we obtain 2 samples from $\bp_1^i$ and 2 samples from $\bp_2^i$. Note that as we vary $i$, $FP(\bp^i)$ corresponds to independent samples from the same distribution.
	$FP(\bq^i)$, $FP(\br^i)$ and $FP(\bs^i)$ are defined analogously. Note that each of $FP(\bp^i)$, $FP(\bq^i)$, $FP(\br^i)$ and $FP(\bs^i)$ are supported on $[0, 0, 0]$, the basis vectors and $[2, 0, 0]$. Here, $FP(\bp^i)$, $FP(\bq^i)$, $FP(\bs^i)$ has non-zero mass on all of these vectors, and $FP(\br^i)$ has non-zero mass on $[0, 0, 0], [1, 0, 0], [2, 0, 0]$.

	Also, note that $FP(\bp^i) = [0, 0, 0]$ corresponds to the event that when we draw $2$ samples each from $p_1^i$ and $p_2^i$, the $4$ samples so obtained are distinct. The probability of this event is equal to $(1-1/m)(1-2/m)(1-3/m)$ which is at least $1 - 10^{-16}$ if we choose $m$ to be large enough. Similarly, one can verify that for $m$ large enough constant, for each of $FP(\bq^i)$, $FP(\br^i)$ and $FP(\bs^i)$, the probability of these random variables being $[0, 0, 0]$ is at least $1 - 10^{-16}$.
	
	By applying Lemma \ref{lem:clt}, we get $\dtv(FP(\ba), G_\ba) \leq 0.01$ and $\dtv(FP(\bb), G_\bb) \leq 0.01$, which completes the proof.
\end{proof}

Finally, we conclude with the calculations required in Step 4 of the outline.

\begin{lemma}[Discretized Gaussians are close]
	\label{lem:tv_g_g}
	$\dtv(G_\ba, G_\bb) \leq 0.02$.
\end{lemma}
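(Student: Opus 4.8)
The plan is to reduce the statement to a routine comparison of two well‑conditioned Gaussians with a common mean. First I would pass from discretized to continuous Gaussians: since $G_\ba$ and $G_\bb$ are obtained by applying the (deterministic) coordinate‑wise rounding map to the continuous Gaussians $\mathcal{N}(\mu_\ba,\Sigma[FP(\ba)])$ and $\mathcal{N}(\mu_\bb,\Sigma[FP(\bb)])$, and rounding cannot increase total variation distance, it suffices to bound $\dtv\big(\mathcal{N}(\mu_\ba,\Sigma[FP(\ba)]),\,\mathcal{N}(\mu_\bb,\Sigma[FP(\bb)])\big)$. Next, using the fixed invertible linear map between collision counts and fingerprints (\Cref{lem:collisions-to-fingerprint}) together with the invariance of the total variation distance between Gaussians under an invertible linear change of variables, this equals $\dtv\big(\mathcal{N}(\mu_C,\Sigma[C_\ba]),\,\mathcal{N}(\mu_C,\Sigma[C_\bb])\big)$, where by \Cref{claim:exp_match} the two collision‑count Gaussians share the common mean $\mu_C \defeq \E[C_\ba]=\E[C_\bb]$. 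Since the total variation distance between Gaussians with equal means does not depend on that mean, we may assume both are centered at the origin.

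The crux is bounding $\dtv\big(\mathcal{N}(0,\Sigma[C_\ba]),\,\mathcal{N}(0,\Sigma[C_\bb])\big)$, and the point is that one cannot invoke a naive ``covariances are close'' estimate directly: from \Cref{claim:cov_match} the discrepancy $\Sigma[C_\ba]-\Sigma[C_\bb]$ has operator norm $\Theta(n/m^2)$, whereas $\lambda_{\min}(\Sigma[C_\ba])=\Theta(n/m^4)$, so the perturbation is in fact much larger than the smallest eigenvalue. The remedy is an anisotropic rescaling. Conjugating both covariances by the diagonal matrix $A \defeq \tfrac{1}{\sqrt{n}}\,\diag(m^{1/2},\,m,\,m^{3/2})$ and reading off the explicit entries in \Cref{claim:cov_match}, one checks that $A\,\Sigma[C_\ba]\,A^\top$ and $A\,\Sigma[C_\bb]\,A^\top$ both converge to $\diag(11/2,\,3,\,3/4)$ as $m\to\infty$: the off‑diagonal entries of the common leading part are down by a factor $m^{-1/2}$, and every $O(1/m)$‑correction entry vanishes under the same rescaling. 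Hence for $m$ a large enough constant both matrices equal a fixed positive‑definite diagonal matrix up to an $O(1/\sqrt{m})$ error, and in particular both are $\succeq c\,I$ for an absolute constant $c>0$ (the positive‑definiteness of the limiting matrix, hence the uniform lower bound on the eigenvalues, is established in \Cref{lem:lower_bound}). Simultaneously, $A\,(\Sigma[C_\ba]-\Sigma[C_\bb])\,A^\top$ has operator norm $O(1/m)$.

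Writing $P\defeq A\,\Sigma[C_\ba]\,A^\top$ and $Q\defeq A\,\Sigma[C_\bb]\,A^\top$, the change of variables $x\mapsto Ax$ reduces the task to bounding $\dtv(\mathcal{N}(0,P),\mathcal{N}(0,Q))$ with $P,Q\succeq c\,I$ and $\|P-Q\|=O(1/m)$ --- precisely the regime of \Cref{lem:gaussians_tv_small}. Concretely, $Q^{-1}P = I + Q^{-1}(P-Q)$ with $\|Q^{-1}(P-Q)\|\le\|Q^{-1}\|\,\|P-Q\| = O(1/m)$, so every eigenvalue $\lambda_i$ of $Q^{-1}P$ satisfies $\lambda_i = 1+O(1/m)$; then $\KL\big(\mathcal{N}(0,P)\,\|\,\mathcal{N}(0,Q)\big) = \tfrac12\sum_i\big(\lambda_i-1-\ln\lambda_i\big) = O(1/m^2)$ using $x-1-\ln x = O((x-1)^2)$ near $1$, and Pinsker's inequality gives $\dtv(\mathcal{N}(0,P),\mathcal{N}(0,Q)) = O(1/m)$. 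Taking $m$ a large enough absolute constant makes this at most $0.02$; unwinding the two reductions yields $\dtv(G_\ba,G_\bb)\le 0.02$.

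The main obstacle is the covariance comparison in the second paragraph. One has to recognize that the two covariances are \emph{not} close in any scale‑free operator‑norm sense, pin down the exact anisotropic rescaling $\diag(m^{1/2},m,m^{3/2})$ under which the shared leading part becomes a fixed well‑conditioned matrix while the $\ba$‑versus‑$\bb$ discrepancy collapses to $O(1/m)$, and verify from the explicit entries of \Cref{claim:cov_match} that this indeed happens (this is where the precise constants in the building‑block distributions of \Cref{lem:lb_builder} matter). Once the problem is cast in the ``equal mean, uniformly well‑conditioned, $O(1/m)$‑close covariances'' form, the remaining $\KL$/Pinsker estimate is entirely routine.
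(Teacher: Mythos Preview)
Your proposal is correct and follows the same overall architecture as the paper's proof: pass from discretized to continuous Gaussians via data processing, apply the invertible fingerprint-to-collision map (\Cref{lem:collisions-to-fingerprint}) to preserve TV, use \Cref{claim:exp_match} to equalize the means, and then exploit the specific $m$-scaling structure of the covariances from \Cref{claim:cov_match} via an anisotropic rescaling.

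The endgame differs slightly. The paper introduces an intermediate Gaussian $G_1$ with the shared leading-order covariance $\Sigma_1$, invokes \Cref{lem:gaussians_tv_small} twice to get $\dtv(G_1,AG'_\ba)\le 0.01$ and $\dtv(G_1,AG'_\bb)\le 0.01$, and concludes by the triangle inequality; inside \Cref{lem:gaussians_tv_small} the anisotropic rescaling and the eigenvalue-ratio TV bound (Fact~\ref{fact:multivariate_gaussian_tv}) do the work. You instead make the rescaling $\tfrac{1}{\sqrt{n}}\diag(m^{1/2},m,m^{3/2})$ explicit up front, compare the two rescaled covariances directly, and finish with a KL/Pinsker estimate. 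Both routes yield an $O(1/m)$ bound; yours is a touch more direct (no intermediate Gaussian), the paper's packages the rescaling into a reusable lemma. One small misattribution: \Cref{lem:lower_bound} is about sandwiching the \emph{correction} matrix between $\pm\alpha D$, not about the positive-definiteness of the limiting diagonal $\diag(11/2,3,3/4)$, which is immediate; this does not affect the argument.
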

\begin{proof}
	Let $G'_\ba$ be a random variable corresponding to picking a sample from gaussian with same mean and covariance as  $FP(\ba)$ (without any rounding). Similarly, $G'_\bb$ corresponds to $FP(\bb)$. By data processing inequality, rounding each coordinate can only decrease the total variation distance, $\dtv(G_\ba, G_\bb) \leq \dtv(G'_\ba, G'_\bb)$.
	
	By  Lemma \ref{lem:collisions-to-fingerprint}, given 4 samples from a finite support, there exists an invertible linear map, say $A$, such that we can go from a vector of fingerprints to vector of collisions by applying $A$ to the vector of fingerprints. Since we can decompose $\ba$ and $\bb$ into disjoint supports from each of which we are getting 4 samples, applying $A$ also lets us go from vector of fingerprints to vector of collisions for samples obtained from $\ba$ and $\bb$. 
	
	We apply $A$ to random variables $G'_\ba$ and $G'_\bb$.
	Since $A$ is invertible, $\dtv(AG'_\ba, AG'_\bb) = \dtv(G'_\ba, G'_\bb)$. Here $AG'_\ba$ is a 3 dimensional Gaussian random variable with mean and covariance same as the mean and covariance of the vector containing number of 2-way, 3-way and 4-way collisions in the samples produced by $\ba$. $AG'_\bb$ is related to $\bb$ in the same fashion. 
	
	From Claim \ref{claim:exp_match}, we know that expectations of $AG'_\ba$ and $AG'_\bb$ match. Let us denote the expectation by $\mu$. From Claim \ref{claim:cov_match}, we know that covariances of $AG'_\ba$ and $AG'_\bb$ are as follows:
	$$
	\Sigma(AG'_\ba) = \Sigma_1 + n
	\begin{bmatrix}
		\alpha_{11}/m^2 & \alpha_{12}/m^3 & \alpha_{13}/m^4\\
		\alpha_{12}/m^3 & \alpha_{22}/m^3 & \alpha_{23}/m^4 \\
		\alpha_{13}/m^4 & \alpha_{23}/m^4 & \alpha_{33}/m^4
	\end{bmatrix}, ~~~~
	\Sigma(AG'_\bb) = \Sigma_1 + n
	\begin{bmatrix}
		\alpha'_{11}/m^2 & \alpha'_{12}/m^3 & \alpha'_{13}/m^4\\
		\alpha'_{12}/m^3 & \alpha'_{22}/m^3 & \alpha'_{23}/m^4 \\
		\alpha'_{13}/m^4 & \alpha'_{23}/m^4 & \alpha'_{33}/m^4
	\end{bmatrix},
	$$
	where
	$$
	\Sigma_1 = n
	\begin{bmatrix}
		\gamma_{11}/m & \gamma_{12}/m^2 & \gamma_{13}/m^3\\
		\gamma_{12}/m^2 & \gamma_{22}/m^2 & \gamma_{23}/m^3 \\
		\gamma_{13}/m^3 & \gamma_{23}/m^3 & \gamma_{33}/m^3
	\end{bmatrix}.
	$$
	Let $G_1$ be a random variable distributed as a Gaussian with mean $\mu$ and covariance $\Sigma_1$.
	From Lemma \ref{lem:gaussians_tv_small}, which is a technical lemma that bounds the TV distance between Gaussians having the same mean and similar covariances, and whose proof is given in \Cref{sec:helper_lemmas_clt}, we have that $\dtv(G_1, AG'_\ba) \leq 0.01$ and $\dtv(G_1, AG'_\bb) \leq 0.01$ (compared to the lemma statement there, covariances here have an additional factor of $n$ but that does not change the TV distance). Using triangle inequality, we get $\dtv(AG'_\ba, AG'_\bb) \leq 0.02$. Since $\dtv(G_\ba, G_\bb) \leq \dtv(G'_\ba, G'_\bb) = \dtv(AG'_\ba, AG'_\bb)$, this completes the proof.
\end{proof}

\begin{lemma}[Fingerprints are close]
	\label{lem:fp_indisting}
	$\dtv(FP(\ba), FP(\bb)) \leq 0.04$.
\end{lemma}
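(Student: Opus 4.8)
The plan is to obtain this bound purely by assembling the three ingredients already established: Lemma~\ref{lem:tv_fp_g} tells us that $FP(\ba)$ and $FP(\bb)$ are each within $0.01$ in total variation distance of the corresponding discretized Gaussians $G_\ba$ and $G_\bb$, and Lemma~\ref{lem:tv_g_g} tells us that $\dtv(G_\ba, G_\bb) \le 0.02$. First I would invoke the triangle inequality for total variation distance, writing
\begin{align*}
	\dtv(FP(\ba), FP(\bb)) &\le \dtv(FP(\ba), G_\ba) + \dtv(G_\ba, G_\bb) + \dtv(G_\bb, FP(\bb)).
\end{align*}
Then I would substitute the three bounds $0.01$, $0.02$, $0.01$ from the cited lemmas to conclude $\dtv(FP(\ba), FP(\bb)) \le 0.04$.

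Since the heavy lifting—the CLT argument splitting the fingerprint distribution into a generalized-multinomial piece and a scaled-binomial piece (Lemmas~\ref{lem:clt_new} and~\ref{lem:clt}), and the Gaussian comparison relying on the matching means (Claim~\ref{claim:exp_match}) and approximately matching covariances (Claim~\ref{claim:cov_match})—has already been done, there is essentially no remaining obstacle here: the lemma is just the bookkeeping step that chains the two halves together. The only thing to be careful about is that the constants line up, i.e. that $2\cdot 0.01 + 0.02 = 0.04$ exactly, so the stated bound is tight given the earlier estimates and no slack is lost.

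One could optionally remark that this $0.04$ bound is what feeds into the final indistinguishability conclusion: since $\dtv(FP(\ba), FP(\bb)) \le 0.04 < 1/3$, no tester operating on the pooled fingerprint can distinguish the sequence $\ba$ (whose average is $\unik$) from $\bb$ (whose average is at $\ell_1$ distance $\tfrac{1}{8\sqrt 2} > 0.01$ from $\unik$, by Claim~\ref{claim:tv_bound_avg}) with success probability $2/3$, which combined with the fingerprint-sufficiency reduction discussed at the start of the section yields Theorem~\ref{thm:poollb}.
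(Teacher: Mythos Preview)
Your proposal is correct and matches the paper's own proof essentially verbatim: the paper also applies the triangle inequality $\dtv(FP(\ba), FP(\bb)) \le \dtv(FP(\ba), G_\ba) + \dtv(G_\ba, G_\bb) + \dtv(FP(\bb), G_\bb)$ and plugs in the bounds from Lemmas~\ref{lem:tv_fp_g} and~\ref{lem:tv_g_g} to get $0.01+0.02+0.01=0.04$. Your additional contextual remarks about how this feeds into Theorem~\ref{thm:poollb} are accurate but go beyond what the paper includes in this particular proof.
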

\begin{proof}
	$\dtv(FP(\ba), FP(\bb)) \leq \dtv(FP(\ba), G_\ba) + \dtv(FP(\bb), G_\bb) + \dtv(G_\ba, G_\bb)$ by triangle inequality which is at most $0.04$ due to Lemma \ref{lem:tv_fp_g} and Lemma \ref{lem:tv_g_g}.
\end{proof}

With this, we have shown that the TV distance between the distribution of fingerprints of the two sequences is small, even when the total variation distance between the corresponding average distributions is at least some constant. We can then invoke the Neyman-Pearson lemma \citep{neyman1933ix}, \citep[Lemma 1.4]{clement-tb} to say that no testing algorithm can distinguish between these two cases  with high probability. This concludes the proof of Theorem \ref{thm:poollb}.

\section{Conclusion}
\label{sec:conclusion}

We show that sublinear sample complexity property testing extends to the non-i.i.d.\ setting where the samples are drawn from different distributions, and we are interested about the average distribution. In particular, natural collision-based testers with just a constant number of samples from each distribution suffice to solve the identity and closeness testing problems. While our analysis is optimal with respect to the support size, it is still sub-optimal with respect to the distance parameter $\eps$---resolving this is an interesting future direction. Another direction could be to chart the landscape of the problem when $T$ (number of distributions) is fixed, and $c$ (number of samples from each distribution) varies. %

\section*{Acknowledgments}
S.G. did part of this work at Stanford where he was supported by a Stanford
Interdisciplinary Graduate Fellowship, and part of it at Harvard, where he was supported by a postdoctoral fellowship from Harvard's Digital Data Design Institute. C.P. was supported by Moses Charikar's and Gregory Valiant's Simons Investigator Awards. K.S. was supported, in part, by funding from the Eric and Wendy Schmidt Center at the Broad Institute of MIT and Harvard. G.V. was supported, in part, by a Simons Investigator Award.

\bibliography{main}
\bibliographystyle{tmlr}

\appendix
\section{Learning the average distribution}
\label{sec:learning}

In this section, we show that the sample complexity of \textit{learning} the average distribution in the setting of non-identical samples is asymptotically equal to that of the i.i.d.\ setting.
\claimlearning*
\begin{proof}
	The lower bound for the setting of non-identical samples follows directly from the lower bound of the standard i.i.d setting, that is if we let $\bp_t=\bp, \forall t \in [T]$. For the upper bound, we follow the proof in \citep[Theorem 1]{canonne2020short}. Consider the empirical estimator given by,
	\begin{align*}
		\tilde{\bp}_{\avg}(i) = \frac{1}{T}\sum_{t \in [T]} 1[X_t = i].
	\end{align*}
	
	We have that
	\begin{align*}
		\E\left\|\tilde{\bp}_{\avg}-{\bp}_{\avg}\right\|^2_2 &= \frac{1}{T^2}\sum_{i}\sum_{t \in [T]} {\bp}_{t}(i)(1-{\bp}_{t}(i)) \\
		&\le \frac{1}{T^2}\sum_{i}\sum_{t \in [T]} {\bp}_{t}(i) \\
		&=\frac{1}{T}\sum_{i} {\bp}_{\avg}(i) = \frac{1}{T}.
	\end{align*}
	Thus, by Markov's inequality, the squared $\ell_2$ distance of this estimator from $\bp_{\avg}$ is smaller than $\frac{3}{T}$ with probability at least $\frac23$. By Cauchy-Schwartz, the $\ell_1$ distance of this estimator from $\bp_{\avg}$ is at most 
	$\sqrt{\frac{3k}{T}}$ with probability at least $\frac23$ and thus, this distance is smaller than $\eps$ if $T=O(k/\eps^2)$.
	
	\end{proof}
\section{Impossibility Result with \texorpdfstring{$c=1$}{c}}
\label{sec:impossibility_cone_proof}

\thmcone*

We prove the following equivalent claim.
\begin{claim}
	For all $c_1 > 0$, there is a $k \geq c_1$ such that for any $T \leq k/2$, given $c=1$ sample drawn from each of $\bp_1,\dots,\bp_T$, there exists no testing algorithm that succeeds with probability greater than $1/2$ and tests whether 
	$$\bp_{\avg} =\unik \quad \text{ versus } \quad \dtv(\bp_{\avg},\unik) \geq 1/4~.$$
\end{claim}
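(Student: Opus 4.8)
The plan is to use the two-point (Le Cam) method with a construction engineered so that the two hard instances induce \emph{identical} distributions on the tester's input, which makes the indistinguishability argument immediate. Since we only need the claim for \emph{some} $k \ge c_1$, I will take $k$ to be an arbitrarily large even integer and set $T = k/2$. Partition $[k]$ into the consecutive pairs $B_i = \{2i-1, 2i\}$, $i \in [T]$, and consider two scenarios. In the ``uniform'' scenario, let $\bp_i = \mathrm{Unif}(B_i)$ for every $i$; since the $B_i$ are disjoint and each element of $B_i$ receives mass $\frac1T\cdot\frac12 = \frac1k$ in the average, $\bp_{\avg} = \unik$, and with one sample per distribution the input $(X_1,\dots,X_T)$ has independent coordinates with $X_i \sim \mathrm{Unif}(B_i)$; call this law $\mathcal Q$. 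In the ``far'' scenario, draw $a_i \sim \mathrm{Unif}(B_i)$ independently and set $\bp_i = \delta_{a_i}$ (the point mass on $a_i$); for every realization, $\bp_{\avg} = \frac1T\sum_i \delta_{a_i}$ puts mass $\frac2k$ on the $T=k/2$ distinct points $a_1,\dots,a_T$ and $0$ elsewhere, so $\dtv(\bp_{\avg},\unik) = \frac12\big(\frac k2\cdot\frac1k + \frac k2\cdot\frac1k\big) = \frac12 \ge \frac14$.

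The key point I would verify is that these scenarios are indistinguishable from the samples. Conditioned on the far instance $(a_i)_i$, the tester sees the \emph{deterministic} labeled vector $(a_1,\dots,a_T)$, so after marginalizing over the (uniform, independent) choice of the $a_i$ the tester's input is again distributed exactly as $\mathcal Q$. Hence for any (possibly randomized) tester $\mathcal A$, writing $p \defeq \Pr_{x\sim\mathcal Q}[\mathcal A(x) = \text{``uniform''}]$, its success probability on the uniform instance is exactly $p$, while its success probability averaged over the far family is $1-p$; if $\mathcal A$ succeeded with probability $>\frac12$ on \emph{every} instance we would get both $p > \frac12$ and $1-p > \frac12$, a contradiction. (Equivalently, the TV distance between the tester's input law in the two scenarios is $0$, so by the Neyman--Pearson lemma no test separates them.) As $k \ge c_1$ was an arbitrary even integer, this proves the claim.

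Finally, I would note that the same construction goes through for any $T \le k/2$ with $T \mid k$ using blocks of size $k/T$ (then $\dtv(\bp_{\avg},\unik) = 1 - T/k \ge \frac12$), and for the remaining values of $T$ by an analogous randomized variant; these adjustments only affect lower-order bookkeeping. There is essentially no analytic difficulty here — the construction is set up so the two input laws coincide exactly, which is far stronger than the ``small TV distance'' the two-point method requires. The one thing that genuinely needs care is the modeling: the tester is handed the \emph{labeled} vector $(X_1,\dots,X_T)$ (it knows which sample came from which $\bp_i$), and one must check that the two laws still agree with this extra information — which the per-block pairing makes transparent, and which is precisely the feature that breaks down once $c \ge 2$, since two samples per distribution expose the within-block collisions.
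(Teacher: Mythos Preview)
Your construction is essentially the same as the paper's: uniform-over-block versus random point-mass-in-block, with the key observation that both scenarios induce identical sample distributions. Your argument for $T=k/2$ is complete and correct.

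The one place you are vaguer than the paper is the extension to \emph{all} $T\le k/2$ for a \emph{single fixed} $k$ (which the claim requires). You handle $T\mid k$ cleanly but dismiss $T\nmid k$ as ``an analogous randomized variant'' and ``lower-order bookkeeping.'' The paper makes this concrete: it takes $k$ to be a power of $2$, sets $r'=2^{\lceil\log(k/T)\rceil}$ and $T'=k/r'\le T$, builds the block construction on the first $T'$ distributions, and \emph{pads} the remaining $T-T'$ distributions with $\unik$ in both scenarios. This padding leaves the sample distributions identical (the last $T-T'$ coordinates are i.i.d.\ $\unik$ either way) and, because $T'/T\ge 1/2$ and $T'\le k/2$, still yields $\dtv(\bp_{\avg},\unik)=(k-T')T'/(Tk)\ge 1/4$ in the far scenario. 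This is the missing bookkeeping you allude to, and it goes through without difficulty.
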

\begin{proof}
	All the logarithms in the proof below are base 2.
	Let $k$ be the smallest power of $2$ greater than or equal to $c_1$,
	\[
	k = 2^{\lceil \log c_1 \rceil}.
	\]
	For $T \le k/2$, let $r = k/T$, so that $r \geq 2$. We set $r'$ to be the smallest power of $2$ greater than or equal to $r$,
	\[
	r' = 2^{\lceil \log r \rceil}.
	\]
	Let $T' = k/r'$. By construction $T' \leq T \leq k/2$, $\ T'$ is integral, and $T'/T \geq 1/2$.
	
	We now consider two distributions $D_1$ and $D_2$ over sequences of distributions $(\bp_1\dots \bp_T)$ such that for all sequences $(\bp_1\dots \bp_T)$ in the support of $D_1$, $\avg(\bp_1\dots \bp_T) = \unik$ and for all sequences in the support of $D_2$, $\dtv(\avg(\bp_1\dots \bp_T),\unik) \geq 1/4$. Further, we will show that the samples drawn from distribution sequences drawn from $D_1$ are indistinguishable from samples drawn from distribution sequences drawn from $D_2$ (that is, corresponding distributions have total variation distance $0$). This implies that there exists no testing algorithm that succeeds with probability greater than $1/2$ and tests whether  average distribution is the uniform distribution versus average distribution has total variation distance at least $1/4$ from the uniform distribution.
	
	Here is the procedure to draw a sequence of distribution $(\bp_1\dots \bp_T)$ from $D_1$:
	\begin{enumerate}
		\item Draw a permutation $\pi:[k] \rightarrow [k]$ uniformly at random.
		\item Use the permutation $\pi$ to partition the domain into $T'$ sets of size $r'$ each where  for $i \leq T'$, each $\bp_i$ corresponds to a uniform distribution over a distinct set. That is, for $1 \leq i \leq T'$, $\bp_i$ corresponds to a uniform distribution over $\{\pi\left((i-1)*r'+1\right), \pi\left((i-1)*r'+2\right), \cdots, \pi(i*r') \}$.
		\item For $T'+1 \leq i \leq T$, $\bp_i$ corresponds to a uniform distribution over the whole support $[k]$, that is $\bp_i = \unik$.
	\end{enumerate}
	
	Here is the procedure to draw a sequence of distributions $(\bp_1\dots \bp_T)$ from $D_2$:
	
	\begin{enumerate}
		\item Draw a permutation $\pi:[k] \rightarrow [k]$ uniformly at random.
		\item Use the permutation $\pi$ to partition the domain into $T'$ sets of size $r'$ each where  for $i \leq T'$, each $\bp_i$ corresponds to a point-mass distribution supported over an element of a distinct set: For $1 \leq i \leq T'$, $\bp_i$ corresponds to a point-mass distribution with probability mass $1$ on element $\pi\left((i-1)*r'+1\right)$.
		\item For $T'+1 \leq i \leq T$, $\bp_i$ corresponds to a uniform distribution over the whole support $[k]$, that is $\bp_i = \unik$.
	\end{enumerate}
	
	Note that all the distributions sequences in the support of $D_1$ satisfy $\avg(\bp_1\dots \bp_T) = u_k$.  Next, we show that all the distribution sequences in the support of $D_2$ satisfy $\dtv(\avg(\bp_1\dots \bp_T),\unik) \geq 1/4$. For such distribution sequences, $\avg(\bp_1\dots \bp_T')$ has mass  $\frac{1}{k} - \frac{T'}{Tk} + \frac{1}{T}$ on $T'$ domain elements and mass $\frac{1}{k} - \frac{T'}{Tk}$ on $k - T'$ domain elements. From this, we get 
	\begin{align*}
		\dtv(\avg(\bp_1\dots \bp_T),\unik) &= \frac{(k - T') T'}{Tk}\\
		&\geq \frac{1}{4}
	\end{align*}
	where for the last inequality, we used $T' \leq T \leq k/2$ and $T'/T \geq 1/2$.
	
	Now, we will show that the samples drawn from distribution sequences drawn from $D_1$ are indistinguishable from samples drawn from distribution sequences drawn from $D_2$ which will complete the proof. Let $D_1^X$ be the distribution corresponding to $c=1$ sample drawn from each of $\bp_1,\dots,\bp_T$, where $(\bp_1\dots \bp_T)$ is drawn from $D_1$ (with $c = 1$). We define $D_2^X$ analogously. We want to show that $\dtv(D_1^X, D_2^X) = 0$. Note that in both $D_1^X$ and $D_2^X$, each of last $(T-T')$ samples $\{X_t\}_{t \in [T - T']}$ are drawn from $\unik$ independent of the first $T'$ samples $\{X_t\}_{t \in [T']}$. And the first $T'$ samples $\{X_t\}_{t \in [T']}$ correspond to choosing $T'$ distinct domain elements uniformly at random. Thus $D_1^X$ and $D_2^X$ are the same distributions.

\end{proof}
\section{Poissonized setting}
\label{app:poisson_identity}

In this section, we state and prove some results for the non-i.i.d.\ setting, where we are also allowed to use Poissonization. We will use the following standard facts about Poisson distributions. 

Let $\poi(\lambda)$ denote a Poisson random variable with mean parameter $\lambda \geq 0$. 
\begin{fact}\label{lem:poi1}
	For any $\lambda_1,\lambda_2 \geq 0$, if $X_1$ and $X_2$ are independent random variables distributed as 
	$X_1 \sim \poi(\lambda_1)$ and $X_2 \sim \poi(\lambda_2)$, then $X_1+X_2 \sim \poi(\lambda_1+\lambda_2)~.$
\end{fact}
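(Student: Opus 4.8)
The plan is to prove this standard fact by a direct convolution computation, which keeps the argument fully self-contained. First I would dispose of the degenerate cases: if $\lambda_1 = 0$ then $X_1 = 0$ almost surely (since $\poi(0)$ is the point mass at $0$), so $X_1 + X_2 = X_2 \sim \poi(\lambda_2) = \poi(\lambda_1 + \lambda_2)$, and symmetrically if $\lambda_2 = 0$; so assume $\lambda_1, \lambda_2 > 0$. Fixing a nonnegative integer $n$ and using independence together with the law of total probability over the value of $X_1$, I would write
\begin{align*}
	\Pr[X_1 + X_2 = n] = \sum_{j=0}^{n} \Pr[X_1 = j]\,\Pr[X_2 = n - j] = \sum_{j=0}^{n} e^{-\lambda_1}\frac{\lambda_1^{j}}{j!} \cdot e^{-\lambda_2}\frac{\lambda_2^{n-j}}{(n-j)!}.
\end{align*}

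The next step is purely algebraic: pull the common factor $e^{-(\lambda_1+\lambda_2)}/n!$ out of the sum, multiply and divide each term so as to form binomial coefficients, and recognize the binomial theorem:
\begin{align*}
	\Pr[X_1 + X_2 = n] = \frac{e^{-(\lambda_1+\lambda_2)}}{n!} \sum_{j=0}^{n} \binom{n}{j} \lambda_1^{j} \lambda_2^{n-j} = e^{-(\lambda_1+\lambda_2)}\,\frac{(\lambda_1+\lambda_2)^{n}}{n!},
\end{align*}
which is exactly the probability mass function of $\poi(\lambda_1+\lambda_2)$ evaluated at $n$. Since this identity holds for every $n \ge 0$ and both sides are supported on the nonnegative integers, the distributions agree, which proves the claim.

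As an equally short alternative I would mention the generating-function route: the probability generating function of $\poi(\lambda)$ is $s \mapsto e^{\lambda(s-1)}$, so by independence the generating function of $X_1 + X_2$ is $e^{\lambda_1(s-1)} e^{\lambda_2(s-1)} = e^{(\lambda_1+\lambda_2)(s-1)}$, and probability generating functions uniquely determine distributions on the nonnegative integers. There is essentially no obstacle in either route; the only point that deserves a sentence of care is the boundary case $\lambda_i = 0$, which I would handle up front as above. This additivity is exactly what is invoked in the Poissonization discussion around \Cref{thm:cone}, where $\poi(cT)$ draws from $\bp_{\avg}$ are decomposed into $\poi(c)$ draws from each $\bp_i$.
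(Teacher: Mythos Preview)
Your proof is correct and complete. The paper does not actually provide a proof of this statement---it is recorded as a standard \texttt{fact} without argument and then invoked in the proof of \Cref{thm:poiup}. Your convolution computation (and the alternative via probability generating functions) is the textbook justification, so there is nothing to compare against on the paper's side. One small contextual slip: the additivity is used in the proof of \Cref{thm:poiup} in \Cref{app:poisson_identity}, not in the discussion around \Cref{thm:cone}.
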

\begin{fact}\label{lem:poi2}
	Given $\poi(c)$ samples from a distribution $\bp \in \Delta_k$, the frequency of any element $j \in [k]$ follows the distribution $\poi(c \cdot \bp(j))$.
\end{fact}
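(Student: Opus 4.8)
The plan is to condition on the (random) sample size and then carry out a short direct computation with the resulting mixture of binomials. Let $N \sim \poi(c)$ denote the total number of samples drawn, and let $M_j$ be the frequency of element $j$ among those $N$ samples. Conditioned on $N = n$, the $n$ samples are i.i.d.\ draws from $\bp$, each landing on coordinate $j$ independently with probability $\bp(j)$, so $M_j \mid (N = n) \sim \mathrm{Binomial}(n, \bp(j))$.

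First I would apply the law of total probability over $n \ge m$ to write
\[
\Pr[M_j = m] = \sum_{n \ge m} \frac{e^{-c} c^{n}}{n!}\binom{n}{m}\bp(j)^{m}(1-\bp(j))^{n-m}.
\]
Next I would factor out everything independent of $n$ and reindex the sum by $\ell = n - m$, obtaining $\frac{e^{-c}(c\,\bp(j))^{m}}{m!}\sum_{\ell \ge 0}\frac{(c(1-\bp(j)))^{\ell}}{\ell!}$. The remaining series is the Taylor expansion of $e^{c(1-\bp(j))}$, and multiplying through collapses the expression to $\frac{e^{-c\,\bp(j)}(c\,\bp(j))^{m}}{m!}$, which is precisely the pmf of $\poi(c\,\bp(j))$. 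This establishes the claimed marginal.

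An equivalent route, which I would note in passing, is to appeal to Poisson thinning: color each of the $\poi(c)$ arrivals by the coordinate of $[k]$ it equals; the standard Poisson-splitting theorem then gives that the counts of the colors are independent Poisson random variables with rates $c\,\bp(1),\dots,c\,\bp(k)$, and the $j$-th of these is exactly the asserted distribution. There is no substantive obstacle in the argument --- the only routine points are that the rearrangement of the sum is valid by absolute (nonnegative-term) convergence, and that the very same computation, carried out jointly for $(M_1,\dots,M_k)$, factorizes over coordinates and hence also yields the (standard, stronger) statement that the frequencies are mutually independent; the version as worded requires only the single-coordinate marginal computed above.
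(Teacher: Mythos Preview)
Your argument is correct and is the standard derivation of Poisson thinning via the law of total probability; both the direct computation and the ``coloring'' route you mention are valid and well known. Note, however, that the paper does not give a proof of this statement at all: it is stated as a standard \emph{Fact} about Poisson distributions and simply invoked, so there is no paper proof to compare against beyond observing that your proposal supplies exactly the kind of textbook justification the authors are implicitly appealing to.
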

We are now ready to prove the following theorem.

\begin{restatable}[Poisson identity testing]{clm}{thmpoiup}
	\label{thm:poiup}
	For any $\eps>0$ and reference distribution $\bq \in \Delta_k$, with $\poi(c)$ samples from each of $\bp_1,\dots,\bp_T$ and $T=O(\sqrt{k}/\eps^2)$, there exists a testing algorithm that succeeds with probability at least $2/3$ and tests whether,
	$$\bp_{\avg} =\bq \quad \text{ versus } \quad \dtv(\bp_{\avg},\bq) \geq \eps~,$$
	for any $c \ge 1$.
\end{restatable}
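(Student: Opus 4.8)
The plan is to reduce directly to i.i.d.\ identity testing by exploiting Poissonization. The one observation driving the argument is that, under $\poi(c)$-Poissonized sampling from each $\bp_t$, the \emph{pooled} sample (with source labels discarded) is distributed exactly as $\poi(cT)$ i.i.d.\ samples from $\bp_{\avg}$. Granting this, the algorithm is immediate: pool all samples into one multiset and feed it to any optimal i.i.d.\ identity tester for $\bq$.

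To establish this distributional identity I would work with the vector of element counts. Fix $t$; drawing $\poi(c)$ samples from $\bp_t$ and recording the frequency $N_{t,j}$ of each $j\in[k]$, \Cref{lem:poi2} gives $N_{t,j}\sim\poi(c\,\bp_t(j))$, and the standard Poissonization fact makes the $k$ frequencies $(N_{t,j})_j$ mutually independent. Summing coordinate-wise over $t$ and applying \Cref{lem:poi1}, the pooled count $N_j\defeq\sum_{t} N_{t,j}$ satisfies $N_j\sim\poi\big(c\sum_{t}\bp_t(j)\big)=\poi\big(cT\,\bp_{\avg}(j)\big)$, and the $(N_j)_{j\in[k]}$ remain mutually independent, being sums over disjoint families of independent Poissons. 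A vector of independent counts with $N_j\sim\poi(cT\,\bp_{\avg}(j))$ is precisely the histogram produced by $\poi(cT)$ i.i.d.\ draws from $\bp_{\avg}$; and since identity testing is invariant to the ordering of the samples (it is a function of this count vector alone), the two processes are information-theoretically equivalent for the task at hand.

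It then remains to invoke the i.i.d.\ result: there is an absolute constant $C$ such that $\poi(C\sqrt{k}/\eps^2)$ i.i.d.\ samples from an unknown distribution suffice to distinguish it from versus $\eps$-far from any reference $\bq$ on $\le k$ elements with probability $2/3$ --- e.g.\ via the identity-to-uniformity reduction of \citep{goldreich2016uniform} composed with an optimal uniformity tester, or directly \citep{valiant2017automatic} --- and, by independent thinning, the same holds for $\poi(m)$ with any $m\ge C\sqrt{k}/\eps^2$. Taking $T=C\sqrt{k}/\eps^2$ (independent of $c$) and using $c\ge 1$ gives $cT\ge C\sqrt{k}/\eps^2$, so the pooled data meets the threshold; since $\bp_{\avg}=\bq$ exactly when the distribution underlying the pooled samples equals $\bq$, and $\dtv(\bp_{\avg},\bq)\ge\eps$ exactly when it is $\eps$-far, the tester's guarantee transfers verbatim, yielding $T=O(\sqrt{k}/\eps^2)$.

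I do not expect a genuine obstacle; the only point requiring care is the \emph{joint} (not merely marginal) distributional identity above --- checking independence of the pooled per-element counts across the domain --- which is exactly what Poissonization provides and which fails for deterministic sample sizes. This is also why the $c=1$ impossibility of \Cref{thm:cone} is not contradicted: with one deterministic sample per distribution the point-mass construction never exhibits within-source repetitions, whereas $\poi(1)$ samples per distribution do, restoring the full i.i.d.\ histogram.
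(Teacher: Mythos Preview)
Your proposal is correct and follows essentially the same approach as the paper: both arguments use \Cref{lem:poi2} and \Cref{lem:poi1} to show that the pooled per-element counts are distributed as $\poi(cT\,\bp_{\avg}(j))$, then invoke an off-the-shelf Poissonized i.i.d.\ identity tester. If anything, you are slightly more careful than the paper in explicitly arguing the \emph{joint} independence of the $(N_j)_{j\in[k]}$ rather than only the marginals.
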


\begin{proof}
	Note that we are given $\poi(c)$ samples from each distribution $\bp_t$ for all $t \in [T]$. For each $t \in [T]$, let $N_t$ denote the number of samples drawn from $\bp_t$. Note that $N_t \sim \poi(c)$. For $i \in [k]$, let $N_t(i)$ be the random variable that denotes the frequency of element $i$ in samples $X_t$, that is, $N_t(i)=|\{j \in [N_t]~:~X_t(j)=i \}|$. By \Cref{lem:poi2}, we know that
	$$N_t(i)\sim \poi(c\cdot \bp_t(i))~.$$
	Furthermore, if we let $N(i)=\sum_{t \in [T]}N_t(i)$, then by \Cref{lem:poi1}, we get that,
	$$N(i)\sim \poi\Big(c\cdot \sum_{t\in [T]}\bp_t(i)\Big)= \poi(cT\cdot \bp_{\avg}(i))~.$$
	Therefore, $N(i)$ has the same distribution as the frequency of element $i$ when we are given $\poi(cT)$ i.i.d samples from distribution $\bp_{\avg}$.  As $cT \in O(\sqrt{k}/\eps^2)$, and since we just argued that we are under the setting of having drawn $\poi(cT)$ samples from $\bp_{\avg}$, we can invoke existing identity testing algorithms that use Poissonization for the i.i.d setting 
	to solve the problem with success probability $2/3$, even with $c=1$.
\end{proof}
\section{Collisions and Fingerprints}
\label{app:collisions-fp}

We relate the collision statistics and the fingerprint of a sample by an invertible linear map. To recall, the number of $m$-way collisions in a sample $S=\{X_1,\dots,X_n\}$ is equal to $\sum_{i_1<\dots<i_m}^n1[X_{i_1}=\dots=X_{i_m]}]$. The fingerprint of the sample is a count array, where for each $i$, we list the count of the number of support elements that appeared exactly $i$ times in $S$.
\begin{lemma}[Collisions to fingerprint] 
	\label{lem:collisions-to-fingerprint}
	Let $x_1, x_2, x_3 , x_4  \in [m]^4$ for $m > 4$. Let $c_2, c_3, c_4$ denote the number of $2$-way, 3-way and $4$-way collisions amongst $x_1, x_2, x_3 , x_4$. Further, let $n_2, n_3, n_4$ denote the number of elements in $[m]$ occurring \textit{exactly} 2 times, 3 times and 4 times respectively in $x_1, x_2, x_3 , x_4$. Then, the vectors $\begin{bmatrix}c_2 \\ c_3 \\ c_4\end{bmatrix}$ and $\begin{bmatrix}n_2 \\ n_3 \\ n_4\end{bmatrix}$ are related by an invertible linear map. In particular,
	\begin{align*}
		\begin{bmatrix}n_2 \\ n_3 \\ n_4\end{bmatrix} = \begin{bmatrix} 
			1 & -3 & 6 \\
			0 & 1 & -4 \\
			0 & 0  & 1 
		\end{bmatrix}
		\begin{bmatrix}c_2 \\ c_3 \\ c_4\end{bmatrix}.
	\end{align*}
	\begin{proof}
		Observe that
		\begin{align*}
			&n_4 = c_4, ~~ n_3 = c_3 - \binom{4}{3}n_4 = c_3 - 4c_4\\
			&n_2 = c_2 - \binom{3}{2}n_3 - \binom{4}{2}n_4 = c_2 - 3(c_3 - 4c_4) - 6c_4 = c_2 - 3c_3 + 6c_4.
		\end{align*}
		The matrix associated with the linear map has determinant $1$, therefore the map is invertible.
	\end{proof}
\end{lemma}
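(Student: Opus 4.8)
The plan is to prove the identity by a direct counting (inclusion--exclusion) argument and then invert a small triangular matrix. First I would fix the four samples $x_1,\dots,x_4$ and note that, since $m>4$, no domain element can appear more than four times, so the counts $n_0,n_1,n_2,n_3,n_4$ partition $[m]$, and only $n_2,n_3,n_4$ affect the collision statistics. The key observation is that a single domain element appearing exactly $j$ times contributes precisely $\binom{j}{\ell}$ to the number of $\ell$-way collisions, since an $\ell$-way collision supported on that element is exactly a choice of $\ell$ of its $j$ occurrences. Summing over all domain elements and using $\binom{j}{\ell}=0$ for $j<\ell$ gives
\begin{align*}
 c_2 &= \binom{2}{2}n_2 + \binom{3}{2}n_3 + \binom{4}{2}n_4 = n_2 + 3n_3 + 6n_4,\\
 c_3 &= \binom{3}{3}n_3 + \binom{4}{3}n_4 = n_3 + 4n_4,\\
 c_4 &= \binom{4}{4}n_4 = n_4.
\end{align*}

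This is a linear system whose coefficient matrix is upper triangular with all diagonal entries equal to $1$; hence its determinant is $1$ and the map is invertible. Finally I would back-substitute to recover $(n_2,n_3,n_4)$ in terms of $(c_2,c_3,c_4)$: from $n_4 = c_4$, then $n_3 = c_3 - 4c_4$, then $n_2 = c_2 - 3n_3 - 6n_4 = c_2 - 3c_3 + 6c_4$, which is exactly the stated matrix $\begin{bmatrix} 1 & -3 & 6 \\ 0 & 1 & -4 \\ 0 & 0 & 1\end{bmatrix}$.

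There is no real obstacle here: the statement is a finite linear-algebra identity. The only step that merits a sentence of care is the bucketing argument, namely verifying that each $\ell$-way collision is counted exactly once in the sum over domain elements; this follows because every $\ell$-tuple of equal sample values picks out a unique domain element (their common value), whose occurrence count is necessarily at least $\ell$. Everything else is routine back-substitution, and the constraint $m>4$ is used only to guarantee that $n_j=0$ for $j>4$ so that the system closes at size $3\times 3$.
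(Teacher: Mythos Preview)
Your proposal is correct and follows essentially the same approach as the paper: both arguments rest on the identity $c_\ell = \sum_{j\ge \ell}\binom{j}{\ell} n_j$ and then back-substitute through the triangular system. The only (harmless) slip is your remark about $m>4$: with only four samples, $n_j=0$ for $j>4$ automatically, independent of $m$.
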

\section{Sufficiency of fingerprint for pooling-based testing algorithms}
\label{sec:fingerprint_sufficiency}

We first formally state the definition of a pooling-based testing algorithm.
\begin{restatable}[Pooling-based testing algorithm]{defn}{poolingdefn}
	\label{def:poolingdefn}
	Fix $c$ and let $\pi \sim \textrm{Perm}(cT)$ denote drawing a uniformly random permutation $\pi$ over $cT$ elements. A pooling-based testing algorithm $\mathcal{A}$ for property $\prop \subseteq \dom{k}$ with sample complexity $cT$ is a (possibly randomized) algorithm which, when given a randomly shuffled pooled multiset $S$ of $c$ samples drawn independently from each of $\bp_1,\dots,\bp_T$, i.e., $S = \bigcup_{t=1}^T\{X^{(t)}_1,\dots, X^{(t)}_c \sim \bp_t\}$, $\pi \sim \textrm{Perm}(cT)$, $S \gets \pi(S)$, produces output $\mathcal{A}(S) \in \{0,1\}$ satisfying the following
	\begin{itemize}
		\item If $\bp_{\avg} \in \prop$, then $\Pr_{S, \pi, \mathcal{A}}[\mathcal{A}[S]=1] \ge 2/3$.
		\item If $\dtv(\bp_{\avg}, \prop) \ge \eps$, then $\Pr_{S, \pi, \mathcal{A}}[\mathcal{A}[S]=0] \ge 2/3$.
	\end{itemize}
\end{restatable}

The following lemma states that the fingerprint of the combined sample captures all the information required for a pooling-based testing algorithm to test a symmetric property of the average distribution.

\begin{lemma}
	\label{lem:fingerprint-sufficient}
	Let $\mathcal{A}$ be a pooling-based testing algorithm having sample complexity $cT$ that tests a symmetric property $\prop \subseteq \dom{k}$ of the average distribution $\bp_{\avg} = \avg(\bp_1, \dots, \bp_T)$. Then, there exists a pooling-based algorithm $\mathcal{A}'$ for the same task having the same sample complexity which takes as input only the fingerprint of the combined sample.
\end{lemma}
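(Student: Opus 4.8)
The plan is to \emph{symmetrize} $\mathcal{A}$ over relabelings of the domain $[k]$. The key observation is that the fingerprint of a pooled multiset is a complete invariant for the action of the symmetric group $\mathrm{Sym}(k)$ on multisets over $[k]$: two multisets have the same fingerprint if and only if one is obtained from the other by a permutation of $[k]$. Hence knowing the fingerprint $F$ of the pooled sample $S$ is exactly the same as knowing the orbit $\{\sigma(S) : \sigma \in \mathrm{Sym}(k)\}$, and no more.

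Given this, I would define $\mathcal{A}'$ as follows: on input the fingerprint $F$, fix any canonical multiset $S_0$ over $[k]$ with fingerprint $F$, draw $\sigma \sim \mathrm{Unif}(\mathrm{Sym}(k))$, and output $\mathcal{A}(\sigma(S_0))$ using fresh internal randomness for $\mathcal{A}$ (which, per \Cref{def:poolingdefn}, then shuffles its input itself). Since $F$ is a deterministic function of the shuffled pooled multiset, the composition of $\mathcal{A}'$ with the fingerprint map is a valid pooling-based algorithm with the same sample complexity $cT$; it only remains to verify correctness.

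For correctness, fix any sequence $\bp_1,\dots,\bp_T$ and let $S$ be the shuffled pooled multiset of $c$ i.i.d.\ draws from each $\bp_t$, so that $F = \mathrm{fingerprint}(S)$ is the input to $\mathcal{A}'$. Conditioned on $F$, the reconstructed multiset $\sigma(S_0)$ inside $\mathcal{A}'$ is distributed uniformly over the $\mathrm{Sym}(k)$-orbit of $S_0$, which is precisely the set of multisets with fingerprint $F$; the same is true of $\sigma(S)$ for an independent uniform $\sigma$. Therefore the reconstructed input has exactly the distribution of $\sigma(S)$. But $\sigma(S)$ is in turn distributed as the shuffled pooled multiset of $c$ i.i.d.\ draws from each of the pushforward distributions $\sigma(\bp_1),\dots,\sigma(\bp_T)$, whose average is $\sigma(\bp_{\avg}) = \avg(\sigma(\bp_1),\dots,\sigma(\bp_T))$. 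Consequently, running $\mathcal{A}'$ on the sequence $\bp_1,\dots,\bp_T$ is identical in distribution to running $\mathcal{A}$ on the sequence $\sigma(\bp_1),\dots,\sigma(\bp_T)$. Because $\prop$ is symmetric, $\bp_{\avg}\in\prop$ iff $\sigma(\bp_{\avg})\in\prop$, and $\dtv(\sigma(\bp_{\avg}),\prop)=\dtv(\bp_{\avg},\prop)$ since $\sigma$ is a total-variation isometry fixing $\prop$ setwise. Hence both guarantees in \Cref{def:poolingdefn} for $\mathcal{A}$ transfer verbatim to $\mathcal{A}'$, with the probability now over $S$, $\sigma$, and $\mathcal{A}$'s coins, which is exactly the internal randomness of $\mathcal{A}'$.

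The only nontrivial step is the distributional chain ``reconstructed input $\stackrel{d}{=} \sigma(S) \stackrel{d}{=}$ shuffled pooled sample from $\sigma(\bp_1),\dots,\sigma(\bp_T)$'': this requires the (routine) fact that the pushforward of the uniform measure on $\mathrm{Sym}(k)$ under $\sigma \mapsto \sigma(M)$ is uniform on the orbit of any fixed multiset $M$ — which one checks via orbit--stabilizer, handling the possibly nontrivial stabilizer of $M$ — together with the fact that applying $\sigma$ commutes with i.i.d.\ sampling and with the shuffle. Everything else is bookkeeping.
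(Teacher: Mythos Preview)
Your proposal is correct and takes essentially the same approach as the paper: construct a canonical multiset from the fingerprint, apply a uniformly random relabeling $\sigma\in\mathrm{Sym}(k)$, feed the result to $\mathcal{A}$, and use symmetry of $\prop$ (and that $\avg$ commutes with the $\sigma$-pushforward) to transfer the guarantees. The paper's proof additionally applies an explicit sequence shuffle $\pi_S$ before invoking $\mathcal{A}$, whereas you fold this into $\mathcal{A}$'s contract; this is a cosmetic difference, and your orbit--stabilizer justification for the key distributional identity is in fact more explicit than the paper's ``observe that the distribution is identical.''
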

\begin{proof}
	Let the fingerprint of the combined sample $S$ be denoted by $f=[f_0, f_1,\dots,f_{cT}]$. Here, $f_i$ is the number of elements in $[k]$ that occur exactly $i$ times in the combined sample. Given as input $f$, the algorithm $\mathcal{A}'$ constructs a sequence $S$ in the following way:
	\begin{enumerate}
		\item Initially, $S = \phi$, $e=1$.
		\item For $i=0,\dots,cT:$
		\begin{flalign*} 
			&\qquad \text{While } f_i > 0:& \\
			&\qquad\qquad \text{Append } i \text{ copies of }e \text{ to } S.& \\
			&\qquad\qquad e \gets e+1,\; f_i \gets f_i-1.&
		\end{flalign*}
		\item Draw $\pi_k \sim \textrm{Perm}(k)$.
		\item For every $s \in S$:
		\begin{flalign*} 
			&\qquad \text{Set } s \gets \pi_k(s).&
		\end{flalign*} 
		\item Draw $\pi_S \sim \textrm{Perm}(cT)$.
		\item Set $S \gets \pi_S(S)$.
	\end{enumerate}
	The algorithm $\mathcal{A}'$ then feeds $S$ constructed as above to $\mathcal{A}$, and returns $\mathcal{A}$'s output.
	
	Observe that the distribution of $S$ constructed as above is identical to the distribution of $S$ obtained in the following way:
	\begin{enumerate}
		\item Draw $\pi_k \sim \textrm{Perm}(k)$.
		\item Draw $S = \bigcup_{t=1}^T\{X^{(t)}_1,\dots, X^{(t)}_c \sim \pi_k(\bp_t)\}$.
		\item Draw $\pi_S \sim \textrm{Perm}(cT)$.
		\item Set $S \gets \pi_S(S)$.
	\end{enumerate}
	Here, $\pi(\bp)$ denotes permuting the probability distribution $\bp$ according to $\pi$ i.e. $\pi(\bp)_i = \bp_{\pi(i)}$. Now, for a fixed permutation $\pi_k$, observe that $\avg(\pi_k(\bp_1),\dots,\pi_k(\bp_T)) = \pi_k(\avg(\bp_1,\dots,\bp_T))=\pi_k(\bp_{\avg})$. Further, recall that the property $\mathcal{P}$ we are thinking about is symmetric. Thus, if $\bp_{\avg} \in \mathcal{P}$, then $\pi_k(\bp_{\avg}) \in \mathcal{P}$ for any $\pi_k \in \textrm{Perm}(k)$. Similarly, if $\bp_{\avg} \notin \mathcal{P}$, then $\pi_k(\bp_{\avg}) \notin \mathcal{P}$ for any $\pi_k \in \textrm{Perm}(k)$. Thus, for any fixed $\pi_k$, $\mathcal{A}$ correctly tests if $\avg(\pi_k(\bp_1),\dots,\pi_k(\bp_T))$ has the property $\mathcal{P}$ ($\iff \avg(\bp_1,\dots,\bp_T)$ has the property $\mathcal{P}$). Consequently, $\mathcal{A}$ (and hence $\mathcal{A}'$) correctly tests for a randomly chosen $\pi_k$ as well.

\end{proof}

\section{Technical Results for \Cref{sec:lb_pool}}
\label{sec:helper_lemmas_clt}

We prove a technical lemma that bounds the TV distance between Gaussians having the same mean and similar covariances.
\begin{lemma}\label{lem:lower_bound}
	Let $A$ and $D$ be defined as follows: 
	$$A=\begin{bmatrix}
		\alpha_{11}/m^2 & \alpha_{12}/m^3 & \alpha_{13}/m^4\\
		\alpha_{12}/m^3 & \alpha_{22}/m^3 & \alpha_{23}/m^4 \\
		\alpha_{13}/m^4 & \alpha_{23}/m^4 & \alpha_{33}/m^4
	\end{bmatrix} \text{ and matrix }D=\begin{bmatrix}
		1/m^2 & 0 & 0\\
		0 & 1/m^3 & 0 \\
		0 & 0 & 1/m^4
	\end{bmatrix},$$ where $\alpha_{ij}$ are bounded by constants independent of $m$. For large enough $m$, there exists a positive constant $\alpha$ such that,
	$$-\alpha D \preceq A \preceq \alpha D.$$
\end{lemma}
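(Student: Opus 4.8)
The plan is to reduce both Loewner inequalities to a single spectral bound after a diagonal congruence. Since $D = \diag(m^{-2}, m^{-3}, m^{-4})$ is positive definite, write $D^{1/2} = \diag(m^{-1}, m^{-3/2}, m^{-2})$ and $D^{-1/2} = \diag(m, m^{3/2}, m^{2})$. Because $D^{1/2}$ is invertible, congruence by it preserves the Loewner order, so it suffices to exhibit a constant $\alpha$, independent of $m$ for all large $m$, with $-\alpha I \preceq D^{-1/2} A D^{-1/2} \preceq \alpha I$; conjugating that chain by $D^{1/2}$ then gives $-\alpha D \preceq A \preceq \alpha D$.

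The key step is to compute $B \defeq D^{-1/2} A D^{-1/2}$ entrywise, using $B_{ij} = A_{ij}\,(D^{-1/2})_{ii}\,(D^{-1/2})_{jj}$. A direct check gives $B_{11} = \alpha_{11}$, $B_{22} = \alpha_{22}$, $B_{33} = \alpha_{33}$ on the diagonal, and $B_{12} = \alpha_{12}\,m^{-1/2}$, $B_{13} = \alpha_{13}\,m^{-1}$, $B_{23} = \alpha_{23}\,m^{-1/2}$ off the diagonal. In other words, the powers of $m$ in $A$ are aligned exactly with those of $D$ so that this rescaling produces a matrix whose entries are all bounded by an absolute constant (the off-diagonal ones in fact tending to $0$). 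Since the $\alpha_{ij}$ are bounded by constants independent of $m$, there is a constant $c_0$ with $|B_{ij}| \le c_0$ for all $i,j$ and all $m \ge 1$.

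It then remains to bound the operator norm of the symmetric $3\times 3$ matrix $B$: crudely, $\|B\|_{\mathrm{op}} \le \|B\|_F \le 3 c_0$ (or simply $\|B\|_{\mathrm{op}} \le 3\max_{i,j}|B_{ij}| \le 3 c_0$). Taking $\alpha = 3 c_0$ yields $-\alpha I \preceq B \preceq \alpha I$, and conjugating by $D^{1/2}$ concludes. There is no real obstacle here: the only substantive observation is that conjugation by $D^{-1/2}$ turns $A$ into a matrix with $O(1)$ entries because the exponents of $m$ in $A$ match those of $D$; the sole point to check is that $D^{1/2}$ is invertible, so that $X \preceq Y$ if and only if $D^{1/2} X D^{1/2} \preceq D^{1/2} Y D^{1/2}$.
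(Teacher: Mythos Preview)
Your argument is correct and is in fact cleaner than the paper's. The paper proves the bound by an explicit rank-one decomposition: it writes $A=\frac{\alpha_{12}}{m^3}v_1v_1^\top+\frac{\alpha_{13}}{m^4}v_2v_2^\top+\frac{\alpha_{23}}{m^4}v_3v_3^\top+D'$ with $v_1=[1,1,0]$, $v_2=[1,0,1]$, $v_3=[0,1,1]$ and $D'$ diagonal, then uses $(x_i+x_j)^2\le 2(x_i^2+x_j^2)$ to dominate each $v_\ell v_\ell^\top$ by a $0/1$ diagonal matrix, and finally collects all the diagonal pieces and compares them entrywise with $D$. Your route---conjugate by $D^{-1/2}$, observe that the resulting $3\times 3$ matrix $B$ has entries uniformly bounded (the off-diagonals even $O(m^{-1/2})$), and bound $\|B\|_{\mathrm{op}}$ by $3\max_{i,j}|B_{ij}|$---bypasses that decomposition entirely and makes transparent why the exponents in $A$ line up with those in $D$. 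Amusingly, the paper uses exactly your conjugation trick in the proof of the \emph{next} lemma (forming $F=D^{-1/2}\tfrac{1}{m}\Sigma_1 D^{-1/2}$ and applying Gershgorin), so your approach is fully in the spirit of the surrounding arguments; it just applies it one lemma earlier. The only thing the paper's decomposition buys is a slightly more explicit constant, which is not needed here.
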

\begin{proof}
	Let $v_1=[1,1,0]$, $v_2=[1,0,1]$ and $v_3=[0,1,1]$.
	Note that, 
	$$A=\frac{\alpha_{12}}{m^3} v_1 v_1^T+\frac{\alpha_{13}}{m^4} v_2 v_2^T + \frac{\alpha_{23}}{m^4} v_3 v_3^T + D',$$
	where $D'$ is a diagonal matrix, whose entries are as follows:
	\begin{equation}
		D'=\begin{bmatrix}
			\frac{\alpha_{11}}{m^2}-\frac{\alpha_{12}}{m^3}-\frac{\alpha_{13}}{m^4}& 0 & 0\\
			0 &\frac{\alpha_{22}}{m^3} -\frac{\alpha_{12}}{ m^3}-\frac{\alpha_{23}}{m^4}& 0 \\
			0 & 0 & \frac{\alpha_{33}}{m^4}-\frac{\alpha_{13}}{ m^4}-\frac{\alpha_{23}}{ m^4}
		\end{bmatrix}    
	\end{equation}
	Note that, it is immediate that, $v_1v_1^T \preceq 2 D_1$,  $v_2v_2^T \preceq 2 D_2$ and $v_3v_3^T \preceq 2 D_3$, where matrices $D_1, D_2$ and $D_3$ are defined as follows,
	$$D_1=\begin{bmatrix}
		1 & 0 & 0\\
		0 & 1 & 0 \\
		0 & 0 & 0
	\end{bmatrix},~~D_2=\begin{bmatrix}
		1 & 0 & 0\\
		0 & 0 & 0 \\
		0 & 0 & 1
	\end{bmatrix} \text{ and }D_3=\begin{bmatrix}
		0 & 0 & 0\\
		0 & 1 & 0 \\
		0 & 0 & 1
	\end{bmatrix}. $$

This is because for any $x\in\mathbb{R}^3$,
$x^\top v_1v_1^\top x=(x_1+x_2)^2\le 2(x_1^2+x_2^2)=2\,x^\top D_1 x,
$
and similarly $(x_1+x_3)^2\le 2(x_1^2+x_3^2)$ and $(x_2+x_3)^2\le 2(x_2^2+x_3^2)$.

	Furthermore, as all $\alpha_{ij}$s are bounded independent of $m$, we have that the matrix $D'$ satisfies, $D' \preceq c \cdot D$ for some positive constant $c$.
	
	Combining all the above inequalities we get that,
	$$A = \frac{\alpha_{12}}{m^3} v_1 v_1^T+\frac{\alpha_{13}}{m^4} v_2 v_2^T + \frac{\alpha_{23}}{m^4} v_3 v_3^T + D' \preceq \frac{2|\alpha_{12}|}{m^3} D_1+\frac{2|\alpha_{13}|}{m^4} D_2 + \frac{2|\alpha_{23}|}{m^4} D_3 + c \cdot D~.$$
	The matrix in the final expression simplifies to,
	$$\begin{bmatrix}
		\frac{c}{m^2}+\frac{2|\alpha_{12}|}{m^3}+\frac{2|\alpha_{13}|}{m^4}  & 0 & 0\\
		0 &\frac{c}{m^3} +\frac{2|\alpha_{12}|}{ m^3}+\frac{2|\alpha_{23}|}{m^4}& 0 \\
		0 & 0 & \frac{c}{m^4}+\frac{2|\alpha_{13}|}{ m^4}+\frac{2|\alpha_{23}|}{ m^4}
	\end{bmatrix}$$
	As all $\alpha_{ij}$s are bounded by constants, we have that the above matrix is upper bounded by $\alpha \cdot D$ for some large constant $\alpha>0$. Therefore $A \preceq \alpha \cdot D$.
	
	The proof for the other side, that is $-\alpha \cdot D \preceq A$ follows exactly the same argument when applied to $-A$ and we conclude the proof.
\end{proof}

\begin{lemma}
	\label{lem:gaussians_tv_small}
	Let $G_1$ be distributed as $N(\mu, \Sigma_1)$ and $G_2$ be distributed as $N(\mu, \Sigma_2)$ where 
	$$\Sigma_1 = 
	\begin{bmatrix}
		\gamma_{11}/m & \gamma_{12}/m^2 & \gamma_{13}/m^3\\
		\gamma_{12}/m^2 & \gamma_{22}/m^2 & \gamma_{23}/m^3 \\
		\gamma_{13}/m^3 & \gamma_{23}/m^3 & \gamma_{33}/m^3
	\end{bmatrix} \text{ and }
	\Sigma_2 = \Sigma_1 + \begin{bmatrix}
		\alpha_{11}/m^2 & \alpha_{12}/m^3 & \alpha_{13}/m^4\\
		\alpha_{12}/m^3 & \alpha_{22}/m^3 & \alpha_{23}/m^4 \\
		\alpha_{13}/m^4 & \alpha_{23}/m^4 & \alpha_{33}/m^4
	\end{bmatrix},$$
	for $\gamma_{ij}$ and  $\alpha_{ij}$ such that $|\gamma_{ij}|$ and $|\alpha_{ij}|$ are bounded by constants independent of $m$. Then $\dtv(G_1, G_2) \leq 0.01$, for $m$ large enough.
\end{lemma}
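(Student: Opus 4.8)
The plan is to bound the Kullback--Leibler divergence $\KL(G_1\,\|\,G_2)$ and conclude via Pinsker's inequality $\dtv(G_1,G_2)\le\sqrt{\tfrac12\KL(G_1\,\|\,G_2)}$. Write $A:=\Sigma_2-\Sigma_1$ (the perturbation matrix of \Cref{lem:lower_bound}) and $\widetilde E:=\Sigma_1^{-1/2}A\,\Sigma_1^{-1/2}$, a symmetric matrix once $\Sigma_1\succ0$, which we verify below. The standard formula for the KL divergence between two Gaussians with a common mean gives
$$2\,\KL(G_1\,\|\,G_2)=\Tr\!\big(\Sigma_2^{-1}\Sigma_1\big)-3+\ln\frac{\det\Sigma_2}{\det\Sigma_1}=\sum_{i=1}^{3}\Big(\tfrac{1}{1+\lambda_i}-1+\ln(1+\lambda_i)\Big),$$
where $\lambda_1,\lambda_2,\lambda_3>-1$ are the (real) eigenvalues of $\widetilde E$. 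The map $f(x)=\tfrac1{1+x}-1+\ln(1+x)$ satisfies $f(0)=f'(0)=0$ and $|f''|\le 12$ on $[-\tfrac12,\tfrac12]$, so $0\le f(x)\le 6x^2$ there; hence once $\|\widetilde E\|\le\tfrac12$ we get $\KL(G_1\,\|\,G_2)=O(\|\widetilde E\|^2)$. It therefore suffices to show $\|\widetilde E\|\to0$ as $m\to\infty$.

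By \Cref{lem:lower_bound}, for all large $m$ there is an absolute constant $\alpha>0$ with $-\alpha D\preceq A\preceq\alpha D$, where $D=\mathrm{diag}(m^{-2},m^{-3},m^{-4})$. Conjugating by $\Sigma_1^{-1/2}$ yields $\|\widetilde E\|\le\alpha\,\lambda_{\max}\!\big(\Sigma_1^{-1/2}D\,\Sigma_1^{-1/2}\big)$, so it remains to establish an inequality of the form $D\preceq \tfrac{c}{m}\,\Sigma_1$ for a constant $c$ independent of $m$; this gives $\|\widetilde E\|\le\alpha c/m$.

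This spectral comparison is the heart of the argument, and it is where a naive estimate fails: $\Sigma_1$ has condition number $\Theta(m^2)$, so $D$ cannot simply be replaced by a multiple of the identity. The fix is to diagonalize both matrices to the correct scale simultaneously. Let $W=\mathrm{diag}(m^{1/2},m,m^{3/2})$. A direct computation gives $WDW=\tfrac1m I$ exactly, while $W\Sigma_1 W=\mathrm{diag}(\gamma_{11},\gamma_{22},\gamma_{33})+R_m$ with $\|R_m\|=O(m^{-1/2})$, since each off-diagonal entry of $\Sigma_1$ picks up an extra factor $m^{-1/2}$ or $m^{-1}$ under this rescaling. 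Because $\gamma_{11},\gamma_{22},\gamma_{33}$ are positive constants (as in the instance of \Cref{lem:tv_g_g} where this lemma is applied), for $m$ large we have $W\Sigma_1 W\succeq\tfrac12\min_i\gamma_{ii}\cdot I$; in particular $\Sigma_1\succ0$ and $\Sigma_2=\Sigma_1+A\succeq(1-\alpha c/m)\Sigma_1\succ0$, so all quantities above are well defined. Comparing $WDW=\tfrac1m I$ with $W\Sigma_1 W\succeq\tfrac12\min_i\gamma_{ii}\cdot I$ gives $WDW\preceq\tfrac{2}{m\min_i\gamma_{ii}}\,W\Sigma_1 W$, i.e. $D\preceq\tfrac{c}{m}\Sigma_1$ with $c=2/\min_i\gamma_{ii}$.

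Assembling the pieces, $\|\widetilde E\|=O(1/m)$, hence $\KL(G_1\,\|\,G_2)=O(1/m^2)$ and $\dtv(G_1,G_2)=O(1/m)$, which is at most $0.01$ once $m$ is a large enough constant, as claimed. The only real obstacle is the step $D\preceq\tfrac cm\Sigma_1$: one must use that the ``large'' direction of $D$ is aligned with the ``large'' direction of $\Sigma_1$ and that the per-direction ratio is uniformly $\Theta(1/m)$, which is exactly what the simultaneous rescaling by $W$ encodes.
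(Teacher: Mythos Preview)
Your proof is correct and follows essentially the same route as the paper: both arguments reduce to the sandwich $-\tfrac{\alpha}{m}\Sigma_1\preceq A\preceq\tfrac{\alpha}{m}\Sigma_1$, obtained by combining \Cref{lem:lower_bound} with the key comparison $D\preceq\tfrac{c}{m}\Sigma_1$, and your rescaling matrix $W=\mathrm{diag}(m^{1/2},m,m^{3/2})$ produces exactly the paper's matrix $F=D^{-1/2}\tfrac{1}{m}\Sigma_1 D^{-1/2}$. The only difference is the final conversion to a TV bound: the paper invokes a direct eigenvalue-based TV estimate for Gaussians (its Fact~\ref{fact:multivariate_gaussian_tv}), whereas you go through $\KL$ and Pinsker; both yield $\dtv=O(1/m)$ and are standard.
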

\begin{proof}
	In our proof, we show that, for a sufficiently large $m$, the covariance matrices $\Sigma_1$ and $\Sigma_2$ satisfy: $ (1 - \alpha/m) \Sigma_1 \preceq \Sigma_2 \preceq (1 + \alpha/m) \Sigma_1$ for some constant $\alpha>0$. Equivalently, all the eigenvalues of the matrix $L^{-1}\Sigma_2(L^T)^{-1}$ lie in the range $[1-\alpha/m , 1 + \alpha/m]$, where $\Sigma_1=LL^T$ is the Cholesky decomposition of matrix $\Sigma_1$. Note that, by the TV distance bound in Fact \ref{fact:multivariate_gaussian_tv} this result immediately implies the following upper bound on $\dtv(G_1, G_2)$ between the two Gaussian distributions ,
	\begin{align*}
		\dtv(G_1, G_2) \leq \sum_{i=1}^{3}\frac{\max(\lambda_i,\lambda^{-1}_i)-1}{\sqrt{2\pi e}} \leq \sum_{i=1}^{3}\frac{O(1/m)}{\sqrt{2\pi e}} \leq O(1/m)~.
	\end{align*}
    Here, while applying Fact \ref{fact:multivariate_gaussian_tv}, we used that $G_1$ and $G_2$ have the same mean.
	 Now note that, for a sufficiently large  $m$, we get the desired upper bound on the $\dtv(G_1, G_2)$.
	
	Therefore to prove our lemma, all that remains is to show that, $ (1 - \alpha/m) \Sigma_1 \preceq \Sigma_2 \preceq (1 + \alpha/m) \Sigma_1$ for some constant $\alpha>0$. In the remainder of the proof, we prove this claim. First we show invertibility of $\Sigma_1$. Since $\Sigma_1$ is a covariance matrix in our construction, the diagonal coefficients $\gamma_{11},\gamma_{22},\gamma_{33}$ are fixed positive constants (independent of $m$). Thus the diagonal permutation contributes $\gamma_{11}\gamma_{22}\gamma_{33}/m^6$ to $\det(\Sigma_1)$, while any other permutation includes at least one off-diagonal of order $1/m^2$ or $1/m^3$ and hence has magnitude $O(1/m^7)$. Therefore, for $m$ large enough,
\[
\det(\Sigma_1)\ \ge\ \frac{\gamma_{11}\gamma_{22}\gamma_{33}}{2\,m^6}\ >\ 0,
\]
so $\Sigma_1$ is invertible.

    Now let 
	\begin{equation}
		E=\begin{bmatrix}
			\alpha_{11}/m^2 & \alpha_{12}/m^3 & \alpha_{13}/m^4\\
			\alpha_{12}/m^3 & \alpha_{22}/m^3 & \alpha_{23}/m^4 \\
			\alpha_{13}/m^4 & \alpha_{23}/m^4 & \alpha_{33}/m^4
		\end{bmatrix}  
	\end{equation}
	To prove the claim, we in turn show that, $-\frac{\alpha}{m} \Sigma_1\preceq E\preceq \frac{\alpha}{m} \Sigma_1$.

	By \Cref{lem:lower_bound}, we know that there exist constants $\alpha_1$ and $\alpha_2$ such that, $-\alpha_1 D \preceq E \preceq \alpha_1 D$ and $0 \preceq \Sigma_1 \preceq \alpha_2m  D$, where 
	$$D=\begin{bmatrix}
		1/m^2 & 0 & 0\\
		0 & 1/m^3 & 0 \\
		0 & 0 & 1/m^4
	\end{bmatrix}$$
    Here the bound $0\preceq \Sigma_1 \preceq \alpha_2 m D$ follows by applying Lemma~\ref{lem:lower_bound} to $m^{-1}\Sigma_1$, which has the same $(1/m^2,1/m^3,1/m^4)$ structure with coefficients $\alpha_{ij}=\gamma_{ij}$.
    
	In the following, we prove $E \preceq \frac{\alpha}{m} \Sigma_{1}$ for some constant $\alpha >0$. We first show that $D \preceq \frac{\gamma}{m} \Sigma_1$ for some constant $\gamma$, as $E \preceq \alpha_1 D$, we conclude that $E \preceq \frac{\alpha}{m} \Sigma_{1}$. Towards this end, we now prove that $D \preceq \frac{\gamma}{m} \Sigma_1$ for some constant $\gamma$.
	Equivalently, we wish to show that the smallest eigenvalue of the matrix $F=D^{-1/2}\frac{1}{m}\Sigma_1 D^{-1/2}$ is lower bounded by some constant. Note that $F$ is PSD and is equal to,
	$$F=\begin{bmatrix}
		\gamma_{11} & \gamma_{12}/m^{1/2} & \gamma_{13}/m\\
		\gamma_{12}/m^{1/2} & \gamma_{22} & \gamma_{23}/m^{1/2} \\
		\gamma_{13}/m & \gamma_{23}/m^{1/2} & \gamma_{33}
	\end{bmatrix}~.$$

Its trace is $\mathrm{Tr}(F)=\gamma_{11}+\gamma_{22}+\gamma_{33}=\Theta(1)$. Moreover, the determinant satisfies
\[
\det(F)\ \ge\ \frac{\gamma_{11}\gamma_{22}\gamma_{33}}{2}\ =:\ c_0\ >0
\]
for all sufficiently large $m$ (the diagonal term dominates, while all off-diagonal permutations are $o(1)$).
By Gershgorin circle theorem (Fact~\ref{fact:gershgorin}), the spectral norm is bounded:
\[
\lambda_{\max}(F)\ \le\ \max_i\Big(\gamma_{ii}+\sum_{j\neq i}\tfrac{|\gamma_{ij}|}{m^{1/2}}\Big)\ \le\ C_0,
\]
for some constant $C_0$ and all large $m$. Therefore, with eigenvalues $\lambda_1\ge\lambda_2\ge\lambda_3\ge 0$,
\[
\lambda_{\min}(F)=\lambda_3\ \ge\ \frac{\det(F)}{\lambda_1\lambda_2}\ \ge\ \frac{c_0}{C_0^2}\ =:\ c_1\ >0.
\]
Equivalently, $D\ \preceq\ \tfrac{1}{c_1}\,\tfrac{1}{m}\,\Sigma_1$, i.e.\ $D\preceq \tfrac{\gamma}{m}\Sigma_1$ with $\gamma=1/c_1$. Hence $E\preceq \alpha_1 D\preceq \frac{\alpha_1\gamma}{m}\Sigma_1$, i.e.\ $E\preceq \frac{\alpha}{m}\Sigma_1$ for $\alpha:=\alpha_1\gamma$.

	The proof for the other side, that is $-\frac{\alpha}{m} \cdot \Sigma_1 \preceq E$ follows exactly the same argument when applied to $-E$ and we conclude the proof.
\end{proof}

\begin{lemma}[Discretization--convolution almost commute for Gaussians]
\label{lem:disc-conv-commute-gauss}
Let $X\sim\mathcal{N}(\mu_X,\Sigma_X)$ and $Y\sim\mathcal{N}(\mu_Y,\Sigma_Y)$ be independent in $\mathbb{R}^d$ (with $d$ constant), and assume at least one of $\Sigma_X,\Sigma_Y$ is positive definite.\footnote{If one covariance is singular, the bound below with that matrix should be read as $+\infty$; the inequality is then applied with the other (positive definite) covariance.} Let $\disc(\cdot)$ be coordinate-wise rounding to the nearest integer (ties are immaterial since Gaussians put zero mass on boundaries). Then
\[
\dtv\!\big(\disc(X)+\disc(Y),\,\disc(X+Y)\big)
\;\le\; \frac{\sqrt{d}}{4}\,
\min\!\Big\{\sqrt{\mathrm{Tr}(\Sigma_X^{-1})},\ \sqrt{\mathrm{Tr}(\Sigma_Y^{-1})}\Big\}.
\]
In particular, if $\lambda_{\min}(\Sigma_X)\to\infty$ or $\lambda_{\min}(\Sigma_Y)\to\infty$, the left-hand side is $O(1/\sqrt{\lambda_{\min}})$ and tends to $0$.
\end{lemma}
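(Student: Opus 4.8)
The plan is to reduce the statement to the total variation distance between a Gaussian and a bounded mean–shift of it. The key algebraic observation is that $\disc(x+n)=\disc(x)+n$ for every integer vector $n$: writing $\{Y\}:=Y-\disc(Y)\in[-1/2,1/2)^d$ for the centered rounding residual and $W:=X+Y$, we get
\[
\disc(X)+\disc(Y)=\disc\big(X+\disc(Y)\big)=\disc\big(W-\{Y\}\big),
\]
while $\disc(X+Y)=\disc(W)$. Thus everything reduces to bounding $\dtv\big(\disc(W-\{Y\}),\disc(W)\big)$, i.e.\ to showing that perturbing $W$ by the bounded vector $\{Y\}$ before rounding hardly changes the law of the rounded value. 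I want to stress that the crude coupling bound $\dtv\le\Pr[\disc(X)+\disc(Y)\neq\disc(X+Y)]$ is useless here, since that probability is $\Theta(d)$ even when the covariances diverge; the point is to charge the rounding mismatch to a \emph{mean shift}, not to a coupling failure.

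First I would condition on $Y$. Given $Y=y$, the vector $\{Y\}=\{y\}$ is deterministic and $W\mid Y=y\sim\mathcal N(\mu_X+y,\Sigma_X)$; since $\disc$ is a deterministic map, the data–processing inequality gives
\[
\dtv\big(\disc(W-\{y\}),\disc(W)\ \big|\ Y=y\big)\ \le\ \dtv\big(\mathcal N(\mu_X+y-\{y\},\Sigma_X),\,\mathcal N(\mu_X+y,\Sigma_X)\big).
\]
The TV distance between two Gaussians with common covariance $\Sigma_X$ whose means differ by $v$ equals, after applying $\Sigma_X^{-1/2}$ and rotating the shift onto a coordinate axis, the one–dimensional quantity $2\Phi\!\big(\tfrac12\sqrt{v^\top\Sigma_X^{-1}v}\big)-1\le\tfrac1{\sqrt{2\pi}}\sqrt{v^\top\Sigma_X^{-1}v}$. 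Taking $v=\{y\}$, then averaging over $Y$ using joint convexity of TV and Jensen's inequality, yields
\[
\dtv\big(\disc(X)+\disc(Y),\disc(X+Y)\big)\ \le\ \tfrac1{\sqrt{2\pi}}\sqrt{\E\big[\{Y\}^\top\Sigma_X^{-1}\{Y\}\big]}.
\]

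Next I would bound $\E[\{Y\}^\top\Sigma_X^{-1}\{Y\}]\le\lambda_{\max}(\Sigma_X^{-1})\,\E\|\{Y\}\|_2^2\le\tfrac1{\lambda_{\min}(\Sigma_X)}\cdot\tfrac d4\le\tfrac d4\Tr(\Sigma_X^{-1})$, using that $\{Y\}_i\in[-1/2,1/2)$ coordinatewise. This gives $\dtv\le\tfrac{\sqrt d}{2\sqrt{2\pi}}\sqrt{\Tr(\Sigma_X^{-1})}\le\tfrac{\sqrt d}{4}\sqrt{\Tr(\Sigma_X^{-1})}$. Finally, by symmetry I would rerun the argument with $X$ and $Y$ swapped, using $\disc(X)+\disc(Y)=\disc(W-\{X\})$ and conditioning on $X$, to obtain $\dtv\le\tfrac{\sqrt d}{4}\sqrt{\Tr(\Sigma_Y^{-1})}$; the minimum of the two bounds is the claim (with the convention that $\Tr$ of the inverse of a singular matrix is $+\infty$, consistent with the footnote), and the ``in particular'' clause follows from $\Tr(\Sigma^{-1})\le d/\lambda_{\min}(\Sigma)$ for constant $d$. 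The only step that is not a one–line computation is the opening reduction: recognizing that the discrepancy between $\disc(X)+\disc(Y)$ and $\disc(X+Y)$ is exactly the effect of shifting one Gaussian by its co-sample's bounded fractional part, so that it can be absorbed into a Gaussian mean displacement.
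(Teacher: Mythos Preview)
Your proof is correct. Both arguments ultimately exploit the same idea---the rounding residual of one variable acts as a bounded shift on the other---but the executions differ. The paper works analytically with densities: it expresses $P(m)-Q(m)$ as an integral involving $g(y)-g(y+\delta(x))$ where $\delta(x)=\disc(x)-x$, then invokes an $L^1$ shift lemma $\|h(\cdot+u)-h\|_{L^1}\le\|u\|_2\|\nabla h\|_{L^1}$ and computes $\|\nabla g\|_{L^1}\le\sqrt{\Tr(\Sigma_Y^{-1})}$ for the Gaussian $g$. You instead work probabilistically: the identity $\disc(X)+\disc(Y)=\disc(W-\{Y\})$ reduces the question to a mean shift, conditioning on $Y$ makes the shifted law a Gaussian with covariance $\Sigma_X$, and you invoke the closed-form TV bound between mean-shifted Gaussians directly. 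Your route avoids the auxiliary $L^1$-gradient lemma entirely and even yields the slightly sharper constant $\tfrac{1}{2\sqrt{2\pi}}<\tfrac14$; the paper's density approach, on the other hand, would generalize more readily to non-Gaussian $g$ with finite $\|\nabla g\|_{L^1}$.
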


\begin{proof}
\emph{Without loss of generality assume $\Sigma_Y\succ 0$; otherwise swap $X$ and $Y$.}
Let $f$ and $g$ be the densities of $X$ and $Y$. For $m\in\mathbb{Z}^d$, write $C_m:=m+[-\tfrac12,\tfrac12)^d$.
Define
\[
P(m):=\Pr(\disc(X)+\disc(Y)=m)
=\sum_{a\in\mathbb{Z}^d}\Big(\int_{C_a} f(x)\,dx\Big)\Big(\int_{C_{m-a}} g(y)\,dy\Big),
\]
\[
Q(m):=\Pr(\disc(X+Y)=m)
=\int_{C_m}\int_{\mathbb{R}^d} f(x)\,g(z-x)\,dx\,dz.
\]
For $x\in C_a$, set $\delta(x):=a-x\in[-\tfrac12,\tfrac12]^d$. Changing variables $z=y+a$ in $Q(m)$ yields
\[
Q(m)=\sum_{a\in\mathbb{Z}^d}\int_{C_a}\!\Big(\int_{C_{m-a}} g\big(y+\delta(x)\big)\,dy\Big) f(x)\,dx,
\]
and hence
\begin{equation}\label{eq:PmQm-diff}
P(m)-Q(m)
=\sum_{a\in\mathbb{Z}^d}\int_{C_a}\!\Big(\int_{C_{m-a}}\!\big[g(y)-g\big(y+\delta(x)\big)\big]\,dy\Big) f(x)\,dx.
\end{equation}

\paragraph{Bounding $\sum_m |P(m)-Q(m)|$ by an expectation times a norm.}
Summing \eqref{eq:PmQm-diff} in absolute value over $m$ and using that the cubes $\{C_{m-a}\}_m$ are disjoint and partition $\mathbb{R}^d$,
\begin{align}
\sum_{m\in\mathbb{Z}^d}\!|P(m)-Q(m)|
&\le \sum_{a}\int_{C_a} f(x)\,\Big(\int_{\mathbb{R}^d}\!\big|g(y)-g(y+\delta(x))\big|\,dy\Big)\,dx \nonumber\\
&= \int_{\mathbb{R}^d} f(x)\,\big\|g(\cdot)-g(\cdot+\delta(x))\big\|_{L^1}\,dx.
\label{eq:sum-bound-1}
\end{align}
(\emph{We used }$\sum_m\big|\int_{C_{m-a}}\!\cdot\big|
\le \sum_m\int_{C_{m-a}}\!|\cdot|
= \int_{\mathbb{R}^d}\!|\cdot|$,
since $\{C_{m-a}\}_m$ partition $\mathbb{R}^d$.)

Apply Lemma~\ref{lem:shift-L1} with $h=g$ and $u=\delta(x)$ pointwise in $x$ to get
\begin{equation}\label{eq:sum-bound-2}
\big\|g(\cdot)-g(\cdot+\delta(x))\big\|_{L^1}\ \le\ \|\delta(x)\|_2\ \|\nabla g\|_{L^1}.
\end{equation}
Plugging \eqref{eq:sum-bound-2} into \eqref{eq:sum-bound-1} and factoring out the constant $\|\nabla g\|_{L^1}$ yields
\begin{equation}\label{eq:product-form}
\sum_{m\in\mathbb{Z}^d}\!|P(m)-Q(m)|
\ \le\ \Big(\,\mathbb{E}\big[\|\delta(X)\|_2\big]\,\Big)\ \cdot\ \|\nabla g\|_{L^1}.
\end{equation}
Since $X\in C_{\disc(X)}$ almost surely, $\delta(X)=\disc(X)-X\in[-\tfrac12,\tfrac12]^d$, and thus
\begin{equation}\label{eq:delta-expect}
\mathbb{E}\big[\|\delta(X)\|_2\big]\ \le\ \sup_{\|\delta\|_\infty\le 1/2}\|\delta\|_2\ =\ \frac{\sqrt{d}}{2}.
\end{equation}

\paragraph{Computing $\|\nabla g\|_{L^1}$.}
For $Y\sim\mathcal{N}(\mu_Y,\Sigma_Y)$,
\[
\nabla g(y)=-\,\Sigma_Y^{-1}(y-\mu_Y)\,g(y),
\qquad
\|\nabla g\|_{L^1}=\int \|\Sigma_Y^{-1}(y-\mu_Y)\|_2\,g(y)\,dy
=\mathbb{E}\big[\|\Sigma_Y^{-1/2}Z\|_2\big],
\]
where $Z\sim\mathcal{N}(0,I_d)$. By Cauchy--Schwarz,
\begin{equation}\label{eq:grad-L1}
\|\nabla g\|_{L^1}\ \le\ \sqrt{\mathbb{E}\!\left[Z^\top \Sigma_Y^{-1} Z\right]}\ =\ \sqrt{\mathrm{Tr}(\Sigma_Y^{-1})}.
\end{equation}

Combining \eqref{eq:product-form}, \eqref{eq:delta-expect}, and \eqref{eq:grad-L1},
\[
\sum_{m\in\mathbb{Z}^d}\!|P(m)-Q(m)|\ \le\ \frac{\sqrt{d}}{2}\,\sqrt{\mathrm{Tr}(\Sigma_Y^{-1})},
\]
hence
\[
\dtv\!\big(\disc(X)+\disc(Y),\,\disc(X+Y)\big)
=\tfrac12\sum_{m}|P(m)-Q(m)|
\ \le\ \frac{\sqrt{d}}{4}\,\sqrt{\mathrm{Tr}(\Sigma_Y^{-1})}.
\]
By symmetry, the same bound holds with $\Sigma_X$; taking the minimum proves the claim.
\end{proof}

\begin{lemma}
\label{lem:shift-L1}
Let $h:\mathbb{R}^d\to\mathbb{R}_+$ be locally absolutely continuous with $\nabla h\in L^1(\mathbb{R}^d)$.
Then for every $u\in\mathbb{R}^d$,
\[
\|h(\cdot+u)-h(\cdot)\|_{L^1}\;\le\;\|u\|_2\,\|\nabla h\|_{L^1}.
\]
\end{lemma}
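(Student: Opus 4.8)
\textbf{Proof plan for Lemma~\ref{lem:shift-L1}.}
The plan is to establish the bound first for smooth $h$ via the fundamental theorem of calculus along the segment joining $x$ to $x+u$, and then remove the smoothness hypothesis by mollification. Concretely, for $h\in C^\infty(\mathbb{R}^d)$ with $\nabla h\in L^1$, I would write, for each fixed $x$,
\[
h(x+u)-h(x)=\int_0^1 \frac{d}{dt}\,h(x+tu)\,dt=\int_0^1 \langle \nabla h(x+tu),\,u\rangle\,dt,
\]
take absolute values, and apply Cauchy--Schwarz inside the integral to get $|h(x+u)-h(x)|\le \|u\|_2\int_0^1 \|\nabla h(x+tu)\|_2\,dt$. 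Integrating in $x$, using Tonelli to swap the $x$- and $t$-integrals, and then the translation invariance of Lebesgue measure ($\int \|\nabla h(x+tu)\|_2\,dx=\|\nabla h\|_{L^1}$ for every $t$), yields
\[
\|h(\cdot+u)-h(\cdot)\|_{L^1}\le \|u\|_2\int_0^1\Big(\int_{\mathbb{R}^d}\|\nabla h(x+tu)\|_2\,dx\Big)dt=\|u\|_2\,\|\nabla h\|_{L^1}.
\]

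To handle a general locally absolutely continuous $h$ with $\nabla h\in L^1$, I would mollify: let $\phi_\varepsilon$ be a standard smooth approximate identity and set $h_\varepsilon:=h*\phi_\varepsilon$, which is smooth, satisfies $\nabla h_\varepsilon=(\nabla h)*\phi_\varepsilon$, and hence $\|\nabla h_\varepsilon\|_{L^1}\le \|\nabla h\|_{L^1}$ by Young's inequality. The smooth case gives $\|h_\varepsilon(\cdot+u)-h_\varepsilon(\cdot)\|_{L^1}\le \|u\|_2\,\|\nabla h\|_{L^1}$ uniformly in $\varepsilon$. Since $h_\varepsilon\to h$ in $L^1_{\mathrm{loc}}$, a subsequence converges a.e., so $h_\varepsilon(\cdot+u)-h_\varepsilon(\cdot)\to h(\cdot+u)-h(\cdot)$ a.e.\ along that subsequence, and Fatou's lemma upgrades the uniform bound to $\|h(\cdot+u)-h(\cdot)\|_{L^1}\le \|u\|_2\,\|\nabla h\|_{L^1}$.

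The only genuinely delicate point is the very first step --- justifying the fundamental theorem of calculus along the line $t\mapsto x+tu$ for a function that is only \emph{locally} absolutely continuous rather than $C^1$; this is exactly what the ACL (absolutely continuous on lines) characterization of $W^{1,1}$ functions provides, and it is also precisely the step the mollification circumvents by reducing to $C^\infty$ functions, where FTC is immediate. (Note $h\in L^1$ is \emph{not} assumed or needed: only the difference $h(\cdot+u)-h(\cdot)$ is asserted to be integrable, and Fatou delivers exactly that.) In the intended application $h=g$ is a Gaussian density, which is smooth, so only the elementary smooth case is actually invoked; the mollification paragraph is there solely to match the stated generality.
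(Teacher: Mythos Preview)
Your proof is correct and follows essentially the same route as the paper: fundamental theorem of calculus along the segment $t\mapsto x+tu$, Cauchy--Schwarz to extract $\|u\|_2$, then Fubini/Tonelli plus translation invariance to turn the inner integral into $\|\nabla h\|_{L^1}$. The paper's proof stops there, invoking FTC directly from the local absolute continuity hypothesis without further comment; your added mollification paragraph is a legitimate way to make that step airtight in the stated generality, but is extra care the paper does not take (and, as you note, is unnecessary for the Gaussian application).
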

\begin{proof}
Fix $y\in\mathbb{R}^d$ and consider the path $t\mapsto y+t\,u$. By the fundamental theorem of calculus,
\[
h(y+u)-h(y)=\int_0^1 \langle\nabla h(y+t\,u),u\rangle\,dt,
\]
so $|h(y+u)-h(y)|\le \|u\|_2\int_0^1 \|\nabla h(y+t\,u)\|_2\,dt$ by Cauchy--Schwarz.
Integrating in $y$ and using Fubini with the change of variables $y\mapsto y+t\,u$ (unit Jacobian) gives
\[
\int_{\mathbb{R}^d}\!|h(y+u)-h(y)|\,dy
\le \|u\|_2\int_0^1\int_{\mathbb{R}^d}\!\|\nabla h(y+t\,u)\|_2\,dy\,dt
=\|u\|_2\,\|\nabla h\|_{L^1}.
\]
\end{proof}

\begin{lemma}
\label{lem:conv-disc-close-to-disc-conv}
Let $X_1$ and $X_2$ be independent with distributions
$\mathcal{N}(\mu_1,\sigma_1^2)$ and $\mathcal{N}(\mu_2,\sigma_2^2)$, respectively.
Let $t:=\max(\sigma_1^2,\sigma_2^2)$. Then, for $t$ large enough,
\[
\dtv\!\big(\disc(X_1)+\disc(X_2),\ \disc(X_1+X_2)\big) \le 0.00001.
\]
\end{lemma}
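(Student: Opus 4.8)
The plan is to recognize that this statement is exactly the one-dimensional ($d=1$) specialization of Lemma~\ref{lem:disc-conv-commute-gauss}, which has already been proven above. First I would set $X = X_1 \sim \mathcal{N}(\mu_1, \sigma_1^2)$ and $Y = X_2 \sim \mathcal{N}(\mu_2, \sigma_2^2)$, regarding the scalar variances as $1 \times 1$ covariance matrices $\Sigma_X = \sigma_1^2$ and $\Sigma_Y = \sigma_2^2$. Since $t = \max(\sigma_1^2, \sigma_2^2)$ is assumed large, at least one of $\sigma_1^2, \sigma_2^2$ is positive, so the hypotheses of Lemma~\ref{lem:disc-conv-commute-gauss} (independence, at least one positive-definite covariance, constant ambient dimension) are satisfied. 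Substituting $d = 1$, $\Tr(\Sigma_X^{-1}) = 1/\sigma_1^2$, and $\Tr(\Sigma_Y^{-1}) = 1/\sigma_2^2$ into the conclusion of that lemma gives
$$\dtv\big(\disc(X_1) + \disc(X_2),\ \disc(X_1 + X_2)\big) \;\le\; \frac{1}{4}\min\Big\{\frac{1}{\sigma_1},\ \frac{1}{\sigma_2}\Big\} \;=\; \frac{1}{4\sqrt{\max(\sigma_1^2, \sigma_2^2)}} \;=\; \frac{1}{4\sqrt{t}}.$$

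The proof then concludes by choosing the threshold on $t$: the right-hand side is at most $10^{-5}$ as soon as $4\sqrt{t} \ge 10^{5}$, i.e.\ $t \ge 6.25 \times 10^{8}$, which gives the claim ``for $t$ large enough.'' I do not anticipate a genuine obstacle here; the only point needing (routine) care is verifying that the degenerate one-dimensional setup really does meet the hypotheses of the general lemma, which it does, since having one large variance guarantees a nonsingular covariance (and the footnote convention in Lemma~\ref{lem:disc-conv-commute-gauss} covers the corner case where the other variance is $0$).

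If one preferred a self-contained argument avoiding the multivariate lemma, I would instead specialize its proof directly to $d=1$: write the pointwise difference $P(m) - Q(m)$ of the two probability mass functions as in that proof, apply Lemma~\ref{lem:shift-L1} with $h = g$ (the density of $X_2$) to bound $\|g(\cdot) - g(\cdot + \delta)\|_{L^1} \le |\delta|\,\|g'\|_{L^1}$, use $|\disc(X_1) - X_1| \le 1/2$ almost surely, and compute $\|g'\|_{L^1} = \E|Z|/\sigma_2 \le 1/\sigma_2$ for $Z \sim \mathcal{N}(0,1)$ (or symmetrically, with the roles of $X_1$ and $X_2$ swapped). This reproduces $\dtv \le \tfrac{1}{4}\min(1/\sigma_1, 1/\sigma_2) = 1/(4\sqrt{t})$, and the same threshold on $t$ finishes the proof.
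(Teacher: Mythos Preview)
Your proposal is correct and essentially identical to the paper's own proof: both apply Lemma~\ref{lem:disc-conv-commute-gauss} with $d=1$ to obtain the bound $\tfrac{1}{4\sqrt{t}}$, and both note the threshold $t\ge 6.25\times 10^{8}$. Your additional self-contained alternative simply unpacks that lemma's proof in dimension one and is not needed.
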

\begin{proof}
Apply Lemma~\ref{lem:disc-conv-commute-gauss} with $d=1$:
\[
\dtv\!\big(\disc(X_1)+\disc(X_2),\,\disc(X_1+X_2)\big)
\le \frac{1}{4}\,\min\!\Big\{\tfrac{1}{\sigma_1},\tfrac{1}{\sigma_2}\Big\}
= \frac{1}{4\sqrt{t}}.
\]
Thus the bound is $\le 10^{-5}$ as soon as $\sqrt{t}\ge 25{,}000$, i.e.\ $t\ge 6.25\times10^{8}$.
\end{proof}

\begin{lemma}
\label{lem:lem:conv-disc-close-to-disc-conv_multi_var}
Let $X$ and $Y$ be $n$-dimensional independent random variables, where $X$ is multivariate Gaussian and $Y$ has first coordinate distributed as $\mathcal{N}(0,t)$ and all other coordinates identically $0$. Then, for $t$ large enough,
\[
\dtv\!\big(\disc(X)+\disc(Y),\ \disc(X+Y)\big) \le 0.00001,
\]
where $\disc(\cdot)$ denotes coordinate-wise discretization to the nearest integer.
\end{lemma}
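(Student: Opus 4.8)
The plan is to reduce this $n$-dimensional statement to the one-dimensional \Cref{lem:conv-disc-close-to-disc-conv} by a conditioning argument. First I would note that, since every coordinate of $Y$ except the first is identically $0$ and $\disc(0)=0$, we have $\disc(Y)=(\disc(Y_1),0,\dots,0)$, and also $(X+Y)_i=X_i$ for $i\ge 2$. Consequently the two vectors $\disc(X)+\disc(Y)$ and $\disc(X+Y)$ have \emph{identical} last $n-1$ coordinates, both equal to $(\disc(X_2),\dots,\disc(X_n))$, and differ only in the first coordinate, where one has $\disc(X_1)+\disc(Y_1)$ and the other has $\disc(X_1+Y_1)$. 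So the whole problem is one-dimensional in disguise.

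To make this precise I would condition on $V:=(X_2,\dots,X_n)$. Since $X$ is Gaussian, the conditional law of $X_1$ given $V=v$ is $\mathcal{N}(\mu(v),s^2)$ for a conditional variance $s^2$ that does not depend on $v$, and $Y_1\sim\mathcal{N}(0,t)$ is independent of $V$. Both random vectors are of the form $(\,\cdot\,,\disc(X_2),\dots,\disc(X_n))$ with the same last $n-1$ coordinates (which are $V$-measurable), so by the standard conditioning bound for total variation,
\[
\dtv\big(\disc(X)+\disc(Y),\,\disc(X+Y)\big)\ \le\ \E_{V}\Big[\dtv\big(\disc(X_1)+\disc(Y_1),\ \disc(X_1+Y_1)\ \big|\ V\big)\Big],
\]
where the right-hand side is a one-dimensional quantity for each realization of $V$.

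I would then bound the conditional term with \Cref{lem:conv-disc-close-to-disc-conv}, applied to the independent one-dimensional Gaussians $\mathcal{N}(\mu(V),s^2)$ and $\mathcal{N}(0,t)$: since $\max(s^2,t)\ge t$ is large, that lemma gives a conditional TV distance of at most $0.00001$ for every realization of $V$. (If the conditional variance $s^2$ is $0$, the inner problem is merely an integer shift of the one-dimensional Gaussian $\mathcal{N}(0,t)$, and the same bound follows directly from \Cref{lem:disc-conv-commute-gauss} with the nonsingular one-dimensional covariance $(t)$, using its singular-covariance footnote convention.) Taking expectation over $V$ of this uniform bound yields $\dtv(\disc(X)+\disc(Y),\disc(X+Y))\le 0.00001$, as desired.

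The step requiring the most care is the conditioning reduction in the second paragraph: one has to check that conditioning on $(X_2,\dots,X_n)$ really does collapse the $n$-dimensional TV distance to a one-dimensional one — because the remaining coordinates are shared and measurable with respect to the conditioning variable — and that the conditional law of $X_1$ is a genuine one-dimensional Gaussian (possibly degenerate) to which the previous lemmas apply. Once that is set up, the conclusion is an immediate application of the already-proven one-dimensional result.
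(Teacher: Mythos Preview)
Your proposal is correct and takes essentially the same approach as the paper: both condition on $X_{-1}=(X_2,\dots,X_n)$, observe that the last $n-1$ coordinates of the two discretized vectors coincide, and reduce to the one-dimensional statement, invoking the $d=1$ bound $\tfrac{1}{4\sqrt{t}}$ (equivalently, \Cref{lem:conv-disc-close-to-disc-conv}) uniformly in the conditioning. Your treatment is in fact slightly more careful than the paper's, since you explicitly note that the conditional law of $X_1$ given $V$ is Gaussian and handle the degenerate case $s^2=0$.
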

\begin{proof}
The idea is that $Y$ perturbs only the first coordinate. Let $X_{-1}$ denote the last $n\!-\!1$ coordinates of $X$. After discretization, the last $(n\!-\!1)$ coordinates of $\disc(X)+\disc(Y)$ equal $\disc(X_{-1})$, and those of $\disc(X+Y)$ equal $\disc(X_{-1})$ as well. Thus, conditioned on $X_{-1}$, the event $\{\disc(X)+\disc(Y)\in A\}$ depends only on the first coordinate, and likewise for $\{\disc(X+Y)\in A\}$. We therefore reduce to the one-dimensional case of Lemma~\ref{lem:disc-conv-commute-gauss} on the first coordinate and then average over $X_{-1}$. We formalize this below.

Write $X=(X_1,X_{-1})$ and $Y=(Y_1,0,\ldots,0)$ with $Y_1\sim\mathcal{N}(0,t)$ independent of $X$. For any $A\subseteq\mathbb{Z}^n$ and $u\in\mathbb{R}^{n-1}$, define
\[
A_u:=\{m_1\in\mathbb{Z}:\ (m_1,\disc(u))\in A\}.
\]
Condition on $X_{-1}$ and apply Lemma~\ref{lem:disc-conv-commute-gauss} in dimension $1$ (to the pair $(X_1,Y_1)$) to get
\[
\Big|\Pr\!\big(\disc(X_1)+\disc(Y_1)\in A_{X_{-1}}\mid X_{-1}\big)
-\Pr\!\big(\disc(X_1+Y_1)\in A_{X_{-1}}\mid X_{-1}\big)\Big|
\le \frac{1}{4\sqrt{t}}.
\]

In the above, we conditioned on $X_{-1}=u$, which freezes the last $n\!-\!1$ coordinates; thus $(X_1,Y_1)$ is one–dimensional with $Y_1\sim\mathcal{N}(0,t)$ independent of $X_1$. We then applied Lemma~\ref{lem:disc-conv-commute-gauss} in $d=1$, yielding a bound $\le 1/(4\sqrt{t})$ uniformly in $u$ since the right-hand side depends only on $t$. 

Averaging over $X_{-1}$ and taking the supremum over $A$ therefore gives
\[
\dtv\!\big(\disc(X)+\disc(Y),\ \disc(X+Y)\big)\ \le\ \frac{1}{4\sqrt{t}}\ \le\ 10^{-5}
\quad\text{for } t\ge 6.25\times 10^{8}.
\]
\end{proof}

\begin{lemma}
\label{lem:min-eig-linear-helper1}
Let $X$ be a sample from a distribution supported on $\{[0,0,0],\,e_1,\,e_2,\,e_3,\,2e_1\}$ with probabilities
$u_0,u_1,u_2,u_3,v$ such that $u_1,u_2,u_3,v>0$ and $u_0\ge 1-10^{-16}$ (all parameters independent of $n$).
Let $\Sigma_1=\mathrm{Cov}(X)$ and let $\Sigma_{\alpha n}$ be the covariance of the sum of $\alpha n$ i.i.d.\ copies of $X$.
Then there exist constants $c_\star,C_\star>0$ depending only on $(u_1,u_2,u_3,v)$ (hence independent of $n$) such that
\[
c_\star\,\alpha n \ \le\ \lambda_{\min}(\Sigma_{\alpha n})\ \le\ \lambda_{\max}(\Sigma_{\alpha n})\ \le\ C_\star\,\alpha n.
\]
In particular, $\lambda_{\min}(\Sigma_{\alpha n})=\Theta(n)$.
\end{lemma}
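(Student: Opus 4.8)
The plan is to reduce the statement to a spectral bound on the single–sample covariance $\Sigma_1$, using that covariances of independent summands add: when $\alpha n$ is a positive integer, the sum of $\alpha n$ i.i.d.\ copies of $X$ has covariance exactly $\Sigma_{\alpha n}=\alpha n\,\Sigma_1$, so it suffices to produce constants $c_\star,C_\star>0$ depending only on $(u_1,u_2,u_3,v)$ with $c_\star\le\lambda_{\min}(\Sigma_1)$ and $\lambda_{\max}(\Sigma_1)\le C_\star$, and then multiply through by $\alpha n$. First I would compute $\Sigma_1$ in closed form. Since $X$ takes the values $[0,0,0],e_1,e_2,e_3,2e_1$ with probabilities $u_0,u_1,u_2,u_3,v$, a direct calculation gives $\E[XX^\top]=D\defeq\diag(u_1+4v,\;u_2,\;u_3)$ and $\mu\defeq\E[X]=(u_1+2v,\;u_2,\;u_3)^\top$, so that $\Sigma_1=D-\mu\mu^\top$. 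The upper bound is then immediate from $\Sigma_1\preceq D$, which gives $\lambda_{\max}(\Sigma_1)\le\Tr(D)=u_1+u_2+u_3+4v\le 4=:C_\star$.

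For the lower bound I would establish positive definiteness of $\Sigma_1$ with a quantitative margin. All diagonal entries of $D$ are strictly positive (using $u_1,u_2,u_3,v>0$), so $D\succ0$ and Cauchy--Schwarz gives, for every $x$, $(x^\top\mu)^2=(x^\top D^{1/2}\,D^{-1/2}\mu)^2\le(x^\top D x)(\mu^\top D^{-1}\mu)$, hence
\[
x^\top\Sigma_1 x \;=\; x^\top D x-(x^\top\mu)^2 \;\ge\; \bigl(1-\mu^\top D^{-1}\mu\bigr)\,x^\top D x .
\]
Next I would bound $\mu^\top D^{-1}\mu=\frac{(u_1+2v)^2}{u_1+4v}+u_2+u_3\le(u_1+2v)+u_2+u_3\le 2(1-u_0)\le 2\cdot10^{-16}$, using $u_1+4v\ge u_1+2v$ and $u_0+u_1+u_2+u_3+v=1$ together with $u_0\ge 1-10^{-16}$; combined with $x^\top D x\ge\min(u_1,u_2,u_3)\,\|x\|_2^2$ (since $u_1+4v\ge u_1$), this yields $\lambda_{\min}(\Sigma_1)\ge(1-2\cdot10^{-16})\min(u_1,u_2,u_3)\ge\tfrac12\min(u_1,u_2,u_3,v)=:c_\star>0$. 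Multiplying the two bounds by $\alpha n$ completes the argument, and in particular $\lambda_{\min}(\Sigma_{\alpha n})=\Theta(n)$.

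I do not anticipate a genuine obstacle here. The only points requiring care are: keeping every estimate expressed purely in terms of $(u_1,u_2,u_3,v)$ — the hypothesis $u_0\ge 1-10^{-16}$ is precisely what forces the rank-one correction $\mu\mu^\top$ to be negligible relative to $D$, so no degeneracy can occur — and the harmless convention that $\alpha n$ is read as a positive integer, so that the additivity $\Sigma_{\alpha n}=\alpha n\,\Sigma_1$ holds exactly rather than merely up to $\Theta(1)$ factors.
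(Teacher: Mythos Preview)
Your proposal is correct and shares the same scaffolding as the paper: you compute $\Sigma_1=D-\mu\mu^\top$ with $D=\diag(u_1+4v,u_2,u_3)$ and $\mu=(u_1+2v,u_2,u_3)^\top$, get the upper bound from $\Sigma_1\preceq D$, and then scale by $\alpha n$. The one substantive difference is in the lower bound. The paper invokes the matrix determinant lemma to get $\det(\Sigma_1)=\det(D)\bigl(1-\mu^\top D^{-1}\mu\bigr)\ge(u_1+4v)u_2u_3(u_0-v)>0$ and then bounds $\lambda_{\min}(\Sigma_1)\ge\det(\Sigma_1)/\lambda_{\max}(\Sigma_1)^2$. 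You instead use Cauchy--Schwarz on the quadratic form to obtain the operator inequality $\Sigma_1\succeq\bigl(1-\mu^\top D^{-1}\mu\bigr)D$ directly, which immediately gives $\lambda_{\min}(\Sigma_1)\ge\bigl(1-\mu^\top D^{-1}\mu\bigr)\lambda_{\min}(D)$. Both routes hinge on the same scalar $1-\mu^\top D^{-1}\mu$ being bounded away from zero; your argument is a bit more direct and yields a cleaner constant, while the paper's determinant route is perhaps the more familiar textbook maneuver.
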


\begin{proof}
Write $\mu=\mathbb{E}[X]=(u_1+2v,\,u_2,\,u_3)$ and $D=\mathbb{E}[XX^\top]=\mathrm{diag}(u_1+4v,\,u_2,\,u_3)$,
so $\Sigma_1=D-\mu\mu^\top$.
By the matrix determinant lemma,
\[
\det(\Sigma_1)=\det(D)\bigl(1-\mu^\top D^{-1}\mu\bigr)
=(u_1+4v)u_2u_3\!\left(1-\frac{(u_1+2v)^2}{u_1+4v}-u_2-u_3\right)
\ge (u_1+4v)u_2u_3\,(u_0-v)=:c_{\det}>0,
\]
using $\frac{(u_1+2v)^2}{u_1+4v}\le u_1+2v$ and $u_0=1-(u_1+u_2+u_3+v)$.
Also $\Sigma_1\preceq D$, hence $\lambda_{\max}(\Sigma_1)\le \lambda_{\max}(D)=\max\{u_1+4v,u_2,u_3\}=:c_D<\infty$.
If the eigenvalues of $\Sigma_1$ are $\lambda_1\ge\lambda_2\ge\lambda_3>0$, then
\[
\lambda_{\min}(\Sigma_1)=\lambda_3\ \ge\ \frac{\det(\Sigma_1)}{\lambda_1\lambda_2}
\ \ge\ \frac{c_{\det}}{c_D^{\,2}}=:c_0>0.
\]
For the sum of $\alpha n$ i.i.d.\ draws, $\Sigma_{\alpha n}=\alpha n\,\Sigma_1$, so
$\lambda_{\min}(\Sigma_{\alpha n})=\alpha n\,\lambda_{\min}(\Sigma_1)\ge \alpha c_0 n$ and
$\lambda_{\max}(\Sigma_{\alpha n})\le \alpha n\,\lambda_{\max}(\Sigma_1)\le \alpha c_D n$.
Take $c_\star:=c_0$ and $C_\star:=c_D$ to complete the proof.
\end{proof}

\begin{lemma}
	\label{lem:different-t-gaussians-close}
	Let $X_1 \sim \mathcal{N}(\overline{\mu}_1, \overline{\Sigma}_1)$, $X_2 \sim \mathcal{N}(\overline{\mu}_2, \overline{\Sigma}_2)$ and $X_3 \sim \mathcal{N}(\overline{\mu}_3, \overline{\Sigma}_3)$ be $c$-dimensional Gaussians, for $c \le 10$, where
	\begin{align*}
		&\overline{\mu}_1 = \begin{bmatrix}
			2u_1 \\
			2u_2 \\
			\vdots \\
			2u_c
		\end{bmatrix},
		\qquad \overline{\Sigma}_1 = \begin{bmatrix}
			2u_1(1-2u_1) & -4u_1u_2 & \dots &-4u_1u_c \\
			-4u_1u_2 & 2u_2(1-2u_2) & \dots &-4u_2u_c \\
			\vdots & \dots \\
			-4u_1u_c & -4u_2u_c & \dots & 2u_c(1-2u_c)
		\end{bmatrix}
		\\
		&\overline{\mu}_2 = \begin{bmatrix}
			4v \\
			0 \\
			\vdots \\
			0
		\end{bmatrix},
		\qquad \overline{\Sigma}_2 = \begin{bmatrix}
			8v(1-2v) & 0 & \dots & 0 \\
			0 & 0 & \dots & 0 \\
			\vdots & \dots\\
			0 & 0 &\dots & 0
		\end{bmatrix}\\
		&\overline{\mu}_3 = \begin{bmatrix}
			u_1 + 2v \\
			u_2 \\
			\vdots \\
			u_c
		\end{bmatrix},
		\qquad \overline{\Sigma}_3 = \begin{bmatrix}
			u_1 + 4v - (u_1 + 2v)^2 & -(u_1+2v)u_2 & \dots &-(u_1+2v)u_c \\
			-(u_1+2v)u_2 & u_2(1-u_2) & \dots &-u_2u_c \\
			\vdots & \dots \\
			-(u_1+2v)u_c & -u_2u_c & \dots & u_c(1-u_c)
		\end{bmatrix},
	\end{align*}
	and furthermore, 1) $v + \sum_i u_i \le 10^{-16}$, and 2) $u_i, v$ are positive constants. Now, let $Z_1$ be the sum of $l$ independent draws of $X_3$. Let $Z_2$ be the sum of $\frac{l}{2}+n$ independent draws of $X_1$ and $\frac{l}{2}-n$ independent draws of $X_2$. As $l$ gets large, for any $n \in [-4\sqrt{l}, 4\sqrt{l}]$,
	\begin{align*}
		\dtv(Z_1, Z_2) \le 10^{-4}.
	\end{align*}
\end{lemma}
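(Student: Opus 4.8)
Sums of independent Gaussians are Gaussian, so both $Z_1$ and $Z_2$ are Gaussians and it suffices to control their means and covariances. We have $\E[Z_1]=\ell\,\overline{\Sigma}$-free data $\ell\,\overline{\mu}_3$, $\Cov(Z_1)=\ell\,\overline{\Sigma}_3$, while $\E[Z_2]=(\tfrac{\ell}{2}+n)\overline{\mu}_1+(\tfrac{\ell}{2}-n)\overline{\mu}_2$ and $\Cov(Z_2)=(\tfrac{\ell}{2}+n)\overline{\Sigma}_1+(\tfrac{\ell}{2}-n)\overline{\Sigma}_2$. Write $b:=(u_1-2v,\,u_2,\dots,u_c)$. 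A direct check on the given entrywise formulas yields the two identities $\overline{\mu}_3=\tfrac12(\overline{\mu}_1+\overline{\mu}_2)$ and $\overline{\Sigma}_3=\tfrac12(\overline{\Sigma}_1+\overline{\Sigma}_2)+bb^{\top}$ (so $b=\tfrac12(\overline{\mu}_1-\overline{\mu}_2)$); these are exactly the mean and covariance of the ``flip a fair coin, then sample'' distribution used in the proof of \Cref{lem:clt_new}. Consequently $\E[Z_1]-\E[Z_2]=-2n\,b$ and $\Cov(Z_1)-\Cov(Z_2)=\ell\,bb^{\top}-n(\overline{\Sigma}_1-\overline{\Sigma}_2)$.

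The plan is to bound $\dtv(Z_1,Z_2)$ by a triangle inequality through $\mathcal N(\E[Z_2],\Cov(Z_1))$: a mean-shift term with common covariance $\Cov(Z_1)$, plus a same-mean term comparing $\Cov(Z_1)$ and $\Cov(Z_2)$, to which \Cref{fact:multivariate_gaussian_tv} applies. (Equivalently, Pinsker plus the closed-form Gaussian $\KL$ works, with the same scalar quantities appearing.) Two ingredients drive everything. First, $\overline{\Sigma}_3$ is invertible with $\lambda_{\min}(\overline{\Sigma}_3)\ge c_\star$ for a constant $c_\star>0$ depending only on $u_1,\dots,u_c,v$; this is the matrix-determinant-lemma/Gershgorin argument of \Cref{lem:min-eig-linear-helper1} applied to $\overline{\Sigma}_3=D-\overline{\mu}_3\overline{\mu}_3^{\top}$ with $D=\diag(u_1+4v,u_2,\dots,u_c)$ (here $\overline{\mu}_3^{\top}D^{-1}\overline{\mu}_3\le\sum_i u_i+2v\le 2\cdot10^{-16}$, so $\det\overline{\Sigma}_3\ge\tfrac12\det D>0$, and $\lambda_{\max}(\overline{\Sigma}_3)\le\Tr D=O(1)$). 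Second---the key cancellation---the Mahalanobis quantity $b^{\top}\overline{\Sigma}_3^{-1}b$ is of order $\sum_i u_i+v$, hence at most $3\cdot10^{-16}$: expanding $\overline{\Sigma}_3^{-1}$ by Sherman--Morrison and using $(u_1\pm2v)^2\le(u_1+2v)(u_1+4v)$ gives $b^{\top}D^{-1}b$, $|b^{\top}D^{-1}\overline{\mu}_3|\le\sum_i u_i+2v$ and $1-\overline{\mu}_3^{\top}D^{-1}\overline{\mu}_3\ge\tfrac12$.

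With these, the assembly is routine. The mean-shift term is at most $\tfrac{1}{\sqrt{2\pi}}\big((\E[Z_1]-\E[Z_2])^{\top}\Cov(Z_1)^{-1}(\E[Z_1]-\E[Z_2])\big)^{1/2}=\tfrac{1}{\sqrt{2\pi}}\big(\tfrac{4n^2}{\ell}\,b^{\top}\overline{\Sigma}_3^{-1}b\big)^{1/2}$, which is $O(10^{-7})$ since $n^2\le16\ell$. For the covariance term the relevant matrix is $M:=\Cov(Z_1)^{-1/2}\Cov(Z_2)\Cov(Z_1)^{-1/2}$, and $M-I=-\overline{\Sigma}_3^{-1/2}bb^{\top}\overline{\Sigma}_3^{-1/2}+\tfrac n\ell\,\overline{\Sigma}_3^{-1/2}(\overline{\Sigma}_1-\overline{\Sigma}_2)\overline{\Sigma}_3^{-1/2}$ has spectral norm at most $b^{\top}\overline{\Sigma}_3^{-1}b+\tfrac{|n|}{\ell\,c_\star}\|\overline{\Sigma}_1-\overline{\Sigma}_2\|_{\mathrm{op}}\le 3\cdot10^{-16}+O\!\big(\tfrac{1}{c_\star\sqrt\ell}\big)$; taking $\ell$ large makes this $\le10^{-15}$, so every eigenvalue of $M$ lies within $10^{-15}$ of $1$ and \Cref{fact:multivariate_gaussian_tv} bounds the covariance term by $O(10^{-14})$. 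Adding the two gives $\dtv(Z_1,Z_2)\le10^{-4}$ for $\ell$ large. (One also notes $\Cov(Z_2)\succeq(\tfrac\ell2-4\sqrt\ell)\overline{\Sigma}_1\succ0$, so $Z_2$ is non-degenerate and $M$ is well-defined and positive definite.)

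The main obstacle is that $u_1,\dots,u_c,v$ are only assumed to be \emph{some} positive constants below $10^{-16}$, so $\lambda_{\min}(\overline{\Sigma}_3)$ can be as small as a product of the $u_i$'s (exponentially small in $c$), which makes the naive estimates $\|\E[Z_1]-\E[Z_2]\|^2/\lambda_{\min}(\Cov(Z_1))$ and $\|\Cov(Z_1)-\Cov(Z_2)\|_{\mathrm{op}}/\lambda_{\min}(\Cov(Z_1))$ blow up. The resolution is to pass to $\Cov(Z_1)^{-1/2}$-conjugated quantities and observe, via Sherman--Morrison, the exact cancellation showing that the mean shift $b$ and the rank-one covariance bump $\ell bb^{\top}$ both enter only through $b^{\top}\overline{\Sigma}_3^{-1}b=O(\sum_i u_i+v)$ rather than through $\|b\|^2/\lambda_{\min}$; the only term that genuinely pays a $1/c_\star$ factor is the lower-order perturbation $n(\overline{\Sigma}_1-\overline{\Sigma}_2)$, which is absorbed by $|n|\le4\sqrt\ell=o(\ell)$ once $\ell$ is large enough (the threshold depending on the fixed constants).
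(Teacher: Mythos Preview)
Your proof is correct but takes a genuinely different route from the paper's. The paper inserts two intermediate Gaussians $\overline{Z}_1,\overline{Z}_2$ that share the \emph{diagonal} covariance $\diag(\Sigma_1)$ (where $\Sigma_1=\ell\,\overline{\Sigma}_3$), then uses Gershgorin on $\diag(\Sigma_1)^{-1/2}\Sigma_i\,\diag(\Sigma_1)^{-1/2}$ to pin the eigenvalues near~$1$, and finally bounds $\dtv(\overline{Z}_1,\overline{Z}_2)$ coordinate\-wise via \Cref{fact:univariate_gaussian_tv}. You instead exploit the exact mixture identity $\overline{\Sigma}_3=\tfrac12(\overline{\Sigma}_1+\overline{\Sigma}_2)+bb^{\top}$ (with $b=\tfrac12(\overline{\mu}_1-\overline{\mu}_2)$), which makes the mean gap $-2nb$ and the covariance gap $\ell\,bb^{\top}-n(\overline{\Sigma}_1-\overline{\Sigma}_2)$ transparent; the Sherman--Morrison computation $b^{\top}\overline{\Sigma}_3^{-1}b\le 3\cdot10^{-16}$ then handles both the mean-shift and the rank-one part of $M-I$ in one stroke. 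This is arguably cleaner conceptually, since it explains \emph{why} the cancellation happens (the coin-flip construction from \Cref{lem:clt_new}). The price is that your residual term $\tfrac{|n|}{\ell c_\star}\|\overline{\Sigma}_1-\overline{\Sigma}_2\|_{\mathrm{op}}$ carries $1/c_\star=1/\lambda_{\min}(\overline{\Sigma}_3)$, so the ``large enough $\ell$'' threshold depends on the fixed constants $u_i,v$ through $c_\star$; the paper's Gershgorin route avoids this and gets an $\ell$-threshold independent of $c_\star$. Since the lemma only asserts the bound for sufficiently large $\ell$ with $u_i,v$ fixed, this is harmless, but it is worth noting. One cosmetic point: you do not actually need the intermediate $\mathcal N(\E[Z_2],\Cov(Z_1))$ and a triangle inequality---\Cref{fact:multivariate_gaussian_tv} already bounds $\dtv(Z_1,Z_2)$ directly by the sum of your mean and covariance terms.
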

\begin{proof}
	First, note that both $Z_1$ and $Z_2$ are also Gaussians, because a linear combination of independent Gaussians is a Gaussian. Let $\mu_1, \Sigma_1$ and $\mu_2, \Sigma_2$ respectively be the mean and covariance of $Z_1$ and $Z_2$. Then, 
	\begin{align*}
		\mu_1 &= l \ \overline{\mu}_3,\\
		\Sigma_1 &= l \ \overline{\Sigma}_3\\
		\mu_2  &= (l/2 + n) \overline{\mu}_1 + (l/2 - n) \overline{\mu}_2\\
		\Sigma_2 &= (l/2 + n) \overline{\Sigma}_1 + (l/2 - n) \overline{\Sigma}_2
	\end{align*}
	Let $\overline{Z}_1$ be the Gaussian with mean $\mu_1$ and covariance $\diag(\Sigma_1)$, where $\diag(\Sigma_1)$ is simply the matrix $\Sigma_1$ with all non-diagonal entries zeroed out. We will first show that $\dtv(Z_1, \overline{Z_1}) \le 10^{-13}$.
	
	Consider the matrix $\diag(\Sigma_1)^{-1/2} \cdot \Sigma_1 \cdot \diag(\Sigma_1)^{-1/2}$. Let us denote this matrix by $M_1$ and it is equal to
	\begin{align*}
		\begin{bmatrix}
			1 & \frac{-(u_1+2v)u_2}{\sqrt{v_1v_2}} & \dots &  \frac{-(u_1+2v)u_c}{\sqrt{v_1v_c}} \\
			\frac{-(u_1+2v)u_2}{\sqrt{v_1v_2}} & 1 &\dots & \frac{-u_2u_c}{\sqrt{v_2v_c}}  \\
			\vdots & \dots \\
			\frac{-(u_1+2v)u_c}{\sqrt{v_1v_c}} & \frac{-u_2u_c}{\sqrt{v_2v_c}} & \dots & 1
		\end{bmatrix},
	\end{align*}
	where $v_1 =  u_1 + 4v - (u_1 + 2v)^2$ and $v_i = u_i(1-u_i)$ for $i \ge 2$. Note that because each $u_i \le 10^{-16}$ and $v \le 10^{-16}$,
	\begin{align*}
		\frac{u_1+2v}{\sqrt{v_1}} &= \frac{u_1+2v}{\sqrt{u_1(1 - u_1) + 4v(1-v) - 4u_1v}} \le \frac{u_1}{\sqrt{u_1(1-u_1)}} + \frac{2v}{\sqrt{4v(1-v)}}  \le 2(\sqrt{u_1} + \sqrt{v})
	\end{align*}
	and also, for $i = 2,\dots,c$,
	\begin{align*}
		\frac{u_i}{\sqrt{v_i}} &= \frac{u_i}{\sqrt{u_i(1-u_i)}} \le 2\sqrt{u_i}.
	\end{align*}
	Thus, the sum of absolute values of off-diagonal elements in any row $i$ of $M_1$ is at most
	\begin{align*}
		4 \sum_{j \neq i}\sqrt{u_i u_j} + 4 \sum_{j }\sqrt{v u_j} &\le 8 \cdot c \cdot 10^{-16} \le 10^{-14}.
	\end{align*}
	Then, by the Gershgorin circle theorem (Fact \ref{fact:gershgorin}), all the eigenvalues of this matrix are contained in $[1-{10^{-14}}, 1+10^{-14}]$. Using corollary \ref{cor:multivariate_gaussian_tv_same_mean_diag_cov} that bounds the TV distance between multivariate Gaussian distributions, we get
	\begin{align*}
		\dtv(Z_1, \overline{Z_1}) &\le \frac{10}{\sqrt{2\pi e}}\cdot \left(\max\left(\frac{1}{1-10^{-14}}, 1+10^{-14}\right)-1\right) \le 10^{-13}.
	\end{align*}
	Now, let $\overline{Z}_2$ be the Gaussian with mean $\mu_2$ and covariance $\diag(\Sigma_1)$. We will next show that $\dtv(Z_2, \overline{Z}_2) \le 10^{-12}$.
	
	Again, consider the matrix $\diag(\Sigma_1)^{-1/2} \cdot \Sigma_2 \cdot \diag(\Sigma_1)^{-1/2}$. Let us denote this matrix by $M_2$ and this is equal to
	\begin{align*}
		\begin{bmatrix}
			\frac{\left(1 + \frac{2n}{l}\right) \cdot 2u_1(1-2u_1) + \left(1 - \frac{2n}{l}\right) \cdot 8v(1-2v)}{2u_1 + 8v - 2(u_1 + 2v)^2}& -\left(1 +\frac{2n}{l}\right)\frac{2u_1u_2}{\sqrt{v_1 v_2}} &\dots& -\left(1 +\frac{2n}{l}\right)\frac{2u_1u_c}{\sqrt{v_1 v_c}} \\
			-\left(1 +\frac{2n}{l}\right)\frac{2u_1u_2}{\sqrt{v_1 v_2}} & \left(1 + \frac{2n}{l}\right)\left(\frac{1 - 2u_2}{1 - u_2} \right) &\dots& -\left(1 +\frac{2n}{l}\right)\frac{2u_2u_c}{\sqrt{v_2 v_c}}  \\
			\vdots & \dots \\
			-\left(1 +\frac{2n}{l}\right)\frac{2u_1u_c}{\sqrt{v_1 v_c}} & -\left(1 +\frac{2n}{l}\right)\frac{2u_2u_c}{\sqrt{v_2 v_c}} &\dots& \left(1 + \frac{2n}{l}\right)\left(\frac{1 - 2u_c}{1 - u_c} \right)
		\end{bmatrix},
	\end{align*}
	where $v_1,\dots,v_c$ are the same as above. Again, using a similar calculation as above, the sum of absolute values of off-diagonal entries in any row $i$ is at most
	\begin{align*}
		2\left|1+\frac{2n}{l}\right| \left( 4 \sum_{j \neq i}\sqrt{u_i u_j} + 4 \sum_{j }\sqrt{v u_j}
		\right)  &\le  2\left|1+\frac{2n}{l}\right| \cdot 10^{-14}  \le 10^{-13},
	\end{align*}
	where the last inequality holds for large enough $l$. 

{Diagonals of $M_2$:}
For $i=1$, write
\[
(M_2)_{11}
=\frac{A+\frac{2n}{\ell}B}{D},\quad
A:=2u_1(1-2u_1)+8v(1-2v),\;
B:=2u_1(1-2u_1)-8v(1-2v),\;
D:=2u_1+8v-2(u_1+2v)^2.
\]
When $n=0$, a direct expansion gives $A=D-2(u_1-2v)^2$, hence
\[
\big|(M_2)_{11}-1\big|=\frac{2(u_1-2v)^2}{D}\le \frac{2(u_1+2v)^2}{2u_1+8v-2(u_1+2v)^2}
\le 12\cdot 10^{-16}\ll 10^{-13},
\]
using $u_1,v\le 10^{-16}$. For the $n$–dependent term,
\[
\bigg|\frac{2n}{\ell}\frac{B}{D}\bigg|
\;\le\;\frac{2|n|}{\ell}\cdot\frac{2u_1+8v}{2u_1+8v-2(u_1+2v)^2}
\;\le\;\frac{16}{\sqrt{\ell}},
\]
since $|n|\le 4\sqrt{\ell}$ and the ratio is $\le 2$ for these parameters; thus $(M_2)_{11}\in[1-10^{-13},\,1+10^{-13}]$ for $\ell$ large enough.
For $i\ge 2$,
\[
(M_2)_{ii}=\Big(1+\tfrac{2n}{\ell}\Big)\frac{1-2u_i}{1-u_i},
\quad\text{so}\quad
\big|(M_2)_{ii}-1\big|\le \frac{8}{\sqrt{\ell}}+\frac{u_i}{1-u_i}
\le \frac{8}{\sqrt{\ell}}+2\cdot 10^{-16}.
\]
Hence, for sufficiently large $\ell$, all diagonal entries of $M_2$ lie in $[1-10^{-13},\,1+10^{-13}]$.

	Thus by the Gershgorin circle theorem (Fact \ref{fact:gershgorin}), all the eigenvalues of this matrix $M_2$ are contained in $[1-2 \cdot 10^{-13}, 1+2 \cdot 10^{-13}]$. Again using corollary \ref{cor:multivariate_gaussian_tv_same_mean_diag_cov} that bounds the TV distance between multivariate Gaussian distributions, we get
	\begin{align*}
		\dtv(Z_2, \overline{Z_2}) &\le \frac{10}{\sqrt{2\pi e}}\cdot \left(\max\left(\frac{1}{1-2 \cdot 10^{-13}}, 1+2 \cdot 10^{-13}\right)-1\right) \le 10^{-12}.
	\end{align*}

	Finally, we bound $\dtv(\overline{Z}_1, \overline{Z}_2)$. Note that both $\overline{Z}_1$ and $\overline{Z}_2$ are Gaussians with the same diagonal covariance $\diag(\Sigma_1)$, but their means are $\mu_1$ and $\mu_2$ respectively. Since the covariance is diagonal, we can upper-bound the total variation distance between the product distributions by the sum of the total variation distances between the coordinate-wise one-dimensional Gaussians. Namely, denoting by $\sigma^2_1, \dots, \sigma^2_c$ the $c$ diagonal entries in $\diag(\Sigma_1)$,
	\begin{align*}
		\dtv(\overline{Z}_1, \overline{Z}_2) &\le \dtv(\mathcal{N}(\mu_{11}, \sigma^2_{1}), \mathcal{N}(\mu_{21}, \sigma^2_{1})) + \dots + \dtv(\mathcal{N}(\mu_{1c}, \sigma^2_{c}), \mathcal{N}(\mu_{2c}, \sigma^2_{c})) \\
		&= \dtv\left(\mathcal{N}\left(\frac{\mu_{11}}{\sigma_1}, 1\right), \mathcal{N}\left(\frac{\mu_{21}}{\sigma_1}, 1\right)\right) + \dots + \dtv\left(\mathcal{N}\left(\frac{\mu_{1c}}{\sigma_c}, 1\right), \mathcal{N}\left(\frac{\mu_{2c}}{\sigma_c}, 1\right)\right) %
	\end{align*}
	where in the last equality, we used the property that total variation distance is invariant to affine transformations.
	For each of these, we can use the formula for bounding the total variation distance between one-dimensional Gaussians given in  Fact \ref{fact:univariate_gaussian_tv} to get
	\begin{align*}
		\dtv(\overline{Z}_1, \overline{Z}_2) &\le \frac{1}{\sqrt{2\pi}} \left(\frac{|\mu_{11}-\mu_{21}|}{\sigma_1} + \dots + \frac{|\mu_{1c}-\mu_{2c}|}{\sigma_c} %
		\right) \\
		&= \frac{|n|}{\sqrt{l\pi}}\left(\frac{|2u_1-4v|}{\sqrt{u_1(1-2u_1)+4v(1-v) - 4u_1v}} + \dots + \frac{2u_c}{\sqrt{u_c(1-u_c})} %
		\right) \\
		&\le \frac{4|n|}{\sqrt{l\pi}}\left(\sqrt{v} + \sqrt{u_1} +\dots+ \sqrt{u_c} \right) \qquad \text{(using $u_i$ and $v$ at most $10^{-16}$)}\\
		&\le \frac{16}{\sqrt{\pi}}\left(\sqrt{v} + \sqrt{u_1} +\dots+ \sqrt{u_c} \right) \\
		&\le \frac{16}{\sqrt{\pi}} \cdot 11 \cdot 10^{-8} \\
		&\le 10^{-5}.
	\end{align*}
	Putting everything together, we get using the triangle inequality that
	\begin{align*}
		\dtv(Z_1, Z_2) &\le \dtv(Z_1, \overline{Z}_1) + \dtv(Z_2, \overline{Z}_2) + \dtv(\overline{Z}_1, \overline{Z}_2) \\
		&\le 10^{-13} + 10^{-12} + 10^{-5} \\
		&\le 10^{-4}.
	\end{align*}
\end{proof}

\subsection{Other useful facts}

\begin{fact}[Gershgorin circle theorem]\citep[Theorem ~6.1.1]{horn2012matrix}
\label{fact:gershgorin}
Let $A=(a_{ij})\in\mathbb{C}^{d\times d}$ and set
$R_i:=\sum_{j\neq i} |a_{ij}|$. Then all eigenvalues of $A$
lie in
\[
\bigcup_{i=1}^d \Big\{\, z\in\mathbb{C}\ :\ |z-a_{ii}|\le R_i \,\Big\}.
\]
In particular, if $A\in\mathbb{R}^{d\times d}$ is symmetric, all its eigenvalues are real and lie in
\[
 \bigcup_{i=1}^d \,[\,a_{ii}-R_i,\ a_{ii}+R_i\,].
\]
\end{fact}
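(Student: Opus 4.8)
The plan is to derive the disc containment directly from the eigenvalue equation, using the classical trick of looking at the coordinate of an eigenvector with maximum modulus. Concretely, I would fix any eigenvalue $\lambda\in\mathbb{C}$ of $A$ and a corresponding eigenvector $x\neq 0$, then choose an index $i$ with $|x_i|=\max_{1\le j\le d}|x_j|$, noting that $|x_i|>0$ because $x$ is nonzero. Reading off the $i$-th coordinate of $Ax=\lambda x$ and separating the diagonal entry gives $(\lambda-a_{ii})x_i=\sum_{j\neq i}a_{ij}x_j$.

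From here the estimate is immediate: taking moduli and applying the triangle inequality together with $|x_j|\le|x_i|$ for every $j$,
\[
|\lambda-a_{ii}|\,|x_i|\;\le\;\sum_{j\neq i}|a_{ij}|\,|x_j|\;\le\;|x_i|\sum_{j\neq i}|a_{ij}|\;=\;|x_i|\,R_i,
\]
and dividing through by $|x_i|>0$ yields $|\lambda-a_{ii}|\le R_i$. Hence $\lambda$ lies in the $i$-th Gershgorin disc, and a fortiori in the union $\bigcup_{i=1}^d\{z:|z-a_{ii}|\le R_i\}$; since $\lambda$ was an arbitrary eigenvalue, this proves the first claim. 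For the symmetric real case, I would additionally invoke the spectral theorem to know every eigenvalue $\lambda$ is real; then $|\lambda-a_{ii}|\le R_i$ with all three quantities real is exactly $\lambda\in[a_{ii}-R_i,a_{ii}+R_i]$, and taking the union over $i$ gives the interval statement.

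There is really no hard step here — the entire argument is the choice of the maximal-modulus coordinate followed by a one-line triangle inequality, so no genuine obstacle arises. The only point that warrants a moment's care is verifying that this distinguished coordinate is strictly nonzero (so that the final division is legitimate), which holds simply because an eigenvector is by definition not the zero vector; the inequality $|x_j|\le|x_i|$ then handles the indices $j$ with $x_j=0$ automatically, since such terms contribute nothing to the sum.
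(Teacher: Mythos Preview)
Your argument is the standard, correct proof of the Gershgorin circle theorem. The paper does not actually prove this statement --- it is listed as a ``Fact'' with a citation to Horn and Johnson's \emph{Matrix Analysis} and invoked as a black box --- so there is nothing to compare against, and your proposal stands on its own.
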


\begin{fact}[{\citep[Fact ~29]{valiant2011estimating}}]
\label{fact:univariate_gaussian_tv}
Letting $\mathcal{N}(\mu,\sigma^2)$ denote the univariate Gaussian distribution,
\[
\mathrm{d}_{\mathrm{TV}}\!\big(\mathcal{N}(\mu,1),\ \mathcal{N}(\mu+\alpha,1)\big)\ \le\ \frac{|\alpha|}{\sqrt{2\pi}}.
\]
\end{fact}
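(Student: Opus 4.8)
The plan is to reduce the claim to a one–dimensional Gaussian integral and then bound that integral crudely by interval length times the maximum of the density. First I would invoke translation invariance of total variation distance to assume $\mu=0$, and then apply the bijection $x\mapsto -x$ (which fixes $\mathcal{N}(0,1)$ and sends $\mathcal{N}(\alpha,1)$ to $\mathcal{N}(-\alpha,1)$) to reduce to the case $\alpha\ge 0$; so it suffices to show $\dtv\big(\mathcal{N}(0,1),\mathcal{N}(\alpha,1)\big)\le \alpha/\sqrt{2\pi}$ for $\alpha\ge 0$. Let $\phi$ and $\Phi$ denote the standard normal density and CDF.

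Next I would identify the set on which one density dominates. The likelihood ratio $\phi(x-\alpha)/\phi(x)=e^{\alpha x-\alpha^2/2}$ is strictly increasing in $x$, so $\{x:\phi(x-\alpha)>\phi(x)\}$ is exactly a half-line $(x^\star,\infty)$, and by symmetry $x^\star=\alpha/2$. Consequently,
$$\dtv\big(\mathcal{N}(0,1),\mathcal{N}(\alpha,1)\big)=\int_{\alpha/2}^{\infty}\big(\phi(x-\alpha)-\phi(x)\big)\,dx=\Phi(\alpha/2)-\Phi(-\alpha/2)=\int_{-\alpha/2}^{\alpha/2}\phi(t)\,dt.$$
Bounding the last integral by the length of the interval times $\max_t\phi(t)=\phi(0)=1/\sqrt{2\pi}$ gives $\int_{-\alpha/2}^{\alpha/2}\phi(t)\,dt\le \alpha/\sqrt{2\pi}$, which together with the reductions above yields $\dtv\big(\mathcal{N}(\mu,1),\mathcal{N}(\mu+\alpha,1)\big)\le|\alpha|/\sqrt{2\pi}$.

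There is essentially no obstacle: each step is a one–line computation. The only point needing care is the claim that the two densities cross exactly once, so that the optimal distinguishing event is precisely the half-line $(\alpha/2,\infty)$; this is immediate from strict monotonicity of the likelihood ratio above. If one prefers to sidestep even that, an alternative is to write $\phi(x-\alpha)-\phi(x)=-\int_0^\alpha\phi'(x-s)\,ds$, integrate in $x$ and apply Fubini to obtain $\int_{\R}|\phi(x-\alpha)-\phi(x)|\,dx\le\alpha\,\|\phi'\|_{L^1}=\alpha\cdot 2\phi(0)$, and then use $\dtv=\tfrac12\|\cdot\|_{L^1}$ to reach the same bound.
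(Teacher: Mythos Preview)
Your argument is correct. The paper does not actually prove this fact; it merely cites it from \cite{valiant2011estimating} and uses it as a black box, so there is no ``paper's proof'' to compare against. Your computation --- identifying the optimal half-line via monotonicity of the likelihood ratio, reducing to $\int_{-\alpha/2}^{\alpha/2}\phi(t)\,dt$, and bounding by $\alpha\cdot\phi(0)$ --- is the standard derivation, and the alternative route through $\|\phi'\|_{L^1}$ is equally valid (and indeed mirrors the shift-$L^1$ technique the paper uses elsewhere in Lemma~\ref{lem:shift-L1}).
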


\begin{fact}[{\citep[Fact ~31]{valiant2011estimating}}]
\label{fact:multivariate_gaussian_tv}
Let $G_1=\mathcal{N}(\mu_1,\Sigma_1)$ and $G_2=\mathcal{N}(\mu_2,\Sigma_2)$ in $\mathbb{R}^m$, and let $\Sigma_1=TT^\top$ be the Cholesky decomposition of $\Sigma_1$. 
Let $\lambda_i$ denote the ${i}^{\text{th}}$ the eigenvalue of $T^{-1}\Sigma_2 T^{-\top}$, then
\[
\mathrm{d}_{\mathrm{TV}}(G_1,G_2)\ \le\ 
\frac{1}{\sqrt{2\pi e}}\sum_{i=1}^m\Big(\max\{\lambda_i,\lambda_i^{-1}\}-1\Big)
\;+\; \frac{\big\|T^{-1}(\mu_1-\mu_2)\big\|}{\sqrt{2\pi}}.
\]
\end{fact}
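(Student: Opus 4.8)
The plan is to reduce the $c$-dimensional comparison to three elementary pieces by inserting two auxiliary Gaussians whose covariance is the \emph{diagonal} part of $\Sigma_1:=\mathrm{Cov}(Z_1)$. Since sums of independent Gaussians are Gaussian, $Z_1\sim\mathcal N(\mu_1,\Sigma_1)$ with $\mu_1=\ell\,\overline\mu_3$, $\Sigma_1=\ell\,\overline\Sigma_3$, and $Z_2\sim\mathcal N(\mu_2,\Sigma_2)$ with $\mu_2=(\tfrac\ell2+n)\overline\mu_1+(\tfrac\ell2-n)\overline\mu_2$ and $\Sigma_2=(\tfrac\ell2+n)\overline\Sigma_1+(\tfrac\ell2-n)\overline\Sigma_2$. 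The two algebraic facts that drive the argument are $\overline\mu_3=\tfrac12(\overline\mu_1+\overline\mu_2)$ (immediate from the displayed vectors) and $\overline\Sigma_3=\tfrac12(\overline\Sigma_1+\overline\Sigma_2)+\Delta$, where a one-line entrywise computation gives $\Delta_{11}=(u_1-2v)^2$, $\Delta_{1j}=(u_1-2v)u_j$, $\Delta_{ij}=u_iu_j$ ($i\neq j$, $i,j\ge2$), $\Delta_{ii}=u_i^2$ ($i\ge2$), so $|\Delta_{ij}|=O((u_{\max}+v)^2)\le 10^{-32}$; one also records that the off-diagonal entries of $\overline\Sigma_3$ are $O(u_iu_j)$ while its diagonal entries $v_1=u_1+4v-(u_1+2v)^2$ and $v_i=u_i(1-u_i)$ ($i\ge2$) are positive constants of order $u_i$. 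Hence $\mu_2-\mu_1=n(\overline\mu_1-\overline\mu_2)$, and $\Sigma_2-\Sigma_1=n(\overline\Sigma_1-\overline\Sigma_2)-\ell\Delta$; crucially, after dividing by the matching diagonal entry of $\Sigma_1$ (which is $\Theta(\ell u_i)$) both $\ell\Delta$ and $n(\overline\Sigma_1-\overline\Sigma_2)$ contribute only $O(u_{\max})+O(|n|/\ell)=O(10^{-16})+O(1/\sqrt\ell)$, using $|n|\le4\sqrt\ell$, so no factor of $\ell$ survives.

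Next I would set $S=\mathrm{diag}(\Sigma_1)=\ell\,\mathrm{diag}(v_1,\dots,v_c)$ and define $\overline Z_1=\mathcal N(\mu_1,S)$ and $\overline Z_2=\mathcal N(\mu_2,S)$, bounding $\dtv(Z_j,\overline Z_j)$ via Fact~\ref{fact:multivariate_gaussian_tv} after the TV-preserving rescaling by $S^{-1/2}$. Then $S^{-1/2}\Sigma_1S^{-1/2}$ is the correlation matrix of $\Sigma_1$: ones on the diagonal and off-diagonals $(\overline\Sigma_3)_{ij}/\sqrt{v_iv_j}=O(\sqrt{u_iu_j})=O(10^{-16})$; and $S^{-1/2}\Sigma_2S^{-1/2}$ has diagonal entries $(\Sigma_2)_{ii}/(\Sigma_1)_{ii}=1+O(10^{-16})+O(1/\sqrt\ell)$ and off-diagonals $(\Sigma_2)_{ij}/\sqrt{(\Sigma_1)_{ii}(\Sigma_1)_{jj}}=O(10^{-16})$ (here $(\overline\Sigma_2)_{ij}=0$ off the first row/column, so only $\overline\Sigma_1$ contributes). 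By the Gershgorin circle theorem (Fact~\ref{fact:gershgorin}), all eigenvalues of both normalized matrices lie in $[1-\varepsilon,1+\varepsilon]$ with $\varepsilon=O(10^{-16})+O(1/\sqrt\ell)$; since the rescaled comparison covariance $S^{-1/2}SS^{-1/2}=I$ has all eigenvalues $1$, Fact~\ref{fact:multivariate_gaussian_tv} (the means being equal in each comparison, so only the eigenvalue term is present) yields $\dtv(Z_j,\overline Z_j)\le \tfrac{c}{\sqrt{2\pi e}}\cdot\tfrac{\varepsilon}{1-\varepsilon}$, which is far below $10^{-5}$ for $\ell$ large.

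Finally, $\overline Z_1$ and $\overline Z_2$ share the \emph{diagonal} covariance $S$, so both distributions factorize across coordinates and $\dtv(\overline Z_1,\overline Z_2)\le\sum_{i=1}^c\dtv\!\big(\mathcal N(\mu_{1,i}/\sqrt{S_{ii}},1),\mathcal N(\mu_{2,i}/\sqrt{S_{ii}},1)\big)$ after an affine normalization; Fact~\ref{fact:univariate_gaussian_tv} bounds the $i$-th term by $|\mu_{1,i}-\mu_{2,i}|/(\sqrt{2\pi}\,\sqrt{S_{ii}})=|n|\,|(\overline\mu_1-\overline\mu_2)_i|/(\sqrt{2\pi\ell v_i})$, and since $(\overline\mu_1-\overline\mu_2)_1=2u_1-4v$, $(\overline\mu_1-\overline\mu_2)_i=2u_i$, and $v_i=\Theta(u_i)$, this ratio is $O(|n|\sqrt{u_i}/\sqrt\ell)=O(\sqrt{u_i})\le O(10^{-8})$ using $|n|\le4\sqrt\ell$; summing over $c\le10$ coordinates gives $\dtv(\overline Z_1,\overline Z_2)=O(10^{-7})$. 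A triangle inequality $\dtv(Z_1,Z_2)\le\dtv(Z_1,\overline Z_1)+\dtv(\overline Z_1,\overline Z_2)+\dtv(\overline Z_2,Z_2)=O(1/\sqrt\ell)+O(10^{-7})$ then gives $\le10^{-4}$ once $\ell$ is large enough. The step I expect to demand the most care is the normalization bookkeeping of the previous paragraph: one must verify that dividing $\Sigma_2-\Sigma_1$ by $\mathrm{diag}(\Sigma_1)$ leaves only terms that are powers of $u_{\max}$ (controlled by $v+\sum_iu_i\le10^{-16}$) or powers of $n/\ell$ (controlled by $|n|\le4\sqrt\ell$), with nothing scaling like $\ell$ — this is exactly where the identity $\overline\Sigma_3=\tfrac12(\overline\Sigma_1+\overline\Sigma_2)+\Delta$ with $\Delta$ quadratically small is essential; as a minor point one also notes $\ell/2\pm n>0$ (indeed $=\Theta(\ell)$) for $\ell$ large so the construction is well defined.
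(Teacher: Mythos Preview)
Your proposal does not prove the stated Fact~\ref{fact:multivariate_gaussian_tv} at all; that Fact is a cited result from \cite{valiant2011estimating} and carries no proof in the paper. What you have written is instead a proof of Lemma~\ref{lem:different-t-gaussians-close} (the statement that $\dtv(Z_1,Z_2)\le 10^{-4}$ for the specific Gaussians built from $\overline\mu_1,\overline\mu_2,\overline\mu_3,\overline\Sigma_1,\overline\Sigma_2,\overline\Sigma_3$), which \emph{uses} Fact~\ref{fact:multivariate_gaussian_tv} as a black-box tool. So as a proof of the displayed statement, the proposal is simply off target.

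If the intended target was Lemma~\ref{lem:different-t-gaussians-close}, then your approach is essentially the same as the paper's. Both arguments introduce the identical auxiliary Gaussians $\overline Z_1=\mathcal N(\mu_1,\diag(\Sigma_1))$ and $\overline Z_2=\mathcal N(\mu_2,\diag(\Sigma_1))$; both bound $\dtv(Z_j,\overline Z_j)$ by normalizing with $\diag(\Sigma_1)^{-1/2}$, applying Gershgorin to locate the eigenvalues near $1$, and invoking the multivariate-Gaussian TV bound (Fact~\ref{fact:multivariate_gaussian_tv} / Corollary~\ref{cor:multivariate_gaussian_tv_same_mean_diag_cov}); both then bound $\dtv(\overline Z_1,\overline Z_2)$ coordinatewise using Fact~\ref{fact:univariate_gaussian_tv} and finish with the triangle inequality. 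Your explicit identity $\overline\Sigma_3=\tfrac12(\overline\Sigma_1+\overline\Sigma_2)+\Delta$ with $\Delta$ quadratically small is a nice organizing device, but the paper accomplishes the same bookkeeping by direct entrywise computation of $M_1$ and $M_2$; the content and the resulting numerical bounds match.
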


We use the following corollary of this fact:
\begin{corollary}
\label{cor:multivariate_gaussian_tv_same_mean_diag_cov}
Let $G_1=\mathcal{N}(\mu,\Sigma_1)$ and $G_2=\mathcal{N}(\mu,\Sigma_2)$ in $\mathbb{R}^m$, and suppose $\Sigma_1$ is diagonal with positive entries.
Then, with $T=\Sigma_1^{1/2}$ (so $T^{-1}\Sigma_2 T^{-\top}=\Sigma_1^{-1/2}\Sigma_2\Sigma_1^{-1/2}$), we get
\[
\mathrm{d}_{\mathrm{TV}}(G_1,G_2)
\;\le\; \frac{1}{\sqrt{2\pi e}}\sum_{i=1}^m\Big(\max\{\lambda_i,\lambda_i^{-1}\}-1\Big),
\]
where $\lambda_1,\ldots,\lambda_m$ are the eigenvalues of $\Sigma_1^{-1/2}\Sigma_2\Sigma_1^{-1/2}$.
\end{corollary}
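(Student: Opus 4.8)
The plan is to obtain this corollary as an immediate specialization of Fact~\ref{fact:multivariate_gaussian_tv}. Two observations are needed: (i) identify the Cholesky factor of the diagonal matrix $\Sigma_1$, and (ii) note that the mean-discrepancy term in Fact~\ref{fact:multivariate_gaussian_tv} vanishes because $G_1$ and $G_2$ share the mean $\mu$.

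For (i), I would argue that since $\Sigma_1$ is diagonal with strictly positive entries, the diagonal matrix $T := \Sigma_1^{1/2}$ (with entries the positive square roots of those of $\Sigma_1$) is lower triangular with positive diagonal and satisfies $TT^\top = \Sigma_1$; hence it is the Cholesky factor of $\Sigma_1$. Being diagonal, $T$ is symmetric, so $T^{-\top} = T^{-1} = \Sigma_1^{-1/2}$, and therefore the matrix $T^{-1}\Sigma_2 T^{-\top}$ appearing in Fact~\ref{fact:multivariate_gaussian_tv} is exactly $\Sigma_1^{-1/2}\Sigma_2\Sigma_1^{-1/2}$. This matrix is symmetric (and positive semidefinite, since $\Sigma_2 \succeq 0$), so its eigenvalues $\lambda_1,\dots,\lambda_m$ are real and nonnegative, and these are precisely the $\lambda_i$ named in the corollary.

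For (ii), I would substitute $\mu_1 = \mu_2 = \mu$ into the bound of Fact~\ref{fact:multivariate_gaussian_tv}: its second summand $\|T^{-1}(\mu_1 - \mu_2)\|/\sqrt{2\pi}$ equals $0$, leaving only $\frac{1}{\sqrt{2\pi e}}\sum_{i=1}^m(\max\{\lambda_i,\lambda_i^{-1}\} - 1)$, which is exactly the asserted upper bound on $\mathrm{d}_{\mathrm{TV}}(G_1,G_2)$. There is essentially no obstacle here; the only step requiring a remark is the identification in (i) of the Cholesky factor with the symmetric square root, which is licensed precisely because $\Sigma_1$ is diagonal. (More generally, any two Cholesky factors of $\Sigma_1$ differ by right multiplication by an orthogonal matrix, so $T^{-1}\Sigma_2 T^{-\top}$ is determined up to orthogonal similarity and its spectrum is well defined regardless of the choice; this is a useful sanity check but is not strictly needed in the diagonal case.)
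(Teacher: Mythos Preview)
Your proposal is correct and matches the paper's intended approach: the corollary is stated immediately after Fact~\ref{fact:multivariate_gaussian_tv} without a separate proof, precisely because it is the direct specialization you describe (take $T=\Sigma_1^{1/2}$ as the Cholesky factor of the diagonal $\Sigma_1$, and drop the mean term since $\mu_1=\mu_2$).
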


\subsection{Restating CLT from \citet{valiant2011estimating}}
\label{sec:valiant-valiant-clt-restated}

\begin{definition}[Generalized Multinomial]
Let $n\ge 1$ and $k\ge 1$. For each $r\in\{1,\dots,n\}$ let
$\rho_{r,1},\dots,\rho_{r,k}\in[0,1]$ with $\sum_{i=1}^k \rho_{r,i}\le 1$.
Let $X^{(r)}\in\{0,e_1,\dots,e_k\}\subset\mathbb{R}^k$ be independent with
\[
\Pr\!\big[X^{(r)}=e_i\big]=\rho_{r,i},\qquad
\Pr\!\big[X^{(r)}=0\big]=1-\sum_{i=1}^k\rho_{r,i}.
\]
The sum $S:=\sum_{r=1}^n X^{(r)}\in\mathbb{Z}_{\ge 0}^k$ is said to be distributed as a generalized multinomial.
\end{definition}

\begin{definition}[Discretized Gaussian]
For $\mu\in\mathbb{R}^k$ and positive semidefinite $\Sigma\in\mathbb{R}^{k\times k}$,
let $\mathcal{N}(\mu,\Sigma)$ denote the $k$-dimensional Gaussian.
$\mathcal{N}_{\mathrm{disc}}(\mu,\Sigma)$ denotes the distribution corresponding to obtaining a draw from 
$\mathcal{N}(\mu,\Sigma)$ and rounding each coordinate to the nearest
integer.
\end{definition}

\begin{theorem}[{\citet[Thm.~4]{valiant2011estimating}}]
\label{thm:vv-clt}
Let $S$ be a $k$-dimensional Generalized Multinomial with parameters as above, mean $\mu$,
covariance $\Sigma$, and $\sigma^2=\lambda_{\min}(\Sigma)$.
Then,
\[
d_{\mathrm{TV}}\!\big(S,\ \mathcal{N}_{\mathrm{disc}}(\mu,\Sigma)\big)
\ \le\ \frac{k^{4/3}}{\sigma^{1/3}}\cdot 2.2\cdot\big(3.1+0.83\log n\big)^{2/3}.
\]
\end{theorem}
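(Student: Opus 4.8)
This is Theorem~4 of \citet{valiant2011estimating}, restated here for reference; a complete proof is given there, and we only outline the route one would take. The plan has three parts: (i) prove a \emph{smooth-test-function} CLT for $S=\sum_{r=1}^n X^{(r)}$ by a Lindeberg-type replacement; (ii) convert that into a total variation bound against the \emph{discretized} Gaussian by mollification together with an anti-concentration estimate for $S$; and (iii) optimize the mollification scale to extract the stated exponents and constants. For part (i) one would conjugate by $T=\Sigma^{-1/2}$ (so the target covariance is $I$ and each increment $TX^{(r)}$ has norm at most $\|T\|\le 1/\sigma$), set $G\sim\mathcal N(\mu,\Sigma)$, and introduce independent Gaussians $Y^{(r)}$ matching the first two moments of $X^{(r)}$, so $\sum_r Y^{(r)}\stackrel{d}{=}G$. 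Swapping the $X^{(r)}$ for the $Y^{(r)}$ one index at a time and Taylor-expanding a $C^3$ test function $f$ to second order gives
\[
\bigl|\,\E[f(S)]-\E[f(G)]\,\bigr|\ \le\ \tfrac16\,\sup_x\|D^3 f(x)\|\sum_{r=1}^n\Bigl(\E\|TX^{(r)}\|_2^3+\E\|TY^{(r)}\|_2^3\Bigr);
\]
using that $X^{(r)}$ is supported on $\{0,e_1,\dots,e_k\}$ one bounds $\E\|TX^{(r)}\|_2^3\le\sigma^{-3}\sum_i\rho_{r,i}$ with $\sum_{r,i}\rho_{r,i}=\sum_i\mu_i$, plus a corresponding estimate for the Gaussian third moments; normalizing each increment by the accumulated standard deviation is what produces a factor of the form $1+\log n$ (e.g.\ $\sum_r 1/r\le 1+\log n$), which is the source of the $(3.1+0.83\log n)$ appearing in the statement.

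For part (ii), write $\dtv\bigl(S,\mathcal N_{\mathrm{disc}}(\mu,\Sigma)\bigr)=\sup_{A\subseteq\Z^k}\bigl|\Pr[S\in A]-\Pr[\mathrm{round}(G)\in A]\bigr|$, fix such an $A$, let $\widetilde A=\mathrm{round}^{-1}(A)$ be the associated union of unit cubes, and replace $\mathbf 1_{\widetilde A}$ by a mollification $f_{A,\delta}=\mathbf 1_{\widetilde A}\ast\varphi_\delta$ at scale $\delta$, which satisfies $\sup_x\|D^3 f_{A,\delta}(x)\|=O(\delta^{-3})$. Then $\Pr[S\in A]$ and $\E[f_{A,\delta}(S)]$ differ by at most the mass $S$ places within distance $O(\delta)$ of $\partial\widetilde A$, and likewise for $G$; since $\partial\widetilde A$ is a union of axis-aligned faces, this tube mass is at most $\sum_{i=1}^k\Pr[\operatorname{dist}(S_i,\Z+\tfrac12)\le O(\delta)]$ plus the analogous Gaussian quantity, which is $O(k\delta/\sigma)$. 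The term for $S$ is exactly where a quantitative \emph{anti-concentration} bound enters: one shows each coordinate $S_i$ (more generally each marginal $\langle v,S\rangle$ with $\|v\|=1$) places $O(\delta/\sigma)$ mass on any length-$\delta$ interval, via a one-dimensional Berry--Esseen/characteristic-function estimate using that this marginal has variance at least $\sigma^2$. Combining with part (i) applied to $f_{A,\delta}$ yields $\dtv(S,\mathcal N_{\mathrm{disc}}(\mu,\Sigma))\lesssim \delta^{-3}\,k^{c}(3.1+0.83\log n)/\sigma^{c'}+k\delta/\sigma$ for small absolute constants $c,c'$.

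For part (iii), choosing $\delta$ to balance the two terms above converts the exponents into $k^{4/3}/\sigma^{1/3}$ and the logarithmic factor into $(3.1+0.83\log n)^{2/3}$; a careful accounting of the absolute constants arising in the Taylor remainder of (i), the mollifier in (ii), and the anti-concentration estimate is what pins the leading factor down to $2.2$.

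The main obstacle is part (ii): turning a smooth-metric CLT into an honest TV bound against the \emph{rounded} Gaussian — rather than the continuous one — forces one to restrict to comparison sets living on the integer lattice, to control the boundary-tube mass on that lattice, and to supply a quantitative anti-concentration statement for the marginals of an arbitrary generalized multinomial; only once all of this is in place does the optimization in (iii) produce the advertised exponents and constants.
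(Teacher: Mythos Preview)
The paper does not prove this statement at all: it is restated verbatim from \citet{valiant2011estimating} in an appendix subsection titled ``Restating CLT from \citet{valiant2011estimating}'' and is used as a black box (in the proof of Lemma~\ref{lem:clt_new}). You correctly recognize this in your first sentence, so there is nothing to compare against --- your write-up is already more than the paper offers.

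As for the sketch itself, it is a reasonable summary of the route in \citet{valiant2011estimating}: the Lindeberg swapping argument for a smooth-function CLT, the mollification step to pass from smooth test functions to indicators of lattice sets, the coordinate-wise anti-concentration estimate to control boundary-tube mass, and the final balancing of $\delta$ to obtain the $\sigma^{-1/3}$ and $(\log n)^{2/3}$ exponents with $k^{4/3}$. One minor quibble: the $\log n$ factor in Valiant--Valiant does not come from a harmonic sum $\sum_r 1/r$ over the Lindeberg swaps as you suggest, but rather from the tradeoff between the smoothing scale and a tail-truncation of the Gaussian increments in their ``deconvolution'' lemma; this does not affect the final form of the bound, only the narrative of where the logarithm enters.
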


\begin{corollary}
\label{cor:vv-fixed-d}
For constant $k$,
\[
d_{\mathrm{TV}}\!\big(S,\ \mathcal{N}_{\mathrm{disc}}(\mu,\Sigma)\big)
\ = \ O \left( \frac{(\log n)^{2/3}}{\sigma^{1/3}} \right).
\]

\end{corollary}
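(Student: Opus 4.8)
\textbf{Proof proposal for Corollary~\ref{cor:vv-fixed-d}.} The plan is to simply read off the bound from Theorem~\ref{thm:vv-clt} and absorb the $k$-dependent and constant additive terms into the $O(\cdot)$. Concretely, I would start from the inequality
\[
d_{\mathrm{TV}}\!\big(S,\ \mathcal{N}_{\mathrm{disc}}(\mu,\Sigma)\big)
\ \le\ \frac{k^{4/3}}{\sigma^{1/3}}\cdot 2.2\cdot\big(3.1+0.83\log n\big)^{2/3},
\]
and then treat each factor in turn.

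First, since $k$ is a fixed constant, $k^{4/3}$ is a constant and can be pulled into the $O(\cdot)$. Second, I would bound the term $\big(3.1+0.83\log n\big)^{2/3}$: for all $n\ge 2$ we have $\log n\ge \log 2>0$, so $3.1+0.83\log n \le \big(0.83 + 3.1/\log 2\big)\log n \le C\log n$ for an absolute constant $C$ (and for the finitely many $n<2$ the whole expression is $O(1)$, which is subsumed). Raising to the power $2/3$ (a monotone operation on nonnegatives) gives $\big(3.1+0.83\log n\big)^{2/3} \le C^{2/3}(\log n)^{2/3}$. Multiplying the three bounds together yields
\[
d_{\mathrm{TV}}\!\big(S,\ \mathcal{N}_{\mathrm{disc}}(\mu,\Sigma)\big)
\ \le\ \frac{2.2\, k^{4/3} C^{2/3}\,(\log n)^{2/3}}{\sigma^{1/3}}
\ =\ O\!\left(\frac{(\log n)^{2/3}}{\sigma^{1/3}}\right),
\]
which is exactly the claimed statement.

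There is essentially no obstacle here: the only thing to be mildly careful about is that the additive constant $3.1$ inside the parenthesis does not spoil the $(\log n)^{2/3}$ rate, which is handled by the elementary bound $3.1+0.83\log n = \Theta(\log n)$ for $n$ bounded away from $1$. Everything else is a direct substitution into the statement of Theorem~\ref{thm:vv-clt}.
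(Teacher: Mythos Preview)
Your proposal is correct and matches the paper's approach: the paper states the corollary with no proof, as it follows immediately from Theorem~\ref{thm:vv-clt} by absorbing $k^{4/3}$ into the constant and noting $(3.1+0.83\log n)^{2/3}=O((\log n)^{2/3})$, exactly as you do.
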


\section{Closeness Testing}\label{section:closeness}

Our tester for testing closeness follows the proof of \cite{batu2001testing}. Specifically, we divide the analysis into two parts: heavy elements $B$ and light elements $S$. We show that we can estimate the distance $\sum_{i \in B}|\bp_{\avg}(i)-\bq_{\avg}(i)|$ up to accuracy $\epsilon$ using the provided samples. On light elements, since the norms corresponding to these elements is bounded, we can apply the $\ell_2$ tester to get the desired closeness tester. We divide the analysis into four sections. In the first section, we provide the $\ell_2$ tester, whose sample complexity depends on the norm of the underlying distributions. Therefore invoking $\ell_2$ tester only on light elements reduces our sample complexity as their corresponding norm is low. Further, in the second and third section we show several properties of light and heavy elements. In particular, for the heavy elements we show that the we can estimate the $\ell_1$ distance for these elements quite well. Furthermore, in the same section, we also show that the norm of the light elements is low, setting ourselves to use $\ell_2$ tester on these elements. Finally in the last section, we provide the final theorem, which invokes the results from sections one and two to get the desired bounds on the closeness testing.

\subsection{$\ell_2$ tester}
To test closeness in $\ell_2$, we build an unbiased estimator for the terms $\|\bp_{\avg}\|^2_2$, $\|\bq_{\avg}\|^2_2$ and $\langle \bp_{\avg}, \bq_{\avg} \rangle$. In the following, we provide a description of these estimator. For each $t \in [T]$, let $(X_t(2), X_t(3))$ and $(X'_t(2), X'_t(3))$ denote the 2nd and 3rd i.i.d samples drawn from the distribution $\bp_t$ and $\bq_t$ respectively. For each $t\in [T]$, we define,
\begin{align*}
	Y_{t} = \mathds{1}[X_t(2) = X_t(3)] \text{ and }
	Y_{st} = \mathds{1}[X_t(2) = X_s(3)]~.
\end{align*}

\begin{align*}
	Y'_{t} = \mathds{1}[X'_t(2) = X'_t(3)] \text{ and }
	Y'_{st} = \mathds{1}[X'_t(2) = X'_s(3)]~.
\end{align*}

Note that $\E[Y_{t}]=\|\bp_t\|^2_2$, $\E[Y'_{t}]=\|\bq_t\|^2_2$ and  $\E[Y_{st}]=\langle \bp_t, \bp_{s} \rangle$, $\E[Y'_{st}]=\langle \bq_t, \bq_{s} \rangle$ for all $s,t\in [T]$. Using these random variables, we define an estimator $Z$ and $Z'$ as follows,
\begin{align*}
	Z = \frac{1}{T^2}\Big(\sum_{t \in [T]}Y_{t} + 2\sum_{s < t}Y_{st}\Big).
\end{align*}
\begin{align*}
	Z' = \frac{1}{T^2}\Big(\sum_{t \in [T]}Y'_{t} + 2\sum_{s < t}Y'_{st}\Big).
\end{align*}

\begin{lemma}\label{lem:zzz}
	The estimator $Z$ and $Z'$ defined in \Cref{eq:estimator} satisfy the following,
	$$\E[Z] = \|\bp_{\avg}\|^2_2 \text{ and }\Var[Z] \leq \frac{4}{T^2}\Big\| \bp_{\avg} \Big\|^2_2 + \frac{48}{T}\Big\| \bp_{\avg}\Big\|^3_3.$$
	
	$$\E[Z'] = \|\bq_{\avg}\|^2_2 \text{ and }\Var[Z'] \leq \frac{4}{T^2}\Big\| \bq_{\avg} \Big\|^2_2 + \frac{48}{T}\Big\| \bq_{\avg}\Big\|^3_3.$$
\end{lemma}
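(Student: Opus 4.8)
The plan is to observe that \Cref{lem:zzz} is an immediate corollary of \Cref{lem:uniformityup}: the estimator $Z$ defined here is, structurally, exactly the estimator analyzed in that lemma, merely instantiated on a relabeled pair of i.i.d.\ draws. Concretely, $Z$ is a function only of the samples $(X_t(2), X_t(3))_{t \in [T]}$, and by the $c=3$ sampling model these are two genuine i.i.d.\ draws from each $\bp_t$ (the first draw $X_t(1)$ is reserved for the heavy/light estimation carried out elsewhere in the closeness argument). Relabeling $X_t(2)\mapsto X_t(1)$ and $X_t(3)\mapsto X_t(2)$, the indicators $Y_t = \mathds{1}[X_t(2)=X_t(3)]$ and $Y_{st}=\mathds{1}[X_t(2)=X_s(3)]$ become precisely the $Y_t$ and $Y_{st}$ of \Cref{sec:uni}, with $\E[Y_t]=\|\bp_t\|_2^2$ and $\E[Y_{st}]=\langle \bp_t,\bp_s\rangle$, and $Z$ itself becomes the estimator of \Cref{eq:estimator}. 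Hence the conclusion of \Cref{lem:uniformityup} transfers verbatim, giving $\E[Z]=\|\bp_{\avg}\|_2^2$ and $\Var[Z] \le \frac{4}{T^2}\|\bp_{\avg}\|_2^2 + \frac{48}{T}\|\bp_{\avg}\|_3^3$.

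The bound for $Z'$ follows by the identical argument with $\bq_t$ in place of $\bp_t$: $Z'$ depends only on $(X'_t(2),X'_t(3))_{t\in[T]}$, which are two i.i.d.\ draws from each $\bq_t$, so \Cref{lem:uniformityup} yields $\E[Z']=\|\bq_{\avg}\|_2^2$ together with the stated variance bound. Note that, although the samples driving $Z$ and those driving $Z'$ are drawn from different sequences of distributions (and are in fact mutually independent), the lemma only asserts the marginal mean and variance of each estimator, so no joint reasoning between $Z$ and $Z'$ is needed at this stage.

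There is essentially no obstacle here — all the analytic content is already contained in \Cref{lem:uniformityup}. The only point requiring (routine) verification is that the reindexing is legitimate, i.e.\ that the closeness protocol really does supply, for each $t$, a pair $X_t(2),X_t(3)$ of fresh i.i.d.\ draws from $\bp_t$ used solely inside $Z$, and likewise $X'_t(2),X'_t(3)$ from $\bq_t$ used solely inside $Z'$; this is exactly what the $c=3$ sampling model provides. Accordingly, the proof I would write simply states the reduction, invokes \Cref{lem:uniformityup} twice, and is done.
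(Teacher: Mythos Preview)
Your proposal is correct and matches the paper's approach exactly: the paper's proof is the single sentence ``The proof of the above lemma follows immediately from \Cref{lem:uniformityup},'' and you have spelled out precisely why that invocation is legitimate. (One tiny nit: after your relabeling, the cross-distribution indicator becomes $\mathds{1}[X_t(1)=X_s(2)]$ rather than $\mathds{1}[X_t(1)=X_s(1)]$, so it is not literally identical to the $Y_{st}$ of \Cref{sec:uni}; but the mean is the same and the dependence structure only has \emph{fewer} correlated pairs, so the variance bound of \Cref{lem:uniformityup} still applies verbatim as an upper bound.)
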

The proof of the above lemma follows immediately from \Cref{lem:uniformityup}.

Next, define
\begin{align*}
	C_{st} = \mathds{1}[X_s(2) = X'_t(2)].
\end{align*}
\begin{align*}
	Q = \frac{1}{T^2}\Big(\sum_{s,t \in [T]}C_{s,t}\Big).
\end{align*}
\begin{lemma}\label{lem:qqq}
	The estimator $Q$ defined above satisfies the following,
	$$\E[Q] = \langle \bp_{\avg}, \bq_{\avg} \rangle.$$
	$$\Var[Q] = \frac{1}{T^2}\langle \bp_{\avg}, \bq_{\avg}\rangle + \frac{1}{T} (\langle \bp_{\avg}, \bq_{\avg}, \bq_{\avg} \rangle + \langle \bp_{\avg}, \bp_{\avg}, \bq_{\avg} \rangle).$$
\end{lemma}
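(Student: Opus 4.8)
The plan is to expand $Q$ into its ordered double sum of indicators $C_{st}$, read off the expectation from independence, and then control the variance by a case analysis over which pairs of indicator variables share an underlying sample. First I would note that, since the samples $\{X_s(2)\}_{s\in[T]}$ (one from each $\bp_s$), the samples $\{X'_t(2)\}_{t\in[T]}$ (one from each $\bq_t$), and the two families are all mutually independent, we have $\E[C_{st}]=\Pr[X_s(2)=X'_t(2)]=\sum_\ell \bp_s(\ell)\bq_t(\ell)=\langle \bp_s,\bq_t\rangle$. Summing over all $T^2$ ordered pairs and normalizing gives $\E[Q]=\frac{1}{T^2}\sum_{s,t}\langle \bp_s,\bq_t\rangle=\langle \tfrac1T\sum_s \bp_s,\ \tfrac1T\sum_t \bq_t\rangle=\langle \bp_{\avg},\bq_{\avg}\rangle$, as claimed.

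For the variance I would write $\Var[Q]=\frac{1}{T^4}\sum_{s,t}\sum_{s',t'}\Cov(C_{st},C_{s't'})$. The key observation is that $C_{st}$ is a function only of $X_s(2)$ and $X'_t(2)$, and $X_s(2)$, $X_{s'}(2)$ coincide as random variables exactly when $s=s'$ (similarly for the primed family), while the two families are independent; hence $\Cov(C_{st},C_{s't'})=0$ whenever $s\neq s'$ \emph{and} $t\neq t'$. Three families of terms remain: (i) $s=s'$, $t=t'$, contributing $\Var(C_{st})=\langle \bp_s,\bq_t\rangle-\langle \bp_s,\bq_t\rangle^2$; (ii) $s=s'$, $t\neq t'$, where conditioning on $X_s(2)=\ell$ gives $\E[C_{st}C_{st'}]=\sum_\ell \bp_s(\ell)\bq_t(\ell)\bq_{t'}(\ell)=:\langle \bp_s,\bq_t,\bq_{t'}\rangle$, so the covariance is $\langle \bp_s,\bq_t,\bq_{t'}\rangle-\langle \bp_s,\bq_t\rangle\langle \bp_s,\bq_{t'}\rangle$; and (iii) $s\neq s'$, $t=t'$, which by the symmetric computation contributes $\langle \bp_s,\bp_{s'},\bq_t\rangle-\langle \bp_s,\bq_t\rangle\langle \bp_{s'},\bq_t\rangle$.

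To finish, I would drop the non-positive $-\langle\cdot\rangle\langle\cdot\rangle$ terms in (i)--(iii), and then re-add the (non-negative) diagonal terms $t=t'$ in the sum over (ii) and $s=s'$ in the sum over (iii) to complete the double sums; both operations only increase the right-hand side since every inner product of (sub)distributions is non-negative. This yields $\Var[Q]\le \frac{1}{T^4}\big(\sum_{s,t}\langle \bp_s,\bq_t\rangle+\sum_s\sum_{t,t'}\langle \bp_s,\bq_t,\bq_{t'}\rangle+\sum_t\sum_{s,s'}\langle \bp_s,\bp_{s'},\bq_t\rangle\big)$, and using $\sum_s \bp_s=T\bp_{\avg}$, $\sum_t \bq_t=T\bq_{\avg}$ the three sums become $T^2\langle \bp_{\avg},\bq_{\avg}\rangle$, $T^3\langle \bp_{\avg},\bq_{\avg},\bq_{\avg}\rangle$, $T^3\langle \bp_{\avg},\bp_{\avg},\bq_{\avg}\rangle$, giving exactly the stated bound.

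I do not expect a real obstacle here: unlike the uniformity estimator $Z$, the statistic $Q$ is purely a cross-family collision count with no within-distribution terms, so its dependency structure is actually simpler. The only point requiring care is the combinatorial bookkeeping of which ordered index pairs share a sample (hence are correlated) versus which are independent, and checking that completing the sums by adding diagonal terms and discarding the negative cross terms is legitimate — which it is, precisely because each of $\langle \bp_a,\bq_b\rangle$, $\langle \bp_a,\bq_b,\bq_c\rangle$, $\langle \bp_a,\bp_b,\bq_c\rangle$ is a sum of non-negative quantities.
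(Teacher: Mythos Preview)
Your proposal is correct and follows essentially the same approach as the paper: expand the variance as a double sum of covariances, observe that $C_{st}$ and $C_{s't'}$ are independent precisely when $s\neq s'$ and $t\neq t'$, drop the non-positive $-\langle\cdot,\cdot\rangle\langle\cdot,\cdot\rangle$ terms, and complete the remaining sums to pull out the averages. The paper first groups by the cardinality of the index multiset (a vestige of the uniformity proof) before passing to the same ``which indices coincide'' case split you use directly, so your write-up is in fact a bit cleaner; note also that both arguments yield an \emph{upper bound} on $\Var[Q]$, not the equality literally written in the lemma statement.
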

\begin{proof}
	The expectation of the estimator $Q$ is as follows,
	$$\E[Q] = \frac{1}{T^2}\sum_{s,t \in [T]}\langle \bp_{s}, \bq_{t}\rangle=\langle\frac{1}{T}\sum_{s \in [T]}\bp_{s}, \frac{1}{T}\sum_{s \in [T]} \bq_{t}\rangle=\langle\bp_{\avg}, \bq_{\avg}\rangle.$$
	In the following we bound the variance, define $\hat{C}_{st} = {C}_{st} - \E[{C}_{st}].$
	\begin{align*}
		&\E\Big[\Big(\sum_{s, t \in [T]}\hat{C}_{st}\Big)^2\Big]\\ %
		&= \E\Big[\Big(\sum_{s, t}C_{st}\Big)^2\Big] - \left(\E\Big[\sum_{s, t}C_{st}\Big]\right)^2 \nonumber \\
		&= \underbrace{\sum_{s , t}\E\left[C_{st}^2\right]}_{\text{$T^2$ terms}} + \underbrace{\sum_{\substack{s,t \\ a,b \\ |\{s,t,a,b\}|=3}} \E[{C}_{st}{C}_{ab}]}_{\text{$O(\binom{T}{3})$ terms}} + \underbrace{\sum_{\substack{s,t \\ a,b \\ |\{s,t,a,b\}|=4}} \E[{C}_{st}{C}_{ab}]}_{\text{$O(\binom{T}{4})$ terms}} - \underbrace{\left(\E\Big[\sum_{s , t}C_{st}\Big]\right)^2}_{\text{$\binom{T}{2}^2$ terms}} \\
		&= {\sum_{s , t} \left(\E\left[C_{st}^2\right]  - \E\left[C_{st}\right]^2 \right)} + {\sum_{\substack{s,t \\ a,b \\ |\{s,t,a,b\}|=3}} \left(\E[{C}_{st}{C}_{ab}] - \E[{C}_{st}]\E[{C}_{ab}] \right)}
		+ {\sum_{\substack{s,t \\ a,b \\ |\{s,t,a,b\}|=4}} \left( \E[{C}_{st}{C}_{ab}] -  \E[{C}_{st}]\E[{C}_{ab}] \right)}\\
		&\leq {\sum_{s , t} \left(\E\left[C_{st}^2\right]  \right)} + {\sum_{\substack{s,t \\ a,b \\ |\{s,t,a,b\}|=3}} \left(\E[{C}_{st}{C}_{ab}] - \E[{C}_{st}]\E[{C}_{ab}] \right)}
		+ {\sum_{\substack{s,t \\ a,b \\ |\{s,t,a,b\}|=4}} \left( \E[{C}_{st}{C}_{ab}] -  \E[{C}_{st}]\E[{C}_{ab}] \right)}\\
		&= {\sum_{s , t} \left(\E\left[C_{st}^2\right]  \right)} + {\sum_{\substack{s,c \\ a,c \\ s\neq a}} \left(\E[{C}_{sc}{C}_{ac}] - \E[{C}_{sc}]\E[{C}_{ac}] \right)} + {\sum_{\substack{a,t \\ a,b \\ t\neq b}} \left(\E[{C}_{at}{C}_{ab}] - \E[{C}_{at}]\E[{C}_{ab}] \right)}.
	\end{align*}
	In the last step, we used the fact that $\E[{C}_{st}{C}_{ab}] - \E[{C}_{st}]\E[{C}_{ab}] = 0$ when $s \neq a$ and $t \neq b$. Further, since $\E[{C}_{st}] \geq 0$   for all $s, t$, we get
	\begin{align}
		\label{eq:var_tt1}
		\E\Big[\Big(\sum_{s, t \in [T]}\hat{C}_{st}\Big)^2\Big] &\leq {\sum_{s , t} \left(\E\left[C_{st}^2\right]  \right)} + {\sum_{\substack{s,c \\ a,c \\ s\neq a}} \left(\E[{C}_{sc}{C}_{ac}]  \right)} + {\sum_{\substack{a,t \\ a,b \\ t\neq b}} \left(\E[{C}_{at}{C}_{ab}] \right)}
	\end{align}
	Now, we look at individual terms in the above expression. The last two terms can be bounded as follows:
	\begin{align}\label{eq:thm15}
		\sum_{\substack{s,c \\ a,c \\ | s \neq a}} \E[{C}_{sc}{C}_{ac}] &= 
		\sum_{s  \neq a} \sum_{c}\sum_{\ell=1}^k \bp_s(\ell) \bp_a(\ell) \bq_c(\ell) := \sum_{s  \neq a} \sum_{c}\langle \bp_s, \bp_a, \bq_c \rangle,\\ 
		&\leq T^3.\langle \bp_{\avg}, \bp_{\avg}, \bq_{\avg} \rangle 
	\end{align}
	\begin{align}\label{eq:thm15}
		\sum_{\substack{a,t \\ a,b \\ | t \neq b}} \E[{C}_{at}{C}_{ab}] &= 
		\sum_{t  \neq b} \sum_{a}\sum_{\ell=1}^k \bp_a(\ell) \bq_t(\ell) \bq_b(\ell) := \sum_{t  \neq b} \sum_{a}\langle \bp_a, \bq_t, \bq_b \rangle,\\ 
		&\leq T^3.\langle \bp_{\avg}, \bq_{\avg}, \bq_{\avg} \rangle 
	\end{align}
	The first term can be bounded as follows:
	\begin{align}
		{\sum_{s , t} \left(\E\left[C_{st}^2\right]  \right)} &= T^2  \langle \bp_{\avg}, \bq_{\avg} \rangle
	\end{align}
	Substituting these in Equation \ref{eq:var_tt1},  and using $\Var[Q] = \frac{1}{T^4} \E\Big[\Big(\sum_{s, t \in [T]}\hat{C}_{st}\Big)^2\Big]$, we  get the desired variance bound.
\end{proof}

\begin{lemma}
	Let $F = Z+Z'-2Q$, then note that,
	$$\E[F] = \|\bp_{\avg} - \bq_{\avg}\|_2^2~.$$
	$$\Var[F]  \leq \frac{O(1)}{T^2}(\|\bq_{\avg}\|_2^2 + \bp_{\avg}\|_2^2 + \langle \bp_{\avg},\bq_{\avg} \rangle) + \frac{O(1)}{T}(\langle \bp_{\avg}, \bq_{\avg}, \bq_{\avg} \rangle + \langle \bp_{\avg}, \bp_{\avg}, \bq_{\avg} \rangle)~.$$
\end{lemma}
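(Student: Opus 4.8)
The plan is to read the expectation off from linearity and to bound the variance by splitting $F$ into its three constituent estimators $Z$, $Z'$, $Q$ and invoking the variance estimates already established for each. For the expectation, linearity together with \Cref{lem:zzz} and \Cref{lem:qqq} immediately gives
\[
\E[F]=\E[Z]+\E[Z']-2\,\E[Q]=\|\bp_{\avg}\|_2^2+\|\bq_{\avg}\|_2^2-2\langle\bp_{\avg},\bq_{\avg}\rangle=\|\bp_{\avg}-\bq_{\avg}\|_2^2,
\]
using the identity $\|\bp_{\avg}-\bq_{\avg}\|_2^2=\|\bp_{\avg}\|_2^2-2\langle\bp_{\avg},\bq_{\avg}\rangle+\|\bq_{\avg}\|_2^2$.

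For the variance, the one thing to watch is that $Z$, $Z'$, and $Q$ are \emph{not} independent: the estimator $Q$ reuses the sample $X_t(2)$ drawn from $\bp_t$ (which also feeds $Z$) and the sample $X'_t(2)$ drawn from $\bq_t$ (which also feeds $Z'$). Since we only need an upper bound this costs us nothing: bounding each pairwise covariance by the half-sum of the corresponding variances (Cauchy--Schwarz) gives $\Var[A+B+C]\le 3(\Var[A]+\Var[B]+\Var[C])$ for any square-integrable $A,B,C$, so with $A=Z$, $B=Z'$, $C=-2Q$,
\[
\Var[F]=\Var[Z+Z'-2Q]\ \le\ 3\,\Var[Z]+3\,\Var[Z']+12\,\Var[Q].
\]
I would then simply substitute the three prior bounds: $\Var[Z]\le \tfrac{4}{T^2}\|\bp_{\avg}\|_2^2+\tfrac{48}{T}\|\bp_{\avg}\|_3^3$ and $\Var[Z']\le \tfrac{4}{T^2}\|\bq_{\avg}\|_2^2+\tfrac{48}{T}\|\bq_{\avg}\|_3^3$ from \Cref{lem:zzz}, and $\Var[Q]=\tfrac{1}{T^2}\langle\bp_{\avg},\bq_{\avg}\rangle+\tfrac{1}{T}\big(\langle\bp_{\avg},\bq_{\avg},\bq_{\avg}\rangle+\langle\bp_{\avg},\bp_{\avg},\bq_{\avg}\rangle\big)$ from \Cref{lem:qqq}. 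Collecting the $\Theta(1/T^2)$ contributions — those carrying $\|\bp_{\avg}\|_2^2$, $\|\bq_{\avg}\|_2^2$, $\langle\bp_{\avg},\bq_{\avg}\rangle$ — into the first group and the $\Theta(1/T)$ contributions (the triple inner products, together with the cube norms) into the second group yields a bound of the stated shape up to absolute constants.

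The only genuinely fiddly point, and the place I expect to spend the most care, is the bookkeeping around the cube-norm terms $\tfrac{1}{T}\|\bp_{\avg}\|_3^3$ and $\tfrac{1}{T}\|\bq_{\avg}\|_3^3$ inherited from $\Var[Z]$ and $\Var[Z']$: they do not appear among the listed terms on the right-hand side, so one either keeps them as extra $O(1/T)$ third-order terms of the same nonnegative flavor as $\langle\bp_{\avg},\bp_{\avg},\bq_{\avg}\rangle$ and $\langle\bp_{\avg},\bq_{\avg},\bq_{\avg}\rangle$, or — in the way this lemma is actually used, where the $\ell_2$ tester is run only on the light part $S$, on which every coordinate of $\bp_{\avg}$ and $\bq_{\avg}$ is below the chosen threshold $\delta$ — uses $\|\bp_{\avg}\|_3^3\le\delta\|\bp_{\avg}\|_2^2$ (and similarly for $\bq_{\avg}$) so that they are absorbed into the $\Theta(1/T^2)$-scale quantities once $\delta$ is small enough. (If one instead wanted to match the stated form verbatim on the full support, one would avoid the black-box route above and compute $\Var[Z+Z'-2Q]$ directly, hoping the negative covariances $\Cov(Z,Q)$, $\Cov(Z',Q)$ cancel the cube contributions; this refined computation would be the hard part, whereas the decomposition and substitution above are entirely routine.)
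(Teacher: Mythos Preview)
Your approach is exactly the paper's: the paper's proof is literally two sentences, invoking linearity for the mean and then writing $\Var[F]\le 8(\Var[Z]+\Var[Z']+\Var[Q])$ before substituting the bounds from \Cref{lem:zzz} and \Cref{lem:qqq}. Your observation about the missing $\tfrac{1}{T}\|\bp_{\avg}\|_3^3$ and $\tfrac{1}{T}\|\bq_{\avg}\|_3^3$ terms is well taken --- the paper's own black-box proof produces them too, so the lemma statement simply omits them; they are carried along implicitly and, as you anticipated, bounded by $O(b^2)$ on the light set in the downstream $\ell_1$-tester argument, so no cancellation via covariances is needed.
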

\begin{proof}
	The expression for $E[F]$ follows immediately by combining bias terms in \Cref{lem:zzz} and \Cref{lem:qqq}. 
	
	The variance bound follows by using the fact that $\Var[F] \leq 8(\Var(Z) +\Var(Z') + \Var(Q))$ and further using variance bounds on $Z,Z',Q$ from \Cref{lem:zzz} and \Cref{lem:qqq}.
\end{proof}

\begin{lemma}\label{lem:l2tester}
	We can test if $\|\bp_{\avg} - \bq_{\avg}\|_2 \leq \epsilon/3$ vs $\|\bp_{\avg} - \bq_{\avg}\|_2 > \epsilon $ using $T=O(\frac{\sqrt{\|\bq_{\avg}\|_2^2 + \bp_{\avg}\|_2^2 + \langle \bp_{\avg},\bq_{\avg} \rangle}}{\epsilon^2}) + \frac{(\langle \bp_{\avg}, \bq_{\avg}, \bq_{\avg} \rangle + \langle \bp_{\avg}, \bp_{\avg}, \bq_{\avg} \rangle)}{\epsilon^4}$ samples.
\end{lemma}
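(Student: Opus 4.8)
The plan is to run a one-sided threshold test built on the unbiased estimator $F = Z + Z' - 2Q$ constructed above: recall from the preceding lemma that $\E[F] = \|\bp_{\avg}-\bq_{\avg}\|_2^2$ and that $\Var[F]$ is bounded by $\frac{O(1)}{T^2}\big(\|\bp_{\avg}\|_2^2 + \|\bq_{\avg}\|_2^2 + \langle\bp_{\avg},\bq_{\avg}\rangle\big) + \frac{O(1)}{T}\big(\langle\bp_{\avg},\bp_{\avg},\bq_{\avg}\rangle + \langle\bp_{\avg},\bq_{\avg},\bq_{\avg}\rangle\big)$. The tester computes $F$ and outputs ``close'' if $F \le \epsilon^2/2$ and ``far'' otherwise. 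Squaring the promise, in the first case $\E[F] \le \epsilon^2/9$ and in the second case $\E[F] > \epsilon^2$, so in both cases the distance from $\E[F]$ to the threshold $\epsilon^2/2$ is $\Theta(\epsilon^2)$, and Chebyshev's inequality bounds the error probability by $O(\Var[F]/\epsilon^4)$.

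It then remains to choose $T$ large enough that $\Var[F] \le c\,\epsilon^4$ for a sufficiently small absolute constant $c$, which forces the error probability below $1/3$. I would split the variance bound into its two summands: the first is at most $\tfrac{c}{2}\epsilon^4$ as soon as $T = \Omega\!\big(\epsilon^{-2}\sqrt{\|\bp_{\avg}\|_2^2 + \|\bq_{\avg}\|_2^2 + \langle\bp_{\avg},\bq_{\avg}\rangle}\big)$, and the second is at most $\tfrac{c}{2}\epsilon^4$ as soon as $T = \Omega\!\big(\epsilon^{-4}(\langle\bp_{\avg},\bp_{\avg},\bq_{\avg}\rangle + \langle\bp_{\avg},\bq_{\avg},\bq_{\avg}\rangle)\big)$. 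Taking $T$ to be the sum of these two lower bounds yields exactly the claimed sample complexity, and both the close and far cases then succeed with probability at least $2/3$.

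Since this is just an application of Chebyshev to an estimator whose variance has already been controlled, there is no genuine technical obstacle; the only points needing care are (i) noting that the $\ell_2$ separation $\|\bp_{\avg}-\bq_{\avg}\|_2 \le \epsilon/3$ versus $> \epsilon$ becomes the constant-factor gap $\epsilon^2/9$ versus $\epsilon^2$ after squaring, so that a fixed threshold such as $\epsilon^2/2$ cleanly separates the two regimes, and (ii) the bookkeeping of constants when distributing $\Var[F] \le c\,\epsilon^4$ across its two terms to recover the stated $\epsilon^{-2}$ and $\epsilon^{-4}$ dependencies. I would also remark that the variance bound holds for whichever $\bp_{\avg},\bq_{\avg}$ appear in the instance, which is precisely why the sample complexity is stated in terms of these $\ell_2$ norms and $3$-way inner products; in the subsequent application to closeness testing these quantities are small because this $\ell_2$ tester is invoked only on the light elements.
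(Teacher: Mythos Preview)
Your proposal is correct and follows essentially the same route as the paper: both apply Chebyshev's inequality to the unbiased estimator $F=Z+Z'-2Q$, bound the error probability by $O(\Var[F]/\epsilon^4)$, and read off the two sample-complexity terms from the $1/T^2$ and $1/T$ parts of the variance bound. Your version is slightly more explicit (specifying the threshold $\epsilon^2/2$ and spelling out the squared separation $\epsilon^2/9$ versus $\epsilon^2$), but the argument is the same.
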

\begin{proof}
	It is sufficient to bound the probability of $P(|F-\E[F]| > \epsilon/3)$. By Chebyshev's inequality, we get that $P(|F-\E[F]| > \epsilon/3) \leq 9\Var(F)/\epsilon^2$. Substituting the upper bound on the variance of $F$, we get that,
	\begin{align*}
		P(|F-\E[F]| > \epsilon^2/3) &  \leq \frac{O(1)}{ T^2\epsilon^4}(\|\bq_{\avg}\|_2^2 + \bp_{\avg}\|_2^2 + \langle \bp_{\avg},\bq_{\avg} \rangle) \\
		&\qquad+ \frac{O(1)}{T\epsilon^4}(\langle \bp_{\avg}, \bq_{\avg}, \bq_{\avg} \rangle + \langle \bp_{\avg}, \bp_{\avg}, \bq_{\avg} \rangle)
	\end{align*}
	The above inequality is further upper bounded by a small constant for the value of $T$ specified in the conditions of the lemma.
\end{proof}

\subsection{Heavy elements}

Let $X_t(1)$ and $X'_t(1)$ be the 1st i.i.d.\ sample from distribution $\bp_t$ and $\bq_t$ respectively. Let $\widehat{p}(i) = \frac{1}{T}\sum_{t \in [T]}\mathds{1}[X_t(1) = i]$ and $\widehat{q}(i) = \frac{1}{T}\sum_{t \in [T]}\mathds{1}[X'_t(1) = i]$. We further define big elements and small elements as follows,
$$B_{p}= \{i \in [k]~|~ \widehat{p}(i) \geq b\} \text{ and } S_{p}= \{i \in [k]~|~ \widehat{p}(i) < b\}~,$$
$$B_{q}= \{i \in [k]~|~ \widehat{q}(i) \geq b\} \text{ and } S_{q}= \{i \in [k]~|~ \widehat{q}(i) < b\}~,$$
where $b= (\epsilon/k)^{2/3}$.

\begin{lemma}\label{lem:111}
	Let $B_p$ and $B_{q}$ be random sets as defined above and let $B = B_{p}\cup B_{q}$. If $n=O(\frac{1}{b\epsilon^2})$, then the following conditions holds, 
	$$P(\sum_{i \in B}|\widehat{p}(i) - \bp_{\avg}(i)| > \epsilon/12) \leq 1/100.$$ and $$P(\sum_{i \in B}|\widehat{q}(i) - \bq_{\avg}(i)| > \epsilon/12) \leq 1/100.$$
\end{lemma}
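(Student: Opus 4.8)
The plan is to control $\mathbb{E}\!\left[\sum_{i\in B}|\widehat{p}(i)-\bp_{\avg}(i)|\right]$ and then finish with Markov's inequality; the bound for $\widehat{q}$ is obtained by the identical argument with the roles of $p$ and $q$ exchanged. Two preliminary facts are used throughout. First, since $\sum_i\widehat{p}(i)=\sum_i\widehat{q}(i)=1$ and every coordinate placed in $B_p$ (resp.\ $B_q$) carries empirical mass at least $b$, we get the deterministic bounds $|B_p|,|B_q|\le 1/b$, hence $|B|\le 2/b$. Second, $\widehat{p}(i)=\frac1T\sum_t\mathds{1}[X_t(1)=i]$ is unbiased for $\bp_{\avg}(i)$ with $\Var[\widehat{p}(i)]=\frac{1}{T^2}\sum_t\bp_t(i)(1-\bp_t(i))\le \bp_{\avg}(i)/T$, so by Jensen $\mathbb{E}\,|\widehat{p}(i)-\bp_{\avg}(i)|\le\sqrt{\bp_{\avg}(i)/T}$.

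I would then split $\sum_{i\in B}|\widehat{p}(i)-\bp_{\avg}(i)|\le\sum_{i\in B_q}|\widehat{p}(i)-\bp_{\avg}(i)|+\sum_{i\in B_p}|\widehat{p}(i)-\bp_{\avg}(i)|$ and handle the second sum in two pieces, according to whether $\bp_{\avg}(i)\ge b/2$ or not. The $B_q$ term is easy: $B_q$ depends only on the samples $X'_t(1)\sim\bq_t$, which are independent of the samples defining $\widehat{p}$, so conditioning on $B_q$, using the Jensen bound, and applying Cauchy--Schwarz over the $\le 1/b$ coordinates of $B_q$ (with $\sum_{i\in B_q}\bp_{\avg}(i)\le 1$) gives $\mathbb{E}\!\left[\sum_{i\in B_q}|\widehat{p}(i)-\bp_{\avg}(i)|\right]\le\sqrt{1/(bT)}$. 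For the ``heavy'' coordinates of $B_p$, note $\{i\in B_p:\bp_{\avg}(i)\ge b/2\}$ is contained in the \emph{deterministic} set $L:=\{i:\bp_{\avg}(i)\ge b/2\}$, and $|L|\le 2/b$; summing the Jensen bound over $L$ and applying Cauchy--Schwarz again gives $\mathbb{E}\!\left[\sum_{i\in L}|\widehat{p}(i)-\bp_{\avg}(i)|\right]\le\sqrt{2/(bT)}$. For the ``light'' coordinates of $B_p$, i.e.\ those with $\bp_{\avg}(i)<b/2$ but $\widehat{p}(i)\ge b$, I would use the pointwise inequality $\widehat{p}(i)\,\mathds{1}[\widehat{p}(i)\ge b]\le\tfrac4b\bigl(\widehat{p}(i)-\bp_{\avg}(i)\bigr)^2$, valid whenever $\bp_{\avg}(i)<b/2$: on the event $\widehat{p}(i)\ge b$ one has $\widehat{p}(i)-\bp_{\avg}(i)\ge b/2$, so $\widehat{p}(i)\le 2(\widehat{p}(i)-\bp_{\avg}(i))$ and $1\le\tfrac2b(\widehat{p}(i)-\bp_{\avg}(i))$, and multiplying these gives the claim. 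Taking expectations and summing over all such $i$ yields $\mathbb{E}\!\left[\sum_{i:\bp_{\avg}(i)<b/2}\widehat{p}(i)\,\mathds{1}[\widehat{p}(i)\ge b]\right]\le\tfrac4b\sum_i\Var[\widehat{p}(i)]\le \tfrac{4}{bT}$.

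Adding the three contributions, $\mathbb{E}\!\left[\sum_{i\in B}|\widehat{p}(i)-\bp_{\avg}(i)|\right]\le\sqrt{1/(bT)}+\sqrt{2/(bT)}+4/(bT)=O\!\bigl(\sqrt{1/(bT)}\bigr)$, since $bT\ge 1$ makes the last term lower order. Taking $T$ of order $1/(b\epsilon^2)$ with a large enough absolute constant makes the right-hand side at most $\epsilon/1200$, and Markov's inequality then gives $\Pr\!\left[\sum_{i\in B}|\widehat{p}(i)-\bp_{\avg}(i)|>\epsilon/12\right]\le 1/100$; the bound for $\widehat{q}$ is obtained verbatim after swapping $p$ and $q$ (so that $B_p$ now plays the role of the independent ``cross'' set). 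The only genuinely delicate point — and the reason for the three-way split — is that $B_p$ is a \emph{random} set correlated with $\widehat{p}$ itself, so one cannot move the expectation inside the sum; the workaround is to over-cover the heavy part of $B_p$ by the deterministic set $L$, absorb the (rare) light coordinates of $B_p$ into a self-normalizing second-moment term, and use sample independence for the $B_q$ cross term. I expect this dependence-handling to be the main obstacle, with everything else being routine second-moment bookkeeping.
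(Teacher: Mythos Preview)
Your proof is correct and takes a genuinely different route from the paper's. The paper argues via concentration: for any \emph{fixed} set $A$ with $|A|\le 2/b$ it bounds $\Pr\bigl[\sum_{i\in A}|\widehat{p}(i)-\bp_{\avg}(i)|>\epsilon/12\bigr]$ by a union over all $2^{|A|}$ subsets $A'\subseteq A$ of the Hoeffding/Chernoff bound on $\widehat{p}(A')-\bp_{\avg}(A')$, obtaining $2^{|A|}\exp(-\Omega(n\epsilon^2))$; since $|A|\le 2/b$ and $n=\Theta(1/(b\epsilon^2))$ the exponent wins, and the paper then averages over the realization $B=A$. Your argument instead bounds the first moment $\E\bigl[\sum_{i\in B}|\widehat{p}(i)-\bp_{\avg}(i)|\bigr]$ directly and finishes with Markov, using only the coordinatewise variance bound $\Var[\widehat{p}(i)]\le\bp_{\avg}(i)/T$ together with your three-way split $B_q\cup(B_p\cap L)\cup(B_p\setminus L)$.

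The trade-off is this: the paper's route, if the conditioning step is handled carefully, yields exponentially small failure probability essentially for free, whereas Markov only gives the stated $1/100$. On the other hand, your approach is more elementary (no exponential concentration, just second moments and Cauchy--Schwarz) and, crucially, confronts head-on the dependence of $B_p$ on $\widehat{p}$ that the paper's averaging-over-$A$ step treats somewhat loosely --- the inequality $\Pr[\,\cdot\mid B=A]\le\Pr[\,\cdot\,]$ is not automatic since $B_p$ is a function of $\widehat{p}$. Your device of over-covering the heavy part of $B_p$ by the deterministic set $L$ and absorbing the light part into a self-normalizing second-moment term $\tfrac{4}{b}\sum_i\Var[\widehat{p}(i)]$ is a clean way to sidestep this dependence, and the cross term over $B_q$ is genuinely independent as you note.
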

\begin{proof}
	Note that $B$ is a random set that includes at most $2/b$ elements. Given any fixed subset $A \subseteq [n]$, note that,
	\begin{align*}
		P(\sum_{i \in A}|\widehat{p}(i) - \bp_{\avg}(i)| > \epsilon/12) &  \le P(\exists ~A' \subseteq A \text{ such that }\widehat{p}(A') - \bp_{\avg}(A') > \epsilon/12)\\
		& \leq 2^{|A|} \exp(-O(n \epsilon^2))~.
	\end{align*}
	Recall in the above expression $\widehat{p}(A') = \sum_{i \in A'}\widehat{p}(i)$ and $\bp_{\avg}(A') = \sum_{i \in A'}\bp_{\avg}(i)$.
	
	In the last inequality, we used the fact that,
	$P(\widehat{p}(A') - \bp_{\avg}(A') > \epsilon/12) \leq \exp(-n \epsilon^2/144)$ and did a union bound over all subsets of $A$.
	Therefore for any fixed $A$ such that, $|A| \leq 2/b$, we have that,
	$$ P(|\widehat{p}(A) - \bp_{\avg}(A)| > \epsilon/12) \leq \frac{1}{100}~.$$
	\begin{align}
		P(|\widehat{p}(B) - \bp_{\avg}(B)| > \epsilon/12) & = \sum_{A \subseteq [n]||A| \leq 2/b}    P(|\widehat{p}(B) - \bp_{\avg}(B)| > \epsilon/12|B=A) P(B=A)\\
		&\leq \sum_{A \subseteq [n]||A| \leq 2/b}    P(|\widehat{p}(A) - \bp_{\avg}(A)| > \epsilon/12) P(B=A)\\
		& \leq \sum_{A \subseteq [n]||A| \leq 2/b}    \frac{1}{100} P(B=A) \leq 1/100.
	\end{align}
	
	The analogous proof also holds for $\bq_{\avg}$ and we conclude the proof.
\end{proof}

\begin{lemma}\label{lem:heavy}
	Let $B_p$ and $B_{q}$ be random sets as defined above and let $B = B_{p}\cup B_{q}$. If $n=O(\frac{1}{b\epsilon^2})$, then the following conditions holds,
	$$P(|\sum_{i \in B}|\hat{p}(i)-\hat{q}(i)| -\sum_{i \in B} |\bp_{\avg}(i)-\bq_{\avg}(i)|| \geq \epsilon/6) \leq 1/50~.$$
\end{lemma}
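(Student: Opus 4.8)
The plan is to reduce this statement to Lemma~\ref{lem:111} by two applications of the triangle inequality, so the argument is short. First I would record the deterministic pointwise inequality $\bigl||a|-|b|\bigr|\le |a-b|$, applied with $a=\widehat{p}(i)-\widehat{q}(i)$ and $b=\bp_{\avg}(i)-\bq_{\avg}(i)$. Writing $a-b=(\widehat{p}(i)-\bp_{\avg}(i))-(\widehat{q}(i)-\bq_{\avg}(i))$ and using the triangle inequality once more, then summing over $i\in B$ where $B=B_p\cup B_q$, gives
\begin{align*}
\Bigl|\sum_{i\in B}|\widehat{p}(i)-\widehat{q}(i)|-\sum_{i\in B}|\bp_{\avg}(i)-\bq_{\avg}(i)|\Bigr|\;\le\;\sum_{i\in B}|\widehat{p}(i)-\bp_{\avg}(i)|+\sum_{i\in B}|\widehat{q}(i)-\bq_{\avg}(i)|.
\end{align*}
This bound is purely pointwise, hence valid for every realization of the samples, and in particular it holds for the random, data-dependent set $B$.

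Next I would invoke Lemma~\ref{lem:111}, which under the hypothesis $n=O(1/(b\eps^2))$ already states that $\sum_{i\in B}|\widehat{p}(i)-\bp_{\avg}(i)|>\eps/12$ with probability at most $1/100$, and symmetrically for the $\bq$ sum. Taking a union bound over these two failure events shows that with probability at least $1-1/50$ both sums are bounded by $\eps/12$, whence the right-hand side of the displayed inequality is at most $\eps/6$. Combined with the inequality above, this yields $\Pr\bigl[\,|\sum_{i\in B}|\widehat{p}(i)-\widehat{q}(i)|-\sum_{i\in B}|\bp_{\avg}(i)-\bq_{\avg}(i)||\ge \eps/6\,\bigr]\le 1/50$, which is exactly the claim.

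I do not expect a genuine obstacle here; the one subtlety worth flagging is that $B$ is not a fixed set but is determined by the same first-sample counts $\widehat{p},\widehat{q}$ that appear inside the sums, so one should not try to ``condition away'' the randomness of $B$. This is precisely why the deduction must route through Lemma~\ref{lem:111} rather than a fresh concentration argument: that lemma's proof already performs the union bound over all subsets of size at most $2/b$ that $B$ could equal, so it can be applied verbatim to the random set $B$.
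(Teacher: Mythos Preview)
Your proposal is correct and takes essentially the same approach as the paper. The paper writes the same deterministic bound $|E-F|\le G+H$ (obtained by applying the triangle inequality in both directions rather than the reverse triangle inequality pointwise, which is an equivalent formulation), then applies Lemma~\ref{lem:111} with a union bound exactly as you do; your remark about the data-dependence of $B$ being already handled inside Lemma~\ref{lem:111} is also on point.
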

\begin{proof}
	Let $E=\sum_{i \in B}|\hat{p}(i)-\hat{q}(i)|$, $F=\sum_{i \in B} |\bp_{\avg}(i)-\bq_{\avg}(i)|$, $G=\sum_{i \in B}|\hat{p}(i)-\bp_{\avg}(i)|$ and $H=\sum_{i \in B}|\hat{q}(i)-\bq_{\avg}(i)|$. Note that, $E  \leq F+G+H$ and $F  \leq E+G+H$.
	Combining these two inequalities we get that,
	\begin{align}
		|E-F| \leq G+H.
	\end{align}
	Using the above inequality we get that,
	\begin{align}
		P(|E-F| \geq \epsilon/6) & \leq P(G+H \geq \epsilon/6) \leq P(G \geq \epsilon/12)+P(H \geq \epsilon/12) \leq 1/50.
	\end{align}
	The last expression follows from \Cref{lem:111}.
\end{proof}

\subsection{Light elements}   
\begin{lemma}[Light elements]\label{lem:light}
	Let $S$ denote the complement of $B$, corresponding to the random set containing elements with empirical probability less than $b$ in both $\hat{p}$ and $\hat{q}$. The following norm bounds hold:
	$$P(\sum_{i \in S}\bp_{\avg}(i)^2>4b/\delta) \leq \delta \text{ and }P(\sum_{i \in S}\bp_{\avg}(i)^3>8b^2/\delta) \leq \delta~,$$
	$$P(\sum_{i \in S}\bq_{\avg}(i)^2>4b/\delta) \leq \delta \text{ and }P(\sum_{i \in S}\bq_{\avg}(i)^3>8b^2/\delta) \leq \delta~,$$
	$$P(\sum_{i \in S}\bp_{\avg}(i)\bq_{\avg}(i)>4b/\delta) \leq \delta, P(\sum_{i \in S}\bp_{\avg}^2(i)\bq_{\avg}(i)>8b^2/\delta)\leq \delta \text{ and }P(\sum_{i \in S}\bp_{\avg}(i)\bq^2_{\avg}(i)>8b^2/\delta)\leq \delta~.$$
\end{lemma}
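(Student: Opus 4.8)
The plan is to bound each of the claimed random quantities by a first-moment estimate followed by Markov's inequality. Fix one monomial, say $\sum_{i\in S}\bp_{\avg}(i)^2$. Since $\bp_{\avg}$ is deterministic and only $S$ is random, linearity of expectation gives $\E\big[\sum_{i\in S}\bp_{\avg}(i)^2\big]=\sum_{i\in[k]}\bp_{\avg}(i)^2\,\Pr[i\in S]$. I would split $[k]$ into an \emph{a priori light} part $L=\{i:\bp_{\avg}(i)\le 2b\}$ and an \emph{a priori heavy} part $H=\{i:\bp_{\avg}(i)>2b\}$. On $L$ we bound crudely, $\sum_{i\in L}\bp_{\avg}(i)^2\Pr[i\in S]\le\sum_{i\in L}\bp_{\avg}(i)^2\le 2b\sum_{i\in L}\bp_{\avg}(i)\le 2b$. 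On $H$, note that $i\in S$ forces $\widehat p(i)<b<\tfrac12\bp_{\avg}(i)$; since $T\widehat p(i)=\sum_{t\in[T]}\mathds{1}[X_t(1)=i]$ is a sum of independent (non-identically distributed) Bernoulli variables with mean $T\bp_{\avg}(i)$, a multiplicative Chernoff bound gives $\Pr[i\in S]\le\Pr[\widehat p(i)\le\tfrac12\bp_{\avg}(i)]\le\exp(-T\bp_{\avg}(i)/8)$.

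To control the heavy contribution I would pull out one factor of $\bp_{\avg}(i)$ and optimize over the remaining factor: $\sum_{i\in H}\bp_{\avg}(i)^2 e^{-T\bp_{\avg}(i)/8}\le\big(\max_{x>0}x e^{-Tx/8}\big)\sum_i\bp_{\avg}(i)\le \tfrac{8}{eT}$, which is at most $2b$ as soon as $T\ge\tfrac{4}{eb}$. This holds in the closeness-testing regime of \Cref{thm:closeness}: since $b=(\eps/k)^{2/3}$ we have $1/b=(k/\eps)^{2/3}$, whereas $T=\Theta(k^{2/3}/\eps^{8/3})=\Theta(1/(b\eps^2))=\Omega(1/b)$ (using $\eps\le 1$). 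Hence $\E[\sum_{i\in S}\bp_{\avg}(i)^2]\le 4b$ and Markov's inequality yields $\Pr[\sum_{i\in S}\bp_{\avg}(i)^2>4b/\delta]\le\delta$. The cube case is identical with one extra factor of $\bp_{\avg}(i)$: the light part contributes at most $(2b)^2\sum_i\bp_{\avg}(i)=4b^2$, and the heavy part at most $\big(\max_{x>0}x^2 e^{-Tx/8}\big)\sum_i\bp_{\avg}(i)\le \tfrac{256}{e^2T^2}\le 4b^2$ (again using $T=\Omega(1/b)$), so $\E[\sum_{i\in S}\bp_{\avg}(i)^3]\le 8b^2$ and Markov finishes it. The statements for $\bq_{\avg}$ follow by the same argument, noting that $i\in S$ also forces $\widehat q(i)<b$.

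For the mixed monomials I would reduce to the pure-power bounds via weighted AM--GM. Since $i\in S$ simultaneously forces $\widehat p(i)<b$ and $\widehat q(i)<b$, we get $\E[\sum_{i\in S}\bp_{\avg}(i)\bq_{\avg}(i)]\le\tfrac12\E[\sum_{i\in S}\bp_{\avg}(i)^2]+\tfrac12\E[\sum_{i\in S}\bq_{\avg}(i)^2]\le 4b$, and similarly $\E[\sum_{i\in S}\bp_{\avg}(i)^2\bq_{\avg}(i)]\le\tfrac23\E[\sum_{i\in S}\bp_{\avg}(i)^3]+\tfrac13\E[\sum_{i\in S}\bq_{\avg}(i)^3]\le 8b^2$, and symmetrically for $\bp_{\avg}(i)\bq_{\avg}(i)^2$; Markov's inequality then delivers all six bounds. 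The only genuinely load-bearing step — and the one I expect to need the most care — is verifying that the ``unlucky heavy element that nonetheless lands in $S$'' contribution is negligible, i.e.\ that the Chernoff tail, combined with $T=\Omega(1/(b\eps^2))$ from the closeness tester, is small enough that the target constants $4b$ and $8b^2$ are actually met (they are, with room to spare).
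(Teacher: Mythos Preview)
Your proposal is correct and essentially identical to the paper's proof: both bound the expectation by splitting at the threshold $2b$, trivially control the a~priori light part, use a multiplicative Chernoff bound on the a~priori heavy part (exploiting $T=\Omega(1/(b\eps^2))$), and finish with Markov. The only cosmetic difference is that the paper says the mixed monomials ``follow the same argument'' whereas you reduce them to the pure-power cases via AM--GM; your reduction is a clean shortcut.
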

\begin{proof} This argument is essentially identical to the proof of Theorem 20 in ~\cite{chan2014optimal}.

Let $H=\{i \in [k]|\bp_{\avg}(i)>2b \}$ and recall that $S=\{i \in [k]|\hat{p}(i)<b \text{ and } \hat{q}(i)<b\}$. Here we provide the proof for $P(\sum_{i \in S}\bp_{\avg}(i)^t>2^{t}b/\delta) \leq \delta$ for $t \in \{2,3\}$ and the proof for all the other probability statements follow the same argument.

To prove $P(\sum_{i \in S}\bp_{\avg}(i)^t>2^{t}b^{t-1}/\delta) \leq \delta$ for $t \in \{2,3\}$ we first show that,
$$E[\sum_{i \in S}\bp_{\avg}(i)^t] \leq 2^t b^{t-1},$$
and the lemma statements follow immediately by applying Markov Inequality. Therefore, in the remainder of the proof, we focus on proving the upper bound on the expectation.

Note that, $E[\sum_{i \in S}\bp_{\avg}(i)^t] = E[\sum_{i \in S \cap H}\bp_{\avg}(i)^t + \sum_{i \in S \cap H^{c}}\bp_{\avg}(i)^t] \leq 2^{t-1} b^{t-1} + \sum_{i \in S \cap H}\bp_{\avg}(i)^t$. In the last inequality, we used  $\bp_{\avg}(i)<2b$ for all $i \in H^{c}$.  Therefore all that remains is to show that, $E[\sum_{i \in S \cap H}\bp_{\avg}(i)^t] \leq 2^{t-1}b^{t-1}$.

For $i \in S \cap H$, let $\bp_{\avg}(i) = x_i b$ and note that $x_i\geq 2$. Then we have that,
\begin{align*}
    E[\sum_{i \in S \cap H}\bp_{\avg}(i)^t] & = \sum_{i \in  H}\bp_{\avg}(i)^t P(i \in S),\\
    & = \sum_{i \in  H}\bp_{\avg}(i) b^{t-1} x_i^{t-1}  P(\bp_{\avg}(i)=x_i b \text{ and }\hat{p}(i)<b).\\
\end{align*}
In the last inequality, we used $\bp_{\avg}(i) =x_i b$ and $P(i \in S) \leq P(\bp_{\avg}(i)=x_ib \text{ and }\hat{p}(i)<b)$.

Note that, by Chernoff bound it is immediate that, $P(i \in S) \leq \exp(-Cbn) \leq \exp(-Cx_i/\epsilon^2) $, for some constant $C$ just dependent on $t$. Substituting this probability upper bound back in the expectation we get that,
$$E[\sum_{i \in S \cap H}\bp_{\avg}(i)^t] \leq cb^{t-1}, $$
for some constant $c$ dependent on $t$ and we conclude the proof.

\end{proof}

\subsection{$\ell_1$ tester}
Here we finally present our $\ell_1$ tester.
\begin{theorem}
	For any $\epsilon>k^{-1/3}$, there exists an algorithm $\ell_1$-Distance-Test that, for $O(\frac{k^{2/3}}{\epsilon^{8/3}})$ samples from $\bp_1,\dots,\bp_T$ and $\bq_1,\dots,\bq_T$ has the following behavior: it rejects with probability $2/3$ when $\| \bp_{\avg} - \bq_{\avg}\|_1 \geq \epsilon$, and accepts with probability $2/3$ when $p = q$.    
\end{theorem}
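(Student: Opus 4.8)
The plan is to adapt the heavy/light element decomposition of \cite{batu2001testing} to the non-i.i.d.\ setting, combining the $\ell_1$ estimate on heavy elements from \Cref{lem:heavy} with the $\ell_2$ tester of \Cref{lem:l2tester} on light elements. Fix the threshold $b=(\epsilon/k)^{2/3}$ and draw $c=3$ samples from each of $\bp_1,\dots,\bp_T$ and $\bq_1,\dots,\bq_T$ with $T=O(k^{2/3}/\epsilon^{8/3})$. From the first sample of each distribution we form the empirical averages $\widehat p,\widehat q$ and the (random) sets $B_p,B_q$, $B=B_p\cup B_q$, and $S=[k]\setminus B$ (the elements with empirical mass below $b$ in both $\widehat p$ and $\widehat q$); the remaining two samples from each distribution are reserved for the collision-based $\ell_2$ test. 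The overall tester \emph{accepts} iff both (i) $\sum_{i\in B}|\widehat p(i)-\widehat q(i)|\le \epsilon/4$, and (ii) the $\ell_2$ tester of \Cref{lem:l2tester}, run with its collision indicators further restricted to the event that the colliding element lies in $S$, reports $\|\bp_{\avg}^S-\bq_{\avg}^S\|_2\le \epsilon'/3$ rather than $>\epsilon'$, where $\epsilon'=\epsilon/(3\sqrt k)$ and $\bp_{\avg}^S$ denotes $\bp_{\avg}$ restricted to $S$ (without renormalizing).

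For the heavy part, \Cref{lem:heavy} applied with $O(1/(b\epsilon^2))=O(k^{2/3}/\epsilon^{8/3})$ distributions gives $\bigl|\sum_{i\in B}|\widehat p(i)-\widehat q(i)|-\sum_{i\in B}|\bp_{\avg}(i)-\bq_{\avg}(i)|\bigr|\le \epsilon/6$ with probability at least $1-1/50$. Thus if $\bp_{\avg}=\bq_{\avg}$ the statistic in (i) is $\le \epsilon/6<\epsilon/4$, while if $\|\bp_{\avg}-\bq_{\avg}\|_1\ge\epsilon$ and the heavy contribution $\sum_{i\in B}|\bp_{\avg}(i)-\bq_{\avg}(i)|\ge\epsilon/2$, it is $\ge\epsilon/3>\epsilon/4$, so step (i) already rejects. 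For the light part, condition on the realized $S$: since $S$ is a function of the first samples only, the second and third samples remain i.i.d.\ and independent of $S$, so the restricted estimators $Z,Z',Q$ (cf.\ \Cref{lem:zzz,lem:qqq}) are unbiased for $\|\bp_{\avg}^S\|_2^2$, $\|\bq_{\avg}^S\|_2^2$, $\langle\bp_{\avg}^S,\bq_{\avg}^S\rangle$ respectively, and $F=Z+Z'-2Q$ is unbiased for $\|\bp_{\avg}^S-\bq_{\avg}^S\|_2^2$ with the variance bound of \Cref{lem:l2tester} expressed through the restricted norms. By \Cref{lem:light} with a small constant $\delta$, with probability $\ge 1-O(\delta)$ we have $\|\bp_{\avg}^S\|_2^2,\|\bq_{\avg}^S\|_2^2,\langle\bp_{\avg}^S,\bq_{\avg}^S\rangle=O(b)$ and the relevant $3$-way inner products are $O(b^2)$; on this event \Cref{lem:l2tester} with parameter $\epsilon'$ succeeds using $T=O\!\bigl(\sqrt b/(\epsilon')^2+b^2/(\epsilon')^4\bigr)=O\!\bigl(k^{2/3}/\epsilon^{5/3}+k^{2/3}/\epsilon^{8/3}\bigr)=O(k^{2/3}/\epsilon^{8/3})$.

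Soundness and completeness of step (ii) then follow from a Cauchy--Schwarz argument: if $\bp_{\avg}=\bq_{\avg}$ then $\|\bp_{\avg}^S-\bq_{\avg}^S\|_2=0\le\epsilon'/3$ and (ii) accepts; if $\|\bp_{\avg}-\bq_{\avg}\|_1\ge\epsilon$ but the heavy contribution is $<\epsilon/2$, then $\sum_{i\in S}|\bp_{\avg}(i)-\bq_{\avg}(i)|\ge\epsilon/2$, and since this vector is supported on at most $|S|\le k$ coordinates, $\|\bp_{\avg}^S-\bq_{\avg}^S\|_2\ge \epsilon/(2\sqrt k)>\epsilon'$, so (ii) rejects. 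Since in the soundness case at least one of the inequalities $\sum_{i\in B}|\bp_{\avg}(i)-\bq_{\avg}(i)|\ge\epsilon/2$ and $\sum_{i\in S}|\bp_{\avg}(i)-\bq_{\avg}(i)|\ge\epsilon/2$ holds, the tester rejects, while in the completeness case both (i) and (ii) accept. A union bound over the $O(1)$ failure events of \Cref{lem:heavy}, \Cref{lem:light}, and \Cref{lem:l2tester} — each pushed below a small constant by adjusting constants or by a constant number of independent repetitions with a majority vote — gives overall success probability at least $2/3$, and every sample count above is $O(k^{2/3}/\epsilon^{8/3})$; the hypothesis $\epsilon>k^{-1/3}$ is used to keep $b=(\epsilon/k)^{2/3}$ and the derived parameters (e.g.\ $\epsilon'$, the $\le 2/b$ bound on the number of heavy elements) in the regime where these lemmas apply.

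The step I expect to be the main obstacle is making the random-partition argument fully rigorous: the set $S$ is itself a random object, so one must verify that conditioning on it preserves the independence used by the collision estimators — this is exactly what the dedicated first sample buys, which is why $c=3$ is convenient here — and that the light-element norm bounds of \Cref{lem:light} hold with enough slack that \Cref{lem:l2tester} can be applied to the conditional instance uniformly over the likely realizations of $S$. Everything else is bookkeeping with constants and routine substitution of $b=(\epsilon/k)^{2/3}$ into the sample-complexity expressions.
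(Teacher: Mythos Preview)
Your proposal is correct and follows essentially the same route as the paper: the heavy/light decomposition of \cite{batu2001testing} with threshold $b=(\epsilon/k)^{2/3}$, using the first sample per distribution to define $B$ and $S$, \Cref{lem:heavy} for the heavy part, \Cref{lem:light} to bound the light-element norms, and \Cref{lem:l2tester} with $\epsilon'=\Theta(\epsilon/\sqrt{k})$ on the light part. The only cosmetic difference is that the paper, instead of restricting the collision indicators to $S$ as you do, defines auxiliary distributions $\bp'_t,\bq'_t$ by rerouting any sample landing in $B$ to a uniformly random element of $[k]$ (which adds harmless $O(1/k)$ and $O(1/k^2)$ terms to the norm bounds); both implementations yield the same sample complexity and your identification of why $c=3$ is needed---to keep $S$ independent of the collision samples---is exactly the point.
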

\begin{proof}
	Let $b= (\epsilon/k)^{2/3}$. Note that $n=O(1/(b\epsilon^2))$. Our $\ell_1$ tester works as follows. 1) It uses the first set of samples from $\bp_{1} \dots \bp_{T}$ and $\bq_{1} \dots \bq_{T}$ to estimate the heavy elements, which are defined as follows.
	$$B_p= \{i \in [k]~|~ \hat{p}(i) \geq b\} \text{ and } B_q= \{i \in [k]~|~ \hat{q}(i) \geq b\}~.$$
	Define $B = B_{p} \cup B_{q}$. By \Cref{lem:heavy}, we know that with a good probability, $|\sum_{i \in B}|\hat{p}(i)-\hat{q}(i)|$ estimates the quantity $|\sum_{i \in B} |\bp_{\avg}(i)-\bq_{\avg}(i)|$ upto $\epsilon/6$ accuracy. Therefore, if $\|\bp_{\avg} -\bq_{\avg}\|_1 \geq \epsilon/3$, then this implies that $|\sum_{i \in B}|\hat{p}(i)-\hat{q}(i)| > \epsilon/6$. In this particular case, we know that we are in the case of $\|\bp_{\avg} -\bq_{\avg}\|_1 \geq \epsilon$ and we reject the instance. Therefore in the remainder of the proof we focus on the case, where $|\sum_{i \in B} |\bp_{\avg}(i)-\bq_{\avg}(i)| \leq \epsilon/3$. 
	
	Here we define new distributions $\bp'_{1} \dots \bp'_{T}$ and $\bq'_{1} \dots \bq'_{T}$ focused on light elements as follows: Sample an element from $\bp_{t}$. If this sample is in $S$ output it; otherwise, output uniformly random element from $[k]$. Define $\bq'_t$ similarly. We generate two samples from $\bp'_{1} \dots \bp'_{T}$ and $\bq'_{1} \dots \bq'_{T}$ using this procedure. Let $\bp'_{\avg}$ and $\bq'_{\avg}$ be respective averages.
	
	Note that, for $i \in S$, $\bp'_{\avg}(i)=\bp_{\avg}(i) + \bp_{\avg}(B)/k$ and $\bq'_{\avg}(i)=\bq_{\avg}(i) + \bq_{\avg}(B)/k$. For $i \in B$, we have that, $\bp'_{\avg}(i) =\bq_{\avg}(B)/k$ and $\bq'_{\avg}(i) =\bq_{\avg}(B)/k$. 
	
	Note that, when $\bp_{\avg} = \bq_{\avg}$, we have that, $\bp'_{\avg} = \bq'_{\avg}$. Furthermore in the other case, when $\|\bp_{\avg} - \bq_{\avg}\|_{1} \geq \epsilon$ and $\sum_{i \in B}|\bp_{\avg}(i)-\bq_{\avg}(i)| \leq \epsilon/3$, we get that,
	$\|\bp'_{\avg} - \bq'_{\avg}\|_{1} \geq \epsilon/3$. 
    
    Note that by \Cref{lem:light}, we get that, the $2$nd and $3$rd order norms for $\bp'_{\avg}$ and $\bq'_{\avg}$ are bounded by $O(b)$ and $O(b^2)$ respectively with very tiny constant probability. Therefore we get that,
	$\|\bp'_{\avg}\|_2^2 \leq \sum_{i \in S}\bp_{\avg}(i)^2 + O(1/k) \leq O(b+1/k)$, $\|\bq'_{\avg}\|_2^2 \leq \sum_{i \in S}\bq_{\avg}(i)^2 + O(1/k) \leq O(b+1/k)$ and $\langle \bp_{\avg},\bq_{\avg} \rangle \leq O(b+1/k)$. Similarly, the third order norms are also bounded, $\|\bp'_{\avg}\|_3^3 \leq \sum_{i \in S}\bp_{\avg}(i)^3 + 3/k * \sum_{i \in S}\bp_{\avg}(i)^2  + O(1/k^2) \leq O(b^2+b/k+1/k^2)$ and $\|\bq'_{\avg}\|_3^3 \leq \sum_{i \in S}\bq_{\avg}(i)^3 + + 3/k * \sum_{i \in S}\bq_{\avg}(i)^2 + O(1/k) \leq O(b^2+b/k+1/k^2)$, $\langle \bp_{\avg},\bq_{\avg},\bq_{\avg} \rangle \leq O(b^2+b/k+1/k^2)$ and $\langle \bp_{\avg},\bp_{\avg},\bq_{\avg} \rangle \leq O(b^2+b/k+1/k^2)$.
	 We then invoke the $\ell_2$ tester with $\epsilon' = \epsilon/10 \sqrt{k}$. Further using the sample complexity bounds from \Cref{lem:l2tester}, we get the following upper bound,
	$$O(\sqrt{b+1/k}/\epsilon'^2 + (b^2+b/k+1/k^2)/\epsilon'^4) \in O(k^{2/3}/\epsilon^{8/3})~,$$
	which is the required sample complexity.
\end{proof}

\end{document}